\theoremstyle{plain}
\newtheorem{Lem}{Lemma}
\numberwithin{Lem}{section}
\newtheorem{Prop}{Proposition}
\numberwithin{Prop}{section}
\newtheorem{Thm}{Theorem}
\numberwithin{Thm}{section}
\newtheorem{Cor}{Corollary}
\numberwithin{Cor}{section}
\newtheorem{Con}{Conjecture}
\numberwithin{Con}{section}
\theoremstyle{definition}
\newtheorem{Def}{Definition}
\numberwithin{Def}{section}
\newtheorem{hyp}{Hypothesis}
\numberwithin{hyp}{section}
\numberwithin{conj}{section}
\newtheorem{ex}{Example}
\numberwithin{ex}{section}
\theoremstyle{remark}
\newtheorem{rem}{\bf{Remark}}
\numberwithin{rem}{section}
\numberwithin{equation}{section}
\newcommand{\dv}{\partial}
\newcommand{\eps}{\varepsilon}
\newcommand{\R}{{\mathbb R}}
\newcommand{\C}{{\mathbb C}}
\newcommand{\Z}{{\mathbb Z}}
\newcommand{\N}{{\mathbb N}}
\newcommand{\G}{\Gamma}
\newcommand{\x}{\chi}
\newcommand{\Hp}{\mathbb{H}}
\newcommand{\Sn}{\mathcal{S}_{nt}(\mu)}
\newcommand{\supp}{\text{supp}}
\newcommand{\Siso}{\dv \mathcal{S}_{nt}^{iso}(\mu)}
\newcommand{\Ssing}{\mathcal{S}_{nt}^{sing}(\mu)}
\newcommand{\Sgen}{\mathcal{S}_{nt}^{gen}(\mu)}
\newcommand{\Sreg}{\mathcal{S}_{nt}^{reg}(\mu)}
\newcommand{\SsingI}{\mathcal{S}_{nt}^{sing,I}(\mu)}
\newcommand{\SsingII}{\mathcal{S}_{nt}^{sing,II}(\mu)}
\newcommand{\SsingIII}{\mathcal{S}_{nt}^{sing,III}(\mu)}
\newcommand{\SsingIV}{\mathcal{S}_{nt}^{sing,IV}(\mu)}
\newcommand{\Hc}{\text{\sffamily{C}}_{c,1}^{\lambda,\alpha}(\mathbb{R})}
\newcommand{\Dl}{\partial\mathcal{L}}
\newcommand{\Hil}{\mathcal{H}}
\newcommand{\EE}{\mathcal{E}}
\newcommand{\LL}{\mathcal{L}}
\begin{document}

\vspace{.4cm}

\title{Asymptotic Geometry of Discrete Interlaced Patterns: Part II}
\author{Erik Duse \and Anthony Metcalfe}
\maketitle

\begin{abstract}
We study the boundary of the \emph{liquid region} $\LL$ in large random lozenge tiling models defined by 
uniform random interlacing particle systems with general initial configuration, which lies on the line $(x,1)$, $x\in\R\equiv \dv \Hp$.
We assume that the initial particle configuration converges weakly to a limiting density $\phi(x)$,
$0\le \phi\le 1$. The liquid region is given by a homeomorphism $W_{\LL}:\LL\to \Hp$, the
upper half plane, and we consider the extension of $W_{\LL}^{-1}$ to $\overline{\Hp}$. Part of 
$\dv \LL$ is given by a curve, the \emph{edge}  $\EE$,  parametrized by intervals in $\dv \Hp$,
and this corresponds to points where $\phi$ is identical to $0$ or $1$. If $0<\phi<1$, the non-trivial support,
there are two cases. Either $W_{\LL}^{-1}(w)$ has the limit $(x,1)$ as $w\to x$ non-tangentially and we have 
a \emph{regular point}, or we have what we call a \emph{singular point}. In this case $W_{\LL}^{-1}$ 
does not extend continuously to $\overline{\Hp}$. Singular points give rise to parts of $\dv \LL$
not given by $\EE$ and which can border a frozen region, or be ``inside'' the liquid region. This shows that
in general the boundary of $\dv \LL$ can be very complicated. We expect that on the singular parts of 
$\dv \LL$ we do not get a universal point process like the Airy or the extended Sine kernel point processes. Furthermore, $\EE$ and the singular parts of $\dv \LL$ are shocks of the \emph{complex Burgers equation}. 
 \end{abstract}
\tableofcontents
\clearpage
\section{Introduction}
\subsection{Discrete Interlacing Sequences}
We begin by briefly recalling the underlying probabilistic model described in \cite{Duse14a}.
A {\em discrete Gelfand-Tsetlin pattern} of depth $n$ is an $n$-tuple, denoted
$(y^{(1)},y^{(2)},\ldots,y^{(n)}) \in  \Z \times \Z^2 \times \cdots \times \Z^n$,
which satisfies the interlacing constraint
\begin{equation*}
y_1^{(r+1)} \; \ge \; y_1^{(r)} \; > \; y_2^{(r+1)} \; \ge \; y_2^{(r)}
\; > \cdots \ge \; y_r^{(r)} \; > \; y_{r+1}^{(r+1)},
\end{equation*}
for all $r \in \{1,\ldots,n-1\}$, denoted $y^{(r+1)} \succ y^{(r)}$. 
For each $n\ge1$, fix $x^{(n)} \in \Z^n$ with $x_1^{(n)} > x_2^{(n)} > \cdots > x_n^{(n)}$,
and consider the following probability measure on the set of patterns of depth $n$:
\begin{equation*}
\nu_n[(y^{(1)},\ldots,y^{(n)})]
:= \frac1{Z_n} \cdot \left\{
\begin{array}{rcl}
1 & ; &
\text{when} \; x^{(n)} = y^{(n)} \succ y^{(n-1)} \succ \cdots \succ y^{(1)}, \\
0 & ; & \text{otherwise},
\end{array}
\right.
\end{equation*}
where $Z_n > 0$ is a normalisation constant. This can equivalently
be considered as a measure on configurations of interlaced particles in
$\Z \times \{1,\ldots,n\}$ by placing a particle at position
$(u,r) \in \Z \times \{1,\ldots,n\}$ whenever $u$ is an element of $y^{(r)}$.
$\nu_n$ is then the uniform probability measure on the set of all such interlaced
configurations with the particles on the top row in the deterministic positions
defined by $x^{(n)}$. This measure also arises naturally from certain tiling
models (see \cite{Duse14a} and \cite{Pet12} for further details).
In \cite{Duse14a} and \cite{Pet12} it was independently shown that this process is
determinantal. The correlation kernel, $K_n : (\Z \times \{1,\ldots,n\})^2  \to \C$,
acts on pairs of particle positions. Note that the deterministic top row and the
interlacing constraint implies that it is sufficient to restrict to those positions,
$(u,r), (v,s) \in \Z \times \{1,\ldots,n-1\}$, with $u \ge x_n^{(n)}+n-r$ and
$v \ge x_n^{(n)}+n-s$. For all such $(u,r)$ and $(v,s)$,
\begin{equation}
\label{eqKnrusvFixTopLine}
K_n((u,r),(v,s)) = \widetilde{K}_n ((u,r),(v,s)) - \phi_{r,s}(u,v),
\end{equation}
where
\begin{equation*}
\widetilde{K}_n ((u,r),(v,s))
:=\frac{1}{(2\pi i)^2} \frac{(n-s)!}{(n-r-1)!} \oint_{\gamma_n}dw \oint_{\Gamma_n}dz
\frac{\prod_{k=u+r-n+1}^{u-1}(z-k)}{\prod_{\substack{k=v+s-n}}^{v}(w-k)}
\frac{1}{w-z}\prod_{i=1}^n\bigg(\frac{w-x_i^{(n)}}{z-x_i^{(n)}}\bigg),
\end{equation*}
and
\begin{equation*}
\phi_{r,s} (u,v)
:= 1_{(v \ge u)} \cdot \left\{
\begin{array}{lll}
0 & ; & \text{when} \; s \le r, \\
1 & ; & \text{when} \; s = r+1, \\
\frac1{(s-r-1)!} \prod_{j=1}^{s-r-1} (v-u+s-r-j) & ; & \text{when} \; s > r+1.
\end{array}
\right.
\end{equation*}
Above $\Gamma_n$ and $\gamma_n$ are counter-clockwise, $\Gamma_n$ contains
$\{x_i^{(n)} : x_i^{(n)} \ge u\}$ and none of $\{x_i^{(n)} \le u+r-n \}$,
and $\gamma_n$ contains $\Gamma_n$ and $\{v+s-n,...,v\}$.

\subsection{Asymptotic Assumptions and Geometric Behaviour of the Liquid Region}

It is natural to consider the asymptotic behaviour of the determinantal system
introduced in the previous section as $n \to \infty$, under the assumption
that the (rescaled) empirical distribution of the deterministic particles
on the top row converges weakly to a measure with compact support. More exactly,
assume that
\begin{equation*}
\frac1n \sum_{i=1}^n \delta_{x_i^{(n)}/n} \to \mu
\end{equation*}
as $n \to \infty$, in the sense of weak convergence of measures, where $\mu$
is a probability measure with compact support, $\supp(\mu)$. We
additionally assume that the convex hull of $\supp(\mu)$ is of length
strictly greater than $1$.

\begin{Def}
\label{remmu}
For clarity we explicitly state the class of measures in which $\mu$ lies:
$\mu\in\mathcal{B}(\R)$, where $\mathcal{B}(\R)$ is the set of Borel measures on $\R$. Moreover, $\mu\leq \lambda$ where $\lambda$ is Lebesgue measure
(recall $x^{(n)} \in \Z^n$), $\Vert \mu\Vert=1$, $\mu$ has compact support. We will denote this set of measures by
$\mu \in \mathcal{M}_{c,1}^{\lambda}(\R)$. Additionally we note that $\mu$
admits a density w.r.t. $\lambda$, which is uniquely defined up to a set of
zero Lebesgue measure. Denoting the density by $f$, and $[a,b]$ the convex hull of $\supp(\mu)$, ($b-a>1$), it satisfies
$f \in L^{\infty}(\R)$, $f(x) = 0$ for all $x \in \R \setminus [a,b]$, $\int_{\R}f(x)dx=1$,
and $0 \le f(x) \le 1$ for all $x \in [a,b]$. We write
$f \in \rho_{c,1}^{\lambda}(\R)$. Note that
$\R \setminus \supp(\mu)$ is the largest open set on which $f=0$ almost
everywhere, and $\R \setminus \supp(\lambda-\mu)$ is the largest open
set on which $f=1$ almost everywhere. Finally we note that the set $ \mathcal{M}_{c,1}^{\lambda}(\R)$ is convex, i.e., if $\sigma,\nu\in  \mathcal{M}_{c,1}^{\lambda}(\R)$, then for all $t\in[0,1]$, $t\sigma+(1-t)\nu\in \mathcal{M}_{c,1}^{\lambda}(\R)$.
\end{Def}

\begin{Def}
Define the set of functions $\Hc$ to be all $f\in \rho_{c,1}^{\lambda}(\R)$ such that:
\begin{itemize}
\item
There exists a finite family of open disjoint interval $\{I_k\}_k$ such that $\text{supp}(\mu)=\bigcup_{k}\overline{I}_k$.
\item
$f\in C^{\alpha}(\overline{I}_k)$ for all $k$ and some $0<\alpha<1$.
\item
The set $(\{t:f(t)=0\}\cup\{t:f(t)=1\})\bigcap(\cup_{k}\overline{I}_k)$ is isolated. 
\end{itemize}
We note that if $f\in \Hc$, then $f$ is continuous everywhere except at possibly the set $\bigcup_k \dv I_k$.
\end{Def}

Note, rescaling the vertical and horizontal positions of the particles
of the Gelfand-Tsetlin patterns by $\frac1n$, that the weak convergence
and the interlacing constraint imply that the rescaled particles almost
surely lie asymptotically in the the following set:
\begin{align*}
\mathcal{P}=\{(\chi,\eta)\in\R^2:a\leq \chi+\eta-1\leq \chi\leq b,0\leq\eta\leq 1\}
\end{align*} 
Fixing $(\chi,\eta) \in \mathcal{P}$, the local asymptotic behaviour of particles
near $(\chi,\eta)$ can be examined by considering the asymptotic behaviour of
$K_n((u_n,r_n),(v_n,s_n))$ as $n \to \infty$, where $\{(u_n,r_n)\}_{n\ge1} \subset \Z^2$
and $\{(v_n,s_n)\}_{n\ge1} \subset \Z^2$ satisfy
\begin{align*} 
\frac1n (u_n,r_n) &\to (\chi,\eta), \quad \frac1n (v_n,s_n) \to (\chi,\eta)
\end{align*}
as $n \to \infty$. Assume this additional asymptotic behaviour, substitute $(u_n,r_n)$ and
$(v_n,s_n)$ into equation (\ref{eqKnrusvFixTopLine}), and rescale the contours by $\frac1n$
to get,
\begin{equation}
\label{eqJnrnunsnvn1}
\widetilde{K}_n ((u_n,r_n),(v_n,s_n))
= \frac{A_n}{(2\pi i)^2} \oint_{\gamma_n} dw \oint_{\Gamma_n} dz \;
\frac{\exp(n f_n(w) - n \tilde{f}_n(z))}{w-z},
\end{equation}
for all $n \in \N$. Now $\Gamma_n$ contains $\{\frac1n x_i^{(n)} : x_i^{(n)} \ge u_n\}$
and none of $\{\frac1n x_i^{(n)} \le u_n+r_n-n \}$, and $\gamma_n$ contains $\Gamma_n$
and $\{\frac1n (v_n+s_n-n),...,\frac1n v_n\}$. Also
$A_n := \frac{(n-s_n)!}{(n-r_n-1)!} \; n^{s_n-r_n-1}$,
\begin{align*}
f_n(w)
& := \frac1n \sum_{i=1}^n \log \bigg( w - \frac{x_i^{(n)}}n \bigg) -
\frac1n \sum_{j=v_n+s_n-n}^{v_n} \log \left( w - \frac{j}n \right), \\
\tilde{f}_n(z)
& := \frac1n \sum_{i=1}^n \log \bigg( z - \frac{x_i^{(n)}}n \bigg) -
\frac1n \sum_{j=u_n+r_n-n+1}^{u_n-1} \log \left( z - \frac{j}n \right).
\end{align*}
Finally, inspired by the asymptotic assumptions and the forms of $f_n$
and $\tilde{f}_n$, we define
\begin{equation}
\label{AsymptoticFunction}
f_{(\chi,\eta)} (w) := \int_{\R}\log(w-t)d\mu(t)-\int_{\chi+\eta-1}^{\chi}\log(w-t)dt,
\end{equation}
for all $w \in \C \setminus \R$.
\begin{rem}
Do not confuse the asymptotic function $f_{(\chi,\eta)}(w)$ with the density $f$ of the the measure $\mu$. The authors apologize for this unfortunate notation and hope that it will not cause any confusion. Furthermore, the asymptotic function will only be mentioned in the introduction, and in all other sections of this paper, $f$ will always denote the density of the measure.
\end{rem}

Steepest descent analysis and equations (\ref{eqKnrusvFixTopLine}) and
(\ref{eqJnrnunsnvn1}) suggest that, as $n \to \infty$,
the asymptotic behaviour of $K_n ((u_n,r_n),(v_n,s_n))$ depends on the
behaviour of the roots of $f_{(\chi,\eta)}'$:
\begin{equation}
\label{eqf'}
f_{(\chi,\eta)}' (w)
= \int_{\R}\frac{d\mu(t)}{w-t}-\int_{\chi+\eta-1}^{\chi}\frac{dt}{w-t},
\end{equation}
for all $w \in \C \setminus \R$. In \cite{Duse14a}, we define the {\em liquid region},
$\LL$, as the set of all $(\chi,\eta) \in \mathcal{P}$ for which
$f_{(\chi,\eta)}'$ has a unique root in the upper-half plane,
$\Hp := \{w \in \C: \text{Im}(w) > 0 \}$. Whenever $(\chi,\eta) \in \LL$, one
expects universal bulk asymptotic behaviour, i.e., that the local
asymptotic behaviour of the particles near $(\chi,\eta)$ are governed by
the extended discrete \emph{Sine kernel} as $n\to +\infty$. Also, one expects
that the particles are not asymptotically densely packed. Moreover, when
considering the corresponding tiling model and its associated height function,
one would expect to see the Gaussian Free Field asymptotically.
See for example \cite{Pet12},\cite{Pet15} for a special case.

Let $W_\LL:\LL\to\Hp$ map $(\chi,\eta) \in \LL$ to the corresponding unique root
of $f_{(\chi,\eta)}'$ in $\Hp$. In \cite{Duse14a}, we show that $W_\LL$ is a
homeomorphism with inverse $W_\LL^{-1} (w) = (\chi_\LL(w), \eta_\LL(w))$ for all
$w \in \Hp$, where
\begin{align}
\label{InverseReal1}
\chi_\LL(w) &:= w + \frac{(w - \bar{w}) (e^{C(\bar{w})}-1)}{e^{C(w)} - e^{C(\bar{w})}}, \\
\label{InverseReal2}
\eta_\LL(w) &:=
1 + \frac{(w - \bar{w}) (e^{C(w)}-1) (e^{C(\bar{w})}-1) }{e^{C(w)} - e^{C(\bar{w})}},
\end{align}
and $C : \C \setminus \supp(\mu) \to \C$ is the {\em Cauchy} transform of $\mu$:
\begin{align}
\label{eqCauTrans}
C(w):=\int_{\R}\frac{d\mu(t)}{w-t}.
\end{align}
Thus $\LL$ is a non-empty, open (with respect to $\R^2$), simply connected subset of $\mathcal{P}$.
\newline \newline \indent
Define the complex slope $\Omega=\Omega(\chi,\eta)\in \C$ by 
\begin{align}
\label{ComplexSlope}
\Omega(\chi,\eta)=\frac{W_\LL(\chi,\eta)-\chi}{W_\LL(\chi,\eta)-\chi-\eta+1}.
\end{align}
The equation $f'{(\chi,\eta)}(w)\big\vert_{w=W_\LL(\chi,\eta)}=0$ implies that the \emph{complex slope} $\Omega$ satisfies the equation
\begin{align}
\label{EqComplexSlope}
\frac{1}{\Omega}=\exp\int_{\R}\bigg(\chi+\frac{(1-\eta)\Omega}{1-\Omega}-t\bigg)^{-1}d\mu(t).
\end{align}
Note that since
\begin{align}
\Omega=\exp\int_{\R}\frac{d\mu(t)}{t-W_\LL(\chi,\eta)}
\end{align}
and $W_\LL(\chi,\eta)\in \Hp$, it follows that Im$[\Omega]>0$ for all $(\chi,\eta)\in \LL$. Moreover, by differentiating (\ref{EqComplexSlope}) with respect to $\chi$ and $\eta$ respectively, one see that $\Omega$ satisfies the \emph{complex Burgers equation}
\begin{align}
\label{ComplexBurger}
\Omega\frac{\dv \Omega}{\dv \chi}=-(1-\Omega)\frac{\dv \Omega}{\dv \eta}.
\end{align}
For a connection to lozenge tiling problems see \cite{Ken07}.

Using the complex slope $\Omega$ one define the Beta kernel $\mathcal{B}_\Omega:\Z^2\to \C$, according to: 
\begin{align}
\mathcal{B}_\Omega(m,l)=\frac{1}{2\pi i}\int_{\overline{\Omega}}^{\Omega}(1-z)^{m}z^{-l-1}dz,
\end{align}
where the integration contours are such that they cross $(0,1)\subset \R$ when $m\geq 0$, and $(-\infty,0)\subset \R$ when $m<0$. 
It was shown in \cite{Pet12}, that if one let $\mu=\lambda\big\vert_{\cup_{k=1}^mI_k}$, where $I_k=[a_k,b_k]$, and $\cup_{k=1}^mI_k$ is a disjoint union of intervals, then if one assumes that 
\begin{align*}
\lim_{n\to \infty}\frac{1}{n}(x_i^{(n)},y_i^{(n)})=(\chi,\eta), \quad \text{for $i=1,2,..,r$}
\end{align*}
and,
\begin{align*}
x_i^{(n)}-x_j^{(n)}&=l_{ij}\in \Z \quad \text{and} \quad  y_i^{(n)}-y_j^{(n)}=m_{ij} \in\Z 
\end{align*}
are fixed whenever $n$ is sufficiently large, then
\begin{align*}
\lim_{n\to \infty}\rho_r((x_1^{(n)},y_1^{(n)}),(x_2^{(n)},y_2^{(n)}),...,(x_r^{(n)},y_r^{(n)})=\det[\mathcal{B}_\Omega(m_{ij},l_{ij})]_{i,j=1}^r
\end{align*}
Though it is not done in this paper, this result can be easily extended to the case when $\mu\in \mathcal{M}_{c,1}^{\lambda}(\R)$. In particular note that this implies that the macroscopic density of particles are given by
\begin{align*}
\rho(\chi,\eta)=\frac{1}{2\pi i}\int_{\overline{\Omega}}^{\Omega}\frac{dz}{z}=\frac{1}{\pi}\arg \Omega(\chi,\eta).
\end{align*}

In \cite{Duse14a}, we also study $\dv\LL$. Our motivation for doing this is
that edge-type behavior is expected at $\dv\LL$ for appropriate scaling limits.
It is therefore necessary to understand the geometry of $\dv\mathcal{L}$. We
study $\dv\LL$ using the above homeomorphism: $\dv\LL$ is the set of all
$(\chi,\eta) \in \mathcal{P}$ for which there exists a sequence,
$\{w_n\}_{n\ge1} \subset \mathbb{H}$, with
$W_\LL^{-1}(w_n) = (\chi_\LL(w_n),\eta_\LL(w_n)) \to (\chi,\eta)$ as $n \to \infty$,
and either $|w_n| \to \infty$ or $w_n \to x \in \R = \dv\Hp$ as $n \to \infty$.

The situation when $|w_n| \to \infty$ is trivial:
$(\chi_\LL(w_n),\eta_\LL(w_n)) \to (\frac12 + \int t d\mu(t), 0)$ as $n \to \infty$.
In order to consider the situation when $w_n \to x \in \R = \dv\Hp$, recall
that $\mu \le \lambda$. In \cite{Duse14a}, we consider the case where
$w_n \to x \in R$, where $R \subset \R$ is the open set,
\begin{equation}
\label{eqR}
R
:= R_{\mu} \cup R_{\lambda-\mu} \cup R_0\cup R_1 \cup R_2,
\end{equation}
and
\begin{itemize}
\item
$R_{\mu}:=\R \setminus \supp(\mu)\cap \{t\in \R:C(t)\neq 0\}$.
\item
$R_{\lambda-\mu}:=\R \setminus \supp(\lambda-\mu)$.
\item
$R_0:=\R \setminus \supp(\mu)\cap \{t\in \R:C(t)=0\}$
\item
$R_1$ is the set of all
$t \in \partial (\R \setminus \supp(\mu)) \cap \partial (\R \setminus \supp(\lambda-\mu))$
for which there exists an interval, $I := (t_2,t_1)$, with $t \in I$,
$(t,t_1) \subset \R \setminus \supp(\mu)$ and $(t_2,t) \subset \R \setminus \supp(\lambda-\mu)$.
\item
$R_2$ is the set of all
$t \in \partial (\R \setminus \supp(\mu)) \cap \partial (\R \setminus \supp(\lambda-\mu))$
for which there exists an interval, $I := (t_2,t_1)$, with $t \in I$,
$(t,t_1) \subset \R \setminus \supp(\lambda-\mu)$ and $(t_2,t) \subset \R \setminus \supp(\mu)$.
\end{itemize}
We show that $(\chi_\LL(w_n),\eta_\LL(w_n)) \to (\chi_\EE(t),\eta_\EE(t))$ as $n \to \infty$,
where $\chi_\EE, \eta_\EE : R \to \R$ are the real-analytic functions defined by,
\begin{align}
\label{eqchiEEetaEE}
(\chi_{\EE}(t),\eta_{\EE}(t))=\left\{
 \begin{array}{ll}
  \displaystyle\bigg( t + \frac{1-e^{-C(t)}}{C'(t)},1 + \frac{e^{C(t)}+e^{-C(t)}-2}{C'(t)}\bigg) & \text{if }  t \in R_{\mu}\cup R_0\\
    \displaystyle \bigg( t + \frac{1-(\frac{t-t_1}{t-t_2})e^{-C(t)} - 1}{C_I'(t) + \frac{1}{t-t_2} - \frac{1}{t-t_1}},1 + \frac{(\frac{t-t_2}{t-t_1})e^{C_I(t)}+(\frac{t-t_1}{t-t_2})e^{-C_I(t)} - 2}{C_I'(t) +  \frac{1}{t-t_2} - \frac{1}{t-t_1}}\bigg) & \text{if } t \in R_{\lambda-\mu}\\
    \displaystyle(t,1 - e^{C_I(t)} (t-t_2)) & \text{if } t\in R_1\\
 \displaystyle ( t - e^{-C_I(t)} (t-t_1),1 + e^{-C_I(t)} (t-t_1)) & \text{if } t\in R_2
 \end{array} \right.
\end{align}
Above $I := (t_2,t_1)$ is any interval which satisfies
$t \in I \subset \R \setminus \supp(\lambda-\mu)$ whenever
$t \in \R \setminus \supp(\lambda-\mu)$, and the requirements of
equation (\ref{eqR}) whenever $x \in R_1 \cup R_2$. Also, $C$ is
the Cauchy transform of equation (\ref{eqCauTrans}), and
$C_I(t) := \int_{\R \setminus I} \frac{d\mu(x)}{t-x}$ for all $t \in I$.
It follows from above that $(\chi_\EE(\cdot),\eta_\EE(\cdot)) : R \to \dv\LL$
is the unique continuous extension, to $R$, of
$(\chi_\LL(\cdot),\eta_\LL(\cdot)) : \Hp \to \LL$. In \cite{Duse14a}
we show that the extension is injective, and we define the {\em edge},
$\EE \subset \dv\LL$, as the image space of the extension. We argue that
$\EE$ is a natural subset of $\dv\LL$ on which to expect edge asymptotic
behaviour. This will be examined in the upcoming papers, \cite{Duse15c} and
\cite{Duse15d}. In these papers we will show, for example, as $n \to \infty$ and
choosing the parameters $(u_n,r_n)$ and $(v_n,s_n)$ appropriately, that
$K_n((u_n,r_n),(v_n,s_n))$ converges to the {\em Airy} or {\em Pearcey} kernel
when $x \in \R \setminus \supp(\mu)$ and
$(\chi,\eta) = (\chi_\EE(t),\eta_\EE(t))$. Similarly when
$t \in \R \setminus \supp(\lambda-\mu)$, except now the asymptotic behaviour
of the correlation kernel of the `holes' is examined. Thus $\EE$ is a subset of $\dv \LL$ where we expect standard, universal type edge behavior. Furthermore, in \cite{Duse14a}, we defined the sets $\EE_\mu=W_\EE^{-1}(R_{\mu})$,  $\EE_{\lambda-\mu}=W_\EE^{-1}(R_{\mu})$, $\EE_0=W_\EE^{-1}(R_{0})$, $\EE_1=W_\EE^{-1}(R_{1})$, and $\EE_2=W_\EE^{-1}(R_{2})$. One can show that for any sequence $\{(\chi_n,\eta_n)\}_n\subset \LL$, such that $\lim_{n\to \infty}(\chi_n,\eta_n)=(\chi_\EE,\eta_\EE)\in \EE$, the boundary value of the complex slope $\Omega$ exists and equals
\begin{align}
\label{bdEdgeComplexSlope}
\lim_{n\to\infty}\Omega(\chi_n,\eta_n)=\left\{
 \begin{array}{ll}
   e^{-C(t)}\in \R& \text{if }  (\chi_\EE,\eta_\EE)\in \EE_{\mu}\\
    \frac{t-t_2}{t-t_1}e^{-C_I(t)}\in\R& \text{if }  (\chi_\EE,\eta_\EE) \in \EE_{\lambda-\mu}\\
    1& \text{if }  (\chi_\EE,\eta_\EE)\in \EE_0\\
  0& \text{if }  (\chi_\EE,\eta_\EE)\in \EE_1\\
  \infty & \text{if } (\chi_\EE,\eta_\EE)\in \EE_2
 \end{array} \right.
\end{align} 
where $t=W_{\EE}(\chi_\EE,\eta_\EE)$, and where $\lim_{n\to \infty}\Omega(\chi_n,\eta_n)=\infty$ is viewed as a limit on the Riemann sphere $\C\cup \{\infty\}$. Hence, we may view $\EE$ as a \emph{shock} of the complex Burgers equation (\ref{ComplexBurger}).

\begin{rem}
\label{ConvRem}
In principle the convergence of $K_n((u_n,r_n),(v_n,s_n))$ could depend on how the empirical measure $\mu_n$ converges to $\mu$. However, such questions will be considered in an upcoming paper \cite{Duse15c}.
\end{rem}
\begin{rem}
\label{remR}
Note that $R_1 \cap R_2 = \emptyset$. Also $R_1 \cup R_2 = \partial (\R \setminus \supp(\mu))
\cap \partial (\R \setminus \supp(\lambda-\mu))$, the set of all common boundary points of
the disjoint open sets $\R \setminus \supp(\mu)$ and $\R \setminus \supp(\lambda-\mu)$.
Therefore we can alternatively write,
$R = (\; \overline{(\R \setminus \supp(\mu)) \cup (\R \setminus \supp(\lambda-\mu))} \;)^\circ$.
\end{rem}
Note that $R = \R = \dv\Hp$ in the special case when $\mu$ is Lebesgue measure
restricted to a finite number of disjoint intervals. This case was examined
by Petrov, \cite{Pet12}. For general $\mu$, however, $\R \setminus R$ is non-empty.
It therefore remains to consider sequences, $\{w_n\}_{n\ge1} \subset \mathbb{H}$,
with $w_n \to x \in \R \setminus R$ as $n \to \infty$. In \cite{Duse14a},
letting $f$ denotes the density of $\mu$ (see Definition \ref{remmu}), we
show that:
\begin{Lem}
\label{lemBddEdge2}
$(x,1) \in \partial \LL$ for $x \in \R \setminus R =
(\supp(\mu) \cap \supp(\lambda-\mu)) \setminus (R_1 \cup R_2)$
whenever there exists an $\epsilon>0$ for which one of the following
cases is satisfied:
\begin{enumerate}
\item
$\sup_{t \in (x-\epsilon,x+\epsilon)} f(t) < 1$ and
$\inf_{t \in (x-\epsilon,x+\epsilon)} f(t) > 0$.
\item
$\sup_{t \in (x-\epsilon,x)} f(t) < 1$, $\inf_{t \in (x-\epsilon,x)} f(t) > 0$
and $f(t) = 0$ for all $t \in (x,x+\epsilon)$.
\item
$\sup_{t \in (x-\epsilon,x)} f(t) < 1$, $\inf_{t \in (x-\epsilon,x)} f(t) > 0$
and $f(t) = 1$ for all $t \in (x,x+\epsilon)$.
\item
$\sup_{t \in (x,x+\epsilon)} f(t) < 1$, $\inf_{t \in (x,x+\epsilon)} f(t) > 0$
and $f(t) = 0$ for all $t \in (x-\epsilon,x)$.
\item
$\sup_{t \in (x,x+\epsilon)} f(t) < 1$, $\inf_{t \in (x,x+\epsilon)} f(t) > 0$
and $f(t) = 1$ for all $t \in (x-\epsilon,x)$.
\end{enumerate}
Moreover $(\chi_\LL(w_n),\eta_\LL(w_n)) \to (x,1)$ as $n \to \infty$ for all
$\{w_n\}_{n\ge1} \subset \mathbb{H}$ with $w_n \to x$.
\end{Lem}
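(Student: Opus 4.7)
My plan is to analyze the explicit formulas (\ref{InverseReal1})--(\ref{InverseReal2}) directly as $w \to x$ with $w \in \Hp$. Setting $A := e^{C(w)}$ and $w = u + iv$ with $v > 0$, the identity $e^{C(\bar w)} = \overline{A}$ (since $\mu$ is a real measure) together with $(w - \bar w) = 2iv$ and $(A-1)(\bar A - 1) = |A-1|^2$ will give the manifestly real representation
\begin{align*}
\chi_\LL(w) \;=\; u + \frac{v\,(\mathrm{Re}\,A - 1)}{\mathrm{Im}\,A}, \qquad
\eta_\LL(w) \;=\; 1 + \frac{v\,|A - 1|^2}{\mathrm{Im}\,A}.
\end{align*}
Since $u = \mathrm{Re}(w) \to x$ automatically, the lemma reduces to proving
\begin{align*}
\frac{v\,|A-1|}{|\mathrm{Im}\,A|} \longrightarrow 0 \quad \text{and} \quad \frac{v\,|A-1|^2}{|\mathrm{Im}\,A|} \longrightarrow 0.
\end{align*}
Using $|A-1|^2 \le 2(|A|^2 + 1)$ and $|\mathrm{Im}\,A| = |A|\cdot|\sin(\mathrm{Im}\,C(w))|$, it will be enough to bound $|A|^{\pm 1}$ and $|\sin(\mathrm{Im}\,C(w))|$ appropriately as $w \to x$.

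The second step is a boundary analysis of the Cauchy transform via
\begin{align*}
\mathrm{Re}\,C(w) = \int \frac{(u-t)f(t)}{(u-t)^2+v^2}\,dt, \qquad -\mathrm{Im}\,C(w) = v\int \frac{f(t)}{(u-t)^2+v^2}\,dt.
\end{align*}
Since $\supp\mu$ is compact, both integrals restricted to $|t-x|\ge \epsilon$ remain bounded as $w \to x$, so the decisive contributions come from the neighbourhood where the hypothesis on $f$ applies.

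For Case~(1), I will decompose $f = \tfrac{f_{\min}+f_{\max}}{2} + g$ with $|g| \le \tfrac{f_{\max}-f_{\min}}{2}$ and exploit the antisymmetry of $(u-t)/((u-t)^2+v^2)$ around $t = u$ to obtain $|\mathrm{Re}\,C(w)| \le (f_{\max}-f_{\min})\log(1/v) + O(1)$, so $|A|^{\pm1} \le C v^{-(f_{\max}-f_{\min})}$. After the substitution $t = u+sv$ one has $-\mathrm{Im}\,C(w) = \int_{\R} f(u+sv)/(s^2+1)\,ds$, which by the hypothesis $0 < f_{\min} \le f_{\max} < 1$ lies in a compact subinterval of $(0,\pi)$; hence $|\sin(\mathrm{Im}\,C(w))|$ is bounded below by some $c > 0$. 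Combining these bounds with $f_{\max}-f_{\min} < 1$ yields $v|A|^{\pm 1}/|\mathrm{Im}\,A| = O(v^{1-(f_{\max}-f_{\min})}) \to 0$, completing Case~(1). Cases~(2)--(5) will follow by analogous one-sided estimates: when $f$ vanishes on one side of $x$, only the other half contributes and the Plemelj-type limit (together with $f < 1$ there) supplies the required control; when $f$ saturates on one side, the integral $\int_{x}^{x\pm\epsilon} dt/(w-t)$ causes $|A|$ to blow up like $|w-x|^{-1}$, but the imaginary part $-\arg(w-x) \in (-\pi,0)$ keeps $|\sin(\mathrm{Im}\,C(w))|$ bounded below, and a matching power count again forces the desired limits.

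The principal technical obstacle will be handling arbitrary, possibly \emph{tangential}, sequences $w_n \to x$, particularly in Cases~(2)--(5), where $\mathrm{Im}\,C(w_n)$ may degenerate to $0$ if $w_n$ approaches $x$ from the ``empty'' or ``saturated'' side. The resolution will require tracking the joint scaling in $v_n = \mathrm{Im}\,w_n$ and $\delta_n = \mathrm{Re}\,w_n - x$: one shows that $|A|$ and $|\mathrm{Im}\,A|$ scale as explicit powers of $\delta_n$ and $v_n$, and that their ratio still produces $v_n|A-1|^2/|\mathrm{Im}\,A| \to 0$, crucially using that $f < 1$ (respectively $f > 0$) in the non-trivial direction. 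It is this power-counting argument, rather than any single boundary-value formula, that ultimately yields the conclusion for every sequence $w_n \to x$, and consequently places $(x,1)$ in $\dv\LL$.
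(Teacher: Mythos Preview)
Your approach is sound and the reduction to controlling $v\,e^{\pm\pi H_vf(u)}/\sin(\pi P_vf(u))$ is exactly right; this is equivalent to the real form~(\ref{eqchin})--(\ref{eqetan}) that the paper derives. Note, however, that Lemma~\ref{lemBddEdge2} is \emph{not} proved in the present paper: it is quoted from the companion article~\cite{Duse14a} (see the sentence preceding the lemma, and the reference to ``Lemma~2.5 in~\cite{Duse14a}'' before Proposition~\ref{GenericProp}). What the present paper does is build machinery that generalises it.

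Comparing your route with that machinery: for Case~(1) your decomposition $f=\tfrac{f_{\min}+f_{\max}}{2}+g$ and the antisymmetry cancellation are essentially the content of Lemma~\ref{EstHilb1} (which bounds $|H_vf|$ via $\Delta Mf$) combined with Lemma~\ref{Sine}/\ref{PoissonEst} (which bound $\sin(\pi P_vf)$ below); the resulting estimate $v|A|^{\pm1}\le Cv^{1-(f_{\max}-f_{\min})}$ coincides with Lemma~\ref{Ineq}. The genuine methodological difference is in how tangential sequences are handled. You propose to track the joint scaling in $v_n$ and $\delta_n=u_n-x$ directly and verify that the power count closes (using $f_{\max}<1$, resp.\ $f_{\min}>0$); this works, and in Cases~(2)--(5) amounts to the one--sided integration--by--parts estimates that appear in the proof of Proposition~\ref{Edge2}. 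The paper instead avoids the two--parameter bookkeeping by a \emph{topological trapping argument} (Lemma~\ref{lemUniformconv}, Proposition~\ref{GenericProp}): one shows uniform convergence of the orthogonal limits $(\chi_\LL(y,v),\eta_\LL(y,v))\to(y,1)$ for $y$ ranging over suitable sequences $r_n\downarrow x$, $l_n\uparrow x$, and then uses that $W_\LL^{-1}$ is a homeomorphism to trap any tangential sequence inside shrinking regions $T_n$. Your approach is more elementary and self--contained; the paper's is more conceptual and, crucially, extends to the weaker hypotheses of Sections~3--4 where the direct power count no longer closes uniformly.

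One small correction: in Cases~(3) and~(5) (saturation on one side) it is $|A|^{-1}=e^{-\mathrm{Re}\,C(w)}$ that blows up like $|w-x|^{-1}$, not $|A|$; the rest of your outline is unaffected.
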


Recall that for a general $f \in \rho_{c,1}^{\lambda}(\R)$ the assumptions of Lemma \ref{lemBddEdge2} need not be satisfied, and so the
above lemma gives an incomplete picture.
The main goal of this paper is to extend this result. In particular, we will
examine the novel and subtle geometric behaviour of $\dv\LL$ when the conditions
of the above lemma are violated. This analysis is surprisingly difficult, and
naturally leads to questions in harmonic analysis. Points in $\dv \LL\backslash \EE$ will be either of the form $(x,1)$, or be points where we expect to have non-standard, or non-universal "edge" behaviour for the correlation kernel. The detailed local asymptotics will not be investigated in the present paper.

\subsection{Introduction to The Geometry of $\dv\mathcal{L}\backslash \mathcal{E}$ and the Non-Trivial Support of $\mu$}

As explained in the previous section, fixing
$\mu \in \mathcal{M}_{c,1}^{\lambda}(\R)$ (see remark \ref{remmu})
and defining $\chi_\LL$ and $\eta_\LL$ as in equations
(\ref{InverseReal1}) and (\ref{InverseReal2}),
we wish to examine the boundary behaviour of the homeomorphism
$(\chi_\LL(\cdot),\eta_\LL(\cdot)) : \mathbb{H} \to \LL$ in the
neighbourhood of the following set:
\begin{Def}
Given $\mu \in \mathcal{M}_{c,1}^{\lambda}(\R)$, the \emph{non-trivial support} of $\mu$,
denoted $\Sn \subset \R$, is the complement of the open set defined
in equation (\ref{eqR}). More exactly,
\begin{align*}
\Sn
:= \text{supp}(\mu)\cap \text{supp}(\lambda-\mu) \setminus (R_1\cup R_2),
\end{align*}
where $\lambda$ is Lebesgue measure and $R_1 \cup R_2 = \partial (\R \setminus \supp(\mu))
\cap \partial (\R \setminus \supp(\lambda-\mu))$ (see remark \ref{remR}).
\end{Def}

Throughout the remainder of this paper we therefore make the following
assumptions:
\begin{hyp}
\label{hypConv1}
Fix $\mu \in \mathcal{M}_{c,1}^{\lambda}(\R)$ for which $\Sn^{\circ}$
is non-empty. 
\end{hyp}
\begin{rem}
Hypothesis \ref{hypConv1} excludes densities of the form $f(t)=\phi(t)\chi_{\mathfrak{K}}(t)$, where $\phi\in\rho_{c,1}^{\lambda}(\R)$, and $\mathfrak{K}$ is a measurable closed set such that $\mathfrak{K}^{\circ}=\varnothing$. Then $\Sn\subset \mathfrak{K}$. In particular, we will not consider examples of the form $f(t)=\chi_{\mathfrak{C}}(t)$, where $\mathfrak{C}$ is a fat Cantor set, that is a nowhere dense set such that $\lambda(\mathfrak{C})>0$.
\end{rem}
\begin{hyp}
\label{hypConv2}
Let $X:=\{t: 0<f(t)<1, d\mu(t)=f(t)dt \}$. Assume that for any open interval $I\subset \Sn^{\circ}$, $\lambda(X\bigcap I)>0$.
\end{hyp}
\begin{rem}
\label{hypRem}
This assumption is non-trivial. In \cite{Rudin}, it is shown that there exists a Borel set $A\subset [0,1]$ such that for any interval $I\subset [0,1]$ one has
\begin{align}
\label{Nonfull}
0<\lambda(A\bigcap I)<\lambda(I).
\end{align}
Taking $f\big\vert_{[0,1]}(t)=\chi_A(t)$, (\ref{Nonfull}) shows that $[0,1]\subset \Sn$. However, $\lambda(\{t:0<f(t)<1\}\bigcap[0,1])=0$. 
\end{rem}
Fix $x \in \Sn$ and a sequence
$\{w_n\}_{n\ge1} \subset \mathbb{H}$ with $w_n \to x$ as $n \to \infty$. Assuming these hypothesises, we wish to examine the behaviour of
$\{(\chi_\LL(w_n),\eta_\LL(w_n))\}_{n\ge1}$ as $n \to \infty$ for the
various possibilities of the point $x \in \Sn$ and the sequence
$\{w_n\}_{n\ge1} \subset \mathbb{H}$. More precisely, we introduce the following equivalence relation:
\begin{Def}
\label{defEquiRel}
To sequences $\omega_x=\{w_n\}_{n=1}$ and $\omega'_x=\{w'_m\}_{m=1}$ are said to be equivalent if the following holds:
\begin{itemize}
\item
$\lim_{n\to\infty}w_n=\lim_{k\to\infty}w'_m=x.$
\item
There exist $N>0$ and $M>0$, depending on $\omega_x$ and $\omega_x'$ such that $w_{N+n}=w'_{M+n}$ whenever $n>0$. 
\end{itemize}
This is easily seen to be an equivalence relation. We denote this by $\omega_x\sim \omega_x'$ and denote $[\omega]$ by its equivalence class. Furthermore, for each $x\in \R$, let $S_x$ denote the set of equivalence classes of sequences converging to $x$.  
\end{Def}
Now let 
\begin{align}
\dv \mathcal{L}_{\omega}(x)&:=\overline{\{(\chi_\LL(w_n),\eta_\LL(w_n)): n\geq1\}}\backslash\{(\chi_\LL(w_n),\eta_\LL(w_n)): n\geq 1\}\\&=\{(\chi',\eta')\in \mathcal{P}: \{w_{n_k}\}_k\subset \{w_n\}_{n=1},\lim_{k\to \infty}(\chi_\LL(w_{n_k}),\eta_\LL(w_{n_k})=(\chi',\eta')\}.
\end{align}
Then clearly $\dv \LL_{\omega}(x)=\dv \LL_{\omega'}(x)=\dv \LL_{[\omega]}(x)$ whenever $\omega\sim \omega'$. Finally let
\begin{align}
\dv \mathcal{L}(x)=\bigcup_{[\omega]\in S_x}\dv \LL_{[\omega]}(x).
\end{align}
We now note that by Lemma \ref{lemLBoundary} in the appendix, $\dv \LL=\dv \LL(\infty)\bigcup\big(\bigcup_{x\in \R}\dv \LL(x)\big)$.
In Lemma \ref{TangTop}, we show for every $x\in \Sn^{\circ}$
that we can always choose $\{w_n\}_{n\ge1}$ such that
$(\chi_\LL(w_n),\eta_\LL(w_n)) \to (x,1)$. In other words,
$\Sn^{\circ} \times \{1\} \subset \dv\LL$. We define the generic
case as that in which this limit is observed for arbitrary sequences:
\begin{Def}
\label{defGeneric}
$x \in\Sn$ is said to be {\em generic}
whenever $\dv \LL(x)=\{(x,1)\}$. In particular, this is equivalent to $(\chi_\LL(w_n),\eta_\LL(w_n)) \to (x,1)$ as $n \to \infty$ for arbitrary
sequences $\{w_n\}_{n\ge1}\subset \Hp$ converging to $x$. The set of generic points will be denoted by $\Sgen$.
\end{Def}
The homeomorphism, $(\chi_\LL(\cdot),\eta_\LL(\cdot)) : \mathbb{H} \to \LL$,
therefore has a unique continuous extension to $x \in \mathcal{S}_\mu$ whenever
$x$ is generic. Lemma \ref{lemBddEdge2}, above, gives sufficient conditions
for $x$ to be generic. We generalise these conditions in Proposition \ref{GenericProp}. Moreover we prove in Theorem \ref{GenericThm} that for a typical set $G\subset \Sn^{\circ}$, where $G$ is defined in Proposition \ref{Topology1}, $G\subset \Sgen$ is dense in $\Sn^{\circ}$.

We are particularly interested in those parts of $\dv\LL$ that arise from
non-generic points. Recall in the previous section, we defined
the edge, $\EE \subset \dv\LL$, by extending
$(\chi_\LL(\cdot),\eta_\LL(\cdot))$ uniquely and continuously to
$\R \setminus \Sn$. In particular $\EE=\bigcup_{x\in R}\dv\LL(x)$. Also the point
$\dv \LL(\infty)=(\frac12 + \int t d\mu(t), 0)$ is obtained by extending the homeomorphism
uniquely and continuously to `infinity'. Finally, as observed above,
$\Sn^{\circ} \times \{1\} \subset \dv\LL$.
We therefore define the {\em singular part of $\dv\LL$}, denoted
$\dv\mathcal{L}_{\text{sing}} \subset \dv\LL$, as:
\begin{align}
\label{SingBound}
\dv \mathcal{L}_{\text{sing}}
:= \dv\LL \setminus \bigg(\mathcal{E} \bigcup \bigg\{ \bigg( \frac12 + \int t d\mu(t), 0 \bigg) \bigg\}
\bigcup \bigg(\Sgen \times \{1\} \bigg) \bigg).
\end{align}
In view of Lemma \ref{lemLBoundary}, this leads to the natural decomposition of the boundary $\dv \LL$ according to
\begin{align}
\label{BoundDecomp}
\dv \mathcal{L}=\bigg\{\bigg(\frac12 + \int t d\mu(t), 0\bigg)\bigg\}\bigcup \EE\bigcup(\Sgen \times \{1\})\bigcup \dv \mathcal{L}_{\text{sing}}.
\end{align}
In particular we have 
\begin{align}
\label{SingBoundDecomp}
\dv \mathcal{L}_{\text{sing}}=\bigcup_{x\in \R\backslash (R\cup\Sgen)}\dv\LL(x).
\end{align}

We begin our analysis by expressing $((\chi_\LL(w),\eta_\LL(w))=((\chi_\LL(u,v),\eta_\LL(u,v))$ in real and imaginary parts of $C(w)$, where $w=u+iv$. Using that 
\begin{align}
\label{eqRn}
\text{Re}(C(w))= \int_\R \frac{(u-t) f(t) dt}{(u-t)^2 + v^2}:=\pi H_vf(u) \\
\label{eqIn}
- \text{Im}(C(w))= \int_\R \frac{v f(t) dt}{(u-t)^2 + v^2}=\pi P_v f(u),
\end{align}
equations (\ref{InverseReal1}) and (\ref{InverseReal2}) then become
\begin{align}
\label{eqchin}
\chi_{\mathcal{L}}(u,v)
&= u +v\frac{  e^{-\pi H_vf(u)}-\cos(\pi P_v f(u))}{\sin(\pi P_v f(u))}, \\
\label{eqetan}
\eta_{\mathcal{L}}(u,v)
&= 1 - v\frac{ e^{\pi H_vf(u)} +  e^{-\pi H_vf(u)}-2 \cos(\pi P_v f(u)))}{\sin(\pi P_v f(u)))}.
\end{align}
\begin{rem}
Recall that $ P_v f(u)$ is the Poisson kernel of $f$ and $H_vf(u)$ is the harmonic conjugate of $P_v f(u)$. Also note that by Lemma \ref{Poisson1}, $0<\pi P_v f(u)<\pi$ for all $(u,v)\in \Hp$. It is a well-known fact from harmonic analysis that 
\begin{align}
&\lim_{v\to 0^+}P_v f(u) = f(u) \quad \text{for a.e $u$}\\
&\lim_{v\to 0^+}H_v f(u) = \mathcal{H}f(u) \quad \text{for a.e $u$},
\end{align}
where $\mathcal{H}f$ denotes the Hilbert transform of $f$. In fact, the limits exist for every $u$ in the Lebesgue set of $f$ and the Lebesgue set of $\mathcal{H}f$ respectively.
\end{rem}

We now distinguish between different types of sequences that will be of use:
\begin{Def}
\label{defNonTang}
$\{w_n\}_{n\ge1}=\{u_n+iv_n\}_{n\ge1}$ is said to converge {\em non-tangentially} to $x$ whenever there exists
a constant $k>0$ for which $| \frac{u_n-x}{v_n} | < k$ for all $n$
sufficiently large and such that $\lim_{n\to \infty}w_n=x$. $\{w_n\}_{n\ge1}$ is said to converge {\em tangentially} to $x$ whenever
$| \frac{u_n-x}{v_n} | \to \infty$ as $n \to \infty$ and $\lim_{n\to \infty}w_n=x$.
\end{Def}

Note, we can alternatively define the above sequences by considering
the following {\em truncated cones}: For all $k > 0$ and $h > 0$,
define $\Gamma_k^h(x) \subset \Gamma_k(x) \subset \mathbb{H}$ by,
\begin{align*}
\Gamma_k^h(x) &:= \{(u,v) \in \mathbb{H}: 0<v<h \text{ and } |u-x| < kv \}, \\
\Gamma_k(x) &:= \{(u,v) \in \mathbb{H}: v>0 \text{ and } |u-x| < kv \}.
\end{align*}
These are shown in figure (\ref{Cone}). Note that $\{w_n\}_{n\ge1}$
converges non-tangentially to $x$ iff $w_n\to x$ and there exists a $k>0$ for which
$w_n \in \Gamma_k(x)$ for all $n$ sufficiently large. Also,
$\{w_n\}_{n\ge1}$ converges tangentially to $x$ iff $w_n\to x$ and there exists an $n(k)$ for which $w_n \not\in \Gamma_k(x)$ for all $n>n(k)$.

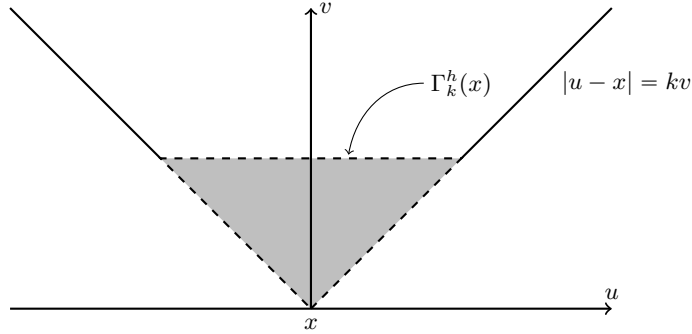
\begin{figure}[H]
\centering
\begin{tikzpicture}
\filldraw[fill=lightgray, draw=black,thick,dashed] (0,0) -- (2,2) -- (-2,2) -- (0,0);
\draw[thick,->] (0,0) -- (0,4);
\draw[thick,->] (-4,0) -- (4,0);
\draw[thick] (-2,2) -- (-4,4);
\draw[thick] (2,2) -- (4,4);
\draw (4.2,3) node {\small$\vert u-x \vert= kv$};
\draw (4,0.2) node {\small$ u$};
\draw (0.2,4) node {\small$v$};
\draw (0,-0.2) node {\small$x$};
\draw[<-] (0.5,2.05) to [out=80,in=180] (1.5,3);
\draw (2,3) node {\small$\Gamma_k^h(x)$};
\end{tikzpicture}

\caption{\emph{Truncated Cone}}
\label{Cone}
\end{figure}

Of course, arbitrary sequences $\{w_n\}_{n\ge1}\subset \Hp$ such that $\lim_{n\to \infty}w_n=x$
 are not-necessarily tangential nor non-tangential.
However the following result trivially follows from definitions
\ref{defGeneric} and \ref{defNonTang} by considering sub-sequences:
\begin{Lem}
\label{lemGeneric}
$x$ is generic if and only if both of the following occur:
\begin{itemize}
\item
$(\chi_{\mathcal{L}}(w_n),\eta_{\mathcal{L}}(w_n)) \to (x,1)$ as $n \to \infty$ whenever $\{w_n\}_{n\ge1}$ converges non-tangentially to $x$.
\item
$(\chi_{\mathcal{L}}(w_n),\eta_{\mathcal{L}}(w_n)) \to (x,1)$ as $n \to \infty$ whenever $\{w_n\}_{n\ge1}$  converges tangentially to $x$.
\end{itemize}
\end{Lem}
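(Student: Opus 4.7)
The forward direction is immediate from Definition \ref{defGeneric}: if $x$ is generic then $(\chi_{\LL}(w_n),\eta_{\LL}(w_n)) \to (x,1)$ for \emph{every} sequence $\{w_n\} \subset \Hp$ converging to $x$, and hence in particular for every non-tangential sequence and every tangential sequence. So the entire content of the lemma lies in the reverse implication.

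For the reverse direction, I would argue by contradiction. Suppose both bulleted convergence conditions hold, but there exists a sequence $\{w_n\} = \{u_n+iv_n\} \subset \Hp$ with $w_n \to x$ for which $(\chi_{\LL}(w_n),\eta_{\LL}(w_n))$ does \emph{not} converge to $(x,1)$. Then there exist $\varepsilon > 0$ and a subsequence $\{w_{n_k}\}$ whose images lie uniformly at distance at least $\varepsilon$ from $(x,1)$ in $\R^2$. The delicate point is that $\{w_n\}$ need not be purely tangential nor purely non-tangential, because the ratios $r_n := |u_n - x|/v_n \in [0,+\infty)$ may oscillate; hence neither bulleted hypothesis applies to $\{w_n\}$ directly.

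The key step is to pass, via Bolzano--Weierstrass in the compactification $[0,+\infty]$, to a further subsequence along which $r_{n_k}$ converges to some $L \in [0,+\infty]$. Split into cases. If $L < +\infty$ then $r_{n_k} < L+1$ for all sufficiently large $k$, so the sub-subsequence is non-tangential to $x$ in the sense of Definition \ref{defNonTang} (with cone parameter $L+1$); the first bulleted hypothesis then forces $(\chi_{\LL}(w_{n_k}),\eta_{\LL}(w_{n_k})) \to (x,1)$, contradicting the $\varepsilon$-separation. If instead $L=+\infty$ then $r_{n_k} \to \infty$, so the sub-subsequence is tangential by Definition \ref{defNonTang}, and the second bulleted hypothesis yields the same contradiction.

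The only conceptual obstacle is handling sequences that mix tangential and non-tangential behaviour, and this is resolved entirely by the compactness of $[0,+\infty]$ and the standard subsequence extraction; no analytic property of $\chi_{\LL}$ or $\eta_{\LL}$ is needed beyond what is built into the hypotheses. This matches the remark preceding the statement that the lemma follows \emph{trivially from Definitions \ref{defGeneric} and \ref{defNonTang} by considering sub-sequences}.
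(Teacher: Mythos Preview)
Your argument is correct and is precisely the subsequence extraction the paper has in mind; the paper gives no proof beyond the remark that the lemma ``trivially follows from definitions \ref{defGeneric} and \ref{defNonTang} by considering sub-sequences,'' and your compactness-of-$[0,+\infty]$ dichotomy is exactly how one makes that remark rigorous.
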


Generic situations are considered in section 4. We begin by
considering non-tangential sequences:
\begin{Def}
\label{DefReg}
$x \in \Sn$ is said to be {\em regular} if and only if
$(\chi_{\mathcal{L}}(w_n),\eta_{\mathcal{L}}(w_n)) \to (x,1)$ as $n \to \infty$ whenever $\{w_n\}_{n\ge1}$ converges non-tangentially to $x$. The set of regular points is denoted by $\Sreg$.
\end{Def}

In sections 2-4 we provide
sufficient conditions for a point to be regular. For example, in Proposition \ref{Generic1}
we show that $x \in \Sn$ is regular whenever $x$ belongs to the Lebesgue set of $f$ and $0<f(x)<1$.

Lemma \ref{lemGeneric} and Definition \ref{DefReg} imply that all generic points
are regular. The converse question, however, is non-trivial. In
Proposition \ref{GenericProp} we give sufficient conditions for a regular point to be
generic. In order to prove that the tangential limits converge correctly, we
assume a uniform convergence condition in a neighborhood of $x$. This condition
holds, for example, whenever the measure $\mu$ is such that $f \in \Hc$ (see for example
Proposition \ref{HoldeReg}).

In section 6 we consider non-generic situations:
\begin{Def}
\label{DefSing}
$x\in \Sn$ is said to be \emph{singular} if it is not regular, and the set of all singular points will be denoted by $\Ssing$.
We identify four classes singular points:
\begin{itemize}
\item
$x \in \SsingI$ if and only if there exists a $\delta>0$ and a function
$\varphi : \R \to \R$ for which $x$ is in the Lebesgue set of $\varphi$, $\varphi(x) = 0$,
$|\mathcal{H} \varphi (x)| <+\infty$, and $f(t) = \chi_{[x-\delta,x]}(t) + \varphi(t)$
for almost all $t$.
\item
$x \in \SsingII$ if and only if there exists a $\delta>0$ and a function
$\varphi : \R \to \R$ for which $x$ is in the Lebesgue set of $\varphi$, $\varphi(x) = 0$,
$|\mathcal{H} \varphi (x)| <+\infty$, and $f(t) = \chi_{[x,x+\delta]}(t) + \varphi(t)$
for almost all $t$.
\item
$x \in  \SsingIII$ if and only if $\int_{\R}\frac{f(t)dt}{(x-t)^2}<+\infty$
and $\mathcal{H}f(x)\neq 0$.
\item
$x \in  \SsingIV$ if and only if $\int_{\R}\frac{1-f(t)dt}{(x-t)^2}<+\infty$
and $\mathcal{H} (1-f)(x)\neq 0$.
\end{itemize}
\end{Def}
The fact that $x \in \Sn$ is singular whenever
$x \in \SsingI \cup \SsingII\cup  \SsingIII \cup  \SsingIV$
is shown in Propositions \ref{Sing1}-\ref{Sing2} and \ref{Sing3}. Indeed we show, whenever
$x \in \SsingI \cup \SsingII\cup  \SsingIII \cup  \SsingIV$
and $\{w_n\}_{n\ge1}$ is non-tangential, that $(\chi_{\mathcal{L}}(w_n),\eta_{\mathcal{L}}(w_n))$ converges to a 
point which is different from $(x,1)$. We give expressions for the position of this
in each of the 4 cases, noting in particular that the position is independent of the
choice of the constant $\delta$ whenever $x \in \SsingI \cup \SsingII$.
Also, it follows from the definition of $\Hc$ that
$\SsingI \cup \SsingII\cup \SsingIII \cup  \SsingIV$
is the set of all singular points whenever the measure $\mu$ is such that $f \in \Hc$.

\indent In particular the set  $\mathcal{S}_{nt}^{sing}(\mu)$ can be seen as an obstruction to extending the map $W_{\mathcal{L}}^{-1}(\mu):\Hp\rightarrow \mathcal{L}$ to a homeomorphism of the boundary. More precisely, in Theorem \ref{ExtendedMap} it is proven that $W_{\mathcal{L}}^{-1}(\mu):\Hp\rightarrow \mathcal{L}$ extends to a homeomorphism $\overline{W}_{\mathcal{L}}^{-1}(\mu):\overline{\Hp}\rightarrow \overline{\mathcal{L}}$ if $\mathcal{S}_{nt}^{sing}(\mu)=\varnothing.$ In particular, when $\mathcal{S}_{nt}^{sing}(\mu)\neq\varnothing$, then $\dv \LL$ is not homeomorphic to $S^1$.
Furthermore it will also be shown that these points need not be isolated. When considering the boundary behavior of the map $W_{\mathcal{L}}^{-1}$ for sequence $\{w_n\}_n\in \Hp$ such that $\lim_{n\to \infty}w_n=x\in \mathcal{S}_{nt}^{sing}(\mu)$ we will almost exclusively consider the case of isolated singular points. Furthermore, it will be shown that to study boundary behavior at such points one will be forced to consider particular classes of tangential sequences converging to $x$. More precisely, under an additional technical assumption on the density $f$, we prove in Propositions \ref{GeoSing1}-\ref{GeoSing4} and Proposition \ref{GeoSing2} and Theorems \ref{GeoThm1}-\ref{GeoThm2} that:
\begin{itemize}
\item
If $x \in \SsingI$ and there exists an $\eps>0$ such that $\int_{\R}\frac{\vert \varphi(t)\vert dt}{(y-t)^2}<\infty$ for all $y\in(x-\eps,x)\cup(x,x+\eps)$, then 
\begin{align*}
\dv\mathcal{L}(x)=\overline{\bigg\{\bigg(x,1-\frac{\delta e^{\pi \mathcal{H}\varphi(x)}}{1+\xi}\bigg):\xi\in(0,+\infty)\bigg\}}.
\end{align*}
In particular $x$ is isolated on the right from points in $\SsingIII$ and the left from points in $\SsingIV$.
\item
If $x \in \SsingII$ and there exists an $\eps>0$ such that $\int_{\R}\frac{\vert \varphi(t)\vert dt}{(y-t)^2}<\infty$ for all $y\in(x-\eps,x)\cup(x,x+\eps)$, then 
\begin{align*}
\dv\mathcal{L}(x)=\overline{\bigg\{\bigg(x+\frac{\delta e^{-\pi \mathcal{H}\varphi(x)}}{1+\xi},1-\frac{\delta e^{-\pi \mathcal{H}\varphi(x)}}{1+\xi}\bigg):\xi\in(0,+\infty)\bigg\}}.
\end{align*}
In particular $x$ is isolated on the right from points in $\SsingIV$ and the left from points in $\SsingIII$.
\item
If $x \in \SsingIII$ and there exists an $\eps>0$ such that $\int_{\R}\frac{f(t)dt}{(y-t)^2}<\infty$ for all $y\in(x-\eps,x)\cup(x,x+\eps)$, then 
\begin{align*}
\dv\mathcal{L}(x)=\overline{\bigg\{\bigg(x+\frac{1-e^{-\pi\mathcal{H}f(x)}}{\xi-\pi(\mathcal{H}f)'(x)},1-\frac{e^{\pi\mathcal{H}f(x)}+e^{-\pi\mathcal{H}f(x)}-2}{\xi-\pi(\mathcal{H}f)'(x)}\bigg):\xi\in(0,+\infty)\bigg\}}.
\end{align*}
In particular $x$ is isolated from other points in $\SsingIII$.
\item
If $x \in \SsingI$ and that there exists an $\eps>0$ such that $\int_{\R}\frac{(1-f(t)) dt}{(y-t)^2}<\infty$ for all $y\in(x-\eps,x)\cup(x,x+\eps)$, then 
\begin{align*}
\dv\mathcal{L}(x)=\overline{\bigg\{\bigg(x+\frac{1+e^{\pi\mathcal{H}(1-f)(x)}}{\xi-\pi(\mathcal{H}(1-f)'(x)},1-\frac{e^{\pi\mathcal{H}(1-f)(x)}+e^{-\pi\mathcal{H}(1-f)(x)}+2}{\xi-\pi(\mathcal{H}(1-f)'(x)}\bigg):\xi\in(0,+\infty)\bigg\}}.
\end{align*}
In particular $x$ is isolated from other points in $\SsingIV$.
\end{itemize}
Note that the geometry of $\dv\mathcal{L}(x)$ in these cases is entirely characterized by either $\mathcal{H}\varphi(x)$ or the numbers $\mathcal{H}f(x)$ and $(\mathcal{H}f)'(x)$. An additional reason why we choose to only consider those singular points which satisfied some additional criteria for isolatedness, is that we do not believe that the same type of simple characterization of $\dv\mathcal{L}(x)$ is possible in the case dense singular points, or in the case when the assumptions of Propositions \ref{GeoSing1}-\ref{GeoSing2} are violated.
Finally, if one is to apply Definition \ref{DefSing} to points $x\in R$ one would find that every $x$ in $R_{\mu}\cup R_{\lambda-\mu}\cup R_1\cup R_2$ were singular points. Therefore the case of considering boundary behavior of non-isolated boundary points of $\dv\Sn$ is similar to the case of non-isolated singular points. We will therefore restrict ourselves to consider only isolated points of $\dv\Sn$.
It will therefore prove useful to define the following subsets of $\dv \mathcal{S}_{nt}(\mu)$:
\begin{Def}
Let $\dv\mathcal{S}_{nt}^{iso}(\mu):=\dv \mathcal{S}_R^0\cup \dv\mathcal{S}_L^0\cup \dv\mathcal{S}_R^1\cup\dv\mathcal{S}_R^1$ be the isolated boundary points of $\dv \Sn$, where 
\begin{itemize}
\item
$\dv\mathcal{S}_R^0$ is the set of all $x\in\dv \mathcal{S}_{nt}(\mu)$ for which there exists intervals $I = (x,x+\delta)$ and $J=(x-\delta,x)$, such that $I\subset \supp(\mu)^c$ and $J\subset \Sn$ for some $\delta>0$.
\item
$\dv\mathcal{S}_L^0$ is the set of all $x\in\dv \mathcal{S}_{nt}(\mu)$ for which there exists intervals $I = (x-\delta,x)$ and $J=(x,x+\delta)$, such that $I\subset \supp(\mu)^c$ and $J\subset \Sn$ for some $\delta>0$.
\item
$\dv\mathcal{S}_R^1$ is the set of all $x\in\dv \mathcal{S}_{nt}(\mu)$ for which there exists intervals, $I = (x,x+\delta)$ and $J=(x-\delta,x)$, such that $I\subset \supp(\lambda-\mu)^c$ and $J\subset \Sn$ for some $\delta>0$.
\item
$\dv\mathcal{S}_L^1$ is the set of all $x\in\dv \mathcal{S}_{nt}(\mu)$ for which there exists interval, $I = (x-\delta,x_1)$ and $J=(x,x+\delta)$, such that $I\subset \supp(\lambda-\mu)^c$ and $J\subset \Sn$ for some $\delta>0$.
\end{itemize}
\end{Def}
We notice that if the density $f\in \Hc$ and $x\in \Sn$ is a singular point, then it is isolated. Furthermore, $\dv \mathcal{S}_{nt}(\mu)=\dv\mathcal{S}_{nt}^{iso}(\mu)$, in this case. In Proposition \ref{Edge1} and Proposition \ref{Edge2} we provide sufficient conditions on the density $f$ for when $x\in \dv\mathcal{S}_{nt}^{iso}(\mu)$ is generic. In particular we show that $\lim_{\substack{t\to x\\ t\in R}}(\chi_{\EE}(t),\eta_{\EE}(t))=(x,1)$. If on the other hand $x\in  \SsingI \cup \SsingII\cup  \SsingIII \cup  \SsingIV$ and $(\chi^*,\eta^*)=\lim_{n\to \infty}((\chi_{\EE}(w_n),\eta_{\EE}(w_n))$, where $\{w_n\}_n$ converges non-tangentially to $x$, then $\lim_{\substack{t\to x\\ t\in R}}(\chi_{\EE}(t),\eta_{\EE}(t))=(\chi^*,\eta^*)$.
\newline \newline \indent
We now consider the boundary behaviour of the complex slope $\Omega$ for sequences $\{(\chi_n,\eta_n)\}_n\subset \LL$ such that $\lim_{n\to \infty}(\chi_n,\eta_n)=(\chi,\eta)\in \dv \LL\backslash \R$. First consider the case when $(\chi,\eta)\in \{(x,1): x\in \Sgen\}$. One can show that almost all \emph{non-tangential} limits exists and 
\begin{align*}
\lim_{n\to \infty}\Omega(\chi_n,\eta_n)=e^{-\pi\mathcal{H}f(x)+i\pi f(x)}\in \C.
\end{align*}
Thus, such limit is thus not in general real, which should be contrasted to the case when $(\chi,\eta)\in \EE$. On the other hand, if we assume that $x\in \Ssing$, and that in addition $x$ is an isolated singular point, then for \emph{all} sequences $\{(\chi_n,\eta_n)\}_n\subset \LL$ such that $\lim_{n\to \infty}(\chi_n,\eta_n)=(\chi,\eta)\in \dv \LL(x)$ we get
\begin{align}
\label{bdComplexSlope}
\lim_{n\to\infty}\Omega(\chi_n,\eta_n)=\left\{
 \begin{array}{ll}
   e^{-\pi\mathcal{H}f(x)}\in \R& \text{if }  x \in \SsingIII\\
    -e^{\pi\mathcal{H}(1-f)(x)}\in\R& \text{if } x\in \SsingIV\\
    0& \text{if }  x\in \SsingI\\
  \infty& \text{if }  x\in \SsingII \\
 \end{array} \right.
\end{align} 
This shows that at least a subset of $\dv \LL_{sing}$ are shocks of the complex Burgers equation in the same way as $\EE$.
\newline \newline \indent We will conclude this introduction by discussing open problems not solved in this paper. 
\begin{Con}
\label{Con1}
$\Sreg=\Sgen.$
\end{Con}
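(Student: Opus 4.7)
\noindent\textbf{Proof plan for Conjecture \ref{Con1}.}
The inclusion $\Sgen \subseteq \Sreg$ is immediate from Definitions \ref{defGeneric} and \ref{DefReg}: a generic point has every sequence converging to $(x,1)$, and in particular every non-tangential one. The content of the conjecture is the reverse inclusion $\Sreg \subseteq \Sgen$.

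Fix $x \in \Sreg$. By Lemma \ref{lemGeneric} it suffices to prove that $(\chi_{\mathcal{L}}(w_n),\eta_{\mathcal{L}}(w_n)) \to (x,1)$ for every tangential sequence $\{w_n = u_n + iv_n\}_{n\ge1} \subset \Hp$ with $w_n \to x$. I would argue by contradiction: assume such a tangential sequence exists with $(\chi_{\mathcal{L}}(w_n),\eta_{\mathcal{L}}(w_n)) \to (\chi^\ast, \eta^\ast) \ne (x,1)$. By compactness, pass to a subsequence along which the limits $p := \lim_n P_{v_n} f(u_n) \in [0,1]$ and $h := \lim_n H_{v_n} f(u_n) \in [-\infty,+\infty]$ both exist, and along which $|u_n-x|/v_n$ tends to $+\infty$ at some definite rate. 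Equations (\ref{eqchin})--(\ref{eqetan}) then translate the inequality $(\chi^\ast,\eta^\ast) \ne (x,1)$ into explicit asymptotic constraints on the triple $(p,h,(u_n-x)/v_n)$. The propositions characterising $\dv\LL(x)$ at isolated singular points (Propositions \ref{GeoSing1}--\ref{GeoSing4}) serve as a dictionary here: each nontrivial accumulation curve in $\dv\LL(x)$ corresponds to one of the four singular behaviours $\SsingI, \SsingII, \SsingIII, \SsingIV$. The objective is to show that the constraints extracted from the tangential subsequence force the density $f$ near $x$ into one of these four patterns, contradicting $x \in \Sreg$.

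The decisive step, and the principal obstacle, is the harmonic-analytic one. Tangential boundary limits of Poisson and conjugate Poisson integrals are genuinely pathological and admit no unrestricted Fatou-type theorem, so the passage from non-tangential control to tangential control cannot be made by soft arguments alone. A viable route is to exploit the holomorphic structure of formulas (\ref{InverseReal1})--(\ref{InverseReal2}): the map $w \mapsto e^{C(w)}$ is holomorphic on $\Hp$, and $W_{\mathcal{L}}^{-1}$ takes values in the compact set $\mathcal{P}$. One may then hope that a Lindel\"of-type principle adapted to this bounded holomorphic setting, combined with the quantitative information supplied by regularity along each fixed non-tangential cone $\Gamma_k(x)$, is strong enough to force coincidence of tangential and non-tangential limits. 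Alternatively, one might attempt to disprove the conjecture by constructing a density $f$ that is regular at some $x$ but whose tangential Poisson integrals exhibit precisely the pathology needed to produce a nontrivial element of $\dv\LL(x)$; the examples following Hypothesis \ref{hypConv2} suggest that the space of admissible densities is rich enough that no such counterexample can be ruled out \emph{a priori}.
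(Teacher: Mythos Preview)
The statement you are addressing is not proved in the paper: it is explicitly listed among the \emph{open problems} at the end of the introduction (Conjectures \ref{Con1} and \ref{Con2}), so there is no proof to compare your proposal against. Your write-up is in fact a faithful reflection of the situation the paper describes. You correctly isolate the trivial inclusion $\Sgen\subseteq\Sreg$ and the substantive direction $\Sreg\subseteq\Sgen$, and you correctly identify the central obstruction: non-tangential control on $P_v f$ and $H_v f$ does not by itself yield tangential control, and there is no general Fatou-type theorem to bridge that gap.

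What the paper does prove in this direction is strictly weaker: Theorem \ref{GenericThm} shows $G\subset\Sgen$ for the typical set $G$ of Proposition \ref{Topology1}, and Lemma \ref{lemUniformconv} together with Proposition \ref{GenericProp} upgrades regularity to genericity under an additional \emph{uniform} convergence hypothesis along suitable sequences $\{r_n\},\{l_n\}$ approaching $x$. The examples following Proposition \ref{GenericProp} (in particular Example \ref{NRegEx2}) illustrate exactly why these extra hypotheses are not automatic for an arbitrary regular point.

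Your Lindel\"of-type idea is natural but, as you note, not sufficient as stated: the map $W_\LL^{-1}$ is bounded and continuous but not holomorphic in $w$ (it depends on both $w$ and $\bar w$ through $e^{C(w)}$ and $e^{C(\bar w)}$), so the classical Lindel\"of principle does not apply directly. Your closing remark that a counterexample cannot be ruled out \emph{a priori} is consistent with the paper's stance --- the authors leave the question open.
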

\begin{Con}
\label{Con2}
$\Ssing$ is meagre set in $\R$.
\end{Con}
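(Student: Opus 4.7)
The plan is to decompose $\Ssing$ as the union of the four explicitly characterized classes $\SsingI, \SsingII, \SsingIII, \SsingIV$ together with a possibly non-empty remainder $\mathcal{R} := \Ssing \setminus (\SsingI \cup \SsingII \cup \SsingIII \cup \SsingIV)$, which is empty whenever $f \in \Hc$, and to show that each piece is meagre. Since $\Ssing \setminus \Sn^{\circ} \subset \dv \Sn$ is nowhere dense in $\R$ (as the topological boundary of a closed set with non-empty interior, by Hypothesis \ref{hypConv1}), it suffices to work inside $\Sn^{\circ}$.

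For $\SsingI$ (the argument for $\SsingII$ being symmetric), I would stratify by the parameter $\delta$ of Definition \ref{DefSing}: let $\SsingI^n := \{x \in \SsingI : \delta \text{ may be chosen } \geq 1/n\}$. If $x_1 < x_2$ are two points of $\SsingI^n$ at distance less than $1/n$, then the local representation of $f$ at $x_2$ forces $f$ to be essentially equal to $1$ on an interval extending to the left of $x_1$, while the local representation at $x_1$ forces $f$ to be essentially equal to $0$ immediately to the right of $x_1$; comparing Lebesgue averages of $f$ near $x_1$ yields a contradiction. Hence $\SsingI^n$ is $1/n$-separated in the compact set $[a,b]$, so finite, and $\SsingI = \bigcup_n \SsingI^n$ is countable.

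For $\SsingIII$ (and symmetrically $\SsingIV$), I would introduce $g(x) := \int_{\R} f(t)(x-t)^{-2}\, dt$, which is lower semicontinuous in $x$ by Fatou's lemma, and the closed level sets $T_n := \{x \in \R : g(x) \leq n\}$. Since $\SsingIII \subset \bigcup_n T_n$, it suffices to prove that each $T_n \cap \Sn^{\circ}$ has empty interior in $\Sn^{\circ}$. Suppose for contradiction that some open interval $J \subset \Sn^{\circ}$ satisfies $J \subset T_n$. By Hypothesis \ref{hypConv2} the set $\{0 < f < 1\} \cap J$ has positive Lebesgue measure, so almost every $x$ in this set is a Lebesgue point of $f$ with $f(x) > 0$; a dyadic Lebesgue-density estimate on annuli of radius $2^{-k}$ around such an $x$ then yields $g(x) = +\infty$, contradicting $x \in T_n$. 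Hence each $T_n \cap \Sn^{\circ}$ is nowhere dense in $\Sn^{\circ}$, and since $\Sn^{\circ}$ is open in $\R$, also nowhere dense in $\R$, so $\SsingIII \cap \Sn^{\circ}$ is meagre in $\R$.

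The principal obstacle will be the treatment of the remainder $\mathcal{R}$, which is non-empty for a general density. The natural route is to invoke Theorem \ref{GenericThm}, which produces a dense set $G \subset \Sn^{\circ}$ with $G \subset \Sgen \subset \Sreg$: if $G$ can be strengthened to be residual (i.e., a dense $G_\delta$), then $\Sn^{\circ} \setminus \Sreg \subset \Sn^{\circ} \setminus G$ is meagre and the conjecture follows immediately. Promoting this density statement to a Baire-category statement requires finer control of the boundary behaviour of the Poisson and Hilbert transforms of $f$ at points to which none of the four conditions of Definition \ref{DefSing} applies, and this is the genuine analytic content of the conjecture; it is closely tied to Conjecture \ref{Con1}, from which the meagreness of $\Ssing$ would also follow at once via $\Ssing = \Sn \setminus \Sreg = \Sn \setminus \Sgen$.
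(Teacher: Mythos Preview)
This statement is a \emph{Conjecture} in the paper and is not proved there; the authors list it among the open problems at the end of the introduction, and immediately afterwards exhibit measures for which $\Ssing$ has positive Lebesgue measure (Lemma~\ref{ExSingset1}) and is dense in $\Sn^\circ$ (Lemma~\ref{ExSingset2}). There is therefore no paper proof to compare your attempt against; what you have written is a partial attack on an open problem.

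On the substance of your attempt, two comments. First, your treatment of $\SsingIII$ (and by symmetry $\SsingIV$) via the lower\nobreakdash-semicontinuous function $g(x)=\int_\R f(t)(x-t)^{-2}\,dt$, the closed sublevel sets $T_n=\{g\le n\}$, and Hypothesis~\ref{hypConv2} is sound and does show that $\SsingIII\cap\Sn^\circ$ is meagre. Second, your separation argument for $\SsingI$ does not work. By the remark following Proposition~\ref{Sing1}, the parameter $\delta$ in Definition~\ref{DefSing} may be replaced by \emph{any} positive value without affecting membership in $\SsingI$; hence your stratification is vacuous, $\SsingI^n=\SsingI$ for every $n$, and your conclusion would force $\SsingI$ to be $1/n$\nobreakdash-separated for all $n$, i.e.\ to contain at most one point, which is false. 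The underlying error is the claim that the representation $f=\chi_{[x_2-\delta,x_2]}+\varphi_2$ ``forces $f$ to be essentially equal to $1$'' near $x_1$: the Lebesgue\nobreakdash-point condition on $\varphi_2$ constrains $\varphi_2$ only at $x_2$, not on the rest of $[x_2-\delta,x_2]$, so on $(x_1,x_2)$ the density can perfectly well transition from $0$ (in right mean at $x_1$) to $1$ (in left mean at $x_2$).

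You correctly identify that the genuine obstruction is the remainder $\mathcal{R}$, and that the paper only establishes \emph{density} of the generic set $G$ of Proposition~\ref{Topology1}, not residuality. Upgrading this to a Baire\nobreakdash-category statement is exactly the open content of the conjecture; your outline does not close that gap, and neither does the paper.
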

However, note that $\Ssing$ is not necessarily negligible from a measure theoretic point of view. This is proven in Lemma \ref{ExSingset1}, where we show that there exists a $\mu\in \mathcal{M}_{c,1}^{\lambda}(\R)$ such that $\lambda(\Ssing)>0$.  Moreover, in Lemma \ref{ExSingset2}, we show that the set $\Ssing$ may be dense in $\Sn^{\circ}$. Finally, in Proposition \ref{propHausdorff}, we show that there exits a $\mu\in \mathcal{M}_{c,1}^{\lambda}(\R)$ such that $\mathcal{H}^1(\dv \LL)=+\infty$, where $\mathcal{H}^1$ denotes the one dimensional Hausdorff measure.

\begin{ex}
Consider the density $f(t)=\frac{15}{16}(t+1)^2(t-1)^2\chi_{[-1,1]}(t)$. Here $\supp(\mu)=[-1,1]$ and $\Sreg=\Sgen=(-1,1)$ and $\Ssing=\{-1\}\cup\{1\}$. The boundary of the \emph{liquid} region is shown in figure \ref{figLiquidregion}.
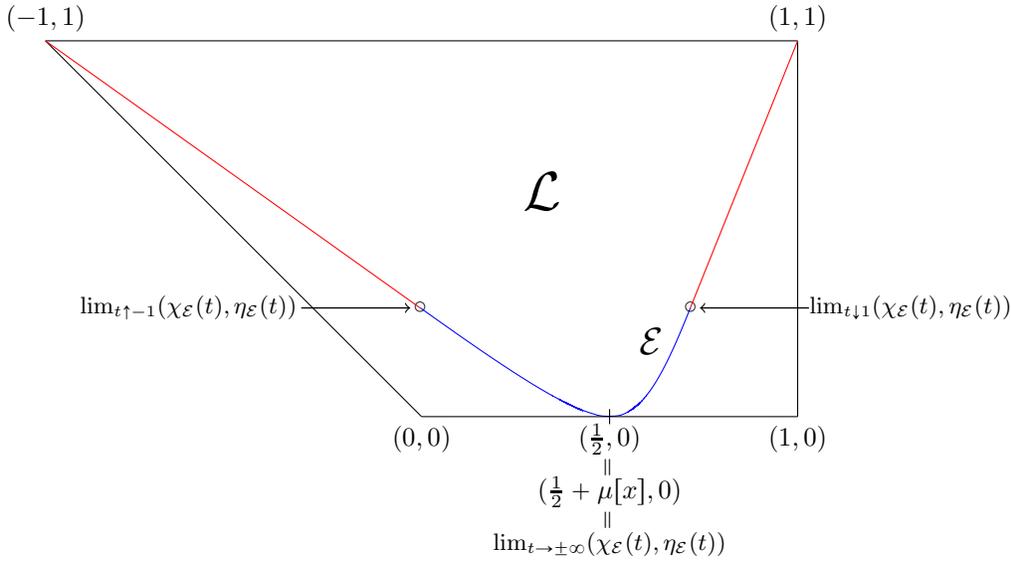
\begin{figure}[H]
\begin{tikzpicture}
\begin{axis}[hide axis, xmin=-1,xmax=1, ymin=0,ymax=1, samples=100, smooth, x=5cm, y=5cm]
\addplot [domain=1:1.5, blue, thin]
({x + (exp((15/16)*((10*x/3) - 2*(x^3) + ((x^2-1)^2)*(ln(abs((x+1)/(x-1))))))-1)/((exp((15/16)*((10*x/3) - 2*(x^3) + ((x^2-1)^2)*(ln(abs((x+1)/(x-1)))))))*((15/16)*((16/3) - 8*(x^2) + 4*(x^3-x)*(ln(abs((x+1)/(x-1)))))))},
{1 +((exp((15/16)*((10*x/3) - 2*(x^3) + ((x^2-1)^2)*(ln(abs((x+1)/(x-1))))))-1)^2)/((exp((15/16)*((10*x/3) - 2*(x^3) + ((x^2-1)^2)*(ln(abs((x+1)/(x-1)))))))*((15/16)*((16/3) - 8*(x^2) + 4*(x^3-x)*(ln(abs((x+1)/(x-1)))))))});
\addplot [domain=-1.5:-1, blue, thin]
({x + (exp((15/16)*((10*x/3) - 2*(x^3) + ((x^2-1)^2)*(ln(abs((x+1)/(x-1))))))-1)/((exp((15/16)*((10*x/3) - 2*(x^3) + ((x^2-1)^2)*(ln(abs((x+1)/(x-1)))))))*((15/16)*((16/3) - 8*(x^2) + 4*(x^3-x)*(ln(abs((x+1)/(x-1)))))))},
{1 +((exp((15/16)*((10*x/3) - 2*(x^3) + ((x^2-1)^2)*(ln(abs((x+1)/(x-1))))))-1)^2)/((exp((15/16)*((10*x/3) - 2*(x^3) + ((x^2-1)^2)*(ln(abs((x+1)/(x-1)))))))*((15/16)*((16/3) - 8*(x^2) + 4*(x^3-x)*(ln(abs((x+1)/(x-1)))))))});
\end{axis}

\begin{axis}[hide axis, xmin=-1,xmax=1, ymin=0,ymax=1, samples=100, smooth, x=5cm, y=5cm]
\addplot [domain=1.5:2.3, blue, thin]
({x + (exp((15/16)*((10*x/3) - 2*(x^3) + ((x^2-1)^2)*(ln(abs((x+1)/(x-1))))))-1)/((exp((15/16)*((10*x/3) - 2*(x^3) + ((x^2-1)^2)*(ln(abs((x+1)/(x-1)))))))*((15/16)*((16/3) - 8*(x^2) + 4*(x^3-x)*(ln(abs((x+1)/(x-1)))))))},
{1 +((exp((15/16)*((10*x/3) - 2*(x^3) + ((x^2-1)^2)*(ln(abs((x+1)/(x-1))))))-1)^2)/((exp((15/16)*((10*x/3) - 2*(x^3) + ((x^2-1)^2)*(ln(abs((x+1)/(x-1)))))))*((15/16)*((16/3) - 8*(x^2) + 4*(x^3-x)*(ln(abs((x+1)/(x-1)))))))});
\addplot [domain=-1.9:-1.5, blue, thin]
({x + (exp((15/16)*((10*x/3) - 2*(x^3) + ((x^2-1)^2)*(ln(abs((x+1)/(x-1))))))-1)/((exp((15/16)*((10*x/3) - 2*(x^3) + ((x^2-1)^2)*(ln(abs((x+1)/(x-1)))))))*((15/16)*((16/3) - 8*(x^2) + 4*(x^3-x)*(ln(abs((x+1)/(x-1)))))))},
{1 +((exp((15/16)*((10*x/3) - 2*(x^3) + ((x^2-1)^2)*(ln(abs((x+1)/(x-1))))))-1)^2)/((exp((15/16)*((10*x/3) - 2*(x^3) + ((x^2-1)^2)*(ln(abs((x+1)/(x-1)))))))*((15/16)*((16/3) - 8*(x^2) + 4*(x^3-x)*(ln(abs((x+1)/(x-1)))))))});
\end{axis}

\draw [blue] plot [smooth] coordinates {(7.75359872475353,0.0658251069572175) (7.61535036044783,0.0122876337932226) (7.37236200578712,0.0122876337874533) (7.08205590554426,0.104539016530695)};

\draw (5,0) --++ (5,0);
\draw (10,0) --++ (0,5);
\draw (10,5) --++ (-10,0);
\draw (0,5) --++ (5,-5);

\draw (5,-.3) node {$(0,0)$};
\draw (7.5,-.3) node {$(\frac12,0)$};
\draw (7.5,.1) --++ (0,-.2);
\draw (7.5,-.7) node {\rotatebox{90}{$\,=$}};
\draw (7.5,-1) node {$(\frac12 + \mu[x],0)$};
\draw (7.5,-1.4) node {\rotatebox{90}{$\,=$}};
\draw (7.5,-1.7) node {\small $\lim_{t \to \pm \infty} (\chi_\EE(t),\eta_\EE(t))$};
\draw (10,-.3) node {$(1,0)$};
\draw (10,5.3) node {$(1,1)$};
\draw (0,5.3) node {$(-1,1)$};

\draw (8.57300959372038,1.44630449135594) node {$\circ$};
\draw (11.5,1.44630449135594) node {\small $\lim_{t \downarrow 1} (\chi_\EE(t),\eta_\EE(t))$};
\draw[arrows=->,line width=.5pt](10.15,1.44630449135594)--(8.7,1.44630449135594);
\draw (4.98068591492368,1.44630449135594) node {$\circ$};
\draw (1.9,1.44630449135594) node {\small $\lim_{t \uparrow -1} (\chi_\EE(t),\eta_\EE(t))$};
\draw[arrows=->,line width=.5pt](3.4,1.44630449135594)--(4.85,1.44630449135594);
\draw[red] (8.57300959372038,1.44630449135594) --++ (1.42699040627962,3.55369550864406);
\draw[red] (4.98068591492368,1.44630449135594) --++ (-4.98068591492368,3.55369550864406);

\draw (6.6,3) node {\huge $\LL$};
\draw (8.05,1) node {\Large $\EE$};
\end{tikzpicture}
\caption{\label{figLiquidregion}. The boundary of the \emph{liquid} region. The blue curve is the \emph{edge} $\EE$. Here we expect to see the Airy process. The remaining part of the boundary is the red lines and the top line $\{(x,1):-1\leq x\leq 1\}$. Here we do not expect to see any universal edge fluctuations.}
\end{figure}

\end{ex}

\textbf{Acknowledgements:} This research was carried out at the Royal Institute
of Technology (KTH), Stockholm and Uppsala University, and was partially supported by grant
KAW 2010.0063 from the Knut and Alice Wallenberg Foundation. The authors would like to thank Kurt Johansson for helpful discussions and useful suggestions. Finally, we are indebted to Samuel Holmin for the proof idea of Lemma \ref{ExSingset2} and for pointing out the reference \cite{Rudin} in relation to Hypothesis \ref{hypConv2}. 

\section{Preliminaries}
\subsection{Integral Means and the Boundary Behavior of $e^{H_{v_n}f(u_n)}$ and $P_{v_n} f(u_n)$}
When studying the asymptotic behaviour of $(\chi_{\LL}(w_n),\eta_{\LL}(w_n))$ for non-tangential sequences $\{w_n\}_n$ such that $\lim_{n\to \infty}w_n=x$, it is natural to first try to estimate $e^{H_{v_n}f(u_n)}$ and $P_{v_n}f(u_n)$ separately. We will not attempt to classify all possible situations for which a point is regular, but contend ourselves with providing sufficient conditions which cover many interesting cases.
In particular, we provide sufficient conditions for $v_ne^{\pi \vert H_{v_n}f(u_n)\vert}\rightarrow 0$ for a non-tangential sequence $u_n+iv_n\rightarrow x\in \Sn$ as $n\rightarrow +\infty$. To achieve this it will be natural to consider certain means of the function $f$.

Recall that the Lebesgue set $\mathscr{L}_f$ of an $L_{loc}^{1}(\R)$ function $f$ is the set of all $x\in\R$ such that
\begin{align}
\label{LPoint}
\lim_{h\to0^+}\frac{1}{2h}\int_{x-h}^{x+h}\vert f(t)-f(x)\vert dt=0.
\end{align}
It is a well known result that the set of points which fails to be Lebesgue points has Lebesgue measure zero, see \cite{SteinW1}. If $x$ does not belong to the Lebesgue set of $f$ one may try to redefine the value of $f(x)$ at $x$ such that (\ref{LPoint}) holds. If this is not possible then $x$ does not belong to the Lebesgue set of $f$ for any $f\in[f]\in L^1(\R)$, where $[f]$ denotes the equivalence class of $f$ in $L^1(\R)$. 
In particular we note that if $f\in \rho_{c,1}^{\lambda}(\R)$ and (\ref{LPoint}) holds then 
\begin{align*}
\lim_{h\rightarrow 0^+} \frac{1}{2h}\int_{x-h}^{x+h}f(t)dt=\lim_{h\rightarrow 0^+} \frac{1}{h}\int_{x}^{x+h}f(t)dt=\lim_{h\rightarrow 0^+} \frac{1}{h}\int_{x-h}^{x}f(t)dt=f(x).
\end{align*}
Of course the converse of this is not true in general. However, if $f(x)=0$ or $f(x)=1$ then in the first case we have
\begin{align*}
\lim_{h\rightarrow 0^+} \frac{1}{2h}\int_{x-h}^{x+h}\vert f(t)-f(x)\vert dt=\lim_{h\rightarrow 0^+} \frac{1}{2h}\int_{x-h}^{x+h}f(t)dt=0
\end{align*}
or in the second case 
\begin{align*}
\lim_{h\rightarrow 0^+} \frac{1}{2h}\int_{x-h}^{x+h}\vert f(t)-f(x)\vert dt= \lim_{h\rightarrow 0^+} \frac{1}{2h}\int_{x-h}^{x+h}dt-\frac{1}{2h}\int_{x-h}^{x+h}f(t)dt=0.
\end{align*}
Therefore let
\begin{align}
\label{Density}
f(x)=F'(x):=\lim_{h\to 0^+}\frac{F(x+h)-F(x-h)}{2h},
\end{align}
where
\begin{align*}
F(x)=\int_{-\infty}^{x}d\mu(t),
\end{align*}
and note that the limit (\ref{Density}) exists for almost every $x$, in particular for every $x$ in the Lebesgue set of $f$.  Functions $f\in L^1(\R)$ defined through (\ref{Density}) are said to be \emph{strictly defined}, (see page 192 in \cite{Stein}). We will therefore always assume that the density $f$ in the equivalence class of densities of the measure $\mu$ is defined by (\ref{Density}), and the Lebesgue set of $f$ will always be with respect to this density. Moreover, it will be important to study not only the properties of the density $f$ but also of its Hilbert transform $\mathcal{H}f$, where
\begin{align*}
\mathcal{H}f(x):=\lim_{\eps\to 0^+}\frac{1}{\pi}\int_{\vert x-t\vert>\eps}\frac{f(t)dt}{x-t},
\end{align*}
and where this limit exists for almost every $x$. It is a well-known fact in the theory of singular integrals that the Hilbert transform is a bounded operator on $L^p(\R)$ for every $1<p<\infty$, see for example Theorem 4.1.7 in \cite{Garf}. Since $f\in \rho_{c,1}^{\lambda}(\R)$ it follows that $f\in L^p(\R)$ for every $1\leq p\leq \infty$, and hence that $\mathcal{H}f\in L^{p}(\R)$ for every $1<p<\infty$,  \newline \newline As was remarked before, we will be interested in considering non-tangential limits. That is, if $u:\Hp\to \R$, we will say that $u$ has a non-tangential limit $l$ at $x_0\in \R=\dv\Hp$, if for each $\alpha>0$,
\begin{align*}
\lim_{\substack{(x,y)\to(x_0,0)\\ (x,y)\in\Gamma_{\alpha}(x_0)}}u(x,y)=l.
\end{align*}
Similarly, we will say that a function $u:\Hp\to \R$ is \emph{non-tangentially bounded} at $x_0$ if for every $\alpha>0$ we have that
\begin{align*}
\sup_{(x,y)\in\Gamma_{\alpha}^1(x_0)}\vert u(x,y)\vert <\infty.
\end{align*}
For many estimates it will prove useful to introduce the Hardy-Littlewood maximal function $m_f$, defined at $x\in \R$ for $f\in L^p(\R)$ for $1\leq p\leq \infty$ by
\begin{align*}
m_f(x):=\sup_{h>0}\frac{1}{2h}\int_{x-h}^{x+h}\vert f(t)\vert dt.
\end{align*}
Recall that, $P_v f(u)$ is the Poisson integral of the density $f$. However, by Lemma 1.5 in chapter VI in \cite{Stein}, $H_vf(u)=P_v (\mathcal{H}f)(u)$, that is  $H_vf(u)$ is the Poisson integral of the Hilbert transform of $f$. Now Theorem 3.16 in chapter II of \cite{Stein} implies that $\lim_{n\to\infty}P_{v_n}f(u_n)=x$ for non-tangential limits at each $x\in\mathscr{L}_f$, thus in particular almost everywhere. Similarly, $H_vf(u)$ has the non-tangential limit $\Hil f(x)$ at every $x\in \mathscr{L}_{\Hil f}$, thus in particular, almost everywhere.  Finally, Theorem 1.4 in chapter VI of \cite{Stein} shows that $m_{\mathcal{H}f}$ dominates $H_vf$ in the following sense:
\begin{align}
\label{nontangMax}
\sup_{(u,v)\in\Gamma_{\alpha}^1(x)}\vert H_vf(u)\vert\leq d_{\alpha}m_{\Hil f}(x), 
\end{align}
where the constant $d_{\alpha}$ does not depend on $x$. Moreover, Lemma 1.2 in chapter VI in \cite{Stein} states that
\begin{align}
\label{Hconv}
\lim_{v\to 0^+}\bigg\{ H_vf(x)-\int_{0<v\leq \vert t\vert}\frac{f(x-t)dt}{t}\bigg\}=0
\end{align}
at each point $x$ in $\mathscr{L}_f$. 
\begin{rem}
\label{R1}
Note that in Lemma 1.2 in \cite{Stein}, $\mathscr{L}_f$ is the Lebesgue set of $f$ and not of $\Hil f$. It should be noted that (\ref{Hconv}) does not apply for arbitrary non-tangential limits, as can be seen by considering the function $f(t)=(\log(\vert t\vert^{-1}))^{-1}\chi_{[-a,a]}(t)$ at $0$, for some $a>0$. However, if $f$ satisfies the following Dini-type condition:
\begin{align}
\label{Dini1}
\int_{x-1}^{x+1}\frac{\vert f(x)-f(t)\vert dt}{\vert x-t\vert}<+\infty,
\end{align}
then for all non-tangential limits $\{u_n+iv_n\}_n$ that converge to $x$, $\lim_{n\to \infty}H_{v_n}f(u_n)=\Hil f(x)$. For a proof of this fact see Proposition \ref{DiniConv} in the appendix.
\end{rem}

So far we have not used the fact that $f\in L^{\infty}(\R)$ and its consequences for its Hilbert transform $\Hil f$. However, the fact that $f\in L^{\infty}(\R)$ implies that $\mathcal{H}f\in BMO$, where $BMO$, denotes the class of functions of bounded mean oscillation. A function $f\in BMO$ if 
\begin{align}
\label{Bmo}
\sup_{\substack{x\in \R \\ h>0}}\frac{1}{2h}\int_{x-h}^{x+h}\vert f(t)-Mf(x,h)\vert dt<+\infty,
\end{align}
where 
\begin{align*}
Mf(x,h):=\frac{1}{2h}\int_{x-h}^{x+h} f(t)dt.
\end{align*}
The left hand side of (\ref{Bmo}) is the $BMO$ norm of $f$ and is denoted by $\Vert f \Vert_{BMO}$. In particular it follows that if $f\in BMO$, then $m_f\in BMO$, see Theorem 4.2 (b) in \cite{Sharp}. Moreover, functions of bounded mean oscillation are $L^p_{loc}(\R)$ for every $0<p<\infty$. Finally, functions $f\in BMO$ satisfies the John-Nirenberg inequality:
\begin{align}
\label{JohnNirenberg}
\vert \{t\in [x-h,x+h]:\vert f(t)-Mf(x,h)\vert>\alpha\}\vert\leq c_1\exp\bigg(-c_2\frac{\alpha}{\Vert f \Vert_{BMO}}\bigg)2h
\end{align}
for some positive constants $c_1,c_2$ independent of $x$. For more details see for example \cite{SteinW2} or \cite{Garnett}. \newline \newline
If $f\in \Hc$, then for every $x,y\in \overline{I}_k$, and every $k$, there exists a constant $C$, such that $\vert f(x)-f(y)\vert \leq C\vert x-y\vert^{\alpha}$. This implies, (see \cite{Musk}), that for every $x,y\in I_k$, there exists a constant $c$, that depends on $x$, such that $\vert \mathcal{H}f(x)-\mathcal{H}f(y)\vert \leq c\vert x-y\vert^{\alpha}$. Thus, in particular $\mathcal{H}f\in C(I_k)$, for every $k$. Note however that $\mathcal{H}f$ need not be continuous on the set $\bigcup_k\dv I_k$.
\newline \newline
For $f\in L^1(\R)$ and $y>0$ let 
\begin{align*}
M_Rf(x,y)&:=\frac{1}{y}\int_{x}^{x+y}f(t)dt\\
M_Lf(x,y)&:=\frac{1}{y}\int_{x-y}^{x}f(t)dt\\
\Delta Mf(x,y)&:=\frac{1}{y}\int_{x}^{x+y}f(t)dt-\frac{1}{y}\int_{x-y}^{x}f(t)dt\\
\Delta m_f(x)&:=\sup_{y>0}\vert \Delta Mf(x,y)\vert\\
m_f^{\delta}(x)&:=\sup_{0<y<\delta} \vert Mf(x,y)\vert
\end{align*}
for $x\in \R$ and $y\in\R^+$. It follows from the fact that $0\leq f(t)\leq 1$, that $0\leq M_Rf(x,y)\leq 1$, $0\leq M_Lf(x,y)\leq 1$ and  $-1\leq \Delta Mf(x,y)\leq 1$  for all $(x,y)\in \Hp$. In particular $\Delta m_f(x)$ is a maximal function for the cancellation of the right sided and left sided means, and $m_f^{\delta}(x)$ is a truncated maximal function. As will be shown in Lemma \ref{EstHilb1}, it is the size of $\Delta Mf(u_n,v_n)$ that controls the growth rate of the function $\pi \vert H_{v_n}f(u_n)\vert $ for non-tangential sequences $u_n+iv_n\in \Hp$ as $u_n+iv_n\rightarrow x\in \R$ as $n\rightarrow +\infty.$  In particular, we have the following important Lemma:
\begin{Lem}
\label{DeltaMean}
Assume that $x\in\Sn^{\circ}$ and that $f\in \rho_{c,1}^{\lambda}(\R)$. Then
\begin{align}
\label{M1}
\vert \Delta Mf(x,y)\vert <1,
\end{align}
and
\begin{align}
\label{M2}
\vert \Delta Mf(x,y)\vert\leq \frac{1}{y}
\end{align}
for all $y>0$.
\end{Lem}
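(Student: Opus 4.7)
The bound (\ref{M2}) is essentially immediate from the global $L^1$-normalization. Writing
\[
\Delta Mf(x,y) \;=\; \frac1y \left( \int_x^{x+y} f(t)\,dt \;-\; \int_{x-y}^x f(t)\,dt \right),
\]
both integrals on the right are nonnegative (since $f \geq 0$) and each is bounded by $\int_{\R} f(t)\,dt = 1$. Thus both lie in $[0,1]$, so their difference lies in $[-1,1]$, which yields $|\Delta Mf(x,y)| \le 1/y$.

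The strict inequality (\ref{M1}) is where the hypothesis $x \in \Sn^{\circ}$ enters. The naive bound $|\Delta Mf(x,y)| \le 1$ follows from $0 \le f \le 1$ applied to each of the averages $M_Rf(x,y)$ and $M_Lf(x,y)$ separately (both lie in $[0,1]$, so their difference lies in $[-1,1]$). The plan is to suppose equality and derive that $x \in R_1 \cup R_2$, contradicting $x \in \Sn$.

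Suppose first that $\Delta Mf(x,y) = 1$. Then $M_Rf(x,y) = 1$ and $M_Lf(x,y) = 0$, which (combined with $0 \le f \le 1$) forces $f = 1$ almost everywhere on $(x,x+y)$ and $f = 0$ almost everywhere on $(x-y,x)$. By the characterization in Definition \ref{remmu}, $\R \setminus \supp(\mu)$ is the largest open set on which $f = 0$ a.e. and $\R \setminus \supp(\lambda-\mu)$ is the largest open set on which $f = 1$ a.e., so $(x-y,x) \subset \R \setminus \supp(\mu)$ and $(x,x+y) \subset \R \setminus \supp(\lambda-\mu)$. Taking $I = (t_2,t_1) = (x-y,x+y)$ in the definition of $R_2$ shows $x \in R_2$. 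The symmetric case $\Delta Mf(x,y) = -1$ yields $x \in R_1$ analogously. Either way, this contradicts $x \in \Sn^{\circ} \subset \Sn = \supp(\mu) \cap \supp(\lambda - \mu) \setminus (R_1 \cup R_2)$.

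The only mild subtlety is keeping track of which of $R_1$, $R_2$ one lands in for the sign $\pm 1$; the definitions must be read carefully so that the left/right placement of the $f=0$ and $f=1$ intervals matches. Other than this bookkeeping, the argument is essentially immediate once one invokes the maximal-open-set characterization in Definition \ref{remmu} to promote the almost-everywhere identities to inclusions of open sets into $\R \setminus \supp(\mu)$ and $\R \setminus \supp(\lambda-\mu)$.
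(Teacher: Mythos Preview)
Your proof is correct and follows essentially the same approach as the paper: for (\ref{M2}) you use the $L^1$-normalization and nonnegativity of $f$, and for (\ref{M1}) you argue by contradiction that equality forces $f$ to be identically $0$ on one side of $x$ and identically $1$ on the other, placing $x$ in $R_1$ or $R_2$ via the maximal-open-set characterization of the supports. The only cosmetic difference is that the paper bounds $|\Delta Mf(x,y)|$ by $\tfrac{1}{y}\int_{x-y}^{x+y} f \le \tfrac{1}{y}$ directly rather than bounding each one-sided integral by $1$ separately, but this is immaterial.
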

\begin{proof}
Assume the contrary. Then there exists a $y^*>0$ such that $\vert \Delta Mf(x,y^*)\vert=1$. It is clear from the definition of $M_Rf$  and $M_Lf$ that either $M_Rf(x,y^*)=1$ and $M_Lf(x,y^*)=0$ or that $M_Lf(x,y^*)=1$ and $M_Rf(x,y^*)=0$. In the first case this implies that $f(t)=1$ for a.e. $t\in[x,x+y]$ and that $f(t)=0$ for a.e. $t\in[x-y,x]$, This implies that $(x,x+y)\subset \R\backslash \text{supp}(\lambda-\mu)$ and $(x-y,x)\subset \R\backslash \text{supp}(\mu)$. Thus,  $x\in R_2$. This however contradicts the assumption that $x\in \Sn^{\circ}$. The other case is analogous. To prove (\ref{M2}), we note that
\begin{align*}
\vert \Delta Mf(x,y)\vert\leq \frac{1}{y}\int_{x-y}^{x+y}f(t)dt\leq  \frac{1}{y}\int_{\R}f(t)dt=\frac{1}{y},
\end{align*}
since $f\geq 0$ and $f\in \rho_{c,1}^{\lambda}(\R)$. 
\end{proof}
In what follows it will be useful to define:
\begin{Def}
\label{MaxSup}
\begin{align}
f_R^+(x)&:=\limsup_{h\rightarrow 0^+}M_Rf(x,h)=\limsup_{h\rightarrow 0^+}\frac{1}{h}\int_{x}^{x+h}f(t)dt\\
f_R^-(x)&:=\liminf_{h\rightarrow 0^+}M_Rf(x,h)=\liminf_{h\rightarrow 0^+}\frac{1}{h}\int_{x}^{x+h}f(t)dt \\
f_L^+(x)&:=\limsup_{h\rightarrow 0^+}M_Lf(x,h)=\limsup_{h\rightarrow 0^+}\frac{1}{h}\int_{x-h}^{x}f(t)dt  \\
f_L^-(x)&:=\liminf_{h\rightarrow 0^+}M_Lf(x,h)=\liminf_{h\rightarrow 0^+}\frac{1}{h}\int_{x-h}^{x}f(t)dt.
\end{align}
\end{Def}
\begin{Lem}
\label{EstHilb1}
Fix $x\in \Sn$. Let
\begin{align*}
c(x):=\max\{\vert f_R^{+}(x)-f_L^{-}(x)\vert,\vert f_L^{+}(x)-f_R^{-}(x)\vert\}
\end{align*}
Then for every $\eps>0$ and every non-tangentially convergent sequence $\{u_n+iv_n\}_n$ to x, such that $\{u_n+iv_n\}_n\subset \G_k(x)$, there exists an $N>0$ and a constant $C=C(\eps,x,k)$, such that 
\begin{align}
\label{Hilb1}
v_ne^{\vert H_{v_n}f(u_n)\vert}\leq Cv_n^{1-c(x)-\eps}.
\end{align}
If in particular $x\in \mathscr{L}_f$, then
\begin{align}
\label{Hilb2}
v_ne^{\vert H_{v_n}f(u_n)\vert}\leq Cv_n^{1-\eps}.
\end{align}
Finally, we have the identity
\begin{align}
\label{identityHilb}
\pi H_vf(u)=\int_{0}^{+\infty}t\frac{d}{dt}\bigg(\frac{t}{t^2+v^2}\bigg)\Delta Mf(u,t)dt.
\end{align}
\end{Lem}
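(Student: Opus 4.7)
The plan has three components matching the three claims of the lemma. First I would prove the integral representation (\ref{identityHilb}). Starting from $\pi H_vf(u) = \int_\R \frac{(u-t)f(t)}{(u-t)^2+v^2}\,dt$, the substitution $s=u-t$ and the oddness of $s/(s^2+v^2)$ give
\begin{equation*}
\pi H_vf(u) = \int_0^\infty \frac{s}{s^2+v^2}\bigl[f(u-s)-f(u+s)\bigr]\,ds.
\end{equation*}
A direct computation shows $\frac{d}{ds}[s\,\Delta Mf(u,s)] = f(u+s)-f(u-s)$, so the bracket is $-\frac{d}{ds}[s\,\Delta Mf(u,s)]$. Integrating by parts yields (\ref{identityHilb}); the boundary term at $s=0$ vanishes thanks to the factor $s$, and the one at infinity vanishes because $f$ has compact support (forcing $\Delta Mf(u,s)=O(1/s)$).

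Second, to prove (\ref{Hilb1}), I would use (\ref{identityHilb}) and split the domain into four regions: $[0,v_n]$, $[v_n, Kv_n]$, $[Kv_n,\delta]$, $[\delta,\infty)$, where $K>0$ will be chosen large and $\delta>0$ small. Write $K(t,v):=t(v^2-t^2)/(t^2+v^2)^2$, noting $|K(t,v_n)| \le t/v_n^2$ on $[0,v_n]$ and $|K(t,v_n)| = O(1/t)$ on $[v_n,\infty)$. Combined with the universal bound $|\Delta Mf| \le 1$ from Lemma \ref{DeltaMean} and the decay $|\Delta Mf(u_n,t)| \le 1/t$ from the same lemma, the first, second, and fourth regions each contribute $O(1)$ (with an implicit constant depending on $k$, $K$, and $\delta$).

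The main term comes from the third region $[Kv_n,\delta]$, where I want the bound $|\Delta Mf(u_n,t)| \le c(x)+\varepsilon$. By the definitions of $f_R^\pm(x)$ and $f_L^\pm(x)$, for any $\varepsilon'>0$ there exists $\delta>0$ such that $|\Delta Mf(x,t)| \le c(x)+\varepsilon'$ for $0<t<\delta$. To transfer this to $u_n$, observe that since $f\le 1$,
\begin{equation*}
|\Delta Mf(u_n,t)-\Delta Mf(x,t)| \le \tfrac{4|u_n-x|}{t} \le \tfrac{4kv_n}{t} \le \tfrac{4k}{K}
\end{equation*}
whenever $t \ge Kv_n$. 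Choosing $K = 4k/\varepsilon'$ gives $|\Delta Mf(u_n,t)| \le c(x)+2\varepsilon'$ on $[Kv_n,\delta]$. Evaluating $\int_{Kv_n}^\delta K(t,v_n)\,dt$ by the antiderivative $t^2/(t^2+v_n^2) - \frac{1}{2}\ln(t^2+v_n^2)$ produces $-\ln(1/v_n)+O(1)$. Hence $|H_{v_n}f(u_n)| \le (c(x)+\varepsilon)\ln(1/v_n) + O(1)$ after adjusting constants, which gives (\ref{Hilb1}).

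Finally, (\ref{Hilb2}) is an immediate corollary: whenever $x\in\mathscr{L}_f$, all four one-sided limits $f_R^\pm(x), f_L^\pm(x)$ coincide with $f(x)$, so $c(x)=0$ and (\ref{Hilb1}) reduces to the claimed $v_n^{1-\varepsilon}$ bound. The main obstacle is the transfer step in the third region: the bound on $\Delta Mf$ is defined at $x$ via $f_R^\pm,f_L^\pm$ but applied at $u_n$, and the non-tangential condition $|u_n-x|\le kv_n$ is exactly what forces $|u_n-x|/t \le k/K$ to be small on the relevant integration range, allowing $K$ to absorb the discrepancy.
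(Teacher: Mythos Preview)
Your proposal is correct and follows essentially the same strategy as the paper: derive the identity (\ref{identityHilb}) by integration by parts, split the $t$-integral into a near-zero piece, a middle piece, and a tail, and control the middle piece by bounding $|\Delta Mf(u_n,t)|$ by $c(x)+\eps$ up to an error of order $|u_n-x|/t$ that the non-tangential condition makes harmless. The one noteworthy difference is how you obtain that middle bound. The paper writes $\Delta Mf(u_n,t)$ out explicitly as a combination of $M_Rf(x,\cdot)$ and $M_Lf(x,\cdot)$ (splitting all integrals at $x$) and bounds each term to arrive at $|\Delta Mf(u_n,t)|\le c(x)+\eps+6|u_n-x|/t$; you instead use the cleaner perturbation estimate $|\Delta Mf(u_n,t)-\Delta Mf(x,t)|\le 4|u_n-x|/t$ (which follows in one line from $0\le f\le 1$) together with $|\Delta Mf(x,t)|\le c(x)+\eps'$ for small $t$, and then absorb the $4k/K$ discrepancy by choosing $K$ large. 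Your route is shorter and more transparent; the paper's explicit decomposition is more laborious but yields the same conclusion. The remaining differences---your four-region split $[0,v_n]\cup[v_n,Kv_n]\cup[Kv_n,\delta]\cup[\delta,\infty)$ versus the paper's $[0,dv_n]\cup[dv_n,h]\cup[h,1]\cup[1,\infty)$---are cosmetic.
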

\begin{proof}
Assume that $\{u_n+iv_n\}_n$ is non-tangentially convergent to $x$. Then $\{u_n+iv_n\}_n\subset \Gamma_k(x)$ for some $k>0$. An integration by parts gives
\begin{align*}
\pi H_{v_n}f(u_n)&=\int_{\R}\frac{tf(u_n-t)dt}{t^2+v_n^2}=\int_{0}^{+\infty}\frac{t}{t^2+v_n^2}\big[f(u_n-t)-f(u_n+t)\big]dt\\
&=-\int_{0}^{+\infty}\frac{d}{dt}\bigg(\frac{t}{t^2+v_n^2}\bigg)\bigg[\int_{0}^{t}\big[f(u_n-s)-f(u_n+s)\big]ds\bigg]dt=\int_{0}^{+\infty}t\frac{d}{dt}\bigg(\frac{t}{t^2+v_n^2}\bigg)\Delta Mf(u_n,t)dt.
\end{align*}
Choose $d=\max\{1,k\}$. Write,
\begin{align}
\label{twotermsHilb}
\pi H_{v_n}f(u_n)&=\int_{0}^{dv_n}t\frac{d}{dt}\bigg(\frac{t}{t^2+v_n^2}\bigg)\Delta  Mf(u_n,t)dt+\int_{dv_n}^{+\infty}t\frac{d}{dt}\bigg(\frac{t}{t^2+v_n^2}\bigg)\Delta  Mf(u_n,t)dt\nonumber \\
&=I_1^{(n)}+I_2^{(n)}.
\end{align}
By Lemma \ref{DeltaMean}
\begin{align}
\label{firsttermHilb}
\vert I_1^{(n)}\vert&\leq \int_{0}^{dv_n}t\bigg\vert \frac{d}{dt}\bigg(\frac{t}{t^2+v_n^2}\bigg)\bigg\vert dt=\int_{0}^{dv_n}t\bigg\vert \frac{v_n^2-t^2}{(t^2+v_n^2)^2}\bigg\vert dt\\
&\leq \int_{0}^{dv_n} \frac{d}{v_n} dt=d^2.
\end{align}
Now consider $I_2^{(n)}$ so that $t\geq dv_n\geq kv_n>\vert u_n-x\vert$. Then,

\begin{align*}
&\Delta  Mf(u_n,t)=\frac{1}{t}\bigg[\int_{u_n}^{u_n+t}f(y)dy-\int_{u_n-t}^{u_n}f(y)dy\bigg]=\frac{1}{t}\bigg[\int_{x+(u_n-x)}^{x+(u_n-x)+t}f(y)dy-\int_{x+(u_n-x)-t}^{x+(u_n-x)}f(y)dy\bigg]\\
&=\frac{1}{t}\bigg[\int_{x}^{x+(u_n-x)+t}f(y)dy-\int_{x}^{x+(u_n-x)}f(y)dy-\int_{x+(u_n-x)-t}^{x}f(y)dy-\int_{x}^{x+(u_n-x)}f(y)dy\bigg]\\
&=\frac{1}{t}\bigg[\frac{(u_n-x)+t}{(u_n-x)+t}\int_{x}^{x+(u_n-x)+t}f(y)dy-2\frac{(u_n-x)}{(u_n-x)}\int_{x}^{x+(u_n-x)}f(y)dy-\frac{t-(u_n-x)}{t-(u_n-x)}\int_{x+(u_n-x)-t}^{x}f(y)dy\bigg]\\
&=\frac{(u_n-x)+t}{t}M_Rf(x,(u_n-x)+t)-2\frac{(u_n-x)}{t}M_Rf(x,(u_n-x))-\frac{t-(u_n-x)}{t}M_Lf(x,t-(u_n-x)).
\end{align*}

If $u_n-x<0$, then similarly,

\begin{align*}
\Delta  Mf(u_n,t)&=\frac{t-(x-u_n)}{t}M_Rf(x,t-(x-u_n))+2\frac{(x-u_n)}{t}M_Lf(x,x-u_n)-\frac{t+(x-u_n)}{t}M_Lf(x,t+(x-u_n)).
\end{align*}

Let $0<\eps<1$. By definition, there exists an $N=N(\eps)$ and an $h=h(\eps,x)<1$, such that 
\begin{align*}
f_R^-(x)-\eps\leq&M_Rf(x,(u_n-x)+t)\leq f_R^+(x)+\eps\\
f_R^-(x)-\eps\leq&M_Rf(x,(u_n-x))\leq f_R^+(x)+\eps\\
f_L^-(x)-\eps\leq&M_Lf(x,(u_n-x)-t)\leq f_L^+(x)+\eps
\end{align*}
whenever $n>N$ and $dv_n<t<h$ and $u_n-x\geq0$, and 
\begin{align*}
f_R^-(x)-\eps\leq&M_Rf(x,t-(x-u_n))\leq f_R^+(x)+\eps\\
f_L^-(x)-\eps\leq&M_Lf(x,(x-u_n))\leq f_L^+(x)+\eps\\
f_L^-(x)-\eps\leq&M_Lf(x,(x-u_n)+t)\leq f_L^+(x)+\eps
\end{align*}
whenever $n>N$ and $dv_n<t<h$ and $u_n-x<0$. Thus, when $u_n-x\geq 0$ and $n>N$ and $dv_n<t<h$ ,
\begin{align*}
\Delta Mf(u_n,t)&\leq \frac{(u_n-x)+t}{t}( f_R^+(x)+\eps)-2\frac{(u_n-x)}{t}(f_R^-(x)-\eps)-\frac{t-(u_n-x)}{t}(f_L^-(x)-\eps)\\
&\leq f_R^+(x)-f_L^-(x)+2\eps+\frac{(u_n-x)}{t}(f_R^+(x)+\eps-2(f_R^-(x)-\eps)+(f_L^-(x)-\eps))\\
&= f_R^+(x)-f_L^-(x)+2\eps+\frac{(u_n-x)}{t}(f_R^+(x)-2f_R^-(x)+f_L^-(x)+2\eps)\\
&\leq f_R^+(x)-f_L^-(x)+2\eps+\frac{6(u_n-x)}{t},
\end{align*}
and
\begin{align*}
\Delta Mf(u_n,t)&\geq \frac{(u_n-x)+t}{t}( f_R^-(x)-\eps)-2\frac{(u_n-x)}{t}(f_R^+(x)+\eps)-\frac{t-(u_n-x)}{t}(f_L^+(x)+\eps)\\
&\geq f_R^-(x)-f_L^+(x)-2\eps+\frac{(u_n-x)}{t}(f_R^-(x)-\eps-2(f_R^+(x)+\eps)+(f_L^+(x)+\eps))\\
&= f_R^-(x)-f_L^+(x)-2\eps+\frac{(u_n-x)}{t}(f_R^-(x)-2f_R^+(x)+f_L^+(x)-2\eps)\\
&\geq f_R^-(x)-f_L^+(x)-2\eps-\frac{6(u_n-x)}{t}.
\end{align*}
 If instead $u_n-x<0$, then whenever $n>N$ and $dv_n<t<h$ and $\eps$ sufficiently small, then similarly
\begin{align*}
\Delta Mf(u_n,t)&\leq f_R^+(x)-f_L^-(x)+2\eps+\frac{6(x-u_n)}{t}
\end{align*}
and
\begin{align*}
\Delta Mf(u_n,t)&\geq f_R^-(x)-f_L^+(x)-2\eps-\frac{6(x-u_n)}{t}
\end{align*}
Hence, changing $\eps$ to $\eps/2$ in the calculation above we have for $dv_n\leq t <h(\eps,x)$ 
\begin{align}
\label{DeltaIneq}
\vert \Delta Mf(u_n,t)\vert \leq c(x)+\eps+\frac{6\vert u_n-x\vert}{t}.
\end{align}
Note that $\big\vert \frac{d}{dt}\big(\frac{t}{t^2+v_n^2}\big)\big\vert=-\frac{d}{dt}\big(\frac{t}{t^2+v_n^2}\big)$ when $t\geq v_n$. With $dv_n\leq t\leq h$, we can write
\begin{align*}
I_2^{(n)}=\bigg(\int_{dv_n}^{h}+\int_{h}^{1}+\int_{1}^{+\infty}\bigg)\bigg(-t\frac{d}{dt}\bigg(\frac{t}{t^2+v_n^2}\bigg)\Delta Mf(u_n,t)\bigg)dt.
\end{align*}
Use the estimate (\ref{DeltaIneq}) for $dv_n\leq t\leq h$, (\ref{M1}) for $h\leq t\leq 1$ and (\ref{M2}) for $t>1$. This gives
\begin{align*}
\vert I_2^{(n)}\vert &=\int_{dv_n}^{h}\bigg(-t\frac{d}{dt}\bigg(\frac{t}{t^2+v_n^2}\bigg)\bigg)\bigg(c(x)+\eps+\frac{6\vert u_n-x\vert}{t}\bigg)+\int_{h}^{1}\bigg(-t\frac{d}{dt}\bigg(\frac{t}{t^2+v_n^2}\bigg)\bigg)dt\\&+\int_{1}^{+\infty}\bigg(-\frac{d}{dt}\bigg(\frac{t}{t^2+v_n^2}\bigg)\bigg).
\end{align*}
We can now evaluate the integrals and use $\vert u_n-x\vert\leq kv_n$. Some straightforward estimates give
\begin{align*}
\vert I_2^{(n)}\vert \leq(c(x)+\eps)\log \sqrt{\frac{h^2+v_n^2}{(d^2+1)v_n^2}}+2+\frac{6k}{d}+\log\sqrt{\frac{R^2+v_n^2}{h^2+v_n^2}}.
\end{align*}
Together with (\ref{twotermsHilb}) and (\ref{firsttermHilb}) this proves (\ref{Hilb1}) together with an appropriate constant $C=C(\eps,x,h)$.
Finally, inequality (\ref{Hilb2}) follows from the fact that $c(x)=0$ whenever $x\in \mathscr{L}_f$.
\end{proof}
\begin{rem}
We note that inequality (\ref{Hilb1}) is trivial whenever $c(x)=1$ since $ve^{\pi \vert H_{v}f(u)\vert}\leq \tilde{c}$, for some positive constant $\tilde{c}$, whenever $f\in \rho_{c,1}^{\lambda}(\R)$.
\end{rem}
The following Lemma is a similar to Lemma \ref{EstHilb1}, but we only consider orthogonal limits. The estimate we derive will not depend on some $\eps>0$. This will be needed in the proof of Proposition \ref{GenericProp}.

\begin{Lem}
\label{EstHilb2}
We have the estimate
\begin{align}
\label{EstConv}
ve^{\pi \vert H_{v}f(x)\vert}\leq cv^{1-\Delta m_f(x)},
\end{align}
for $v>0$, where $c$ is a positive constant that does not depend on $x$.
\end{Lem}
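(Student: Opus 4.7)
The plan is to mimic the integration-by-parts strategy of Lemma \ref{EstHilb1} specialised to the orthogonal case $u = x$, where the simplification is that the uniform bound $|\Delta Mf(x, t)| \leq \Delta m_f(x)$, available directly from the definition of the maximal function $\Delta m_f$, replaces the delicate one-sided mean comparisons needed in the non-tangential setting. Starting from the identity (\ref{identityHilb}),
\begin{equation*}
\pi H_v f(x) = \int_0^{+\infty} t \frac{d}{dt}\!\left(\frac{t}{t^2+v^2}\right) \Delta Mf(x, t) \, dt,
\end{equation*}
the case $\Delta m_f(x) = 0$ is trivial since then $\Delta Mf(x, \cdot) \equiv 0$ forces $H_v f(x) = 0$ and the estimate becomes $v \leq c v$. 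Assuming $\Delta m_f(x) > 0$, set $T := 1/\Delta m_f(x)$ and split the integration as $\int_0^v + \int_v^T + \int_T^{+\infty}$, applying to each piece the most appropriate of the three bounds $|\Delta Mf(x, t)| \leq 1$, $|\Delta Mf(x, t)| \leq \Delta m_f(x)$, and $|\Delta Mf(x, t)| \leq 1/t$ supplied by Lemma \ref{DeltaMean}.

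On $(0, v)$ the substitution $s = t/v$ gives $\int_0^v t(v^2-t^2)/(t^2+v^2)^2 \, dt = \tfrac{1}{2}(1 - \ln 2)$, a universal constant, so combined with $|\Delta Mf| \leq 1$ this piece contributes at most an absolute constant. On $(v, T)$ use $|\Delta Mf(x, t)| \leq \Delta m_f(x)$ together with the explicit primitive $\tfrac{1}{2}\ln(t^2+v^2) + v^2/(t^2+v^2)$ of $t(t^2-v^2)/(t^2+v^2)^2$; the trivial estimates $T^2+v^2 \leq 2T^2$ and $v^2/(T^2+v^2) \leq \tfrac{1}{2}$ then bound this piece by
\begin{equation*}
\Delta m_f(x) \ln(T/v) = -\Delta m_f(x) \ln v - \Delta m_f(x) \ln \Delta m_f(x),
\end{equation*}
in which the second summand is uniformly bounded by $1/e$ because $-s \ln s \leq 1/e$ on $(0, 1]$. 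Finally, on $(T, +\infty)$ apply (\ref{M2}) in the form $|\Delta Mf(x, t)| \leq 1/t$; one power of $t$ cancels and the integrand is dominated by $1/t^2$, yielding a contribution of at most $1/T = \Delta m_f(x) \leq 1$.

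Summing the three pieces produces $\pi |H_v f(x)| \leq -\Delta m_f(x) \ln v + C_0$ for an absolute constant $C_0$, valid for $v \leq T$; the complementary range $v > T$ is handled by using $|\Delta Mf(x, t)| \leq 1/t$ throughout $(v, +\infty)$, which bounds $\pi |H_v f(x)|$ by an absolute constant and allows the inequality to be closed by a suitable adjustment of $c$. Exponentiating and multiplying by $v$ then produces
\begin{equation*}
v e^{\pi |H_v f(x)|} \leq e^{C_0} v^{1 - \Delta m_f(x)},
\end{equation*}
as required. The main subtlety, and the reason $c$ can be chosen independent of $x$, is the uniform control of $-\Delta m_f(x) \ln \Delta m_f(x)$ via the universal bound on $-s \ln s$ for $s \in (0, 1]$; everything else reduces to elementary manipulations of the derivative of the Poisson kernel.
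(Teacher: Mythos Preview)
Your proof is correct and follows the same blueprint as the paper's: start from identity (\ref{identityHilb}), split the integral into three pieces, and apply $|\Delta Mf|\le 1$, $|\Delta Mf|\le\Delta m_f(x)$, and $|\Delta Mf|\le 1/t$ respectively. The only substantive difference is the second splitting point: you take $T=1/\Delta m_f(x)$, while the paper simply splits at the fixed point $1$. Your choice generates the extra term $-\Delta m_f(x)\ln\Delta m_f(x)$, which you then have to bound via the universal estimate $-s\ln s\le 1/e$ on $(0,1]$; the paper's choice avoids this step entirely, since the middle integral $\int_v^1$ contributes $\Delta m_f(x)\log(v^{-1})$ plus a remainder dominated by $\Delta m_f(x)\le 1$, making the $x$-independence of $c$ immediate. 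So the paper's route is a line or two shorter, but yours is perfectly valid and arguably more explicit about the constants.

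One caveat: your handling of the range $v>T$ does not actually close the inequality for unbounded $v$---no uniform $c$ can satisfy $ve^{\pi|H_vf(x)|}\le cv^{1-\Delta m_f(x)}$ for arbitrarily large $v$ when $\Delta m_f(x)>0$, since the left side grows like $v$. This is really a defect of the lemma's statement rather than of your argument: the paper's own splitting at $1$ tacitly assumes $v<1$, and the estimate is only ever applied for small $v$ (see Lemma~\ref{Ineq2}).
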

\begin{proof}
Using (\ref{identityHilb}), we get the estimate 
\begin{align*}
\pi \vert H_{v}f(x)\vert &\leq \sup_{0\leq t\leq v}\vert \Delta Mf(x,t)\vert \int_{0}^{v}t\frac{d}{dt}\bigg(\frac{t}{t^2+v^2}\bigg)dt+\sup_{v\leq t\leq 1}\vert \Delta Mf(x,t)\vert \int_{v}^{1}-t\frac{d}{dt}\bigg(\frac{t}{t^2+v^2}\bigg)dt\\&+ \int_{1}^{\infty}-t\frac{d}{dt}\bigg(\frac{t}{t^2+v^2}\bigg)\vert \Delta Mf(x,t)\vert dt.
\end{align*}
Now using (\ref{M1}) in the first term, the definition of $\Delta m_f(x)$ in the second expression and (\ref{M2}) in the last expression to see that 
\begin{align*}
\pi \vert H_{v}f(x)\vert\leq C +\Delta m_f(x)\log(v^{-1}),
\end{align*}
where $C$ is a numerical constant.
\end{proof}

We will now consider the denominator $\sin(\pi P_vf(u))$ in (\ref{eqchin}) and (\ref{eqetan}). As will be shown in Lemma \ref{Sine}, the size of $\sin(\pi P_v f(u))$ can be estimated from below by $\frac{2d}{1+d^2}Mf(u,dv)$ for some arbitrary $d>0$, rather than the quantity $\Delta Mf(u,v)$ as in Lemma \ref{EstHilb1}.
\begin{Lem}
\label{Poisson1}
For any $u\in\R$ and $v>0$,
\begin{align*}
0<P_{v} f(u)<1.
\end{align*}
\end{Lem}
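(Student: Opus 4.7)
The plan is to realize $P_v f(u)$ as a convolution against the Poisson kernel $\mathcal{P}_v(s) := \frac{1}{\pi}\frac{v}{s^2+v^2}$, which is strictly positive on $\R$ and has total integral $1$. Recalling from equation (\ref{eqIn}) that $\pi P_v f(u) = \int_\R \frac{v f(t)}{(u-t)^2+v^2}\,dt$, I would simply write
\[
P_v f(u) \;=\; \int_\R \mathcal{P}_v(u-t)\, f(t)\,dt,
\]
and then exploit the constraints $0 \le f \le 1$ and $\int_\R f = 1$ coming from Definition \ref{remmu}.

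For the lower bound, since $\mathcal{P}_v(u-t) > 0$ for every $t \in \R$ and $f \ge 0$ with $\int_\R f = 1 > 0$, the density $f$ is not almost everywhere zero; hence the integrand is nonnegative and strictly positive on a set of positive Lebesgue measure, giving $P_v f(u) > 0$.

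For the upper bound, the pointwise inequality $f(t) \le 1$ combined with $\int_\R \mathcal{P}_v(u-t)\,dt = 1$ yields $P_v f(u) \le 1$. Equality would force $\mathcal{P}_v(u-t)\,(1 - f(t)) = 0$ for a.e. $t$, i.e. $f = 1$ a.e. on $\R$. But $f$ has compact support contained in the convex hull $[a,b]$ (a set of finite Lebesgue measure) by Definition \ref{remmu}, so $f = 1$ a.e. on all of $\R$ is impossible. Therefore $P_v f(u) < 1$.

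There is no real obstacle here: the lemma is essentially a direct consequence of the positivity and normalization of the Poisson kernel together with the defining properties of $\rho_{c,1}^\lambda(\R)$ (namely $0 \le f \le 1$, $\int f = 1$, and compact support). The only subtlety worth flagging is that the \emph{strict} upper bound requires the compact support hypothesis --- without it, one could in principle have $f \equiv 1$ on $\R$ and saturate the inequality, so this is the one place where the structure of $\mathcal{M}_{c,1}^\lambda(\R)$ is genuinely used.
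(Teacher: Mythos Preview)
Your proof is correct and takes essentially the same approach as the paper: both use positivity of the Poisson kernel together with $f\ge 0$, $\int f=1$ for the lower bound, and $f\le 1$ with compact support for the strict upper bound. The only cosmetic difference is that the paper obtains strictness directly by restricting the integral to $[-R,R]\supset\text{supp}(f)$ and noting $\frac{1}{\pi}\int_{-R}^R \frac{v\,dt}{(u-t)^2+v^2}<1$, whereas you argue by contradiction that equality would force $f\equiv 1$ a.e.\ on $\R$.
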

\begin{proof}
Clearly, $P_v f(u)>0$ since $\Vert f\Vert_1=1$ and $f(t)\geq 0$. Similarly, using that $f$ has compact support so that $\text{supp}(f)\subset [-R,R]$ for $R>0$ sufficiently large, we get
\begin{align*}
P_{v} f(u)&=\frac{1}{\pi}\int_{\R}\frac{vf(t)dt}{(u-t)^2+v^2}\leq \frac{1}{\pi}\int_{-R}^R\frac{vdt}{(u-t)^2+v^2}<1
\end{align*}
since $f(t)\leq 1$.
\end{proof}

\begin{Lem}
\label{Sine}
For any fixed $d>0$,
\begin{align*}
\sin(\pi P_{v}f(u))\geq  \frac{d}{1+d^2}\min\{ Mf(u,dv), M(1-f)(u,dv)\}.
\end{align*}
\end{Lem}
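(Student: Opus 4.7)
The plan is to combine an elementary trigonometric inequality with a straightforward lower bound on the Poisson integral obtained by restricting the integration domain.

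First, Lemma \ref{Poisson1} gives $\pi P_v f(u) \in (0,\pi)$, so the sine is positive and we may work on $[0,1]$. The key elementary inequality is
\[
\sin(\pi y) \;\geq\; 2\min\{y,\,1-y\}, \qquad y\in[0,1].
\]
By symmetry $\sin(\pi y)=\sin(\pi(1-y))$ it suffices to check this on $[0,1/2]$, where the function $h(y)=\sin(\pi y)-2y$ satisfies $h(0)=h(1/2)=0$, $h'(0)=\pi-2>0$ and $h'(1/2)=-2<0$, so $h$ is concave enough to stay nonnegative. I would state this as a one-line lemma and apply it to $y=P_v f(u)$.

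Next, I would use that the Poisson kernel reproduces constants: since $\frac{1}{\pi}\int_{\R}\frac{v\,dt}{(u-t)^2+v^2}=1$, we have $1 - P_v f(u) = P_v(1-f)(u)$. Together with the trigonometric inequality this gives
\[
\sin(\pi P_v f(u)) \;\geq\; 2\min\bigl\{P_v f(u),\, P_v(1-f)(u)\bigr\}.
\]

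The final step is to bound each Poisson integral from below by restricting to the interval $[u-dv,u+dv]$, on which $(u-t)^2+v^2 \leq (1+d^2)v^2$. This gives
\[
P_v f(u) \;\geq\; \frac{1}{\pi}\int_{u-dv}^{u+dv}\frac{v\,f(t)}{(1+d^2)v^2}\,dt \;=\; \frac{2d}{\pi(1+d^2)}\, Mf(u,dv),
\]
and the analogous estimate for $1-f$. Combining yields $\sin(\pi P_v f(u)) \geq \frac{4d}{\pi(1+d^2)}\min\{Mf(u,dv),M(1-f)(u,dv)\}$, and since $4/\pi>1$ this implies the claimed inequality with constant $d/(1+d^2)$.

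There is no real obstacle: every step is a direct calculation. The only thing to be slightly careful about is verifying the $\pi$-normalization of $P_v$ used here is consistent with the one fixed in equation (\ref{eqIn}), so that the identity $P_v(1)\equiv 1$ holds as used above; this is just the standard computation $\int_{\R}\frac{v\,dt}{(u-t)^2+v^2}=\pi$.
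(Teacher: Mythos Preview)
Your proof is correct and follows essentially the same approach as the paper: a piecewise-linear lower bound on $\sin$ combined with restricting the Poisson integral to $[u-dv,u+dv]$. The only cosmetic difference is that the paper uses the equivalent form $\sin t \geq \frac{\pi}{4}-\frac{1}{4}|2t-\pi|$ (i.e.\ $\sin(\pi y)\geq \frac{\pi}{2}\min\{y,1-y\}$), which lands directly on the constant $\frac{d}{1+d^2}$, whereas your sharper inequality $\sin(\pi y)\geq 2\min\{y,1-y\}$ gives $\frac{4d}{\pi(1+d^2)}$ which you then weaken.
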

\begin{proof}
Using the inequality 
\begin{align*}
\sin t \geq \frac{\pi}{4}-\frac{1}{4}\vert 2t-\pi\vert
\end{align*}
valid for $t\in[0,\pi]$, we get using Lemma \ref{Poisson1},
\begin{align*}
\sin(\pi P_v f(u))\geq \frac{\pi}{4}-\frac{1}{4}\vert 2\pi P_v f(u)-\pi\vert.
\end{align*}
We now use the inequality 
\begin{align*}
\frac{v}{(u-t)^2+v^2}\geq \frac{1}{1+d^2}\frac{1}{v}
\end{align*}
valid for $t\in[u-dv,u+dv]$ and any fixed $d>0$, to get
\begin{align*}
\pi P_v f(u)\geq \int_{u-dv}^{u+dv}\frac{vf(t)dt}{(u-t)^2+v^2}\geq \frac{1}{1+d^2}\frac{1}{v}\int_{u-dv}^{u+dv}f(t)dt=\frac{2d}{1+d^2}Mf(u,dv),
\end{align*}
and similarly, 
\begin{align*}
\pi P_v f(u)=\pi-\pi P_v (1-f)(u)\leq \pi -\frac{2d}{1+d^2}M(1-f)(u,dv).
\end{align*}
Since
\begin{align*}
\vert 2\pi P_v f(u))-\pi\vert&= \left\{
\begin{array}{ll}   2\pi P_v f(u)-\pi& \text{if } P_v f(u)\geq \frac{1}{2}\\
\pi-2\pi P_v f(u) & \text{if } P_v f(u)<\frac{1}{2}\\
 \end{array} \right. ,
\end{align*}
\begin{align*}
\frac{\pi}{4}-\frac{1}{4}\vert 2\pi P_v f(u)-\pi\vert&= \left\{
\begin{array}{ll}   \frac{\pi}{2}-\frac{\pi}{2} P_v f(u)& \text{if } P_v f(u)\geq \frac{1}{2}\\
\frac{\pi}{2} P_v f(u)) & \text{if } P_v f(u)<\frac{1}{2}\\
 \end{array} \right. \\
&\geq \left\{
\begin{array}{ll}   \frac{d}{1+d^2}M(1-f)(u,dv)& \text{if } P_v f(u)\geq \frac{1}{2}\\
\frac{d}{1+d^2}Mf(u,dv) & \text{if } P_v f(u)<\frac{1}{2}\\
 \end{array} \right. \\
&\geq\frac{d}{1+d^2}\min\{Mf(u,dv),M(1-f)(u,dv)\}
\end{align*}
\end{proof}

\begin{Lem}
\label{PoissonEst}
Fix $x\in\Sn$. Let
\begin{align}
b(x):=\frac{1}{4}\min\{2-f_R^{+}(x)-f_L^{+}(x),f_R^-(x)+f_L^-(x)\}
\end{align}
Consider a non-tangentially convergent sequence such that $\{u_n+iv_n\}_n\subset \G_k(x)$ and fix $\eps>0$. Then, 
\begin{align*}
\sin(\pi P_{v_n} f(u_n))\geq \frac{2k}{1+4k^2}(b(x)-\eps)
\end{align*}
for $n$ sufficiently large.
\end{Lem}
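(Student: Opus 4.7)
The plan is to apply Lemma \ref{Sine} with the particular choice $d = 2k$, where $k$ is the aperture of the cone containing $\{u_n+iv_n\}_n$. This choice is dictated by the following geometric observation: since $\vert u_n - x\vert < kv_n$ for $n$ large, the interval $[x-kv_n,x+kv_n]$ is contained in $[u_n-2kv_n,u_n+2kv_n]$. Consequently, for any nonnegative $g$,
\begin{equation*}
\frac{1}{4kv_n}\int_{u_n-2kv_n}^{u_n+2kv_n} g(t)\,dt \;\geq\; \frac{1}{4kv_n}\int_{x-kv_n}^{x+kv_n} g(t)\,dt,
\end{equation*}
which allows us to replace averages centered at the moving point $u_n$ by averages centered at the fixed point $x$.

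Next, I would split the right-hand integral at $x$ and invoke the definitions of $f_R^-(x)$ and $f_L^-(x)$ from Definition \ref{MaxSup}. Fixing $\eps>0$, there exists $h=h(\eps,x)>0$ such that for all $0<h'<h$,
\begin{equation*}
\frac{1}{h'}\int_{x}^{x+h'} f(t)\,dt \;\geq\; f_R^-(x) - \tfrac{\eps}{2}, \qquad \frac{1}{h'}\int_{x-h'}^{x} f(t)\,dt \;\geq\; f_L^-(x) - \tfrac{\eps}{2}.
\end{equation*}
Since $v_n\to 0$, we have $kv_n<h$ for $n$ large, so applying these with $h'=kv_n$ yields
\begin{equation*}
Mf(u_n,2kv_n) \;\geq\; \tfrac{1}{4}\bigl(f_R^-(x) + f_L^-(x)\bigr) - \tfrac{\eps}{2}.
\end{equation*}
The identical argument applied to $1-f$, using the elementary identities $(1-f)_R^-(x) = 1 - f_R^+(x)$ and $(1-f)_L^-(x) = 1 - f_L^+(x)$, produces
\begin{equation*}
M(1-f)(u_n,2kv_n) \;\geq\; \tfrac{1}{4}\bigl(2 - f_R^+(x) - f_L^+(x)\bigr) - \tfrac{\eps}{2}.
\end{equation*}

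Taking the minimum of the two bounds gives $\min\{Mf(u_n,2kv_n),M(1-f)(u_n,2kv_n)\} \geq b(x) - \eps/2$, where $b(x)$ is as defined in the statement. Lemma \ref{Sine} with $d=2k$ then yields
\begin{equation*}
\sin(\pi P_{v_n}f(u_n)) \;\geq\; \frac{2k}{1+4k^2}\bigl(b(x) - \tfrac{\eps}{2}\bigr) \;\geq\; \frac{2k}{1+4k^2}(b(x) - \eps),
\end{equation*}
which is the desired inequality. There is no real obstacle here; the argument is essentially bookkeeping built on top of Lemma \ref{Sine}. The one point requiring genuine care is the choice $d=2k$, since a smaller $d$ would fail to engulf the interval $[x-kv_n,x+kv_n]$ while a larger $d$ would produce a worse constant in the prefactor $\tfrac{d}{1+d^2}$.
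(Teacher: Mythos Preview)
Your proof is correct and follows essentially the same approach as the paper: choose $d=2k$ in Lemma~\ref{Sine}, use the containment $[x-kv_n,x+kv_n]\subset[u_n-2kv_n,u_n+2kv_n]$ to pass from averages centered at $u_n$ to averages centered at $x$, then invoke the definitions of $f_R^\pm,f_L^\pm$ to bound below by $b(x)-\eps$. Your $\eps$-bookkeeping is slightly more explicit than the paper's, and your closing remark on the optimality of $d=2k$ is additional commentary, but the argument is the same.
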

\begin{proof}
By definition $\vert u_n-x\vert \leq kv_n$ for all $n$. Choose $d=2k$ in Lemma \ref{Sine}. Assume $u_n-x>0$. Then,
\begin{align*}
Mf(u_n,2kv_n)&=\frac{1}{4kv_n}\int_{u_n-2kv_n}^{u_n+2kv_n}f(t)dt=\frac{1}{4kv_n}\int_{x+(u_n-x)-2kv_n}^{x+(u_n-x)+2kv_n}f(t)dt\\&\geq \frac{1}{4kv_n}\int_{x}^{x+kv_n}f(t)dt+\frac{1}{4kv_n}\int_{x-kv_n}^{x}f(t)dt\\
&\geq \frac{1}{4}(M_Rf(x,kv_n)+M_Lf(x,kv_n))
\end{align*}
and the same estimate holds if $u_n-x<0$.
Thus,
\begin{align*}
&\min\{Mf(u_n,kv_n),M(1-f)(u_n,kv_n)\}\\&\geq \frac{1}{4}\min\{M_Rf(x,kv_n)+M_Lf(x,kv_n),M_R(1-f)(x,kv_n)+M_L(1-f)(x,kv_n)\}\\&
\geq b(x)-\eps
\end{align*}
whenever $n>N=N(\eps)$ say. Then Lemma \ref{Sine} implies that 
\begin{align*}
\sin(\pi P_{v_n} f(u_n))\geq \frac{2k}{1+4k^2}(b(x)-\eps)
\end{align*}
whenever $n>N$.
\end{proof}
We now give a version of Lemma \ref{PoissonEst} for orthogonal limits that will be need in Proposition \ref{GenericProp}.

\begin{Lem}
\label{PoissonEst1}
Fix $x\in\Sn$. Then, for any fixed $\delta>0$
\begin{align*}
\sin(\pi P_{v} f(x))\geq \frac{1}{2}\min\{1-m_f^\delta(x),1-m_{1-f}^\delta(x)\}
\end{align*}
for all $0<v<\delta$.
\end{Lem}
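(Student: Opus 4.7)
The plan is to reduce the statement to Lemma \ref{Sine} with the choice $u=x$ and $d=1$, which immediately gives
\begin{align*}
\sin(\pi P_{v} f(x)) \geq \tfrac{1}{2}\min\{Mf(x,v),\, M(1-f)(x,v)\}.
\end{align*}
From there the only remaining task is to convert each of the two averages on the right into an expression involving the truncated maximal functions $m_f^\delta$ and $m_{1-f}^\delta$.

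For this I would use the elementary pointwise identity $M(1-f)(x,y) = 1 - Mf(x,y)$, which holds because $\frac{1}{2y}\int_{x-y}^{x+y} dt = 1$. Taking the supremum over $0<y<\delta$ on both sides gives
\begin{align*}
m_{1-f}^{\delta}(x) = 1 - \inf_{0<y<\delta} Mf(x,y), \qquad m_{f}^{\delta}(x) = 1 - \inf_{0<y<\delta} M(1-f)(x,y).
\end{align*}
Since $0<v<\delta$, both $Mf(x,v)$ and $M(1-f)(x,v)$ dominate the corresponding infima, so
\begin{align*}
Mf(x,v) \geq 1 - m_{1-f}^{\delta}(x), \qquad M(1-f)(x,v) \geq 1 - m_{f}^{\delta}(x).
\end{align*}
Substituting these two inequalities into the lower bound coming from Lemma \ref{Sine} produces exactly the claimed estimate.

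There is no genuine obstacle here: the result is a ``vertical-approach'' specialization of Lemma \ref{PoissonEst}, and all of the analytic content has already been absorbed into Lemma \ref{Sine}. The one minor point worth flagging is that the hypothesis $x\in \Sn$ plays no role in the argument, just as in Lemma \ref{Sine} itself; it is stated only to match the context in which the estimate is later invoked, for instance in the proof of Proposition \ref{GenericProp}.
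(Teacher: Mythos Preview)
Your proof is correct and follows essentially the same route as the paper: apply Lemma~\ref{Sine} with $u=x$, $d=1$, then use $M(1-f)(x,y)=1-Mf(x,y)$ to rewrite the infima of the averages in terms of the truncated maximal functions $m_f^\delta$ and $m_{1-f}^\delta$. Your observation that the hypothesis $x\in\Sn$ is not actually used is accurate and worth noting.
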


\begin{proof}
Since
\begin{align*}
\inf_{0<v<\delta}Mf(x,v)= 1-\sup_{0<v<\delta}M(1-f)(x,v)\geq 1-m_{1-f}^{\delta}(x),
\end{align*}
and
\begin{align*}
\inf_{0<v<\delta}1-Mf(x,v)= 1-\sup_{0<v<\delta}Mf(x,v)\geq 1-m_{f}^{\delta}(x),
\end{align*}
the result follows immediately from Lemma \ref{Sine} with $d=1$.
\end{proof}

\begin{Lem}
\label{Ineq}
Fix $x\in \Sn$. Then for every sequence $\{u_n+iv_n\}_n\subset \G_k(x)$ which converges non-tangentially to $x$, we have for every $\eps>0$ sufficiently small
\begin{align}
\label{NormIneq}
\vert (\chi_{\LL}(u_n,v_n)-u_n,\eta_{\LL}(u_n,v_n)-1)\vert \leq \frac{1+4k^2}{2k}\frac{\sqrt{20}Cv_n^{1-c(x)-\eps}}{b(x)-\eps},
\end{align}
where $C$ is the same constant as in Lemma \ref{EstHilb1}.
\end{Lem}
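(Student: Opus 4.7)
The plan is to combine the explicit real--imaginary form of $(\chi_\LL,\eta_\LL)$ from equations (\ref{eqchin})--(\ref{eqetan}) with the two main estimates of this subsection, namely Lemma \ref{EstHilb1} (upper bound for $v_n e^{\pi|H_{v_n}f(u_n)|}$ along non-tangential sequences) and Lemma \ref{PoissonEst} (lower bound for $\sin(\pi P_{v_n}f(u_n))$ along non-tangential sequences).

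First I would rewrite the two coordinate differences as
\[
\chi_\LL(u_n,v_n)-u_n = v_n\,\frac{e^{-\pi H_{v_n}f(u_n)}-\cos(\pi P_{v_n}f(u_n))}{\sin(\pi P_{v_n}f(u_n))},
\]
\[
\eta_\LL(u_n,v_n)-1 = -v_n\,\frac{e^{\pi H_{v_n}f(u_n)}+e^{-\pi H_{v_n}f(u_n)}-2\cos(\pi P_{v_n}f(u_n))}{\sin(\pi P_{v_n}f(u_n))}.
\]
Using $|\cos|\le 1$ and $e^{\pi|H_{v_n}f(u_n)|}\ge 1$, the triangle inequality gives the purely algebraic bounds
\[
\bigl|e^{-\pi H_{v_n}f(u_n)}-\cos(\pi P_{v_n}f(u_n))\bigr|\le 2\,e^{\pi|H_{v_n}f(u_n)|},
\]
\[
\bigl|e^{\pi H_{v_n}f(u_n)}+e^{-\pi H_{v_n}f(u_n)}-2\cos(\pi P_{v_n}f(u_n))\bigr|\le 4\,e^{\pi|H_{v_n}f(u_n)|}.
\]

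Next I would combine these by the Euclidean norm, which yields
\[
\bigl|(\chi_\LL(u_n,v_n)-u_n,\;\eta_\LL(u_n,v_n)-1)\bigr|\le \sqrt{2^2+4^2}\cdot\frac{v_n\,e^{\pi|H_{v_n}f(u_n)|}}{\sin(\pi P_{v_n}f(u_n))} = \sqrt{20}\,\frac{v_n\,e^{\pi|H_{v_n}f(u_n)|}}{\sin(\pi P_{v_n}f(u_n))},
\]
producing the constant $\sqrt{20}$ that appears in the statement.

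Finally I would substitute the two estimates proved earlier in the subsection. Since $\{u_n+iv_n\}\subset\Gamma_k(x)$ converges non-tangentially to $x$, Lemma \ref{EstHilb1} gives, for every sufficiently small $\eps>0$ and all $n$ large enough, the upper bound $v_n e^{\pi|H_{v_n}f(u_n)|}\le C v_n^{1-c(x)-\eps}$ with $C=C(\eps,x,k)$, while Lemma \ref{PoissonEst} gives the lower bound $\sin(\pi P_{v_n}f(u_n))\ge \tfrac{2k}{1+4k^2}(b(x)-\eps)$ for $n$ large. Dividing the first by the second yields exactly the inequality (\ref{NormIneq}). There is no real obstacle here: once the two technical estimates of this subsection are in hand the argument is essentially a bookkeeping exercise; the only place where a small amount of care is required is keeping track that the same constant $k$ (the aperture of the cone) and the same $\eps>0$ can be chosen simultaneously for both preceding lemmas, which is immediate since both results are stated for an arbitrary but fixed non-tangential sequence in $\Gamma_k(x)$ and arbitrary small $\eps$.
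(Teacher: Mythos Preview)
Your proposal is correct and follows essentially the same route as the paper's proof: rewrite $(\chi_\LL-u_n,\eta_\LL-1)$ using (\ref{eqchin})--(\ref{eqetan}), bound the two numerators by $2e^{\pi|H_{v_n}f(u_n)|}$ and $4e^{\pi|H_{v_n}f(u_n)|}$ to obtain the factor $\sqrt{20}$, and then feed in Lemma~\ref{EstHilb1} for the numerator and Lemma~\ref{PoissonEst} for the denominator. The only cosmetic difference is that the paper squares first and passes through the intermediate bound $5(1+3e^{2\pi|H_{v_n}f(u_n)|})v_n^2\le 20C^2v_n^{2(1-c(x)-\eps)}$, whereas you take the Euclidean norm directly; the two computations are equivalent.
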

\begin{proof}
From (\ref{eqchin}) and (\ref{eqetan}) we see that
\begin{align*}
&\vert\sin[\pi P_{v_n} f(u_n)]\vert^2\vert (\chi_{\LL}(u_n,v_n)-u_n),\eta(u_n,v_n)_{\LL}-1)\vert^2 \\
&\leq (e^{-\pi H_{v_n}f(u_n)}-\cos(\pi P_{v_n} f(u))^2+(e^{\pi H_{v_n}f(u_n)}+e^{-\pi H_{v_n}f(u_n)}-2\cos(\pi P_{v_n} f(u_n))^2v_n^2\\
&\leq (e^{\pi \vert H_{v_n}f(u_n)\vert}+1)^2+(2e^{\pi \vert H_{v_n}f(u_n)\vert}+2)^2v_n^2\\
&\leq 5(1+3e^{2\pi \vert H_{v_n}f(u_n)\vert})v_n^2\\
&\leq 20C^2v_n^{2(1-c(x)-\eps)}
\end{align*}
by Lemma \ref{EstHilb1}, whenever $n$ is sufficiently large. Hence, by Lemma \ref{PoissonEst},
\begin{align*}
\vert (\chi_{\LL}(u_n,v_n)-u_n,\eta_{\LL}(u_n,v_n)-1)\vert \leq  \frac{1+4k^2}{2k}\frac{\sqrt{20}Cv_n^{1-c(x)-\eps}}{b(x)-\eps},
\end{align*}
if $\eps<b(x)$.
\end{proof}

We conclude this section with a similar estimate as in Lemma \ref{Ineq}, for orthogonal sequences, but where the constant is independent of $x$.
\begin{Lem}
\label{Ineq2}
For every $x\in \Sn$ and $\delta>0$, there exists a constant $c>0$ independent of $x$ and $\delta$, such that 
\begin{align}
\label{NormIneq2}
\vert (\chi_{\LL}(x,v)-x,\eta_{\LL}(x,v)-1)\vert \leq \frac{2\sqrt{20}cv^{1-\Delta m_f(x)}}{\min\{1-m_f^{\delta}(x),1-m_{1-f}^{\delta}(x)\}},
\end{align}
whenever $v<\delta$.
\end{Lem}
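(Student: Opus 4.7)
The plan is to mirror the proof of Lemma \ref{Ineq}, but in the orthogonal setting (taking $u = x$ fixed and letting $v \to 0^+$) so that the refined estimates of Lemma \ref{EstHilb2} and Lemma \ref{PoissonEst1} can be invoked in place of Lemma \ref{EstHilb1} and Lemma \ref{PoissonEst}. The key observation is that both replacement estimates yield bounds with constants that are independent of the point $x$, which is precisely what is needed for the claimed uniformity of $c$.

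First I would start from equations (\ref{eqchin}) and (\ref{eqetan}) specialized to $u = x$, namely
\begin{align*}
\chi_{\LL}(x,v) - x &= v \cdot \frac{e^{-\pi H_v f(x)} - \cos(\pi P_v f(x))}{\sin(\pi P_v f(x))}, \\
\eta_{\LL}(x,v) - 1 &= -v \cdot \frac{e^{\pi H_v f(x)} + e^{-\pi H_v f(x)} - 2\cos(\pi P_v f(x))}{\sin(\pi P_v f(x))}.
\end{align*}
Multiplying by $\sin(\pi P_v f(x))$, squaring, and using the elementary bounds $|e^{-\pi H_v f(x)} - \cos(\pi P_v f(x))| \leq e^{\pi |H_v f(x)|} + 1 \leq 2 e^{\pi |H_v f(x)|}$ and $|e^{\pi H_v f(x)} + e^{-\pi H_v f(x)} - 2\cos(\pi P_v f(x))| \leq 2 e^{\pi |H_v f(x)|} + 2 \leq 4 e^{\pi |H_v f(x)|}$, one obtains
\begin{equation*}
\sin^2(\pi P_v f(x)) \cdot |(\chi_{\LL}(x,v) - x,\, \eta_{\LL}(x,v) - 1)|^2 \leq 20 \, v^2 e^{2 \pi |H_v f(x)|}.
\end{equation*}

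Next I would apply Lemma \ref{EstHilb2} to the right-hand side, which gives $v e^{\pi |H_v f(x)|} \leq c v^{1 - \Delta m_f(x)}$ with $c$ an absolute constant, so that the right-hand side above is bounded by $20 c^2 v^{2(1 - \Delta m_f(x))}$. For the denominator, I would invoke Lemma \ref{PoissonEst1}, valid for $v < \delta$, which yields
\begin{equation*}
\sin(\pi P_v f(x)) \geq \tfrac{1}{2}\min\{1 - m_f^{\delta}(x),\, 1 - m_{1-f}^{\delta}(x)\}.
\end{equation*}
Combining the two bounds and taking square roots produces exactly (\ref{NormIneq2}) with the same constant $c$ as in Lemma \ref{EstHilb2}, which depends neither on $x$ nor on $\delta$.

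The proof is essentially routine once Lemmas \ref{EstHilb2} and \ref{PoissonEst1} are in hand; no genuine obstacle arises. The only point to double-check is that the lower bound for $\sin(\pi P_v f(x))$ is strictly positive, which is guaranteed provided $x \in \Sn$ so that $m_f^{\delta}(x), m_{1-f}^{\delta}(x) < 1$ for $\delta$ small enough --- but the inequality (\ref{NormIneq2}) is vacuous otherwise, since in that case the right-hand side is $+\infty$, so no case distinction is strictly needed.
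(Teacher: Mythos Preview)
Your proof is correct and follows exactly the approach of the paper: the paper's proof is a single sentence saying to combine Lemma~\ref{EstHilb2} with the orthogonal Poisson estimate and repeat the computation of Lemma~\ref{Ineq}, which is precisely what you have written out in detail. (The paper's proof actually cites Lemma~\ref{PoissonEst} rather than Lemma~\ref{PoissonEst1}, but this is evidently a typo, since the bound in the conclusion involves $m_f^\delta$ and $m_{1-f}^\delta$, which come from Lemma~\ref{PoissonEst1}.)
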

\begin{proof}
Combining Lemma \ref{EstHilb2} and Lemma \ref{PoissonEst}, a similar computation as in the proof of Lemma \ref{Ineq} gives (\ref{NormIneq2}).
\end{proof}

\section{Regular Points}
In this section we will discuss various sufficient criteria for a point to be regular and give examples of cases where these occur. As we shall see, it is natural to distinguish between regular Lebesgue points and regular non-Lebesgue points. This is due to the different behavior of $H_{v_n}f(u_n)$ and $P_{v_n}f(u_n)$ for non-tangential sequences $\{u_n+iv_n\}_n\in \Hp$ that converge to a regular point $x$, depending on whether $x$ is in the Lebesgue set or not.

\subsection{Regular Lebesgue Points}
We will first consider the cases when $x$ belongs to the Lebesgue set of $f$. The following proposition may be viewed as characterizing the typical case when a point $x$ is regular:
\begin{Prop}
\label{LPointReg}
Assume that $x\in \Sn\bigcap \mathscr{L}_f$ and that $0<f(x)<1$. Then $x$ is regular.
\end{Prop}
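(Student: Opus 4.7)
The plan is to show that Proposition \ref{LPointReg} is an essentially immediate corollary of the estimate in Lemma \ref{Ineq}, once we observe that the quantities $c(x)$ and $b(x)$ defined there take very simple values at a Lebesgue point of $f$. First I would fix a non-tangentially convergent sequence $\{w_n\}_{n\geq 1} = \{u_n + iv_n\}_{n\geq 1}$ with $w_n \to x$. By definition, there exists $k>0$ such that $\{w_n\}_{n\geq 1} \subset \Gamma_k(x)$ for all $n$ sufficiently large, so Lemma \ref{Ineq} is applicable and gives, for every sufficiently small $\epsilon > 0$,
\begin{equation*}
|(\chi_{\LL}(u_n,v_n) - u_n, \eta_{\LL}(u_n,v_n) - 1)| \leq \frac{1+4k^2}{2k}\cdot\frac{\sqrt{20}\,C\,v_n^{1-c(x)-\epsilon}}{b(x) - \epsilon},
\end{equation*}
with $c(x)$ and $b(x)$ as in Lemma \ref{EstHilb1} and Lemma \ref{PoissonEst}.

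The key observation is that since $x \in \mathscr{L}_f$, the four one-sided averages coincide, namely $f_R^{+}(x) = f_R^{-}(x) = f_L^{+}(x) = f_L^{-}(x) = f(x)$; this is an immediate consequence of the definition of a Lebesgue point, which forces $M_R f(x,h) \to f(x)$ and $M_L f(x,h) \to f(x)$ as $h\to 0^+$. Substituting these identities into the formulas from Definition \ref{MaxSup} yields
\begin{equation*}
c(x) = \max\{|f(x) - f(x)|,|f(x) - f(x)|\} = 0,
\qquad
b(x) = \tfrac{1}{2}\min\{1 - f(x),\, f(x)\}.
\end{equation*}
The hypothesis $0 < f(x) < 1$ then gives $b(x) > 0$, so we may pick any $\epsilon$ with $0 < \epsilon < \min\{b(x),1\}$.

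With this choice, the right-hand side of the Lemma \ref{Ineq} estimate becomes a constant multiple of $v_n^{1-\epsilon}$, which tends to $0$ since $v_n \to 0^+$ and $1-\epsilon > 0$. Combined with $u_n \to x$, this yields $(\chi_{\LL}(u_n,v_n), \eta_{\LL}(u_n,v_n)) \to (x,1)$. Since the non-tangential sequence was arbitrary, $x$ is regular by Definition \ref{DefReg}. There is no main obstacle here: all the technical work has been packaged into Lemma \ref{Ineq}, and the role of the Proposition is essentially to identify the Lebesgue-point condition plus $0<f(x)<1$ as the cleanest sufficient condition making both $c(x)=0$ and $b(x)>0$ hold simultaneously.
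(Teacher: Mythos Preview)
Your proof is correct and follows essentially the same strategy as the paper. The only cosmetic difference is in the treatment of the denominator: rather than going through Lemma~\ref{Ineq} (and hence Lemma~\ref{PoissonEst}) to bound $\sin(\pi P_{v_n}f(u_n))$ from below, the paper argues more directly that $P_{v_n}f(u_n)\to f(x)$ at a Lebesgue point, so $\sin(\pi P_{v_n}f(u_n))\to\sin(\pi f(x))>0$. Your route via Lemma~\ref{Ineq} is in fact precisely the argument the paper uses later for the more general Proposition~\ref{NLpointReg}, of which the present statement is the special case $c(x)=0$, $b(x)>0$.
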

\begin{proof}
Since $x$ belongs to the Lebesgue set of $f$ we have by Lemma \ref{EstHilb1}, that $v_ne^{\pi \vert H_{v_n}f(u_n)\vert}\rightarrow 0$ for a non-tangential sequence $u_n+iv_n\in \Hp$ such such that $\lim_{n\rightarrow+\infty}u_n+iv_n=x$. Moreover, (see page 11)
\begin{align*}
\lim_{n\rightarrow+\infty}P_{v_n} f(u_n)=f(x)
\end{align*}
also holds for every such sequence. Hence
\begin{align*}
\lim_{n\rightarrow\infty}\frac{v_ne^{-\pi H_{v_n}f(u_n) } -v_n\cos(\pi P_{v_n} f(u_n))}{ \sin(\pi P_{v_n} f(u_n))}=0
\end{align*}
and
\begin{align*}
\lim_{n\rightarrow\infty}\frac{v_n e^{\pi H_{v_n}f(u_n)}+v_n e^{-\pi H_{v_n}f(u_n)}-2v_n\cos(\pi P_{v_n}f(u_n))}{\sin(\pi P_{v_n} f(u_n))}=0
\end{align*}
hold, which implies the claim by (\ref{eqchin}) and (\ref{eqetan}).
\end{proof}

We now consider the remaining cases where $f(x)=0$ or $f(x)=1$.
\begin{Prop}
\label{RegPoint1}
Assume that $x\in \Sn\bigcap \mathscr{L}_f$ is in the Lebesgue set of $f$ and that $f(x)=0$. Furthermore, assume that $\int_{\R}\frac{f(t)}{(x-t)^2}dt=\infty$ and that $H_vf(u)$ is non-tangentially bounded at $x$. Then $x$ is regular.
\end{Prop}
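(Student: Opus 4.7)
The plan is as follows. Fix a non-tangential sequence $\{w_n\}_{n\ge 1}=\{u_n+iv_n\}_{n\ge 1}\subset\Gamma_k(x)$ with $w_n\to x$, and write, using (\ref{eqchin}) and (\ref{eqetan}),
\begin{align*}
\chi_{\LL}(u_n,v_n)-u_n
&=\frac{v_n\bigl(e^{-\pi H_{v_n}f(u_n)}-\cos(\pi P_{v_n}f(u_n))\bigr)}{\sin(\pi P_{v_n}f(u_n))},\\
\eta_{\LL}(u_n,v_n)-1
&=-\frac{v_n\bigl(e^{\pi H_{v_n}f(u_n)}+e^{-\pi H_{v_n}f(u_n)}-2\cos(\pi P_{v_n}f(u_n))\bigr)}{\sin(\pi P_{v_n}f(u_n))}.
\end{align*}
Since $x\in\mathscr{L}_f$ with $f(x)=0$, one has $P_{v_n}f(u_n)\to 0$ along any non-tangential sequence (by the standard boundary theory of the Poisson integral, recalled on page~11). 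The non-tangential boundedness hypothesis gives a constant $M$ with $|H_{v_n}f(u_n)|\le M$ for all $n$ large, so the numerators in both ratios are bounded in absolute value by $v_n\,K$ for some $K=K(M)$. Thus the entire problem reduces to showing that
\begin{equation*}
\frac{v_n}{\sin(\pi P_{v_n}f(u_n))}\longrightarrow 0 \quad\text{as } n\to\infty.
\end{equation*}
Because $\sin(\pi c)\ge \pi c/2$ for $c>0$ sufficiently small, this is in turn equivalent to $v_n/P_{v_n}f(u_n)\to 0$, i.e., to showing that
\begin{equation*}
I_n:=\int_{\R}\frac{f(t)\,dt}{(u_n-t)^2+v_n^2}\longrightarrow+\infty.
\end{equation*}

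The crux of the argument — and the only real step — is to compare $I_n$ with the orthogonal integral at $x$. For $|u_n-x|\le kv_n$ one has the elementary estimate
\begin{equation*}
(u_n-t)^2+v_n^2 \le 2(u_n-x)^2+2(x-t)^2+v_n^2 \le C_k\bigl((x-t)^2+v_n^2\bigr),
\end{equation*}
with $C_k:=\max\{2k^2+1,2\}$. Therefore
\begin{equation*}
I_n \;\ge\; \frac{1}{C_k}\int_{\R}\frac{f(t)\,dt}{(x-t)^2+v_n^2}.
\end{equation*}
By monotone convergence as $v_n\to 0$, the right-hand side tends to $C_k^{-1}\int_{\R}f(t)\,dt/(x-t)^2$, which equals $+\infty$ by hypothesis. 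Hence $I_n\to+\infty$.

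Combining these observations, both $\chi_{\LL}(u_n,v_n)-u_n$ and $\eta_{\LL}(u_n,v_n)-1$ are bounded by $K\cdot v_n/\sin(\pi P_{v_n}f(u_n))\to 0$, and since $u_n\to x$ we conclude $(\chi_{\LL}(u_n,v_n),\eta_{\LL}(u_n,v_n))\to(x,1)$. This holds for every non-tangential sequence, so by Definition \ref{DefReg} the point $x$ is regular. The main obstacle was the one identified above: turning the hypothesis $\int f(t)/(x-t)^2\,dt=\infty$, which is a statement at the single point $x$, into the divergence of $I_n$ along an arbitrary non-tangential approach; the truncated-cone comparison combined with monotone convergence is exactly what makes this work.
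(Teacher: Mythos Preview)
Your proof is correct and follows essentially the same approach as the paper: bound the numerators using the non-tangential boundedness of $H_vf$ and the fact that $P_{v_n}f(u_n)\to f(x)=0$, then show the denominator blows up via $I_n\to\infty$. The only cosmetic difference is that the paper obtains $I_n\to\infty$ by applying Fatou's lemma directly (pointwise, $f(t)/((u_n-t)^2+v_n^2)\to f(t)/(x-t)^2$), whereas you use the cone comparison $(u_n-t)^2+v_n^2\le C_k((x-t)^2+v_n^2)$ followed by monotone convergence; both arguments are equally short and yield the same conclusion.
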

\begin{proof}
Since $x$ belongs to the Lebesgue set of $f$ and $f(x)=0$, we know that for every non-tangential sequence $\{u_n+iv_n\}_{n=1}^{+\infty}\subset \Hp$ we have $\lim_{n\rightarrow+\infty}P_{v_n}f(u_n)=0$. Moreover, $\sup_{(u,v)\in \Gamma_{\alpha}^1(x)}\vert H_vf(u)\vert=d_{\alpha}<+\infty$ by assumption. Consequently, for any non-tangential sequence $\{u_n+iv_n\}_n^{\infty}\subset \Gamma_{\alpha}^1(x)$, such that $\lim_{n\to \infty}u_n+iv_n=x$ we have
\begin{align*}
\lim_{n\rightarrow\infty}\frac{v_n\vert e^{\pi\vert H_{v_n}f(u_n) \vert}-\cos(\pi P_{v_n} f(u_n))\vert}{\vert \sin(\pi P_{v_n} f(u_n))\vert}\leq\lim_{n\rightarrow\infty}\frac{v_n\vert e^{\pi d_{\alpha}}+1\vert}{v_n\int_{\R}\frac{f(t)dt}{(u_n-t)^2+v_n^2}}=0
\end{align*}
since by Fatou's lemma $+\infty =\int_{\R}\frac{f(t)}{(x-t)^2}dt\leq\liminf_{n\rightarrow\infty}\int_{\R}\frac{f(t)dt}{(u_n-t)^2+v_n^2}$.
\end{proof}
\begin{ex}
Let $f(t)=\vert t\vert \chi_{[-1/2,1/2]}(t)+\frac{3}{4}\chi_{[1/2,3/2]}(t)$. Then $t=0$ satisfies the conditions of Proposition \ref{RegPoint1}. 
\end{ex}
\begin{Cor}
Assume that $x\in \Sn\bigcap \mathscr{L}_f$ is in the Lebesgue set of $f$ and that $f(x)=1$. Furthermore, assume that $\int_{\R}\frac{1-f(t)}{(x-t)^2}dt=\infty$ and that $H_vf(u)$ is non-tangentially bounded at $x$. Then $x$ is regular.
\end{Cor}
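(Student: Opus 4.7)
The plan is to mirror the proof of Proposition \ref{RegPoint1} by exploiting the symmetry $f \leftrightarrow 1-f$. The technical observation that makes this work cleanly is that the Poisson extension of the constant $1$ is $1$, so $P_v(1-f)(u) = 1 - P_v f(u)$, which yields the identity
\begin{align*}
\sin(\pi P_v f(u)) = \sin(\pi P_v(1-f)(u)),
\qquad
\cos(\pi P_v f(u)) = -\cos(\pi P_v(1-f)(u)).
\end{align*}
Since $x$ is a Lebesgue point of $f$ with $f(x)=1$, it is a Lebesgue point of $1-f$ with $(1-f)(x)=0$, so for any non-tangential sequence $\{u_n+iv_n\}_n \subset \Gamma_\alpha^1(x)$ converging to $x$, one has $P_{v_n}(1-f)(u_n) \to 0$.

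Next, I would use the non-tangential boundedness hypothesis to fix a constant $d_\alpha<\infty$ with $|H_{v_n} f(u_n)| \le d_\alpha$ for all $n$, so that $e^{\pm \pi H_{v_n}f(u_n)}$ and $\cos(\pi P_{v_n}f(u_n))$ are uniformly bounded in $n$. The denominator in (\ref{eqchin})--(\ref{eqetan}) will be controlled by Fatou's lemma applied to $(1-f)/((u_n-t)^2+v_n^2)$, exactly as in Proposition \ref{RegPoint1} but with $1-f$ in place of $f$:
\begin{align*}
\liminf_{n\to\infty} \frac{P_{v_n}(1-f)(u_n)}{v_n}
= \liminf_{n\to\infty} \frac1\pi \int_\R \frac{1-f(t)}{(u_n-t)^2+v_n^2}\,dt
\ge \frac1\pi \int_\R \frac{1-f(t)}{(x-t)^2}\,dt = +\infty.
\end{align*}
Combined with the estimate $\sin(\pi s) \ge 2s$ valid for sufficiently small $s>0$, this gives $v_n/\sin(\pi P_{v_n}(1-f)(u_n)) \to 0$, hence $v_n/\sin(\pi P_{v_n}f(u_n)) \to 0$ by the identity above.

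Finally, plugging these facts into (\ref{eqchin}) and (\ref{eqetan}), the numerators
$v_n(e^{-\pi H_{v_n}f(u_n)} - \cos(\pi P_{v_n}f(u_n)))$ and
$v_n(e^{\pi H_{v_n}f(u_n)} + e^{-\pi H_{v_n}f(u_n)} - 2\cos(\pi P_{v_n}f(u_n)))$
are bounded above by $v_n(e^{\pi d_\alpha}+1)$ and $v_n(2e^{\pi d_\alpha}+2)$ respectively, while the common denominator tends to $0$ at a strictly slower rate than $v_n$. Therefore $\chi_\LL(u_n,v_n) \to x$ and $\eta_\LL(u_n,v_n) \to 1$, so $x$ is regular. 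There is no real obstacle — the only nontrivial point is spotting the $\sin(\pi P_v f) = \sin(\pi P_v(1-f))$ duality that transfers the hypothesis $\int (1-f(t))(x-t)^{-2}\,dt = \infty$ into the exact statement needed in the argument of Proposition \ref{RegPoint1}.
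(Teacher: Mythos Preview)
Your proposal is correct and follows essentially the same approach as the paper. The paper's proof simply records the identity $\sin(\pi P_v f(u)) = \sin(\pi - \pi P_v(1-f)(u)) = \sin(\pi P_v(1-f)(u))$ and then refers back to Proposition~\ref{RegPoint1} applied to $1-f$; your write-up spells out exactly those details (Lebesgue point of $1-f$ with value $0$, Fatou for $(1-f)/((u_n-t)^2+v_n^2)$, and the uniform bound on $e^{\pm\pi H_{v_n}f}$).
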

\begin{proof}
We see that
\begin{align*}
\sin(\pi P_v f(u))&=\sin(\pi-\pi P_v (1-f)(u))=\sin(\pi P_v (1-f)(u)).
\end{align*}
Now, repeating the same argument as in the Proposition \ref{RegPoint1} for $1-f$ gives the same result as before.
\end{proof}
\begin{ex}
Assume that for some $\delta>0$ $f\big\vert_{[-\delta,\delta]}(t)=t^2\chi_{[-\delta,0)}(t)+\frac{1}{\log t^{-1}}\chi_{[0,\delta]}(t)$. Then $f$ is continuous at $t=0$, in particular $t$ belongs to the Lebesgue set of $f$. However, since
\begin{align*}
\int_{0}^{\delta}\frac{\vert f(t)-f(-t)\vert dt}{t}=\int_{0}^{\delta}\frac{dt}{-t\log t}-\int_{0}^{\delta}tdt=+\infty,
\end{align*}
we have that $\vert \mathcal{H}(x)\vert =+\infty$. Therefore this example does not satisfy the assumptions of Proposition \ref{RegPoint1}. However, this example is covered by the following proposition:
\end{ex}
\begin{Prop}
\label{propRegPoint2}
Assume that $x\in  \Sn\bigcap \mathscr{L}_f$ is in the Lebesgue set of $f$ and that $f(x)=0$. Furthermore, assume that there exists constants $\tilde{c}$, $0<\beta<1$ and $\delta>0$ such that either 
\begin{align}
M_Rf(x,t)\geq \tilde{c}t^{\beta} 
\end{align}
or 
\begin{align}
M_Lf(x,t)\geq \tilde{c}t^\beta
\end{align}
holds whenever $t \leq \delta$. Then $x$ is regular.
\end{Prop}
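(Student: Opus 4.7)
The plan is to verify the condition of Definition \ref{DefReg} directly from equations (\ref{eqchin}) and (\ref{eqetan}). Fix $k>0$ and a non-tangential sequence $\{u_n+iv_n\}_n \subset \Gamma_k(x)$ with $u_n+iv_n \to x$. It suffices to show that the ratio $v_n e^{\pi|H_{v_n} f(u_n)|}/\sin(\pi P_{v_n} f(u_n))$ tends to $0$, since then $|\chi_\LL(u_n,v_n) - u_n|$ and $|\eta_\LL(u_n,v_n) - 1|$ both vanish in the limit, while $|u_n-x|\leq kv_n \to 0$. By reflection we may assume the right-sided hypothesis $M_R f(x,t)\geq \tilde{c} t^{\beta}$ holds for $0<t\leq \delta$; the left-sided case is identical with $[x,x+v_n]$ replaced by $[x-v_n,x]$ throughout.

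I would first handle the numerator. Since $x\in\mathscr{L}_f$, estimate (\ref{Hilb2}) of Lemma \ref{EstHilb1} furnishes, for every $\eps>0$, a constant $C=C(\eps,x,k)$ with
\begin{align*}
v_n e^{\pi|H_{v_n} f(u_n)|} \;\leq\; C\, v_n^{1-\eps}
\end{align*}
for all sufficiently large $n$.

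The bulk of the work is a polynomial lower bound on $\sin(\pi P_{v_n} f(u_n))$. I would apply Lemma \ref{Sine} with the aperture $d:=k+1$. This specific choice has two virtues: because $|u_n-x|\leq kv_n$, the symmetric window $[u_n-dv_n,u_n+dv_n]$ contains the one-sided window $[x,x+v_n]$ on which the hypothesis gives information, and is itself contained in the comparable symmetric window $[x-(k+d)v_n,x+(k+d)v_n]$ around $x$. The first inclusion yields, for $n$ large enough that $v_n<\delta$,
\begin{align*}
Mf(u_n,dv_n)\;\geq\;\frac{1}{2dv_n}\int_{x}^{x+v_n} f(s)\,ds\;=\;\frac{1}{2d}\,M_R f(x,v_n)\;\geq\;\frac{\tilde{c}}{2d}\,v_n^{\beta}.
\end{align*}
The second inclusion, combined with $f(x)=0$ and $x\in\mathscr{L}_f$, gives
\begin{align*}
Mf(u_n,dv_n)\;\leq\;\tfrac{k+d}{d}\,Mf\bigl(x,(k+d)v_n\bigr)\;\longrightarrow\;0,
\end{align*}
so $M(1-f)(u_n,dv_n)\geq \tfrac12$ for $n$ large. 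Lemma \ref{Sine} then produces $\sin(\pi P_{v_n} f(u_n))\geq \frac{\tilde{c}}{2(1+d^2)}\, v_n^{\beta}$.

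Combining the two bounds in (\ref{eqchin}) and (\ref{eqetan}) gives $|\chi_\LL(u_n,v_n)-u_n|$ and $|\eta_\LL(u_n,v_n)-1|$ both majorised by a constant multiple of $v_n^{1-\beta-\eps}$. Choosing $\eps\in(0,1-\beta)$, which is possible since $\beta<1$, forces each to tend to $0$, proving regularity. The only subtlety in the plan is the simultaneous calibration of the aperture $d$ with the non-tangential constant $k$ in Lemma \ref{Sine}: it must be large enough for the \emph{symmetric} average $Mf(u_n,dv_n)$ to absorb the \emph{one-sided} lower bound $M_R f(x,v_n)$ given by the hypothesis, which is exactly what $d=k+1$ accomplishes.
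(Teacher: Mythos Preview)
Your argument is correct and follows essentially the same route as the paper: bound the numerator via Lemma~\ref{EstHilb1} (using $c(x)=0$ since $x\in\mathscr{L}_f$), bound the denominator via Lemma~\ref{Sine}, and combine to obtain a decay of order $v_n^{1-\beta-\eps}$. The only cosmetic differences are that the paper chooses the aperture $d=2k$ (routing through the computation in Lemma~\ref{PoissonEst}/\ref{Ineq} to get $M_Rf(x,kv_n)+M_Lf(x,kv_n)$ in the denominator) whereas you take $d=k+1$ and handle the $M(1-f)$ branch of the minimum explicitly via $f(x)=0$; both achieve the same polynomial lower bound on $\sin(\pi P_{v_n}f(u_n))$.
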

\begin{proof}
Let $\{u_n+iv_n\}_n\subset \G_k(x)$ be non-tangentially convergent to $x$. Using Lemma \ref{Sine} with $d=2k$, a similar computation as in Lemma \ref{Ineq} gives
\begin{align*}
\vert (\chi_{\LL}(u_n,v_n)-u_n,\eta_{\LL}(u_n,v_n)-1)\vert &\leq  \frac{1+4k^2}{k}\frac{\sqrt{20}C(\eps)v_n^{1-\eps}}{M_Rf(x,kv_n)+M_Lf(x,kv_n)}\\
&\leq \frac{1+4k^2}{k}\frac{\sqrt{20}C(\eps)v_n^{1-\eps}}{\tilde{c}k^{\beta}v_n^{\beta}}=\frac{1+4k^2}{k}\frac{\sqrt{20}C(\eps)v_n^{1-\eps-\beta}}{\tilde{c}k^{\beta}}
\end{align*}
for every $\eps>0$ sufficiently small. Choosing $\eps$ so small that $1-\eps-\beta>0$ shows that $\vert (\chi_{\LL}(u_n,v_n)-u_n,\eta_{\LL}(u_n,v_n)-1)\vert\to 0$ as $n\to \infty$.
\end{proof}
\begin{rem}
We expect that the assumption that $\int_{\R}\frac{f(t)dt}{(x-t)^2}=+\infty$ is  sufficient for the point $x$ to be regular, irrespective of whether $\vert H_vf(u)\vert$ is non-tangentially bounded at $x$ or not. Moreover we note that if $f\in \Hc$, then the set of points such that $f(x)=0$, $x$ is a Lebesgue point and $\vert H_vf(u)\vert$ is non-tangentially unbounded, is empty. This is due to the fact that if $f$ is locally H\"older continuous, the so is $\mathcal{H}f$, which implies that $\vert H_vf(u)\vert$ is non-tangentially bounded.
\end{rem}
With this remark in mind we are lead to make the following conjecture:
\begin{Con}
Assume that $x\in \Sn$ is in the Lebesgue set of $f$ and that $f(x)=0$. Furthermore, assume that  $\int_{\R}\frac{f(t)dt}{(x-t)^2}=+\infty$. Then $x$ is regular.
\end{Con}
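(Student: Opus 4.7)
Let $\{w_n\}_{n \ge 1} = \{u_n + iv_n\}_{n \ge 1} \subset \Hp$ converge non-tangentially to $x$, with $|u_n - x| \le kv_n$. Writing $h_n := \pi H_{v_n} f(u_n)$ and $p_n := \pi P_{v_n} f(u_n)$, the identities (\ref{eqchin}) and (\ref{eqetan}) yield $|\chi_{\LL}(u_n,v_n) - u_n| + |\eta_{\LL}(u_n,v_n) - 1| \le 6 v_n e^{|h_n|}/\sin p_n$. Since $x$ is a Lebesgue point with $f(x) = 0$, we have $p_n \to 0$ and hence $\sin p_n \sim p_n = v_n I_n/\pi$, where $I_n := \int_\R f(t)\,dt/((u_n - t)^2 + v_n^2)$. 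Thus the claim reduces to showing $e^{|h_n|}/I_n \to 0$. The divergence $I_n \to \infty$ follows from Fatou's lemma applied to the pointwise limit $f(t)/((u_n-t)^2+v_n^2) \to f(t)/(x-t)^2$, together with the hypothesis $\int_\R f(t)/(x-t)^2\,dt = +\infty$.

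First I would reduce to the orthogonal case $u_n = x$. Differentiation under the integral gives $|\partial_u H_v f(u)| \le \pi^{-1}\int f(t)/((u-t)^2+v^2)\,dt \le 1/v$, using $f \le 1$ and $\int dt/((u-t)^2+v^2) = \pi/v$. Hence by the mean value theorem, $|H_{v_n} f(u_n) - H_{v_n} f(x)| \le |u_n - x|/v_n \le k$, a uniform constant. Moreover $(a+b)^2 \le 2(a^2+b^2)$ gives $(u_n-t)^2 + v_n^2 \le (2+2k^2)((x-t)^2 + v_n^2)$, and hence $I_n \ge (2+2k^2)^{-1} I_n^{(x)}$, where $I_n^{(x)} := \int f(t)\,dt/((x-t)^2 + v_n^2)$. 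It therefore suffices to show $e^{|\pi H_{v_n} f(x)|}/I_n^{(x)} \to 0$.

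For the orthogonal case, introduce $g(s) := f(x-s) + f(x+s)$, $\Psi(v) := \int_v^\infty g(s)/s\,ds$, $\Phi(v) := \int_v^\infty g(s)/s^2\,ds$, and split into two subcases. If the Dini integral $\int_0^1 g(s)/s\,ds$ converges, then by Remark \ref{R1} the Hilbert transform $H_v f$ is non-tangentially bounded at $x$, so Proposition \ref{RegPoint1} applies directly. Otherwise $\Psi(v) \to +\infty$ as $v \to 0^+$, and Stein's identity (\ref{Hconv}) gives
\begin{equation*}
\pi H_{v_n} f(x) = \int_{v_n}^\infty \frac{f(x-t) - f(x+t)}{t}\,dt + o(1),
\end{equation*}
hence $|\pi H_{v_n} f(x)| \le \Psi(v_n) + o(1)$. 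Since $I_v^{(x)} \ge \Phi(v)/2$ (using $s^2+v^2 \le 2s^2$ for $s \ge v$), the problem reduces to proving $\Psi(v)/\log \Phi(v) \to 0$ as $v \to 0^+$. Both functions tend to $+\infty$ and are $C^1$ a.e., and L'H\^opital's rule yields
\begin{equation*}
\lim_{v \to 0^+}\frac{\Psi(v)}{\log\Phi(v)} = \lim_{v \to 0^+}\frac{-g(v)/v}{-g(v)/(v^2\Phi(v))} = \lim_{v \to 0^+} v\Phi(v).
\end{equation*}
The last limit vanishes, because $v\Phi(v) \le 2v I_v^{(x)} = 2\pi P_v f(x) \to 2\pi f(x) = 0$ at any Lebesgue point where $f(x) = 0$.

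The main obstacle is this L'H\^opital step. Without further regularity, no effective separate rate on $I_v^{(x)}$ nor on $|H_v f(x)|$ is available: the universal bound $|H_v f| \le \pi^{-1}\log(1/v) + O(1)$ is essentially sharp, and $I_v^{(x)}$ may diverge arbitrarily slowly, so individual estimates on numerator and denominator cannot by themselves yield the conclusion. Only the comparison $\Psi/\log\Phi \to 0$ suffices, and it works precisely because $v\Phi(v) \to 0$ is essentially equivalent to the Lebesgue-point convergence $P_v f(x) \to f(x) = 0$. In this way the three hypotheses of the conjecture — Lebesgue point, vanishing value, and divergence of the second-moment integral — feed into a single coherent estimate.
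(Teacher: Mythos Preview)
This statement is listed in the paper as Conjecture~3.1, not as a theorem; the paper offers no proof and explicitly leaves it open. So there is no ``paper's proof'' to compare against---your proposal, if correct, would settle a question the authors could not.

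Your overall strategy is sound and nearly complete. The reduction to the orthogonal limit via the bound $|\partial_u H_vf|\le 1/v$ and the comparison $I_n\ge (2+2k^2)^{-1}I_n^{(x)}$ are correct; the use of Stein's identity~(\ref{Hconv}) to bound $|\pi H_vf(x)|\le\Psi(v)+o(1)$ is correct; the lower bound $I_v^{(x)}\ge\Phi(v)/2$ is correct; and the key observation $v\Phi(v)\le 2\pi P_vf(x)\to 2\pi f(x)=0$ is exactly what ties the three hypotheses together, as you say.

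The one genuine gap is the L'H\^opital step. Classical L'H\^opital requires $(\log\Phi)'(v)\ne 0$ throughout a punctured neighbourhood of $0$ and differentiability everywhere there. But $(\log\Phi)'(v)=-g(v)/(v^2\Phi(v))$ vanishes wherever $g(v)=f(x-v)+f(x+v)=0$, and nothing in the hypotheses prevents $g$ from vanishing on a sequence of intervals accumulating at $0$ (compare the constructions in Lemmas~\ref{ExSingset1}--\ref{ExSingset2}). Your phrase ``$C^1$ a.e.'' is not a standard hypothesis for L'H\^opital and does not by itself justify the conclusion.

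However, the step can be rescued by an elementary absolute-continuity argument that bypasses L'H\^opital entirely. Both $\Psi$ and $\log\Phi$ are monotone and locally absolutely continuous on $(0,M)$, and a.e.\ one has
\[
|\Psi'(v)|=\frac{g(v)}{v}=v\Phi(v)\cdot\frac{g(v)}{v^2\Phi(v)}=v\Phi(v)\,\bigl|(\log\Phi)'(v)\bigr|,
\]
the identity holding trivially (as $0=0$) where $g(v)=0$. Given $\varepsilon>0$, pick $v_0$ with $s\Phi(s)<\varepsilon$ for all $s<v_0$; then $|\Psi'|\le\varepsilon\,|(\log\Phi)'|$ a.e.\ on $(0,v_0)$, and integrating from $v$ to $v_0$ (using monotonicity of $\log\Phi$) gives
\[
\Psi(v)-\Psi(v_0)=\int_v^{v_0}|\Psi'|\le\varepsilon\int_v^{v_0}\bigl|(\log\Phi)'\bigr|=\varepsilon\bigl(\log\Phi(v)-\log\Phi(v_0)\bigr),
\]
so $\Psi(v)/\log\Phi(v)\le\varepsilon+C_\varepsilon/\log\Phi(v)\to\varepsilon$. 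As $\varepsilon>0$ is arbitrary, $\Psi/\log\Phi\to 0$. With this replacement your argument goes through and appears to prove the conjecture.
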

\begin{rem}
The situation in Proposition 4 is slightly special in the sense that the point $x$ is regular due to the fact that $\mathcal{H}f(x)=0$. If this is not the case then the point is singular by Proposition \ref{Sing3}.
\end{rem}

\begin{Prop}
\label{propRegPoint3}
Assume that $x\in  \Sn\bigcap \mathscr{L}_f$. Furthermore, assume that $\mathcal{H}f(x)=0$ and that $\int_{\R}\frac{f(t)dt}{(x-t)^2}<+\infty$, or that $\int_{\R}\frac{1-f(t)dt}{(x-t)^2}<+\infty$. Then $x$ is regular. In particular this holds when $f$ is symmetric around $x$, that is if $f(x-t)=f(x+t)$ for all $t$.
\end{Prop}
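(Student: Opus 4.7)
The plan is to handle the two integrability hypotheses as parallel cases related by the substitution $f \leftrightarrow 1-f$; I will describe the argument under the first condition $a := \int_{\R} f(t)/(x-t)^2\,dt < \infty$, and at the end indicate the modifications for the other. First I would extract two consequences at the Lebesgue point $x$: that $f(x) = 0$, and the quantitative bound $F(h) := \int_{|s|<h} f(x+s)\,ds = o(h^2)$ as $h \to 0^+$. Both follow from the elementary inequality $\int_{|t-x|<h} f(t)/(x-t)^2\,dt \geq h^{-2} F(h)$: the left side tends to $0$ as $h \to 0$ by absolute convergence of the integral, whereas the right side would blow up like $2f(x)/h$ if $f(x) > 0$; given $f(x) = 0$, the same inequality rearranges to $F(h) \leq h^2 \cdot o(1) = o(h^2)$.

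Next I would establish the two non-tangential asymptotics $\pi P_{v_n} f(u_n) = a v_n + o(v_n)$ and $H_{v_n} f(u_n) \to 0$ along any non-tangential sequence $w_n = u_n + iv_n \to x$ lying in a cone $\Gamma_k(x)$, where $a$ is strictly positive because $x \in \Sn$ prevents $f$ from vanishing a.e.\ on any neighborhood of $x$. For the Poisson integral I would split $\int f(t)/((u_n-t)^2+v_n^2)\,dt$ at $|t-x| = 2kv_n$: the near piece is at most $v_n^{-2}F(2kv_n) = o(1)$ by the first step, while on the far piece the non-tangential bound $|u_n - x| \leq k v_n$ gives $|u_n - t| \geq |x-t|/2$, producing the integrable dominant $4 f(t)/(x-t)^2$, so dominated convergence delivers the limit $a/\pi$. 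For the harmonic conjugate, Cauchy--Schwarz combined with $f(x) = 0$ gives
\begin{equation*}
\int_{|s|<1}\frac{f(x+s)}{|s|}\,ds \leq \left(\int f(x+s)\,ds\right)^{\!1/2}\left(\int \frac{f(x+s)}{s^2}\,ds\right)^{\!1/2} < \infty,
\end{equation*}
which is exactly the Dini criterion of Remark \ref{R1}, so Proposition \ref{DiniConv} yields the non-tangential convergence $H_{v_n}f(u_n) \to \mathcal{H}f(x) = 0$.

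Finally I would substitute these into (\ref{eqchin})--(\ref{eqetan}) and Taylor expand. Writing $H := H_{v_n}f(u_n) = o(1)$ and $P := P_{v_n}f(u_n) = O(v_n)$, one has $\sin(\pi P) \geq \pi a v_n/2$ for all large $n$. The $\chi$-numerator expands as $e^{-\pi H} - \cos(\pi P) = -\pi H + O(H^2) + O(P^2) = o(1) + O(v_n^2)$, so
\begin{equation*}
|\chi_\LL(w_n) - u_n| \leq \frac{v_n\bigl(o(1)+O(v_n^2)\bigr)}{\pi a v_n/2} = o(1);
\end{equation*}
the $\eta$-numerator is even more favorable, $e^{\pi H} + e^{-\pi H} - 2\cos(\pi P) = (\pi H)^2 + (\pi P)^2 + O(H^4+P^4) = o(1)$, giving $|\eta_\LL(w_n) - 1| = o(1)$. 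Combined with $u_n \to x$ this yields $(\chi_\LL(w_n), \eta_\LL(w_n)) \to (x,1)$, the defining property of regularity.

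The case $\int (1-f)/(x-t)^2\,dt < \infty$ runs in parallel: now $f(x) = 1$, the splitting applied to $1-f$ gives $\pi(1 - P_{v_n}f(u_n)) = \tilde a v_n + o(v_n)$, and the Dini bound for $1-f$ at $x$, together with the formal identity $\mathcal{H}(1-f)(x) = -\mathcal{H}f(x) = 0$, delivers $H_{v_n}f(u_n) \to 0$; the trigonometric computation then transfers to the dual quantities via $\sin(\pi P) = \sin(\pi(1-P))$. The main technical obstacle in either case is the non-tangential control of $H_{v_n}f(u_n)$, since $x$ is not assumed to lie in the Lebesgue set of $\mathcal{H}f$ and the classical Fatou theorem does not apply directly; the quantitative integrability hypothesis supplies, via Cauchy--Schwarz, precisely the Dini control needed to bypass this obstruction. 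The \emph{in particular} statement for $f(x+t) = f(x-t)$ is immediate, since evenness of $f(x+\cdot)$ against the odd kernel $1/s$ forces $\mathcal{H}f(x) = 0$ automatically.
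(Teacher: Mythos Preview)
Your proposal is correct and follows essentially the same route as the paper: deduce $f(x)=0$ from the integrability hypothesis, obtain the Dini condition so that Proposition~\ref{DiniConv} forces $H_{v_n}f(u_n)\to\mathcal Hf(x)=0$ non-tangentially, and combine this with $v_n/\sin(\pi P_{v_n}f(u_n))\to a^{-1}<\infty$ to make both numerators in (\ref{eqchin})--(\ref{eqetan}) negligible. The only cosmetic differences are that the paper gets the Dini bound by the simpler pointwise comparison $|x-t|^{-1}\le (x-t)^{-2}$ for $|x-t|<1$ rather than by Cauchy--Schwarz, and it asserts the Poisson limit directly without your explicit near/far splitting.
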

\begin{proof}
We consider the case when $\int_{\R}\frac{f(t)dt}{(x-t)^2}<+\infty$. Note that this implies that $f(x)=0$ which in turn implies that $\int_{\R}\frac{f(t)dt}{\vert x-t \vert }=\int_{\R}\frac{\vert f(t)-f(x)\vert dt}{\vert x-t \vert }<+\infty$. However, by Remark \ref{R1} and our assumption that $\mathcal{H}f(x)=0$, this implies that all non-tangential limits of $H_{v_n}f(u_n)$ converge to 0 at $x$. Thus, for every non-tangential sequence $\{u_n+iv_n\}_{n=1}^{+\infty}\subset \Hp$ we have 
\begin{align*}
\lim_{n\rightarrow\infty}\vert e^{\pi H_{v_n}f(u_n) }-\cos(\pi P_{v_n} f(u_n))\vert=\lim_{n\rightarrow\infty}\vert e^{-\pi H_{v_n}f(u_n)}-\cos(\pi P_{v_n} f(u_n))\vert=0.
\end{align*}
Since 
\begin{align*}
\lim_{n\rightarrow\infty}\frac{v_n}{\sin(\pi P_{v_n}\ast f(u_n))}=\frac{1}{\int_{\R}\frac{f(t)}{(x-t)^2}dt}<+\infty
\end{align*}
the result follows.
\end{proof}
\begin{ex}
Let $f(t)=t^2\chi_{[-1,1]}(t)+\chi_{[-1-1/6,-1]}(t)+\chi_{[1,1+1/6]}(t)$. Then $x=0$ satisfies the assumptions of Proposition \ref{propRegPoint3}.
\end{ex}

\subsection{Regular Non-Lebesgue Points}
In this section we will provide sufficient conditions that can be used to show that certain non-Lebesgue points are regular. 
\begin{Prop}
\label{NLpointReg}
Fix $x\in \Sn$. Furthermore, assume that 
\begin{align*}
c(x)=\min\{\vert f_R^+(x)-f_L^-(x)\vert,\vert f_L^+(x)-f_R^-(x)\vert \}<1
\end{align*}
and that 
\begin{align*}
b(x)=\min\{ 2-f_R^+(x)-f_L^+(x),f_R^-(x)+f_L^-(x) \}>0.
\end{align*}
Then $x$ is regular.
\end{Prop}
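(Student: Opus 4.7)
The statement is, in fact, an almost immediate corollary of Lemma \ref{Ineq}. My plan is to verify that the two hypotheses $c(x) < 1$ and $b(x) > 0$ are exactly what is needed to make that lemma's upper bound vanish along every non-tangential sequence.

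Fix an arbitrary non-tangential sequence $\{u_n + iv_n\}_{n\ge 1}$ converging to $x$; by Definition \ref{defNonTang} there exists $k > 0$ with $\{u_n + iv_n\}_n \subset \Gamma_k(x)$ for large $n$. I will choose $\epsilon > 0$ satisfying $\epsilon < \min\{1 - c(x),\, b(x)\}$, so that both $1 - c(x) - \epsilon > 0$ and $b(x) - \epsilon > 0$. Applying Lemma \ref{Ineq} with this $\epsilon$ yields, for all $n$ sufficiently large, the estimate
\[
|(\chi_\LL(u_n,v_n) - u_n,\, \eta_\LL(u_n,v_n) - 1)| \;\le\; \frac{1+4k^2}{2k}\cdot\frac{\sqrt{20}\,C\,v_n^{\,1-c(x)-\epsilon}}{b(x) - \epsilon},
\]
where $C = C(\epsilon,x,k)$ is finite. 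Since $v_n \to 0^+$ and the exponent $1 - c(x) - \epsilon$ is strictly positive, the right-hand side tends to $0$; combined with $u_n \to x$ this gives $(\chi_\LL(u_n,v_n), \eta_\LL(u_n,v_n)) \to (x,1)$. Because the chosen non-tangential sequence was arbitrary, Definition \ref{DefReg} shows that $x$ is regular.

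The main obstacle is in fact already behind us: the genuine technical work lies in Lemmas \ref{EstHilb1} and \ref{PoissonEst}, which respectively control the non-tangential growth of $v_n e^{\pi|H_{v_n}f(u_n)|}$ through the quantity $c(x)$ and furnish a lower bound on $\sin(\pi P_{v_n}f(u_n))$ through the quantity $b(x)$. Lemma \ref{Ineq} simply couples these two estimates, and the present proposition is the direct observation that the coupled bound collapses to zero precisely when both $c(x) < 1$ and $b(x) > 0$. The only minor care required is to notice that the constant $C$ and the truncation index in Lemma \ref{Ineq} depend on the aperture $k$ of the cone, but since $k$ is fixed once the sequence is fixed, this dependence is harmless for the limit argument.
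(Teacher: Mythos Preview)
Your proposal is correct and follows essentially the same approach as the paper: choose $\epsilon>0$ small enough that $1-c(x)-\epsilon>0$ and $b(x)-\epsilon>0$, then invoke Lemma~\ref{Ineq} directly to conclude that the right-hand side tends to $0$ along any non-tangential sequence. Your additional commentary on how Lemmas~\ref{EstHilb1} and~\ref{PoissonEst} feed into Lemma~\ref{Ineq} is accurate but goes beyond what the paper records in its (very brief) proof.
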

\begin{proof}
Let $\{u_n+iv_n\}_n\subset \G_k(x)$ be non-tangentially convergent to $x$. Choose $\eps>0$ so that $1-c(x)-\eps>0$ and $b(x)-\eps>0$. Then Lemma \ref{Ineq} gives
\begin{align*}
\vert (\chi_{\LL}(u_n,v_n)-u_n,\eta_{\LL}(u_n,v_n)-1)\vert \leq  \frac{1+4k^2}{2k}\frac{\sqrt{20}C(\eps)v_n^{1-c(x)-\eps}}{b(x)-\eps}\to 0
\end{align*}
as $n\to \infty$.
\end{proof}
We now give two examples where one can use Proposition \ref{NLpointReg} to conclude that a point $(x=0)$, is regular. The difference between the examples is that we will be able to use Proposition \ref{GenericProp} to also conclude that in the first example $x=0$ is also a generic point. In the second example however, out analysis in the present paper is insufficient to determine whether $x=0$ is a generic point or not.
\begin{ex}
\label{NRegEx0}
Let $f(t)=\frac{1}{2}\vert \sin(t^{-1})\vert^{1/2}\chi_{[-a,a]}(t)$, where $a$ is chosen so that $\Vert f\Vert_1=1$. Since $0\leq f(t)\leq 1/2$ for all $t\in [-a,a]$, it follows that $c(0)<1$ and that $f_R^+(0)=f_L^+(0)<1$. By symmetry of $f$, $M_Rf(0,h)=M_Lf(0,h)$. Take $h>0$ sufficiently small, and choose an integer $n$, such that $\pi (n-1)<1/h<\pi n$. Then using that $\sin t\geq \frac{2}{\pi}t$ when $0\leq t\leq \pi/2$ 
\begin{align*}
M_Rf(0,h)&=\frac{1}{2h}\int_0^h\vert \sin(t^{-1})\vert^{1/2}dt=\frac{1}{2h}\int_{h^{-1}}^{\infty}\vert \sin(x)\vert^{1/2}\frac{dx}{x^2}\\
&>\frac{\pi n}{2}\sum_{k=n}^{\infty}\frac{1}{\pi^2(k+1)^2}\int_{\pi k}^{\pi (k+1)}\vert \sin(x)\vert^{1/2}dx>\frac{1}{2\pi}\sum_{k=n}^{\infty}\frac{n}{(k+1)^2}\int_{0}^{\pi/2}\sqrt{2t}dt\\
&>\frac{\sqrt{\pi}}{6}n\int_{n+1}^{\infty}\frac{dx}{x^2}=\frac{\sqrt{\pi}}{6}>0.
\end{align*}
Hence, $f_R^-(0)=f_L^-(0)>0$, which implies that $b(x)>0$. Thus, $x$ is regular.
\end{ex}
\begin{ex}
\label{NRegEx2}
Let $I_n=(2^{-(n+1)},2^{-n}]$ and let
\begin{align*}
f(t)=\sum_{k=1}^{+\infty}\bigg(1-\frac{1}{2k}\bigg)(\chi_{I_{2k}}(t)+\chi_{I_{2k}}(-t))+\sum_{k=1}^{+\infty}\frac{1}{2k+1}(\chi_{I_{2k+1}}(t)+\chi_{I_{2k+1}}(-t))+\chi_{(1/2,a]}(t),
\end{align*}
where $a$ is chosen so that $\int_{\R}f(t)dt=1$. One can show that
\begin{align*}
\limsup_{h\rightarrow 0^+}M_Rf(0,h)&=\limsup_{h\rightarrow 0^+}M_Lf(0,h)\leq \frac{3}{4}<1.
\end{align*}
Similarly, one can also show that
\begin{align*}
\liminf_{h\rightarrow 0^+}M_Rf(0,h)&=\liminf_{h\rightarrow 0^+}M_Lf(0,h)
\geq  \frac{1}{4}>0.
\end{align*}
Similarly, one can show that $\lim_{h\rightarrow 0^+}M_f(0,h)$ does not exist. Hence, $x=0$ does not belong to the Lebesgue set of $f$. However we see that $f$ satisfies the conditions of Proposition \ref{NLpointReg}. This implies that $0$ is a regular point. 
\end{ex}
In this proposition we consider a particular case of when $f$ jumps from 0 to 1 or 1 to 0 in mean.
\begin{Prop}
\label{NReg1}
Assume that $x\in \Sn^{\circ}$ and that for some $\delta>0$ we have either
\begin{align*}
f(t)&=\chi_{[x,x+\delta]}(t)+\varphi(t) \quad (i)\quad \text{or}\\
f(t)&=\chi_{[x-\delta,x]}(t)+\varphi(t) \quad (ii).
\end{align*}
Furthermore, assume that $x$ belongs to the Lebesgue set of $\varphi$, $\varphi(x)=0$ and that $\vert \mathcal{H}\varphi(x)\vert=+\infty$. Then $x$ is a regular point.
\end{Prop}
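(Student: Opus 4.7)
The plan is to handle case (i); case (ii) follows by the reflection $t \mapsto 2x - t$. Fix a non-tangentially convergent sequence $\{w_n = u_n + iv_n\} \subset \Gamma_k(x)$. From (\ref{eqchin})--(\ref{eqetan}), showing $(\chi_\LL(w_n), \eta_\LL(w_n)) \to (x,1)$ reduces to bounding $\sin(\pi P_{v_n} f(u_n))$ below by a positive constant and showing $v_n e^{\pm \pi H_{v_n} f(u_n)} \to 0$. Decompose $f = \chi_{[x,x+\delta]} + \varphi$ so that $P_v$ and $H_v$ split accordingly.

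The characteristic-function piece is explicit: $\pi P_v\chi_{[x,x+\delta]}(u) = \arctan\tfrac{x+\delta-u}{v} - \arctan\tfrac{x-u}{v}$, which along the non-tangential sequence stays in a compact subset of $(0,\pi)$ since $(u_n-x)/v_n$ is bounded while $(x+\delta-u_n)/v_n \to +\infty$; meanwhile $\pi P_{v_n}\varphi(u_n) \to \pi\varphi(x) = 0$ by the Lebesgue-point hypothesis, so $\sin(\pi P_{v_n}f(u_n))$ is bounded below. A similar explicit computation yields $\pi H_v\chi_{[x,x+\delta]}(u_n) = \tfrac{1}{2}\log\tfrac{(u_n-x)^2+v_n^2}{(u_n-x-\delta)^2+v_n^2} = \log v_n + O(1)$, using $(u_n-x)^2 + v_n^2 = v_n^2(1+r_n^2)$ with $r_n \in [-k,k]$.

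It remains to understand $\pi H_{v_n}\varphi(u_n)$. Applying Lemma \ref{EstHilb1} to $\varphi$ (which has $c(x) = 0$ as $x \in \mathscr{L}_\varphi$ and $\varphi(x) = 0$) yields $|\pi H_{v_n}\varphi(u_n)| = o(|\log v_n|)$. The sign of the divergence is then pinned down by the structure of $\varphi$: since $0 \leq f \leq 1$, we have $\varphi \leq 0$ on $[x,x+\delta]$ and $\varphi \geq 0$ elsewhere, so $\varphi(t)/(x-t) \geq 0$ in a punctured neighborhood of $x$ (numerator and denominator change sign simultaneously at $x$). Hence $|\mathcal{H}\varphi(x)|=+\infty$ forces $\mathcal{H}\varphi(x)=+\infty$, and a local analysis --- splitting the integral into a small interval $I$ around $x$ (where positivity and divergence survive after replacing $1/(x-t)$ by the kernel $(u_n-t)/((u_n-t)^2+v_n^2)$, up to an $O(v_n)$-sized exceptional window $\{|t-u_n|\leq|u_n-x|\}$ of bounded contribution) and its complement (where the integral is uniformly bounded for $w_n$ close to $x$) --- propagates this to $\pi H_{v_n}\varphi(u_n) \to +\infty$ non-tangentially.

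Combining, $\pi H_{v_n}f(u_n) = \log v_n + O(1) + \pi H_{v_n}\varphi(u_n)$ with $\pi H_{v_n}\varphi(u_n) \to +\infty$ at rate $o(|\log v_n|)$. Therefore $v_n e^{-\pi H_{v_n}f(u_n)} = O(1)\cdot e^{-\pi H_{v_n}\varphi(u_n)} \to 0$ and $v_n e^{\pi H_{v_n}f(u_n)} = O(v_n^2) \cdot e^{\pi H_{v_n}\varphi(u_n)} = O(v_n^{2-o(1)}) \to 0$, which together with the Poisson lower bound gives $(\chi_\LL(w_n),\eta_\LL(w_n)) \to (x,1)$. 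The main obstacle is the sign-propagation step: because the Dini-type hypothesis of Proposition \ref{DiniConv} fails precisely when $\mathcal{H}\varphi(x)=+\infty$, the passage from the principal value at $x$ to a non-tangential limit of $H_{v_n}\varphi(u_n)$ must be handled directly, leveraging the one-sided positivity of $\varphi(t)/(x-t)$ near $x$ rather than absolute integrability.
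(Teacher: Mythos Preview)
Your proof is correct and follows essentially the same route as the paper: decompose $f=\chi_{[x,x+\delta]}+\varphi$, compute the Poisson and conjugate-Poisson pieces of the indicator explicitly, use (a trivial modification of) Lemma~\ref{EstHilb1} for the $o(|\log v_n|)$ upper bound on $|H_{v_n}\varphi(u_n)|$, exploit the sign structure of $\varphi$ near $x$ to force $\mathcal{H}\varphi(x)=+\infty$ and propagate this to $H_{v_n}\varphi(u_n)\to+\infty$ non-tangentially, then combine. The only cosmetic remark is that Lemma~\ref{EstHilb1} is stated for densities in $\rho_{c,1}^\lambda(\R)$ whereas $\varphi$ takes values in $[-1,1]$, so (as the paper itself notes) one strictly needs the obvious adaptation of its proof rather than a direct citation.
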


\begin{proof}
We prove the proposition assuming $(i)$ holds. The case when $(ii)$ holds follows from case $(i)$. Fix an arbitrary $k>0$ and let $\{u_n+iv_n\}_n\in \Gamma_k(x)$ be an arbitrary non-tangential sequence such that $\lim_{n\to+\infty}u_n+iv_n=x$. A computation gives
\begin{align*}
\pi P_{v_n} f(u_n)&=\int_{x}^{x+\delta}\frac{v_ndt}{(u_n-t)^2+v_n^2}+\int_{\R}\frac{v_n\varphi(t)dt}{(u_n-t)^2+v_n^2}\\
&=\arctan\frac{u_n-x}{v_n}-\arctan\frac{u_n-x-\delta}{v_n}+\pi P_{v_n} \varphi(u_n).
\end{align*}
Hence,
\begin{align*}
\frac{\pi}{2}-\arctan{k}+o(1)\leq \pi P_{v_n} f(u_n)
&\leq \frac{\pi}{2}+\arctan{k}+o(1)
\end{align*}
since $\vert u_n-x\vert < kv_n$ and $\lim_{n\to \infty}\pi P_{v_n} \varphi(u_n)=0$ since $x$ is a Lebesgue point of $\varphi$. This implies that $\vert \sin(\pi P_{v_n}f(u_n))\vert \geq c_1^{-1}>0$, for some constant $c_1$. We now consider $\pi H_{v_n}\varphi(u_n)$. A simple modification of the proof of Lemma \ref{EstHilb1} shows that for every $\eps\in(0,1)$ there exists a constant $C=C(\eps,k,x)$ such that $\vert \pi H_{v_n}\varphi(u_n)\vert \leq \eps \log(v_n^{-1})+C$. 

We now show that in fact $\pi \mathcal{H}\varphi(x)=+\infty$. By assumption $\varphi(t)\leq 0$ for all $t\in [x,x+\delta]$ and $\varphi(t)\geq 0$ otherwise. Hence,
\begin{align*}
\lim_{v\to 0^+}\int_{x-\delta}^{x+\delta}\frac{(x-t)\varphi(t)dt}{(x-t)^2+v^2}=\lim_{v\to 0^+}\int_{x-\delta}^{x+\delta}\frac{\vert (x-t)\varphi(t)\vert dt}{(x-t)^2+v^2}=+\infty
\end{align*}
by assumption and Lemma 1.2 in chapter VI in \cite{Stein}. 

We now show that in fact we must have $\lim_{n\to\infty}\pi H_{v_n}\varphi(u_n)=+\infty$. Since $x$ is in the Lebesgue set of $\varphi$, we have
\begin{align*}
\int_{\R}\frac{(u_n-t)\varphi(t)dt}{(u_n-t)^2+v_n^2}=o(1)+\int_{\R\backslash [x-2\vert u_n-x\vert,x+2\vert u_n-x\vert]}\frac{(u_n-t)\varphi(t)dt}{(u_n-t)^2+v_n^2}
\end{align*}
since
\begin{align*}
\bigg\vert \int_{x-2\vert u_n-x\vert }^{x+2\vert u_n-x\vert}\frac{(u_n-t)\varphi(t)dt}{(u_n-t)^2+v_n^2}\bigg\vert \leq \frac{k}{v_n}\int_{x-2kv_n }^{x+2kv_n}\vert \varphi(t) \vert dt=o(1).
\end{align*}
There exists a constant $c>0$, such that 
\begin{align*}
\frac{\vert u_n-t\vert }{(u_n-t)^2+v_n^2}\geq \frac{c}{\vert x-t\vert }
\end{align*}
for all $t\in \R\backslash [x-2\vert u_n-x\vert,x+2\vert u_n-x\vert]$. Consequently,
\begin{align*}
\pi H_{v_n}\varphi(u_n)\geq o(1)+c\int_{\R\backslash [x-2\vert u_n-x\vert,x+2\vert u_n-x\vert]}\frac{\varphi(t)dt}{x-t}.
\end{align*}
Since $\lim_{\eps \to 0^+}\int_{\vert x-t\vert>\eps}\frac{\varphi(t)dt}{x-t}=\mathcal{H}\varphi(x)=+\infty$, the result follows.

A computation gives
\begin{align*}
\pi H_{v_n}f(u_n)&=\int_{x}^{x+\delta}\frac{(u_n-t)dt}{(u_n-t)^2+v_n^2}+\int_{\R}\frac{(u_n-t)\varphi(t)dt}{(u_n-t)^2+v_n^2}\\
&=\log\sqrt{(u_n-x)^2+v_n^2}-\log\sqrt{(x+\delta-u_n)^2+v_n^2}+\pi H_{v_n}\varphi(u_n)\\
&=\log\sqrt{k_n^2v_n^2+v_n^2}-\log\sqrt{(x+\delta-u_n)^2+v_n^2}+\pi H_{v_n}\varphi(u_n)\\
&=\log(v_n)+\log\sqrt{k_n^2+1}-\log\sqrt{(x+\delta-u_n)^2+v_n^2}+\pi H_{v_n}\varphi(u_n)\\
&=\log(v_n)+\pi H_{v_n}\varphi(u_n)+O(1),
\end{align*}
where $k_n=(u_n-x)/v_n$ and $\vert k_n\vert \leq k$ for all $n$. Hence,
\begin{align*}
&\lim_{n\rightarrow \infty}\frac{v_n\Big\{e^{-\pi H_{v_n}f(u_n)}-\cos(\pi P_{v_n} f(u_n))\Big\}}{\sin(\pi P_{v_n} f(u_n))}\\&=\lim_{n\rightarrow \infty}c_1e^{-\pi H_{v_n}\varphi(u_n)+O(1)}=0.
\end{align*}
and
\begin{align*}
&\lim_{n\rightarrow \infty}\frac{v_n\Big\{e^{\pi H_{v_n}f(u_n)}+e^{-\pi H_{v_n}f(u_n)}-2\cos(\pi P_{v_n} f(u_n))\Big\}}{\sin(\pi P_{v_n} f(u_n))}
\\&=\lim_{n\rightarrow \infty}c_1v_n^2e^{\pi H_{v_n}\varphi(u_n)+O(1)}+c_1e^{-\pi H_{v_n}\varphi(u_n)+O(1)}\\
&=\lim_{n\rightarrow \infty}c_1v_n^2e^{\eps \log v_n^{-1}+C+O(1)}+c_1e^{-\pi H_{v_n}\varphi(u_n)+O(1)}=0.
\end{align*}
This concludes the proof.
\end{proof}
\begin{ex}
Assume that for some $\delta>0$ $f\big\vert_{[-\delta,\delta]}(t)=(1-t^2)\chi_{[-\delta,0)}(t)+\frac{1}{\log t^{-1}}\chi_{[0,\delta]}(t)$. Then $t=0$ satisfies the assumptions of Proposition \ref{NReg1}.
\end{ex}
\begin{rem}
We note that if $f\in \Hc$ then the set of points which satisfies the assumptions of Proposition \ref{NReg1} is empty.
\end{rem}

\section{Generic Points}

\subsection{Generic Points Are Dense}

\begin{Prop}
\label{Topology1}
Define the set $G\subset \Sn^{\circ}=(\text{supp}(\mu)\cap \text{supp}(\lambda-\mu))^{\circ}$ according to
\begin{align}
\label{GenericSet}
G:= \Sn^{\circ}\bigcap \mathscr{L}_f\bigcap \mathscr{L}_{m_{\Hil f}}\bigcap \{t\in\R: 0<f(t)<1\}
\end{align}
Then every $x\in G$ is regular, and $G$ is dense in $\Sn^{\circ}$. Furthermore, for every interval $I \subset \Sn^{\circ}$, $\vert G\bigcap I\vert=\lambda(G\bigcap I)>0$. If in addition the inequality $0<f(t)<1$ holds almost everywhere in $\Sn^{\circ}$, then almost every $x\in \Sn^{\circ}$ belongs to $G$.
\end{Prop}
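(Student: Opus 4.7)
The plan is to prove the four assertions in order: regularity of every point in $G$, positivity of $\lambda(G \cap I)$ for every open $I \subset \Sn^{\circ}$, density of $G$ in $\Sn^{\circ}$, and the almost-everywhere refinement under the extra hypothesis.

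First I would dispense with regularity by direct appeal to Proposition \ref{LPointReg}: each $x \in G$ automatically satisfies $x \in \Sn \cap \mathscr{L}_f$ together with $0<f(x)<1$, which are exactly the hypotheses of that proposition, so $x \in \Sreg$. The auxiliary condition $x \in \mathscr{L}_{m_{\mathcal{H}f}}$ is not needed here; it is included in the definition of $G$ so that the same set can be used in Proposition \ref{GenericProp} and Theorem \ref{GenericThm} to upgrade regularity to genericity.

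The heart of the argument is the positive-measure claim. Fix any open interval $I \subset \Sn^{\circ}$. Since $f \in L^1(\R)$, the Lebesgue differentiation theorem yields $\lambda(\R \setminus \mathscr{L}_f) = 0$. For the other maximal-function condition, I would run the chain $f \in L^{\infty}(\R) \Rightarrow \mathcal{H}f \in \mathrm{BMO}(\R) \Rightarrow m_{\mathcal{H}f} \in \mathrm{BMO}(\R)$, where the last step is Theorem 4.2(b) of \cite{Sharp} as recalled in Section 2. Because $\mathrm{BMO}(\R) \subset L^1_{\mathrm{loc}}(\R)$, a second application of the Lebesgue differentiation theorem gives $\lambda(\R \setminus \mathscr{L}_{m_{\mathcal{H}f}}) = 0$. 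Combining these two null sets with Hypothesis \ref{hypConv2}, which asserts $\lambda(I \cap X) > 0$ for $X = \{t : 0 < f(t) < 1\}$, I would conclude
\[
\lambda(G \cap I) \;=\; \lambda\bigl(I \cap \mathscr{L}_f \cap \mathscr{L}_{m_{\mathcal{H}f}} \cap X\bigr) \;=\; \lambda(I \cap X) \;>\; 0.
\]

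Density of $G$ in $\Sn^{\circ}$ is then immediate by contradiction: any nonempty open $J \subset \Sn^{\circ}$ disjoint from $G$ would contain an open interval $I$ on which the previous step forces $\lambda(G \cap I) > 0$. For the final statement, if $0<f<1$ almost everywhere in $\Sn^{\circ}$ then $\lambda(\Sn^{\circ} \setminus X) = 0$, so summing the three null sets above gives $\lambda(\Sn^{\circ} \setminus G) = 0$. The only nontrivial ingredient is the local integrability of $m_{\mathcal{H}f}$, which I expect to be the main obstacle to verify cleanly; once the BMO chain is in place, the remainder is bookkeeping with null sets and Hypothesis \ref{hypConv2}.
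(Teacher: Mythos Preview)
Your proposal is correct and follows essentially the same route as the paper: regularity via Proposition~\ref{LPointReg}, null-set arguments for $\mathscr{L}_f$ and $\mathscr{L}_{m_{\mathcal{H}f}}$ combined with Hypothesis~\ref{hypConv2} to get $\lambda(G\cap I)>0$, and density plus the almost-everywhere refinement as immediate consequences. In fact you supply more detail than the paper does on the one nontrivial point, namely the BMO chain $f\in L^\infty \Rightarrow \mathcal{H}f\in\mathrm{BMO}\Rightarrow m_{\mathcal{H}f}\in\mathrm{BMO}\subset L^1_{\mathrm{loc}}$, which the paper simply asserts.
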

\begin{proof}
By Proposition \ref{LPointReg} every $x\in G$ is regular, and since $G$ is the finite intersection of measurable sets, $G$ is measurable. Since $f,m_{\Hil f}\in L^1_{loc}(\R)$ it follows that almost every $x\in \Sn^{\circ}$ belongs to $\mathscr{L}_f\bigcap \mathscr{L}_{m_{\Hil f}}$. Let $X=\Sn^{\circ}\bigcap \{t\in\R: 0<f(t)<1\}$.  By Hypothesis \ref{hypConv2}, $\lambda(G\bigcap I)>0$ for every interval $I\subset \Sn^{\circ}$, thus in particular $I\bigcap G\neq \varnothing$. This proves that $G$ is dense in $\Sn^{\circ}$. Finally, if the inequality $0<f(t)<1$ holds almost everywhere in $\Sn^{\circ}$, then $\lambda(X\bigcap\mathscr{L}_f\bigcap \mathscr{L}_{m_{\Hil f}}\bigcap \Sn^{\circ })=\lambda(\Sn^{\circ})$.
\end{proof}
\begin{rem}
We believe that Hypothesis \ref{hypConv2} is not necessary. That is, we believe that if Hypothesis \ref{hypConv2} is not true, then Proposition \ref{Topology1} remains true if the set $G$ is changed to
\begin{align*}
G= \Sn^{\circ}\bigcap \mathscr{L}_f\bigcap \mathscr{L}_{m_{\Hil f}}\bigcap\bigg(\Sn^{\circ}\backslash\bigg(\bigg\{x: \int_{\R}\frac{f(t)dt}{(x-t)^2}<+\infty\bigg\}\bigcup\bigg\{x: \int_{\R}\frac{1-f(t)dt}{(x-t)^2}<+\infty\bigg\}\bigg)\bigg).
\end{align*}
This change of typical set however, would require a substantial change of Lemma \ref{Generic1}. 
\end{rem}

We now give a lemma which will be useful for proving that $\Dl(x)=\{(x,1)\}$ for a typical set in $\Sn^{\circ}$.
\begin{Lem}
\label{Generic1}
Assume that $x\in G$. Then there exists sequences $\{r_n\}_n\subset G$ and $\{l_n\}_n\subset G$, such that $x<r_{n+1}<r_n$ and $l_{n+1}>l_n>x$ for all $n$ and $\lim_{n\to \infty}r_n=\lim_{n\to \infty}l_n=x$. Moreover
\begin{align}
\max\{\sup_{n}m_{\Hil f}(r_n),\sup_{n}m_{\Hil f}(l_n)\}<+\infty
\end{align}
and
\begin{align}
\min\{\inf_{n}f(r_n),\inf_{n}(1-f(r_n)),\inf_{n}f(l_n),\inf_{n}(1-f(l_n))\}>0.
\end{align}
\end{Lem}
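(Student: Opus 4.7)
The strategy is to use the Lebesgue point conditions at $x$ to manufacture a set $A$ of "good" points which has density $1$ at $x$ from both sides, and then intersect with $G$ using Hypothesis \ref{hypConv2} to extract the two monotone sequences.

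First I would reduce $G$ to a more manageable form. Since the complement of the Lebesgue set of any $L^1_{\text{loc}}$ function has Lebesgue measure zero, and since $\mathcal{H}f\in \text{BMO}\subset L^{p}_{\text{loc}}(\R)$ implies $m_{\mathcal{H}f}\in L^1_{\text{loc}}(\R)$, both $\R\setminus\mathscr{L}_f$ and $\R\setminus\mathscr{L}_{m_{\mathcal{H}f}}$ are null. Hence
\begin{equation*}
G \;=\; X \cap \mathcal{S}_{nt}(\mu)^{\circ} \quad \text{up to a Lebesgue null set,}
\end{equation*}
where $X=\{t:0<f(t)<1\}$. Hypothesis \ref{hypConv2} therefore yields $\lambda(G\cap J)=\lambda(X\cap J)>0$ for every open interval $J\subset \mathcal{S}_{nt}(\mu)^{\circ}$.

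Next, set $c=f(x)\in(0,1)$ and $M=m_{\mathcal{H}f}(x)+1<\infty$, and define
\begin{equation*}
A \;:=\; \Big\{\,t\in\R : \tfrac{c}{2}\le f(t)\le \tfrac{1+c}{2},\ m_{\mathcal{H}f}(t)\le M\,\Big\}.
\end{equation*}
I claim $A$ has one-sided density $1$ at $x$, i.e.\ $|A\cap(x,x+h)|/h\to 1$ and $|A\cap(x-h,x)|/h\to 1$ as $h\to 0^+$. Indeed, by Chebyshev's inequality applied to the Lebesgue-point relations
\begin{equation*}
\frac{1}{h}\int_{x}^{x+h}|f(t)-c|\,dt\to 0,\qquad \frac{1}{h}\int_{x}^{x+h}|m_{\mathcal{H}f}(t)-m_{\mathcal{H}f}(x)|\,dt\to 0,
\end{equation*}
the one-sided measures of $\{f<c/2\}$, $\{f>(1+c)/2\}$ and $\{m_{\mathcal{H}f}>M\}$ inside $(x,x+h)$ are all $o(h)$; the argument for $(x-h,x)$ is identical.

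Now fix $\delta>0$ so small that $(x-\delta,x+\delta)\subset \mathcal{S}_{nt}(\mu)^{\circ}$. For every $0<h\le\delta$ the intersection $G\cap A\cap(x,x+h)$ equals $A\cap(x,x+h)$ up to a null set (since $A\subset X$ by construction, and the Lebesgue-point sets are co-null), which has positive measure by the density claim above; in particular it is non-empty. I now pick the sequence $\{r_n\}$ inductively: set $r_0=x+\delta$ and choose
\begin{equation*}
r_{n+1}\in G\cap A\cap\bigl(x,\min\{r_n,x+\tfrac{1}{n+1}\}\bigr),
\end{equation*}
which is possible by the previous step. Then $x<r_{n+1}<r_n$, $r_n\to x$, and every $r_n\in A$, giving simultaneously $f(r_n)\ge c/2$, $1-f(r_n)\ge (1-c)/2$, and $m_{\mathcal{H}f}(r_n)\le M$. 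The sequence $\{l_n\}$ approaching from the left is produced in exactly the same way using $G\cap A\cap(\max\{l_n,x-\tfrac{1}{n+1}\},x)$.

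The main subtlety — really the only substantive ingredient — is that one cannot avoid Hypothesis \ref{hypConv2}: without it, $X$ could have empty intersection with some interval in $\mathcal{S}_{nt}(\mu)^{\circ}$ (as illustrated in Remark \ref{hypRem}), and no sequence in $G$ with $0<f(r_n)<1$ would exist near $x$. Everything else is a standard density argument using the definitions of Lebesgue point and Chebyshev's inequality.
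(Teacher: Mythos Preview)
Your argument is correct and in fact somewhat cleaner than the paper's. The core idea is the same---use the Lebesgue point conditions at $x$ for $f$ and $m_{\mathcal{H}f}$ to show that a set of ``good'' points has positive measure in every one-sided interval $(x,x+h)$, then extract a monotone sequence. The difference lies in how the $m_{\mathcal{H}f}$-constraint is handled: the paper appeals to the John--Nirenberg inequality (using that $m_{\mathcal{H}f}\in\mathrm{BMO}$) to bound the measure of $\{t:|m_{\mathcal{H}f}(t)-m_{\mathcal{H}f}(x)|>N\}$, and then combines this with a separate lower density bound $d>0$ for $\{|f-f(x)|<\varepsilon\}$ via an inclusion--exclusion estimate. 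You instead observe that Chebyshev's inequality applied directly to the Lebesgue point relation $\frac{1}{h}\int_x^{x+h}|m_{\mathcal{H}f}(t)-m_{\mathcal{H}f}(x)|\,dt\to 0$ already forces $\{m_{\mathcal{H}f}>M\}$ to have vanishing one-sided density, so the good set $A$ has full density $1$. This avoids John--Nirenberg entirely and is the more elementary route; the paper's approach, on the other hand, would still work if one only knew $x\in\mathscr{L}_f$ and that $m_{\mathcal{H}f}$ is locally BMO (without $x\in\mathscr{L}_{m_{\mathcal{H}f}}$), so it is slightly more robust in principle.

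One minor expository point: your final paragraph overstates the role of Hypothesis~\ref{hypConv2} here. Your own argument never actually uses it---once $x\in G$ is assumed, you have $0<f(x)<1$, and the density-$1$ claim for $A$ (hence the positivity of $\lambda(G\cap A\cap(x,x+h))$) follows purely from the Lebesgue point conditions, independently of whether $X$ meets every subinterval of $\mathcal{S}_{nt}(\mu)^\circ$. Hypothesis~\ref{hypConv2} is needed for Proposition~\ref{Topology1} (density of $G$ in $\mathcal{S}_{nt}(\mu)^\circ$), but not for this lemma.
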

\begin{proof}
Since $0<f(x)<1$ we can take $\eps>0$ so that $0<f(x)-\eps<f(x)+\eps<1$, and since $x\in\mathscr{L}_{f}\bigcap\mathscr{L}_{m_{\mathcal{H}f}}$, there exists an $\delta=\delta(\eps,x)$ such that 
\begin{align}
\label{MeasureIneq1}
&\frac{1}{h}\int_{x}^{x+h}\vert f(t)-f(x)\vert dt <\frac{\eps}{2}\\
\label{MeasureIneq2}
&\frac{1}{h}\int_{x}^{x+h}\vert m_{\Hil f}(t)-m_{\Hil f}(x)\vert dt<\frac{\eps}{2}
\end{align}
whenever $h<\delta$. Now, assume that 
\begin{align*}
\liminf_{h\to 0^+}\frac{\vert\{t\in[x,x+h]:\vert f(t)-f(x)\vert<\eps\} \vert}{h}=0.
\end{align*}
Then there exists a sequence $\{h_k\}_k$ such that $\lim_{k\to\infty}h_k=0$ and $\frac{\vert\{t\in[x,x+h_k]:\vert f(t)-f(x)\vert<\eps\} \vert}{h_k}<\eps/2$. Consequently, $\vert \{t\in[x,x+h_k]:\vert f(t)-f(x)\vert\geq\eps\}\vert \geq h_k(1-\eps/2)$, and so
\begin{align*}
\frac{1}{h_k}\int_{x}^{x+h_k}\vert f(t)-f(x)\vert dt \geq \eps(1-\eps/2)>\frac{\eps}{2}
\end{align*}
for all $k$ since $\eps<1/2$. However, this contradicts (\ref{MeasureIneq1}). Therefore, let 
\begin{align*}
\inf_{0<h<\delta}h^{-1}\vert\{t\in[x,x+h]:\vert f(t)-f(x)\vert<\eps\} \vert=d>0.
\end{align*}
Recall the John-Nirenberg inequality (\ref{JohnNirenberg}) and choose an $N$ so large that
\begin{align*}
c_1\exp\bigg(-c_2\frac{N}{\Vert m_{\Hil f}\Vert_{BMO}}\bigg)<\frac{d}{4}.
\end{align*}
Then 
\begin{align*}
\frac{\vert \{t\in [x-h,x+h]:\vert m_{\Hil f}(t)-m_{\Hil f}(x)\vert>N+\eps\}\vert}{2h}\leq \frac{d}{4}
\end{align*}
so that
\begin{align*}
\vert \{t\in [x,x+h]:\vert m_{\Hil f}(t)-m_{\Hil f}(x)\vert\leq N-\eps\}\vert\geq h-\frac{dh}{2},
\end{align*}
where we have used that if $\vert m_{\Hil f}(t)-Mm_{\Hil f}(x,h)\vert>N$ then $ \vert m_{\Hil f}(t)-m_{\Hil f}(x)\vert>N+\vert m_{\Hil f}(x)-Mm_{\Hil f}(x,h)\vert>N+\eps$ by (\ref{MeasureIneq2}).
Therefore, by Lemma \ref{lemIncExcMeasure}
\begin{align*}
\vert \{t\in [x,x+h]:\vert m_{\Hil f}(t)-m_{\Hil f}(x)\vert\leq N-\eps\}\bigcap \{t\in[x,x+h]:\vert f(t)-f(x)\vert<\eps\}\vert \geq \frac{dh}{2}
\end{align*}
for every $0<h<\delta$. Fix an $0<h_0<\delta $ and choose an $r_0$ in $ \{t\in (x,x+h_0):\vert m_{\Hil f}(t)-m_{\Hil f}(x)\vert\leq N\}\bigcap \vert\{t\in(x,x+h_0):\vert f(t)-f(x)\vert<\eps\}\vert $. Now take $h_1<\min\{r_0,h_0/2\}$, and choose $r_1$ in $ \{t\in (x,x+h_1):\vert m_{\Hil f}(t)-m_{\Hil f}(x)\vert\leq N\}\bigcap \{t\in(x,x+h_1):\vert f(t)-f(x)\vert<\eps\} $. Iteration of this process gives a sequence $\{r_n\}_n$ with the desired properties since $N$ is fixed, $f(r_n)>f(x)-\eps$ and $1-f(r_n)>1-(f(x)+\eps)>0$. A similar argument as above also yields the sequence $\{l_n\}_n$.
\end{proof}

We now want to consider the question of whether we can determine $\dv \mathcal{L}(x)$ whenever $x\in \Sreg$. Recall Lemma 2.5 in \cite{Duse14a}, where we showed that 
if $x\in\Sn^{\circ}$ and there exists a neighborhood $N_x$ of $x$ such that 
\begin{align}
\label{Cond1}
\sup_{t\in N_x}\{f(t),1-f(t)\}<1,
\end{align}
then $\dv \mathcal{L}(x)=\{(x,1)\}$. Note that if the density $f\in \Hc$, then every point $x\in\Sn^{\circ}$ for which $0<f(x)<1$ satisfies this condition. If condition (\ref{Cond1}) is not satisfied for $x\in\Sreg^{\circ}$, then it is considerably harder to prove that $\dv \mathcal{L}(x)=\{(x,1)\}$. The reason is the following:  Even though for some point $x$, one knows that for every point $x'$ in a neighborhood $N_x$ of $x$ one has that $\lim_{n\to +\infty}(\chi_\LL(w_n),\eta_\LL(w_n))=(x',1)$ whenever $\{w_n\}_n$ is a non-tangential sequence such that $\lim_{n\to +\infty}w_n=x'$, this does not necessarily imply that $\dv \mathcal{L}(x)=\{(x,1)\}$. The difficulty comes from the fact that tangential limits also have to be considered. The next example illustrates the difficulty.
\begin{ex}
Let 
\begin{align*}
\varphi(x,y)=e^{16/x^8}\exp\bigg\{-\frac{1}{(y-x^2)(y-2x^2)}\bigg\}\chi_{x^2<y<2x^2}(x,y).
\end{align*}
Then $\varphi\in C^{\infty}(\Hp)$, and $\text{supp}(\varphi)=\{(x,y)\in \overline{\Hp}: x^2\leq y\leq 2x^2\}$. Moreover, for every non-tangential limit $\{w_n\}_n\in \Hp$, such that $\lim_{n\to\infty}w_n=x$, we have $\lim_{n\to \infty}\varphi(w_n)=0$. However, $\lim_{x\to 0^+}\varphi(x,3x^2/2)=1$. Hence, $\varphi\notin C(\overline{\Hp})$.
\end{ex}
The argument that is missing in order to conclude that for a regular point $x\in \Sreg$, one has $\dv \mathcal{L}(x)=\{(x,1)\}$, is that if one for example specialize to orthogonal limits, then one need that $\lim_{v\to 0^+}(\chi_\LL(y,v),\eta_\LL(y,v))=(y,1)$ uniformly, for every $y$ in a compact neighborhood of $x$. We now prove that this is sufficient.
\begin{Lem}
\label{lemUniformconv}
Assume that $x\in\Sn^{\circ}=(\text{supp}(\mu)\cap \text{supp}(\lambda-\mu))^{\circ}$ is regular. Furthermore, assume that there exists sequences of regular points $\{r_n\}_n$ and $\{l_m\}_m$ such that $r_n>x$ and $r_m<x$ for all $n,m$, and such that $\lim_{n\to \infty}r_n=\lim_{m\to \infty}l_m=x$, and such that $\lim_{v\to 0^+}(\chi(r_n,v),\eta(r_n,v))=(r_n,1)$ and $\lim_{v\to 0^+}(\chi_\LL(l_m,v),\eta_\LL(l_m,v))=(l_m,1)$ uniformly for all $n,m$. Then $x$ is generic. More exactly, for any sequence $u_n+iv_n\in \Hp$ such that $\lim_{n\rightarrow +\infty}u_n+iv_n=x$,
\begin{align*}
\lim_{n\rightarrow +\infty}(\chi_\LL(u_n,v_n),\eta_\LL(u_n,v_n))&=(x,1).
\end{align*}
\end{Lem}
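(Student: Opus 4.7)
By Lemma \ref{lemGeneric}, since $x$ is already regular, it suffices to prove that $(\chi_{\LL}(w_n),\eta_{\LL}(w_n)) \to (x,1)$ for every \emph{tangential} sequence $w_n=u_n+iv_n \to x$. I would split such a sequence into three subsequences according to whether $u_n=x$, $u_n>x$, or $u_n<x$. The first one is orthogonal, hence non-tangential, and is handled by regularity of $x$. The case $u_n<x$ is symmetric to $u_n>x$, using the sequence $\{l_m\}$ in place of $\{r_n\}$. So I would focus on a tangential subsequence with $u_n>x$.

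By extracting a monotone sub-sequence (which does not affect the regularity or uniformity hypotheses), I may assume that $r_m \searrow x$ strictly. Since $u_n \to x^+$, for each $n$ large there is a unique index $m(n)$ with $r_{m(n)+1}\le u_n < r_{m(n)}$, and necessarily $m(n)\to\infty$, hence $r_{m(n)}\to x$. Given $\eps>0$, the uniform orthogonal convergence hypothesis at $\{r_m\}$ produces a $\delta>0$ such that for all $v\in(0,\delta)$ and \emph{every} index $m$,
\begin{equation*}
\bigl|(\chi_{\LL}(r_m,v),\eta_{\LL}(r_m,v))-(r_m,1)\bigr|<\eps/2.
\end{equation*}
For $n$ large enough that $v_n<\delta$ and $|r_{m(n)}-x|<\eps/2$, the triangle inequality then gives $|(\chi_{\LL}(r_{m(n)},v_n),\eta_{\LL}(r_{m(n)},v_n))-(x,1)|<\eps$.

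The core of the proof is thus the comparison
\begin{equation*}
\bigl|(\chi_{\LL}(u_n,v_n),\eta_{\LL}(u_n,v_n))-(\chi_{\LL}(r_{m(n)},v_n),\eta_{\LL}(r_{m(n)},v_n))\bigr| \;\longrightarrow\; 0.
\end{equation*}
I would argue this topologically using the homeomorphism $W_{\LL}:\LL\to\Hp$. Consider the rectangle $R_n:=[r_{m(n)+1},r_{m(n)}]\times(0,v_n]\subset\Hp$, which contains $(u_n,v_n)$. On the two vertical sides the uniform orthogonal hypothesis combined with $r_{m(n)\pm1}\to x$ forces $W_{\LL}^{-1}$ to take values in an $\eps$-neighborhood of $(x,1)$. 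On the top side $\{v=v_n\}$ the endpoints already lie in this neighborhood. Because $W_{\LL}^{-1}$ is a homeomorphism, the image $W_{\LL}^{-1}(\partial R_n)$ is a simple closed curve in $\overline{\LL}$, and $W_{\LL}^{-1}(R_n)$ is the bounded region it encloses. I would then use the formulas \eqref{eqchin}--\eqref{eqetan} together with the estimate in Lemma \ref{Ineq2}, applied at $r_{m(n)}$ and $r_{m(n)+1}$, to bound the ``height'' of this region: the quantity $v\,e^{\pi|H_vf|}/|\sin(\pi P_vf)|$ is controlled uniformly on the two vertical sides, and by the maximum principle for the harmonic functions $P_vf$ and $H_vf$ (applied inside $R_n$, where these functions are smooth) one obtains the corresponding bound throughout $R_n$, in particular at $(u_n,v_n)$.

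\textbf{Main obstacle.} The key difficulty is that $W_{\LL}^{-1}$ need not be uniformly continuous as we approach $\R=\partial\Hp$; the modulus of continuity of $\chi_{\LL}(\,\cdot\,,v),\eta_{\LL}(\,\cdot\,,v)$ along horizontal lines may blow up like $1/v$ as $v\to 0$ because $\partial_u H_vf$ and $\partial_u P_vf$ do. A naive horizontal interpolation between $u_n$ and $r_{m(n)}$ therefore fails, since in the tangential regime $|u_n-r_{m(n)}|$ is not negligible compared to~$v_n$. The hypothesis of \emph{uniform} orthogonal convergence is precisely what permits bypassing this via a barrier/maximum-principle argument on the rectangle $R_n$, trapping the image of $(u_n,v_n)$ between the images of the two vertical sides. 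Making this barrier argument rigorous, and in particular verifying that the relevant scalar functions used to control $(\chi_{\LL},\eta_{\LL})-(x,1)$ enjoy a suitable sub/superharmonicity on $R_n$, is the main technical step.
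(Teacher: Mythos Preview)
Your identification of the main obstacle is exactly right, but the proposed resolution via a maximum principle does not work, and this is a genuine gap. Although $P_vf$ and $H_vf$ are harmonic, the scalar controlling $|(\chi_\LL,\eta_\LL)-(u,1)|$, namely $v\,e^{\pi|H_vf(u)|}/|\sin(\pi P_vf(u))|$, is neither harmonic nor subharmonic; there is no reason its values inside your rectangle $R_n$ should be dominated by its values on the two vertical sides. Worse, to run a maximum principle on $R_n$ you would need control on the \emph{bottom} side $[r_{m(n)+1},r_{m(n)}]\times\{0\}$ as well, and the boundary values there are $f$ and $\mathcal{H}f$, about which nothing is assumed on that interval (indeed $f$ may be arbitrarily close to $0$ or $1$ there, making $|\sin(\pi P_vf)|$ small). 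The top side $[r_{m(n)+1},r_{m(n)}]\times\{v_n\}$ is also uncontrolled between its endpoints. So the rectangle trap cannot be closed analytically.

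The paper avoids any maximum principle by replacing your rectangle with a \emph{triangle}: for a fixed small $k>0$ one takes
\[
X_n^{(k)}=\{(u,v)\in\Hp:\ x<u<r_n,\ 0<v<k(u-x)\},
\]
whose boundary in $\Hp$ consists of one non-tangential ray $\{x+t+ikt:\,0<t<r_n-x\}$ emanating from $x$ and the single vertical segment at $r_n$. A tangential sequence with $u_n>x$ eventually lies below any such ray, hence inside $X_n^{(k)}$. The image of the non-tangential ray under $W_\LL^{-1}$ converges to $(x,1)$ by the \emph{regularity of $x$} (this is where that hypothesis is used, not merely to handle the orthogonal subsequence), the image of the vertical segment is close to $(r_n,1)$ uniformly in $n$ by the uniform-convergence hypothesis, and $(r_n,1)\to(x,1)$. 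Since $W_\LL^{-1}$ is a homeomorphism, $W_\LL^{-1}(X_n^{(k)})$ is trapped in the closed region $T_n$ bounded by these two image curves together with the segment $\{(t,1):x\le t\le r_n\}$, and the diameter of $T_n$ relative to $(x,1)$ goes to $0$. This is a purely topological trapping argument; no barrier or subharmonicity is needed. Your two-vertical-sides idea can be salvaged by replacing the top of the rectangle with this non-tangential lid, after which the second vertical side becomes unnecessary.
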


\begin{proof}
By possibly passing to a subsequence, we may assume that the sequence $\{w_l\}_l$ is tangential and that $u_l>x$ for all $l$. According to the assumptions, there exists a sequence of regular points $\{r_n\}$ such that $r_n>x$ and $\lim_{n\to \infty}r_n=x$, and $\lim_{v\to 0^+}(\chi_\LL(r_n,v),\eta_\LL(r_n,v))=(r_n,1)$ uniformly for all $n$. In particular, for every $\eps>0$ sufficiently small, there exists a $\delta=\delta(\eps)$, such that \newline$\vert (\chi_\LL(r_n,v)-r_n,\eta_\LL(r_n,v)-1)\vert<\eps$ whenever $v<\delta$ for all $n$. Choose $k<\eps/(r_1-x)$ and consider the non-tangential line $\{t+ikt: t\in(0,+\infty]\}$. Since $x$ is regular, it follows that $\lim_{t\to 0^+}(\chi_\LL(x+t,kt),\eta_\LL(x+t,kt))=(x,1)$. Consider the sequence of open sets $X_n^{(k)}$ defined according to
\begin{align*}
X_n^{(k)}:=\{(u,v)\in \Hp: x<u< r_n, 0<v<k(u-x)\}.
\end{align*}
Then $w_l\in X_n^{(k)}$ whenever $l>L$, for some $L=L(n)$. Moreover, $(\chi_\LL(w_l),\eta_\LL(w_l))\in \overline{W_{\mathcal{L}}^{-1}(X_n^{(k)})}$ since the map $W_{\mathcal{L}}$ is a homeomorphism. Then,
\begin{align*}
\vert (\chi_\LL(w_l)-x,\eta_\LL(w_l)-1)\vert \leq d((x,1),\overline{W_{\mathcal{L}}^{-1}(X_n^{(k)})})=d((x,1),\dv{W_{\mathcal{L}}^{-1}(X_n^{(k)})})
\end{align*}
whenever $l>L(n)$. Note that $d$ denotes the Hausdorff distance between sets, that is, if $X,Y\subset \R^2$, then 
\begin{align*}
d(X,Y)=\max\{\sup_{x\in X}\inf_{y\in Y}\vert x-y\vert,\sup_{y\in Y}\inf_{x\in X}\vert x-y\vert\}.
\end{align*}
Let $T_n$ be the closed region, whose boundary $\dv T_n$, can be decomposed into three components according to
\begin{align*}
\dv T_n^1&=\{(t,1):x\leq t\leq r_n\}\\
\dv T_n^2&=\{(\chi_\LL(x+t,kt),\eta_\LL(x+t,kt)):t\in(0,r_n-x)\}\\
\dv T_n^3&=\{(\chi_\LL(r_n,t),\eta_\LL(r_n,t)):t\in(0,k(r_n-x)]\}
\end{align*}
See figure \ref{figSequence} and \ref{figImageSeq}. Since all $\{r_n\}_n$ and $x$ are regular points and $W_{\mathcal{L}}^{-1}$ is a homeomorphism it follows  that $\dv T_n^2\cup\dv T_n^3\subset  \dv W_{\mathcal{L}}^{-1}$. Again, since $W_{\mathcal{L}}^{-1}$ is a homeomorphism and $T_n$ is a closed set it follows that $\overline{W_{\mathcal{L}}^{-1}(X_n^{(k)})}\subset T_n$. First note that by assumption on the sequence $\{r_n\}_n$, there exists an $N_1=N_1(\eps)$ such that $r_n-x<\eps$, whenever $n>N_1$. Hence,
\begin{align*}
d((x,1),\dv T_n^1)<\eps
\end{align*}
whenever $n>N_1$. By the assumption that $x$ was regular, there exists an $N_2=N_2(\eps)$, such that
\begin{align*}
d((x,1),\dv T_n^2)<\eps
\end{align*}
whenever $n>N_2$. Finally, as discussed above by the assumption of uniform convergence,
\begin{align*}
d((x,1),\dv T_n^3)<\eps
\end{align*}
for all $n$. Take $N=\max\{N_1,N_2\}$, and take $n>N$ and $l>L(N)$, then 
\begin{align*}
\vert (\chi_\LL(w_l)-x,\eta_\LL(w_l)-1)\vert \leq d((x,1),\dv T_n)< \eps
\end{align*}
Since $\eps>0$ was arbitrary, this implies that $\lim_{l\to \infty}\vert (\chi_\LL(w_l)-x,\eta_\LL(w_l)-1)\vert=0$, and the proof is complete.

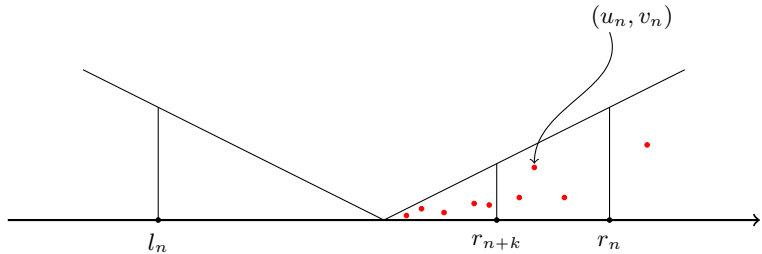
\begin{figure}[H]
\centering
\begin{tikzpicture}
\draw[thick,->] (-5,0) -- (5,0);
\draw (0,0) -- (-4,2);
\draw (0,0) -- (4,2);
\draw (-3,1.5)--(-3,0); 
\draw (3,1.5)--(3,0);
\filldraw[black] (3,0) circle (0.03 cm); 
\filldraw[black] (-3,0) circle (0.03 cm); 
\draw (3,-0.3) node {\small$r_n$};
\draw (-3,-0.3) node {\small$l_n$};
\filldraw[black] (1.5,0) circle (0.03 cm); 
\draw (1.5,-0.3) node {\small$r_{n+k}$};
\draw (1.5,0.75) --(1.5,0);
\filldraw[red] (3.5,1) circle (0.03 cm); 
\filldraw[red] (2.4,0.3) circle (0.03 cm); 
\filldraw[red] (2,0.7) circle (0.03 cm);
\filldraw[red] (1.8,0.3) circle (0.03 cm);
\filldraw[red] (1.4,0.2) circle (0.03 cm);
\filldraw[red] (1.2,0.22) circle (0.03 cm);
\filldraw[red] (0.8,0.1) circle (0.03 cm);
\filldraw[red] (0.5,0.15) circle (0.03 cm);
\filldraw[red] (0.3,0.06) circle (0.03 cm);
\draw[<-] (2,0.75) to [out=90,in=290] (3,2.5);
\draw (3.3,2.7) node {\small$(u_n,v_n)$};
\end{tikzpicture}

\caption{\label{figSequence}The region $X_n^{(k)}$ is depicted above. The red dots represents the positions of the sequence $\{u_n+iv_n\}_{n=1}^{+\infty}$}

\end{figure}

\begin{figure}[H]
\centering
\begin{tikzpicture}
\draw[thick,->] (-5,0) -- (5,0);
\draw (0,0.3) node {\small$(x,1)$};
\filldraw[black] (3,0) circle (0.03 cm); 
\filldraw[black] (-3,0) circle (0.03 cm); 
\draw (-3.15,-1.5) -- (0,0);
\draw (-3.6,-1.5) node {\small$(\chi,\eta)$};
\draw (3,0.3) node {\small$(r_n,1)$};
\draw (1.5,0.3) node {\small$(r_{n+k},1)$};
\draw (-3,0.3) node {\small$(l_n,1)$};
\draw (-3,0) to [out=270,in=45] (-4,-3);
\draw (3,0) to [out=270,in=120] (2.7,-3);
\draw (-4,-3) to [out=35,in=250]  (0,0);
\draw (2.7,-3) to [out=165,in=290] (0,0);
\draw (1.5,0) to [out=270,in=120](1.7,-2.55);
\filldraw[red] (3,-2) circle (0.03 cm); 
\filldraw[red] (2.7,-0.4) circle (0.03 cm); 
\filldraw[red] (2.1,-2.1) circle (0.03 cm);
\filldraw[red] (1.8,-0.3) circle (0.03 cm);
\filldraw[red] (1.3,-0.5) circle (0.03 cm);
\filldraw[red] (1,-0.6) circle (0.03 cm);
\filldraw[red] (0.5,-0.2) circle (0.03 cm);
\filldraw[red] (0.4,-0.3) circle (0.03 cm);
\filldraw[red] (0.2,-0.2) circle (0.03 cm);
\draw[<-] (-1.5,-0.75) to [out=300,in=45] (-0.5,-2);
\draw (-0.5,-2.3) node {\small$\sqrt{(x-\chi)^2+(1-\eta)^2}$};
\end{tikzpicture}

\caption{\label{figImageSeq}Depiction of the set $T_n$. The dots represent the images of the tangential sequence under the homeomorphism $W_{\LL}^{-1}$.} 
\end{figure}
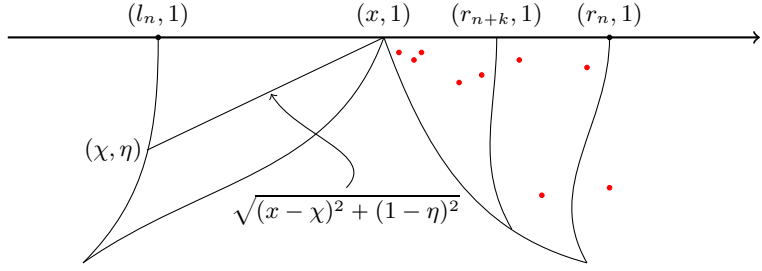
\end{proof}
We now show that all points in the set $G$ defined in Proposition \ref{Topology1} are generic.
\begin{Thm}
\label{GenericThm}
Assume that $x\in G$. Then $\Dl(x)=\{(x,1)\}$.
\end{Thm}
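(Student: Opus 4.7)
The plan is to apply Lemma \ref{lemUniformconv}. Since $x\in G$ is a Lebesgue point of $f$ with $0<f(x)<1$, Proposition \ref{LPointReg} gives that $x$ (and indeed every point of $G$) is regular. Lemma \ref{Generic1} yields sequences $\{r_n\}_n,\{l_n\}_n\subset G$ with $r_n\searrow x$ and $l_n\nearrow x$ such that
\[
M := \sup_n \max\{m_{\Hil f}(r_n),\, m_{\Hil f}(l_n)\} < +\infty
\]
and
\[
\eps_0 := \inf_n \min\{f(r_n),\, 1-f(r_n),\, f(l_n),\, 1-f(l_n)\} > 0.
\]
Each $r_n,l_n$ is thus a regular point. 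It remains to verify that the orthogonal limits $(\chi_\LL(r_n,v),\eta_\LL(r_n,v))\to(r_n,1)$ and $(\chi_\LL(l_n,v),\eta_\LL(l_n,v))\to(l_n,1)$ as $v\to 0^+$ hold uniformly in $n$; Lemma \ref{lemUniformconv} then yields $\dv\LL(x)=\{(x,1)\}$.

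For the numerators in (\ref{eqchin})-(\ref{eqetan}), the non-tangential maximal inequality (\ref{nontangMax}) with cone aperture $\alpha=1$ gives
\[
|H_v f(r_n)|\le d_1\,m_{\Hil f}(r_n)\le d_1 M \qquad (n\ge 1,\ 0<v<1),
\]
and similarly for $l_n$. Hence $v e^{\pi|H_v f(r_n)|}\le v e^{\pi d_1 M}\to 0$ uniformly in $n$ as $v\to 0^+$.

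The delicate step is to produce a uniform positive lower bound for $\sin(\pi P_v f(r_n))$. By Lemma \ref{PoissonEst1} this reduces to finding a single $\delta>0$ and an $\eps_1>0$ such that $\max\{m_f^\delta(r_n),m_{1-f}^\delta(r_n)\}\le 1-\eps_1$ for every $n$. Since each $r_n$ is a Lebesgue point of $f$ with $f(r_n)\in[\eps_0,1-\eps_0]$, one has $m_f^\delta(r_n)\to f(r_n)$ and $m_{1-f}^\delta(r_n)\to 1-f(r_n)$ as $\delta\to 0^+$ \emph{pointwise}, which only supplies a $\delta_n$ depending on $n$. To upgrade to a common $\delta$, one refines the construction of the sequences in Lemma \ref{Generic1} by adjoining the constraint $\max\{m_f^\delta(r_n),m_{1-f}^\delta(r_n)\}\le 1-\eps_0/2$: since $f\in L^\infty(\R)$ implies $m_f,m_{1-f}\in BMO(\R)$, the same John--Nirenberg argument used in Lemma \ref{Generic1} to control $m_{\Hil f}$ can be applied to $m_f$ and $m_{1-f}$ to select $r_n$ meeting all constraints simultaneously. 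The symmetric argument applies to $\{l_n\}$. Combining this with the numerator bound in (\ref{eqchin})-(\ref{eqetan}) yields the required uniform convergence, and Lemma \ref{lemUniformconv} completes the proof. The \emph{main obstacle} is precisely this uniform control of the truncated maximal functions $m_f^\delta$, $m_{1-f}^\delta$ along the selected sequences, which requires an extension of the John--Nirenberg method underlying Lemma \ref{Generic1} to these auxiliary $BMO$ functions.
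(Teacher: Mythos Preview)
Your overall strategy matches the paper: reduce to Lemma \ref{lemUniformconv}, obtain the sequences via Lemma \ref{Generic1}, and bound the numerators in (\ref{eqchin})--(\ref{eqetan}) using the non-tangential maximal inequality (\ref{nontangMax}). The gap lies in your treatment of the denominator. You correctly isolate the obstacle---a uniform-in-$n$ lower bound for $\sin(\pi P_v f(r_n))$---but the proposed fix does not go through. Applying the John--Nirenberg argument of Lemma \ref{Generic1} to $m_f$ or $m_{1-f}$ would, at best, yield $|m_f(r_n)-Mm_f(x,h)|\le N$ for some \emph{large} $N$ determined by the BMO norm; it does not produce the small-deviation bound $m_f^\delta(r_n)\le 1-\eps_0/2$ you need. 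Moreover, that argument requires $x$ to be a Lebesgue point of the BMO function in question, and neither $x\in\mathscr{L}_{m_f}$ nor $x\in\mathscr{L}_{m_f^\delta}$ is part of the definition of $G$. The condition you are really after is an equicontinuity statement for the averages $h\mapsto Mf(r_n,h)$ near $h=0$, uniform in $n$, and nothing in the BMO toolbox delivers this directly.

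The paper resolves the obstacle by a different mechanism that avoids any uniform-in-$n$ estimate at fixed $v$. With $|H_v f(u)|\le N$ for all $u\in X=\{x\}\cup\{r_n\}_n\cup\{l_n\}_n$ in hand, one bounds
\[
|(\chi_\LL(u,v)-u,\eta_\LL(u,v)-1)|\;\le\; g_v(u):=\frac{\sqrt{20}\,e^{N}}{\min\{v^{-1}\pi P_v f(u),\,v^{-1}\pi P_v(1-f)(u)\}}.
\]
The point is that $v\mapsto v^{-1}P_v f(u)$ and $v\mapsto v^{-1}P_v(1-f)(u)$ are decreasing, so $g_v(u)$ is \emph{increasing} in $v$ (Lemma \ref{MinMon}); for each fixed $u\in X$ one has $g_v(u)\to 0$ as $v\to 0^+$, since $v^{-1}P_v f(u)\to\infty$ via an elementary integration by parts using only that $u$ is a Lebesgue point with $f(u)\in[\eps_0,1-\eps_0]$ (the $\delta(u)$ arising here is allowed to depend on $u$); and $g_v$ is continuous on $X$ for each fixed $v$ because $P_v f$ is continuous on $\Hp$. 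Dini's theorem then upgrades the pointwise monotone convergence to uniform convergence on the compact set $X$, after which Lemma \ref{lemUniformconv} finishes the proof. The idea you are missing is precisely this combination of monotonicity in $v$ with Dini's theorem, which makes the uniformity automatic and dispenses with any quantitative control of the truncated maximal functions along the sequence.
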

\begin{proof}
Let $x\in G$. Take sequences $\{r_n\}_n\subset G$ and $\{l_n\}_n\subset G$ as in Lemma \ref{Generic1} and $N,\eps>0$ such that 
\begin{align}
\label{GenericH}
\sup_{n,k}\{\sup_{v>0}\vert H_vf(r_n)\vert,\sup_{v>0}\vert H_vf(l_k)\vert\}<N
\end{align}
and
\begin{align}
\label{Genericf}
\min\{\inf_{n}\min\{f(r_n),1-f(r_n)\},\inf_{k}\min\{f(l_k),1-f(l_k)\}\}>\eps.
\end{align}
Assume that $u\in X:=\bigcup_{n}\{r_n\}\bigcup \{x\}\bigcup_{n}\{l_n\}$. Then an integration by parts gives
\begin{align*}
v^{-1}\pi P_v f(u)&=2\int_{-\infty}^{+\infty}\frac{(t-u)}{((t-u)^2+v^2))^2}\int_{u}^{t}f(t')dt'dt\\
&\geq \int_{u}^{+\infty}\frac{(t-u)^2}{((t-u)^2+v^2))^2}(M_Rf(u,t-u)\chi_{t>u})dt.
\end{align*}
Since $u\in X\subset G$, and thus in particular $x\in \mathscr{L}_f\bigcap \{t\in \Sn: 0<f(t)<1\}$, there exists a $\delta=\delta(u)$ such that $\min\{M_Rf(u,t-u),M_R(1-f)(u,t-u)\}\geq \eps/2$ whenever $\vert t-u\vert<\delta$. This implies that 
\begin{align}
\label{GenericP}
\min\{v^{-1}P_v f(u),v^{-1}P_v (1-f)(u)\}\geq \frac{\eps}{2\pi}\int_{u}^{\delta}\frac{(t-u)^2}{((t-u)^2+v^2))^2}dt\to +\infty
\end{align}
as $v\to 0^+$.
We get from (\ref{eqchin}) and (\ref{eqetan})
\begin{align*}
&\vert\sin[\pi P_v f(u)])\vert^2\vert (\chi_\LL(u,v)-u,\eta_\LL(u,v)-1)\vert^2 \\
&=v^2\big(e^{-H_vf(u)}-\cos(\pi P_v f(u)\big)^2+v^2\big(e^{H_vf(u)}+e^{-H_vf(u)}-2\cos(\pi P_v f(u)\big)^2\\
&\leq v^2(1+e^{\vert H_vf(u)\vert})^2+v^2(2e^{\vert H_vf(u)\vert}+2)^2\leq 20v^2e^{2N}v^2,
\end{align*}
by (\ref{GenericH}). Since $\sin (\pi P_v f(u))  \geq \frac{\pi}{2}\min\{P_vf(u),P_v(1-f)(u)\}$, 
\begin{align}
\label{GenericMonotone}
\vert (\chi_\LL(u,v)-u,\eta_\LL(u,v)-1)\vert &\leq \frac{\sqrt{20}ve^{N}}{v\min\{\pi P_v f(u),\pi P_v (1-f)(u)\}}\nonumber\\&=\frac{\sqrt{20}e^{N}}{\min\{v^{-1}\pi P_v f(u),v^{-1}\pi P_v (1-f)(u)\}}:=g_v(u).
\end{align}

We now show that $g_v(u)$ is an increasing function in $v$ for each $u$. We must show that 
\begin{align*}
\min\{v'^{-1}\pi P_{v'}f(u),{v'}^{-1}\pi P_{v'} (1-f)(u)\}<\min\{v^{-1}\pi P_v f(u),v^{-1}\pi P_v (1-f)(u)\}
\end{align*}
is a decreasing function in $v$. Since both $v^{-1}\pi P_{v}f(u)$ and $v^{-1}\pi P_v (1-f)(u)$ are decreasing functions in $v$, Lemma \ref{MinMon} shows that $\min\{v^{-1}\pi P_v f(u),v^{-1}\pi P_v (1-f)(u)\}$ is decreasing. Thus $g_v(u)$ is an increasing function in $v$ for all $x\in X$. 

Note that by (\ref{GenericMonotone}), $g_v(u)\to 0$ as $v\to 0^+$ for all $u\in X$. We now show that $g_v(u)$ is continuous function for all fixed $v$. It is sufficient to show that for some sequence $\{r_{m_n}\}_n$ such that $\lim_{n\to \infty}r_{m_n}=x$ we have $\lim_{n\to \infty} g_v(r_{m_n})=g_v(x)$. However, this follows immediately from the fact that $P_v(u)$ is a continuous function on $\Hp$. Since $X$ is compact in the subspace topology from $\R$ and $0$ is a continuous function on $X$, it follows by Dini's theorem that $g_v(u)\to 0$ as $v\to 0^+$ uniformly on $X$. The estimate (\ref{GenericMonotone}) then shows that $(\chi_\LL(u,v),\eta_\LL(u,v))\to (u,1)$ uniformly on $X$. Hence, by Lemma \ref{lemUniformconv}, it follows that $\dv\LL(x)=\{(x,1)\}$. 
\end{proof}

\subsection{Sufficient Conditions for Points to be Generic}
We now give a proposition, which provides a sufficient condition for when the assumptions of Lemma \ref{lemUniformconv} are satisfied. 
\begin{Prop}
\label{GenericProp}
Assume that $x\in\Sn^{\circ}$, and that the density $f$ at $x$ satisfies the assumptions of Proposition \ref{NLpointReg}, so that $x$ is regular. Furthermore, assume that there exists a $\delta>0$, and sequences of regular points $\{r_n\}_n$ and $\{l_n\}$, such that $r_n>x$ and $l_n<x$ for all $n$, and such that $\lim_{n\to\infty}r_n=\lim_{n\to\infty}l_n=x$. Furthermore, assume that 
\begin{align}
\label{GenericCond1}
\max\{\sup_{n}\Delta m_f(r_n),\sup_{n}\Delta m_f(l_n)\}=m_1<1
\end{align}
and for some $\delta>0$
\begin{align}
\label{GenericCond2}
\min\{\sup_n \max\{1-m_f^{\delta}(r_n),1-m_{1-f}^{\delta}(r_n)\},\sup_n \max\{1-m_f^{\delta}(l_n),1-m_{1-f}^{\delta}(l_n)\}\}=m_2>0.
\end{align}
Then $\lim_{n\to \infty} (\chi_\LL(r_n,v),\eta_\LL(r_n,v))=(x,1)$ and $\lim_{n\to \infty} (\chi_\LL(l_n,v),\eta_\LL(l_n,v))=(x,1)$ uniformly, and by Lemma \ref{lemUniformconv} it follows that $\dv \mathcal{L}(x)=\{(x,1)\}$.
\end{Prop}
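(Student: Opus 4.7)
The plan is to reduce the claim to a direct application of Lemma \ref{lemUniformconv}, by verifying that $(\chi_\LL(r_n, v), \eta_\LL(r_n, v)) \to (r_n, 1)$ uniformly in $n$ as $v \to 0^+$, and likewise for $l_n$. Once this uniform convergence is in hand, Lemma \ref{lemUniformconv} (whose premise that $x$ itself be regular is supplied by the hypothesis that $x$ satisfies Proposition \ref{NLpointReg}) immediately yields $\dv \mathcal{L}(x) = \{(x,1)\}$.

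First, I would apply Lemma \ref{Ineq2} at each $r_n$: for every $v < \delta$ and every $n$,
\[
|(\chi_{\LL}(r_n, v) - r_n,\; \eta_{\LL}(r_n, v) - 1)| \;\le\; \frac{2\sqrt{20}\, c\, v^{1 - \Delta m_f(r_n)}}{\min\{1 - m_f^{\delta}(r_n),\; 1 - m_{1-f}^{\delta}(r_n)\}},
\]
with $c>0$ absolute. Hypothesis (\ref{GenericCond1}) then gives the uniform bound $\Delta m_f(r_n) \leq m_1 < 1$, hence $v^{1 - \Delta m_f(r_n)} \leq v^{1 - m_1}$ for $v \in (0,1]$, shrinking the numerator uniformly in $n$. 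Hypothesis (\ref{GenericCond2}), in the form of a uniform lower bound $\min\{1 - m_f^{\delta}(r_n),\, 1 - m_{1-f}^{\delta}(r_n)\} \geq m_2 > 0$ matched to the denominator in Lemma \ref{Ineq2}, handles the denominator.

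Combining the two bounds yields, for all $v < \min(\delta,1)$ and all $n$,
\[
|(\chi_{\LL}(r_n, v) - r_n,\; \eta_{\LL}(r_n, v) - 1)| \;\le\; \frac{2\sqrt{20}\, c}{m_2}\, v^{1 - m_1},
\]
whose right-hand side tends to $0$ as $v \to 0^+$ independently of $n$. The identical argument applied to the sequence $\{l_n\}$ yields the matching one-sided uniform convergence from the left. Invoking Lemma \ref{lemUniformconv} with these two sequences of regular points approaching $x$ from both sides, and using that $x$ is itself regular by Proposition \ref{NLpointReg}, concludes that every sequence $w_n \to x$ in $\Hp$ satisfies $(\chi_{\LL}(w_n), \eta_{\LL}(w_n)) \to (x,1)$. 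This gives $\dv \mathcal{L}(x) = \{(x,1)\}$.

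The argument is mostly mechanical once one correctly matches Lemma \ref{Ineq2} to the hypotheses; the main point requiring care is the interpretation of (\ref{GenericCond2}) as furnishing a uniform positive lower bound on the minimum (not just the maximum) appearing in the denominator of Lemma \ref{Ineq2}, since otherwise the quotient there cannot be controlled. With that reading fixed, the estimate in Lemma \ref{Ineq2} and the uniform-convergence hypothesis of Lemma \ref{lemUniformconv} dovetail exactly, and no further analytic input is needed beyond the lemmas of Section 2 and the already-established regularity of $x$.
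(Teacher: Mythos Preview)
Your proposal is correct and follows essentially the same route as the paper: apply Lemma \ref{Ineq2} at each $r_n$ (respectively $l_n$), use (\ref{GenericCond1}) to control the exponent and (\ref{GenericCond2}) to control the denominator uniformly in $n$, obtain the single bound $\frac{2\sqrt{20}c}{m_2}v^{1-m_1}$, and then invoke Lemma \ref{lemUniformconv}. Your remark about reading (\ref{GenericCond2}) as a uniform positive lower bound on the \emph{minimum} in the denominator of Lemma \ref{Ineq2} is well taken; the paper's proof tacitly makes the same identification.
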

\begin{proof}
We may assume that $v<\delta$. By (\ref{GenericCond1}) and (\ref{GenericCond2}) and Lemma \ref{Ineq2}, we have
\begin{align*}
\max\{\sup_n\vert (\chi_{\LL}(r_n,v)-r_n,\eta_{\LL}(r_n,v)-1)\vert,\sup_n\vert (\chi_{\LL}(l_n,v)-r_n,\eta_{\LL}(l_n,v)-1)\vert\}\leq \frac{2\sqrt{20}cv^{1-m_1}}{m_2}
\end{align*}
This completes the proof.
\end{proof}
\begin{ex} 
Consider the density in Example \ref{NRegEx0}. We had  $f(t)=\frac{1}{2}\vert \sin t^{-1}\vert^{1/2}\chi_{[-a,a]}(t)$, where $a$ was chosen so that $\int_{\R}f(t)dt=1$. In Example \ref{NRegEx0}, we showed that $f$ satisfied the assumptions of Proposition \ref{NLpointReg} at $x=0$, which implied that $0$ was a regular point. We now show that $0$ is also a generic point.
Since $0\leq f(t)\leq 1/2$ for all $t\in[-a,a]$, it follows that $\Delta m_f(x)\leq 1/2$ and $Mf(x,h)\leq 1/2$ for all $x$ and $h>0$. Choose $r_n=(\pi/2+\pi n)^{-1}$ and $l_n=(\pi/2-\pi n)^{-1}$ for $n>0$. Fix any $h>0$ and let $m$ be the integer such that $\pi/2+\pi(m-1)<(r_n+h)^{-1}<\pi/2+\pi m$. In particular we note that $m\leq n$. We now estimate $Mf(r_n,h) (=M(l_n,h))$ from below. If $m<n$, then
\begin{align*}
Mf(r_n,h)&=\frac{1}{4h}\int_{r_n-h}^{r_n+h}\vert \sin t^{-1}\vert^{1/2}dt>\frac{1}{4h}\int_{r_n}^{r_n+h}\vert \sin t^{-1}\vert^{1/2}dt=\frac{1}{4h}\int_{(r_n+h)^{-1}}^{r_n^{-1}}\vert \sin s\vert^{1/2}\frac{ds}{s^2}\\
&>\frac{1}{4(r_m-r_n)}\int_{r_m^{-1}}^{r_n^{-1}}\vert \sin s\vert^{1/2}\frac{ds}{s^2}\geq \frac{1}{4(r_m-r_n)}\sum_{k=m}^{n-1}\frac{1}{r_{k+1}^{-2}}\int_{r_k^{-1}}^{r_{k+1}^{-1}}\vert \sin s\vert^{1/2}ds\\
&\geq \frac{1}{4(r_m-r_n)}\sum_{k=m}^{n-1}\frac{1}{r_{k+1}^{-2}}\int_{0}^{\pi/2}\sqrt{\pi^{-1}s}ds=\frac{1}{8(r_m-r_n)}\sum_{k=m}^{n-1}\frac{1}{(\pi/2+\pi (k+1))^2}\\
&\geq \frac{1}{8}\frac{\pi mn}{(n-m)}\int_{m}^{n}\frac{dx}{\pi^{2}(x+3)^2}\geq \frac{1}{4\pi}\frac{ mn}{(n-m)}\frac{(n-m)}{(n+3)(m+3)}=\frac{1}{4\pi}\frac{nm}{(n+3)(m+3)}\\
&\geq \frac{1}{128\pi}.
\end{align*}
If on the other hand $m=n$, then if $\frac{\pi/2}{\pi n(\pi n+\pi/2)}<h\leq \frac{\pi}{\pi^2n^2+\pi^2/4}$ then
\begin{align*}
Mf(r_n,h)&>\frac{1}{4h}\int_{(r_n+h)^{-1}}^{r_n^{-1}}\vert \sin t\vert^{1/2}dt>\frac{\pi^2n^2+\pi^2/4}{4\pi}\int_{\pi n}^{\pi/2+\pi n}\frac{\vert t-\pi n \vert^{1/2}}{t^2}dt\\& >\frac{\pi^2n^2+\pi^2/4}{4\pi}\frac{1}{(\pi n+\pi/2)^2}\int_{\pi n}^{\pi/2+\pi n}\pi^{-1/2}\vert t-\pi n \vert^{1/2}dt>\bigg(\frac{1}{4\pi}+o(1)\bigg)\frac{1}{2}\\\
&=\frac{1}{8\pi}+o(1)>0
\end{align*}
and if  then $0<h<\frac{\pi/2}{\pi n(\pi n+\pi/2)}$, then
\begin{align*}
Mf(r_n,h)&>\frac{1}{4h}\int_{(r_n+h)^{-1}}^{r_n^{-1}}\vert \sin t\vert^{1/2}dt>\frac{\pi^{-1/2}}{4h}\int_{(r_n+h)^{-1}}^{r_n^{-1}}\frac{\vert t-\pi n \vert^{1/2}}{t^2}dt\\& >\frac{\pi^{-1/2}}{4h}\bigg[\frac{\arctan{\sqrt{1-t/(\pi n)}}}{\sqrt{\pi n}}-\frac{\sqrt{t-\pi n}}{t}\bigg]_{(r_n+h)^{-1}}^{r_n^{-1}}\\&=\frac{\pi^{-1/2}}{4h}\bigg\{(r_n+h)\sqrt{(r_n+h)^{-1}-\pi n)}-r_n\sqrt{r_n^{-1}-\pi n}\bigg\}+o(1)>\frac{\pi^{-1/2}}{4h}\frac{h\sqrt{\pi}}{\sqrt{2}}+o(1)=\frac{1}{4\sqrt{2}}+o(1)>0
\end{align*}
Hence, $f$ satisfies the assumptions of Proposition \ref{GenericProp} at $x=0$, which implies that $0$ is a generic point.
\end{ex}
\begin{ex}
Consider the density in Example \ref{NRegEx2}, where 
\begin{align*}
f(t)=\sum_{k=1}^{+\infty}\bigg(1-\frac{1}{2k}\bigg)(\chi_{I_{2k}}(t)+\chi_{I_{2k}}(-t))+\sum_{k=1}^{+\infty}\frac{1}{2k+1}(\chi_{I_{2k+1}}(t)+\chi_{I_{2k+1}}(-t))+\chi_{(1/2,a]}(t),
\end{align*}
and where $I_n=(2^{-(n+1)},2^{-n}]$, and where $a$ is chosen so that $\int_{\R}f(t)dt=1$. If for any sequence $\{r_n\}_n$ we have that $r_n\in (2^{-(n+1)},2^{-n})$, then
for any $h>0$ sufficiently small we have
\begin{align*}
Mf(r_{2n},h)=1-\frac{1}{2n}
\end{align*}
and 
\begin{align*}
Mf(r_{2n+1},h)=\frac{1}{2n+1}.
\end{align*}
This implies that $\inf_n{Mf(r_{n},h),M(1-f)(r_{n},h)}=0$. Therefore condition (\ref{GenericCond2}) is violated. If we on the other hand choose $r_n=2^{-n}$, then $h>0$ for sufficiently small 
\begin{align*}
\vert \Delta Mf(r_{n},h)\vert \geq 1-\frac{1}{n}.
\end{align*}
Hence, $\sup_n{\vert \Delta Mf(r_{n},h)\vert}=1$ which violates condition (\ref{GenericCond1}). Therefore the assumptions of Proposition \ref{GenericProp} are never satisfied and we are unable to conclude that $\Dl(0)=\{(0,1)\}$ by the methods developed in this paper. This is due to the fact that the estimate of Lemma \ref{EstHilb2} is too rough. 
\end{ex}

\begin{Prop}
\label{HoldeReg}
Assume that $x\in\mathscr{L}_f\cap \Sn^{\circ}$, $f(x)=0$, $\int_{\R}(x-t)^{-2}f(t)dt=+\infty$ and that $H_v(u)$ is tangentially bounded at $x$. If in addition we assume that $x\in\mathscr{L}_{\Hil f}$, then $\dv \mathcal{L}(x)=\{(x,1)\}$. In particular, this holds when $f\in \Hc$.
\end{Prop}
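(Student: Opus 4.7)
The plan is to invoke Lemma \ref{lemUniformconv}: regularity of $x$ itself follows immediately from Proposition \ref{RegPoint1}, so it suffices to produce sequences $\{r_n\}\downarrow x$ and $\{l_n\}\uparrow x$ of regular points along which the convergence $(\chi_\LL(\cdot,v),\eta_\LL(\cdot,v))\to(\cdot,1)$ is \emph{uniform} as $v\downarrow 0$. I would imitate the Dini-compactness argument used in the proof of Theorem \ref{GenericThm}. First, since $x\in\Sn^{\circ}$, both $\{f>0\}$ and $\{f<1\}$ carry positive Lebesgue measure in every neighborhood of $x$; moreover $f\in L^\infty$ implies $\Hil f\in\mathrm{BMO}(\R)$, whence $m_{\Hil f}\in L^p_{\mathrm{loc}}$ for all finite $p$. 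Combining $x\in\mathscr{L}_{\Hil f}$ with the John--Nirenberg inequality \eqref{JohnNirenberg} and repeating the selection argument of Lemma \ref{Generic1}, one can extract sequences $\{r_n\}\downarrow x$, $\{l_n\}\uparrow x$ of common Lebesgue points of $f$ and $m_{\Hil f}$ satisfying $0<f(r_n),f(l_n)<1$ and $\sup_n m_{\Hil f}(r_n)\vee\sup_n m_{\Hil f}(l_n)\le N<\infty$. Each such $r_n$, $l_n$ is then regular by Proposition \ref{LPointReg}, and \eqref{nontangMax} yields the uniform bound $\sup_n\sup_{0<v\le 1}\bigl(|H_v f(r_n)|\vee|H_v f(l_n)|\bigr)\le d_1 N$.

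Set $X:=\{r_n\}\cup\{x\}\cup\{l_n\}\subset\R$, which is compact, and define
\begin{align*}
g_v(u):=\frac{C\,e^{\pi d_1 N}}{\min\{v^{-1}\pi P_v f(u),\,v^{-1}\pi P_v(1-f)(u)\}},\qquad u\in X,\ v\in(0,1],
\end{align*}
with a numerical constant $C$ coming from \eqref{eqchin}, \eqref{eqetan} and the lower bound $\sin(\pi P_v f)\ge \tfrac{\pi}{2}\min\{P_v f,P_v(1-f)\}$ (as in the opening display of the proof of Theorem \ref{GenericThm}). Then $|(\chi_\LL(u,v)-u,\eta_\LL(u,v)-1)|\le g_v(u)$ on $X\times(0,1]$. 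The function $g_v(\cdot)$ is continuous on $X$ (Poisson integrals being harmonic, hence continuous), and by Lemma \ref{MinMon} together with the monotonicity of $v\mapsto v^{-1}P_v f(u)$ it is nonincreasing in $v$. Monotone convergence gives $v^{-1}\pi P_v f(u)\uparrow\int(u-t)^{-2}f(t)\,dt$, equal to $+\infty$ at $u=x$ by hypothesis and at each $r_n,l_n$ because $f(r_n),f(l_n)>0$ creates a non-integrable singularity of the integrand at $t=u$; similarly $v^{-1}\pi P_v(1-f)(u)=v^{-1}\pi(1-P_v f(u))\to+\infty$ since $P_v f(u)\to f(u)<1$. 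Hence $g_v\downarrow 0$ pointwise on $X$, and Dini's theorem promotes this to uniform convergence. The hypothesis of Lemma \ref{lemUniformconv} is then met, yielding $\dv\LL(x)=\{(x,1)\}$.

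For the ``in particular'' case $f\in\Hc$ with $x$ in the interior of some $I_k$, H\"older continuity of $f$ and $\Hil f$ on $\overline{I_k}$ gives $x\in\mathscr{L}_f\cap\mathscr{L}_{\Hil f}$ and non-tangential boundedness of $H_v f$ for free, so the general argument applies directly once the remaining assumptions $f(x)=0$ and $\int(x-t)^{-2}f(t)\,dt=+\infty$ are in force. I expect the main obstacle to lie in the simultaneous Lebesgue-point/BMO selection yielding $\{r_n\},\{l_n\}$: one must intersect the sets $\mathscr{L}_f$, $\mathscr{L}_{m_{\Hil f}}$, $\{0<f<1\}$ and $\{m_{\Hil f}\le N\}$ and verify that the intersection has positive density at $x$, which is exactly the delicate John--Nirenberg-based counting argument appearing in Lemma \ref{Generic1}.
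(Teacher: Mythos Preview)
Your proposal is correct and mirrors the paper's own argument: regularity of $x$ via Proposition~\ref{RegPoint1}, then a Lemma~\ref{Generic1}-style extraction of sequences $\{r_n\},\{l_n\}\subset G$ with uniformly bounded $m_{\Hil f}$, followed by the Dini/monotonicity device from the proof of Theorem~\ref{GenericThm} to upgrade pointwise to uniform convergence on the compact set $X=\{r_n\}\cup\{x\}\cup\{l_n\}$, and finally Lemma~\ref{lemUniformconv}. Your write-up is in fact more detailed than the paper's (which simply declares each step ``analogous'' and notes only that now $\inf_n f(r_n)=0$), and your closing remark that the John--Nirenberg selection is the delicate step is exactly right.
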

\begin{proof}
By Proposition \ref{RegPoint1}, $x$ is regular. Using that the set $G$ is dense in $\Sn^{\circ}$, an analogous argument as in Lemma \ref{Generic1} shows that there exists sequences $\{r_n\}_n\subset G$ and $\{l_n\}_n\subset G$, satisfying the same properties as in Lemma \ref{Generic1}, except that $\inf_{n,k}\{f(r_n),f(l_k)\}=0$, (since $f(x)=0$ by assumption). Similarly to the proof of Theorem \ref{GenericThm} one can show that $\lim_{v\to 0^+}\min\{v\pi P_v\ast f(u),v\pi P_v\ast (1-f)(u)\}=+\infty$ for all $u\in X$, where $X$ is as in the proof of Theorem \ref{GenericThm}. Therefore, a completely analogous proof gives that $\Dl(x)=\{(x,1)\}$.
\end{proof}

\begin{Thm}
\label{ExtendedMap}
Assume that for all $x\in\Sn$, $\dv\mathcal{L}(x)=\{(x,1)\}$. Then the homeomorphism $W_{\LL}:\LL\to \Hp $ extends to a homeomorphism $\overline{W}_{\LL}:\overline{\LL}\to \overline{\Hp} $
\end{Thm}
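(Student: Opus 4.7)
The plan is to construct an extension $\overline{W}_\LL^{-1}:\overline{\Hp}\to\overline{\LL}$ explicitly, show it is a continuous bijection, and then apply the standard fact that a continuous bijection from a compact space onto a Hausdorff space is a homeomorphism. Setting $\overline{W}_\LL$ to be its inverse then furnishes the desired extension of $W_\LL$. We view $\overline{\Hp}$ as the one-point compactification (topologically a closed disk, with boundary $\R\cup\{\infty\}$) and define
$$
\overline{W}_\LL^{-1}(w) := \begin{cases} (\chi_\LL(w),\eta_\LL(w)) & w\in\Hp,\\ (\chi_\EE(t),\eta_\EE(t)) & w=t\in R,\\ (x,1) & w=x\in\Sn,\\ \big(\tfrac12+\int t\,d\mu(t),\,0\big) & w=\infty, \end{cases}
$$
which agrees with $W_\LL^{-1}$ on $\Hp$ by construction.

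The substance of the proof is continuity of this piecewise definition on $\dv\Hp=\R\cup\{\infty\}$. At $t\in R$ this is the defining property of $(\chi_\EE,\eta_\EE)$ as the unique continuous extension of $(\chi_\LL,\eta_\LL)$ to $R$, established in \cite{Duse14a}. At $x\in\Sn$, sequences $w_n\in\Hp$ with $w_n\to x$ satisfy $W_\LL^{-1}(w_n)\to(x,1)$ precisely by the hypothesis $\dv\LL(x)=\{(x,1)\}$; sequences entirely in $\Sn$ are trivial; and sequences $t_n\in R$ with $t_n\to x\in\Sn$ (possible only when $x$ lies in $\dv\Sn$) are controlled by Propositions \ref{Edge1}--\ref{Edge2}, which give $\lim_{t\to x,\, t\in R}(\chi_\EE(t),\eta_\EE(t))=(x,1)$ at such transition points. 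Continuity at $\infty$ follows from the limit $W_\LL^{-1}(w_n)\to(\tfrac12+\int t\,d\mu(t),0)$ as $|w_n|\to\infty$ recorded in the introduction, together with the analogous asymptotics of the edge formulas for large $|t|$.

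Next we verify that $\overline{W}_\LL^{-1}$ is a bijection onto $\overline{\LL}$. Injectivity on $\Hp$ is the original homeomorphism; injectivity of the edge parametrization on $R$ is from \cite{Duse14a}; and $x\mapsto(x,1)$ is plainly injective on $\Sn$. The images of the four pieces are pairwise disjoint: $\LL$ is open and hence disjoint from $\dv\LL$; the edge image $\EE$ is disjoint from $\Sn\times\{1\}$ because $\eta_\EE(t)<1$ on $R_\mu\cup R_{\lambda-\mu}\cup R_1\cup R_2$, while on $R_0$ the image $(t,1)$ has first coordinate in $R_0\subset R$, disjoint from $\Sn$; and the point $(\tfrac12+\int t\,d\mu(t),0)$ has $\eta=0$ and so is distinct from the others. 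For surjectivity we invoke the decomposition (\ref{BoundDecomp}): the hypothesis makes every point of $\Sn$ generic, so $\Sgen=\Sn$ and $\dv\LL_{\text{sing}}=\emptyset$, giving $\overline{\LL}=\LL\cup\EE\cup(\Sn\times\{1\})\cup\{(\tfrac12+\int t\,d\mu(t),0)\}$, which is exactly the image.

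Finally, $\overline{\Hp}$ is compact Hausdorff and $\overline{\LL}\subset\mathcal{P}$ is Hausdorff, so the continuous bijection $\overline{W}_\LL^{-1}$ is closed, hence a homeomorphism; its inverse is the desired $\overline{W}_\LL$. The main obstacle is the continuity verification at the points of $\dv\Sn$, where the piecewise definition must match up between sequences coming from $R$ and sequences coming from $\Sn$; this matching is exactly what Propositions \ref{Edge1}--\ref{Edge2} provide, and without the theorem's genericity hypothesis (which eliminates singular points and thereby ensures that $\dv\LL\setminus\EE$ projects cleanly onto $\Sn\times\{1\}$) the extension would fail to be either well-defined at such transitions or continuous there.
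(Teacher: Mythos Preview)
Your overall framework---define the extension piecewise, check continuity, check bijectivity, apply compact-to-Hausdorff---is sound and matches the paper's structure. The genuine gap is in how you handle sequences $t_n\in R$ with $t_n\to x\in\dv\Sn$. You invoke Propositions~\ref{Edge1}--\ref{Edge2} to conclude $\lim_{t\to x,\,t\in R}(\chi_\EE(t),\eta_\EE(t))=(x,1)$, but those propositions carry \emph{additional hypotheses} on $f$ that are not assumed in the theorem: both require $x\in\Siso$ (isolated boundary points only), and on top of that Proposition~\ref{Edge1} needs $x\in\mathscr{L}_f$, $|\mathcal{H}f(x)|<\infty$, and a divergence condition on $\int f(t)/(x-t)^2\,dt$, while Proposition~\ref{Edge2} needs strict inequalities on $f_L^\pm(x),f_R^\pm(x)$. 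None of this is implied by the bare hypothesis $\dv\LL(x)=\{(x,1)\}$ for all $x\in\Sn$.

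The paper closes this gap by deriving the edge limit \emph{directly} from the hypothesis via an approximation argument, which is the missing idea: given $u_n\in R$ with $u_n\to x$, for each $n$ choose $v_n>0$ small enough that $|(\chi_\LL(u_n,v_n),\eta_\LL(u_n,v_n))-(\chi_\EE(u_n),\eta_\EE(u_n))|<\eps$ (possible since $u_n\in R$ means $W_\LL^{-1}(u_n+iv)\to(\chi_\EE(u_n),\eta_\EE(u_n))$ as $v\to 0^+$). Then $w_n=u_n+iv_n\to x$, so the theorem's hypothesis gives $(\chi_\LL(w_n),\eta_\LL(w_n))\to(x,1)$, and the triangle inequality forces $(\chi_\EE(u_n),\eta_\EE(u_n))\to(x,1)$. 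This works for every $x\in\dv\Sn$ with no further conditions on $f$. Replace your appeal to Propositions~\ref{Edge1}--\ref{Edge2} with this argument and the proof goes through.
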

\begin{proof}
Recall Theorem 2.3 in \cite{Duse14a}, which states that the map $W_{\EE}^{-1}: R\to \EE$ is a bijective real analytic parametrization of $\EE\subset \dv \LL$ such that $W_{\EE}^{-1}(R)=\EE$ and $W_{\EE}^{-1}(t)=(\chi_{\EE}(t),\eta_{\EE}(t))$. Moreover, for every $\{w_n\}_n\subset \Hp$ such that $w_n\to t$ as $n\to \infty$, $\lim_{n\to \infty}(\chi_{\LL}(w_n),\eta_{\LL}(w_n))=(\chi_{\EE}(t),\eta_{\EE}(t))$. Recall that $R$ is an open set. Let $x\in \dv R$. Fix $\eps>0$ and let $\{w_n\}_n$ be any sequence such that $u_n=\text{Re}[w_n]\in R$ for all $n$. By choosing $\text{Im}[w_n]=v_n$ sufficiently small, we can assume that $\vert (\chi_{\LL}(w_n)-\chi_{\EE}(u_n),\eta_{\LL}(w_n)-\eta_{\EE}(u_n))\vert <\eps$ for all $n>N$ say. Since $\eps>0$ was arbitrary, the assumption that $\dv\mathcal{L}(x)=\{(x,1)\}$ implies that $\lim_{\substack{t\to x\\ t\in R}}(\chi_{\EE}(t),\eta_{\EE}(t))=(x,1)$. It follows that $\dv \LL=\EE\bigcup (\Sn\times \{1\})$. Moreover, the map $W_{\dv \LL}^{-1}:\R\to \dv \LL$ defined through
\begin{align*}
W_{\dv \LL}^{-1}(t)=\left\{
 \begin{array}{ll}
   W_{\EE}^{-1}(t)& \text{if  $t\in R$} \\
   (t,1) & \text{if $t\in \Sn$} 
 \end{array} \right.
\end{align*}
is continuous and injective. Hence $\dv \LL$ is a simple curve. This implies that the homeomorphism $W_{\LL}:\LL\to \Hp $ extends to a homeomorphism $\overline{W}_{\LL}:\overline{\LL}\to \overline{\Hp} $.
\end{proof}

\subsection{Generic Points of $\Siso$ and The Edge $\mathcal{E}$}
In this section we will consider the limits $\lim_{t\rightarrow x}\chi_{\mathcal{E}}(t)$ and $\lim_{t\rightarrow x}\eta_{\mathcal{E}}(t)$ when $x\in \Siso$ and $x$ is a regular point. In particular we want to show that in many cases $\lim_{t\rightarrow x}(\chi_{\mathcal{E}}(t),\eta_{\mathcal{E}}(t))=(x,1)$ and that $\dv\mathcal{L}(x)=\{(x,1)\}$. Recall that the parametrization (\ref{eqchiEEetaEE}) of the edge $\mathcal{E}$ is given by
\begin{align*}
\chi_{\mathcal{E}}(t)&=t+\frac{1-e^{-C(t)}}{C'(t)}\\
\eta_{\mathcal{E}}(t)&=1+\frac{e^{C(t)}+e^{-C(t)}-2}{C'(t)}
\end{align*}
for $t\in R_{\mu}$,  where
\begin{align*}
C(t)=\int_{\R}\frac{f(y)dy}{t-y}.
\end{align*}
\begin{Prop}
\label{Edge1}
Assume that $x\in \Siso$ and that $x$ belongs to the Lebesgue set of $f$. Furthermore assume that $ \mathcal{H}f(x)$ exists, $\vert \mathcal{H}f(x)\vert<+\infty$ and that $\int_{\R}\frac{f(y)dy}{(x-y)^2}=+\infty$ if $x\in \dv\mathcal{S}_R^0\cup \dv\mathcal{S}_L^0$ and $\int_{\R}\frac{(1-f(y))dy}{(x-y)^2}=+\infty$ if $x\in \dv\mathcal{S}_R^1\cup \dv\mathcal{S}_L^1$. Then
\begin{align*}
\lim_{\substack{t\rightarrow x\\ t\in R}}(\chi_{\mathcal{E}}(t),\eta_{\mathcal{E}}(t))&=(x,1)
\end{align*}
and
\begin{align*}
\dv \LL(x)=\{(x,1)\}.
\end{align*}
In particular, this holds if $f\in\Hc$.
\end{Prop}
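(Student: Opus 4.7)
The proof splits into the edge limit and the full boundary statement. I treat $x\in\dv\mathcal{S}_R^0$ as the representative case; the remaining three subclasses of $\Siso$ follow by the swaps $f\leftrightarrow 1-f$ and left$\leftrightarrow$right, applied to the corresponding branch of the parametrization \eqref{eqchiEEetaEE}.

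\textbf{Edge limit.} Pick $\delta>0$ with $(x,x+\delta)\subset\R\setminus\supp(\mu)$. Every $t\in(x,x+\delta)$ lies in $R_\mu\cup R_0$, and in both cells $\chi_\EE(t)=t+(1-e^{-C(t)})/C'(t)$ and $\eta_\EE(t)=1+(e^{C(t)}+e^{-C(t)}-2)/C'(t)$. Since $f\equiv 0$ on $(x,x+\delta)$, both $C$ and $C'$ are smooth on this interval. Monotone convergence on $\{y<x\}$ combined with dominated convergence on $\{y\geq x+\delta\}$ gives $\lim_{t\downarrow x}C(t)=\int f(y)(x-y)^{-1}\,dy$, and the one-sided vanishing of $f$ near $x$ identifies this ordinary limit with $\pi\mathcal{H}f(x)$, which is finite by assumption. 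Fatou applied to the nonnegative integrand of $|C'(t)|=\int f(y)(t-y)^{-2}\,dy$, together with the divergence hypothesis, forces $|C'(t)|\to+\infty$. Two bounded numerators divided by an unbounded denominator give $(\chi_\EE(t),\eta_\EE(t))\to(x,1)$.

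\textbf{Boundary behavior.} The Lebesgue-point assumption combined with $f\equiv 0$ on $(x,x+\delta)$ forces $f(x)=0$ in the strictly-defined representative. Together with $|\mathcal{H}f(x)|<\infty$ (which yields non-tangential boundedness of $H_vf$ near $x$) and the divergence assumption, this matches the hypotheses of Proposition \ref{RegPoint1}, so $x$ is regular, and Lemma \ref{lemGeneric} disposes of every non-tangential sequence. For tangential sequences I adapt the trapping argument of Lemma \ref{lemUniformconv} to the asymmetry $(x-\delta,x)\subset\Sn$, $(x,x+\delta)\subset R_\mu$: passing to a subsequence whose real parts lie eventually on one side of $x$, I treat the two cases in parallel. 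On the $\Sn$-side I use density of the generic set $G$ in $\Sn^{\circ}$ from Theorem \ref{GenericThm} to pick $\{l_n\}\subset G\cap(x-\delta,x)$ with $l_n\uparrow x$, along which $(\chi_\LL(l_n,v),\eta_\LL(l_n,v))\to(l_n,1)$ uniformly in $n$; Lemma \ref{lemUniformconv} applies verbatim. On the $R$-side I take $\{r_n\}\subset(x,x+\delta)$ with $r_n\downarrow x$: since $r_n\in R_\mu$, the boundary value of $W_\LL^{-1}$ at $r_n$ is $(\chi_\EE(r_n),\eta_\EE(r_n))\to(x,1)$ by the edge limit. The image under $W_\LL^{-1}$ of the truncated cone $X_n^{(k)}=\{(u,v)\in\Hp:x<u<r_n,\;0<v<k(u-x)\}$ is then bounded by three arcs: the non-tangential image (close to $(x,1)$ by regularity of $x$), the edge arc $\{(\chi_\EE(t),\eta_\EE(t)):t\in[x,r_n]\}$ (close by the edge limit), and the vertical image $\{W_\LL^{-1}(r_n+iv):v\in(0,k(r_n-x))\}$, and the right tangential subsequence is trapped inside this shrinking region.

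\textbf{Main obstacle.} The delicate step is uniform smallness of the vertical arc at $r_n$: both endpoints converge to $(x,1)$, but controlling the interior requires direct Poisson/Hilbert estimates of $(\chi_\LL(r_n,v),\eta_\LL(r_n,v))$ in the spirit of Lemmas \ref{Ineq} and \ref{Ineq2}, exploiting that $f$ vanishes in a fixed neighborhood of $r_n\in R_\mu$, so that $\pi P_vf(r_n)/v$ grows like $\int f(y)(r_n-y)^{-2}\,dy$ (unboundedly as $r_n\downarrow x$) while $H_vf(r_n)$ stays bounded by $|\mathcal{H}f(x)|+o(1)$. Finally, the assertion for $f\in\Hc$ is immediate: H\"older regularity on each $\bar I_k$ makes $\mathcal{H}f$ locally H\"older and hence finite, every $x\in\dv I_k$ is a Lebesgue point by continuity of $f$, and at an isolated boundary point the divergence condition is supplied by the one-sided positive limit of $f$ (respectively of $1-f$), according to the subclass of $\Siso$.
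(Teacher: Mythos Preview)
Your edge-limit computation is correct and matches the paper's verbatim: dominated convergence for $C(t)$, Fatou for $|C'(t)|$.

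For the boundary statement the paper takes a more direct route on the $R$-side and thereby bypasses your ``main obstacle'' entirely. Since $f\equiv 0$ on $(x,x+2\delta)$, for \emph{any} sequence $w_n=u_n+iv_n\to x$ with $u_n\ge x$ (tangential or not) one has the elementary domination $\frac{|u_n-t|}{(u_n-t)^2+v_n^2}\le \frac{1}{|x-t|}$ for $t\le x$, which together with $\int f/|x-t|\,dt<\infty$ bounds $|H_{v_n}f(u_n)|$ uniformly, while $\int f(t)/((u_n-t)^2+v_n^2)\,dt\ge \min\{\int f/2(u_n-t)^2,\int f/2v_n^2\}\to\infty$. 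Plugging into \eqref{eqchin}--\eqref{eqetan} gives $(\chi_\LL(w_n),\eta_\LL(w_n))\to(x,1)$ directly. Your vertical-arc estimate at $r_n$ is essentially this same bound, but the paper applies it to the full sequence rather than to the walls of a trapping region, so no edge arc, no truncated cone, and no ``obstacle'' ever appears.

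On the $\Sn$-side your outline agrees with the paper (which defers to Proposition~\ref{HoldeReg}), but your step ``$\{l_n\}\subset G$ hence uniform in $n$'' is too quick: membership in $G$ gives genericity at each $l_n$ separately, not uniformity across $n$. The uniformity needed for Lemma~\ref{lemUniformconv} comes from the specific construction in Lemma~\ref{Generic1}, which selects $l_n$ with $\sup_n m_{\Hil f}(l_n)<\infty$ via John--Nirenberg and uses $x\in\mathscr{L}_{m_{\Hil f}}$ (or $x\in\mathscr{L}_{\Hil f}$ as in Proposition~\ref{HoldeReg}). The paper glosses over this too; you should at least note that the Dini condition $\int f/|x-t|\,dt<\infty$ you have established (via Proposition~\ref{DiniConv}) is what makes the analogous construction possible.

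One small slip in your $\Hc$ paragraph: for $f\in\Hc$ the density need not be continuous at $\partial I_k$, so ``every $x\in\partial I_k$ is a Lebesgue point'' is false in general; the point is rather that whenever $x\in\Siso$ \emph{is} a Lebesgue point, local H\"older continuity forces the divergence condition (via the one-sided positive limit of $f$ or $1-f$) and finiteness of $\mathcal{H}f(x)$.
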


\begin{proof}
We consider the case where $x\in\dv\mathcal{S}_R^0$, in which case $f(x)=0$. The other cases follow similarly. Since $x\in\dv S_r^0$, there exists a $\delta>0$, such that $f\equiv 0$ for all $x\in[x,x+\delta]$. By definition,
\begin{align*}
\mathcal{H}f(x)&=\lim_{\eps\to 0^+}\int_{-\infty}^{x-\eps}\frac{f(t)dt}{x-t}+\int_{x+\delta}^{+\infty}\frac{f(t)dt}{x-t}\\
&=\int_{-\infty}^{x}\frac{f(t)dt}{\vert x-t\vert }-\int_{x+\delta}^{+\infty}\frac{f(t)dt}{\vert x-t\vert}.
\end{align*}
Since $\int_{x+\delta}^{+\infty}\frac{f(t)dt}{\vert x-t\vert}<+\infty$ and $\vert \mathcal{H}f(x)\vert<+\infty$ it follows that $\int_{\R}\frac{f(t)dt}{\vert x-t\vert}<+\infty$. By Lebesgue's dominated convergence theorem it follows that $\lim_{t\to x^+}C(t)=\mathcal{H}f(x).$ 
By assumption, $\int_{\R}\frac{f(y)dy}{(x-y)^2}=+\infty$. Hence, by Fatou's lemma  
\begin{align*}
\lim_{t\rightarrow x^+}-C'(t)&=\lim_{t\rightarrow x^+}\int_{\R}\frac{f(y)dy}{(t-y)^2}=+\infty.
\end{align*}
Hence,
\begin{align*}
&\lim_{t\rightarrow x^+}\chi_{\mathcal{E}}(t)=\lim_{t\rightarrow x^+}t+\frac{1-e^{-C(t)}}{C'(t)}=x\\
&\lim_{t\rightarrow x^+}\eta_{\mathcal{E}}(t)=\lim_{t\rightarrow x}1+\frac{e^{C(t)}+e^{-C(t)}-2}{C'(t)}=1.
\end{align*}
Now, assume that $\{w_n\}_n\in\Hp$ is a sequence such that $\lim_{n\to\infty}w_n=x$, and such that $\text{Re}[w_n]=u_n\geq x$. Since $x\in \dv\mathcal{S}_R^0$ by assumption, there exists a $\delta>0$, such that $f(t)=0$ for $t\in(x,x+2\delta)$ and consequently, since $\vert \mathcal{H}f(x)\vert<+\infty$, we have $\int_{\R}\frac{f(t)dt}{\vert x-t\vert}<+\infty$. We may assume that $u_n\leq x+\delta$ for all $n$. Using that
\begin{align*}
\frac{\vert u_n-t\vert}{(u_n-t)^2+v_n^2}\leq \frac{1}{\vert x-t\vert}
\end{align*}
for $t\in(-\infty,x]$ which implies that
\begin{align*}
\int_{-\infty}^{x}\frac{\vert u_n-t\vert f(t)dt}{(u_n-t)^2+v_n^2}\leq \int_{-\infty}^{x}\frac{f(t)dt}{\vert x-t\vert},
\end{align*}
and the fact that
\begin{align*}
\sup_{n>0}\bigg\{\int_{x+2\delta}^{\infty}\frac{\vert u_n-t\vert f(t)dt}{(u_n-t)^2+v_n^2}\bigg\}=C<+\infty,
\end{align*}
we conclude that $\pi \vert H_{v_n}f(u_n)\vert\leq  \pi \vert \mathcal{H}f(x)\vert+C$. A similar estimate as in Lemma \ref{Ineq2} gives
\begin{align*}
\vert (\chi_\LL(w_n)-u_n,\eta_\LL(w_n)-1)\vert\leq \frac{2\sqrt{5}(1+e^{\pi\vert \mathcal{H}f(x)\vert+C})}{\int_{\R}\frac{f(t)dt}{(u_n-t)^2+v_n^2}}
\end{align*}
Since 
\begin{align*}
\int_{\R}\frac{f(t)dt}{(u_n-t)^2+v_n^2}\geq \min\bigg\{\int_{\R}\frac{f(t)dt}{2(u_n-t)^2},\int_{\R}\frac{f(t)dt}{2v_n^2}\bigg\},
\end{align*}
and
\begin{align*}
\lim_{n\to \infty}\min\bigg\{\int_{\R}\frac{f(t)dt}{2(u_n-t)^2},\int_{\R}\frac{f(t)dt}{2v_n^2}\bigg\}=+\infty,
\end{align*}
$\lim_{n\to\infty}(\chi_\LL(w_n),\eta_\LL(w_n))=(x,1)$. Now assume that $\{w_n\}_n\subset\Hp$ such that $\lim_{n\to\infty}w_n=x$ and $\text{Re}[w_n]\leq x$. By Lemma \ref{DiniConv}, $H_vf(u)$ is non-tangentially bounded at $x$. An identical argument as in Proposition \ref{HoldeReg} shows that $\lim_{n\to\infty}(\chi(w_n),\eta(w_n))=(x,1)$. Hence $\dv\mathcal{L}(x)=\{(x,1)\}$. 
\end{proof}
\begin{Prop}
\label{Edge2}
Assume that $x\in \Siso$ and that $x\notin \mathscr{L}_f$. Furthermore assume that: 
\begin{itemize}
\item
If $x\in\dv \mathcal{S}_R^0$ then $0<f_L^-(x)\leq f_L^+(x)<1$.
\item
If $x\in\dv\mathcal{S}_L^0$ then $0<f_R^-(x)\leq f_R^+(x)<1$.
\item
If $x\in\dv\mathcal{S}_R^1$ then $0<f_L^-(x)\leq f_L^+(x)<1$.
\item
If $x\in\dv\mathcal{S}_L^1$ then $0<f_R^-(x)\leq f_R^+(x)<1$.
\end{itemize}
Then, 
\begin{align*}
\lim_{\substack{t\rightarrow x\\ t\in R}}(\chi_{\mathcal{E}}(t),\eta_{\mathcal{E}}(t))=(x,1).
\end{align*}
Moreover, if $x\in \dv\mathcal{S}_R^0\cup\dv\mathcal{S}_R^1$, then for every sequence $\{w_n\}_n\in\Hp$, such that  $\lim_{n\to\infty}w_n=x$ and $\text{Re}[w_n]\geq x$
\begin{align*}
\lim_{n\to \infty}(\chi_\LL(w_n),\eta_\LL(w_n))=(x,1).
\end{align*}
If on the other-hand $x\in \dv\mathcal{S}_L^0\cup\dv\mathcal{S}_L^1$, then for every sequence $\{w_n\}_n\in\Hp$, such that  $\lim_{n\to\infty}w_n=x$ and $\text{Re}[w_n]\leq x$
\begin{align*}
\lim_{n\to \infty}(\chi_\LL(w_n),\eta_\LL(w_n))=(x,1).
\end{align*}
Finally, if we assume that $f$ satisfies the right-sided or left-sided versions of the conditions of Proposition \ref{GenericProp}, then $\dv \LL(x)=\{(x,1)\}$.
In particular this holds when $f\in \Hc$.
\end{Prop}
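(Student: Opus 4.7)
My plan is to treat $x\in\dv\mathcal{S}_R^0$ in detail; the cases $\dv\mathcal{S}_L^0$, $\dv\mathcal{S}_R^1$, $\dv\mathcal{S}_L^1$ reduce to this one by a left-right reflection and/or the substitution $f\mapsto 1-f$ (which swaps $R_\mu$ with $R_{\lambda-\mu}$ and converts the first row of (\ref{eqchiEEetaEE}) into the second after reinterpreting the associated reduced Cauchy integral $C_I$). By definition of $\dv\mathcal{S}_R^0$ there is $\delta>0$ with $f\equiv 0$ on $(x,x+\delta)\subset R_\mu$; write $a:=f_L^+(x)<1$ and $b:=f_L^-(x)>0$. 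For the edge limit (first assertion) I take $t=x+\epsilon\in(x,x+\delta)$ and split $C(t)$ and $-C'(t)$ into pieces over $(-\infty,x)$ and $(x+\delta,+\infty)$. The right-hand pieces are $O(1)$ uniformly in $\epsilon$. For the left-hand pieces, after the substitution $s=x-y$ and with $F(s):=\int_0^s f(x-r)\,dr$, integration by parts together with $F(s)\le(a+o(1))s$ will give $C(t)\le -a\log\epsilon+O(1)$, so $e^{\pm C(t)}=O(\epsilon^{\mp a})$, while $F(s)\ge(b-o(1))s$ will give $-C'(t)\ge(b/\epsilon)(1+o(1))$. Substituting into (\ref{eqchiEEetaEE}) yields $|\chi_\EE(t)-t|+|\eta_\EE(t)-1|=O(\epsilon^{1-a})\to 0$, so $(\chi_\EE(t),\eta_\EE(t))\to(x,1)$.

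For the second assertion, given $w_n=u_n+iv_n\to x$ with $u_n\ge x$, I set $\epsilon_n=u_n-x\ge 0$, $h_n=\epsilon_n+v_n$, and apply the same integration-by-parts scheme to the Poisson and conjugate Poisson kernels $v_n/((\epsilon_n+s)^2+v_n^2)$ and $(\epsilon_n+s)/((\epsilon_n+s)^2+v_n^2)$. This will yield
\[
\pi H_{v_n}f(u_n)\le a\log(1/h_n)+O(1),\qquad \pi P_{v_n}f(u_n)\asymp v_n/h_n,
\]
the upper bounds using $a<1$ and the matching lower bound using $b>0$. Since $\pi P_{v_n}f(u_n)\to 0$, $\sin(\pi P_{v_n}f(u_n))\sim \pi P_{v_n}f(u_n)$, and substitution into (\ref{eqchin})--(\ref{eqetan}) gives
\[
|\chi_\LL(w_n)-u_n|+|\eta_\LL(w_n)-1|=O(h_n^{1-a})\to 0,
\]
so $(\chi_\LL(w_n),\eta_\LL(w_n))\to(x,1)$. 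The left-sided claim for $x\in\dv\mathcal{S}_L^0\cup\dv\mathcal{S}_L^1$ follows by symmetry.

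For the third assertion, $f\equiv 0$ on $(x,x+\delta)$ forces $f_R^\pm(x)=0$ in Definition \ref{MaxSup}, so $c(x)=f_L^+(x)<1$ and $b(x)=f_L^-(x)>0$, and Proposition \ref{NLpointReg} shows that $x$ is regular. Under the left-sided version of the hypotheses of Proposition \ref{GenericProp} (only a sequence $\{l_n\}\subset G$ on the left is required, since the right side is already handled by the previous step), I will adapt the proof of Lemma \ref{lemUniformconv}: the enclosing region $T_n$ is now bounded on its right by $\{(\chi_\EE(x+s),\eta_\EE(x+s)):0<s\le\delta'\}$, whose endpoint as $s\to 0^+$ is $(x,1)$ by the first step, and the Hausdorff-distance argument carries through to give $\dv\LL(x)=\{(x,1)\}$. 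If moreover $f\in\Hc$, then the support interval abutting $x$ extends $f$ Hölder-continuously up to $x$ with limit $f(x^-)\in(0,1)$ (the extreme values being excluded by $x\in\Sn$), so $f_L^\pm(x)=f(x^-)\in(0,1)$, and Hölder continuity of $\mathcal{H}f$ on $\overline{I_k}$ supplies the hypotheses of Proposition \ref{GenericProp} along any interior sequence in $I_k$.

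The main obstacle will be the second step: Lemma \ref{Ineq2}'s Hilbert bound is not directly applicable for $u_n\in(x,x+\delta)$ because $\Delta m_f(u_n)$ there can tend to $1$ (the density jumps from nonzero mean on the left to $0$ on the right). The one-sided integration by parts outlined above, which simultaneously exploits $f\equiv 0$ on $(x,x+\delta)$ and the control $f_L^+(x)<1$ on the left, is the essential new estimate and the technical heart of the proof.
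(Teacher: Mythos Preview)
Your approach is essentially the same as the paper's: both treat $x\in\dv\mathcal{S}_R^0$, use integration by parts against $F(s)=\int_0^s f(x-r)\,dr$ together with $F(s)\le (f_L^+(x)+\eps)s$ and $F(s)\ge (f_L^-(x)-\eps)s$ to obtain $|C(t)|\le -p^+\log(t-x)+O(1)$ and $|C'(t)|\ge p^-/(t-x)+O(1)$, and then feed these into (\ref{eqchiEEetaEE}). For sequences with $u_n\ge x$ the paper simply says ``similarly'' and records the bound $|(\chi_\LL-t,\eta_\LL-1)|\le d((t-x)+(t-x)^{1-p^+}e^{C'})/(p^-+O(t-x))$, while your version with $h_n=\epsilon_n+v_n$ makes the uniformity in $v$ more explicit; both lead to the same $h_n^{1-a}$ decay. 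The final clause is also handled the same way, by deferring to the left-sided trapping argument of Proposition~\ref{GenericProp}/Lemma~\ref{lemUniformconv}.

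One small glitch to fix: your assertion ``$\pi P_{v_n}f(u_n)\to 0$'' is only valid for tangential sequences (i.e.\ when $v_n/h_n\to 0$). If $\epsilon_n=O(v_n)$ then $\pi P_{v_n}f(u_n)$ stays in a compact subset of $(0,\pi)$ (the left half of the Poisson kernel sees mass $\ge c\,f_L^-(x)>0$ and $\le c\,f_L^+(x)<1$, the right half sees $0$). The correct statement, which is what you actually need, is $\sin(\pi P_{v_n}f(u_n))\gtrsim v_n/h_n$: this follows because $\pi P_{v_n}f(u_n)\asymp v_n/h_n$ gives the lower bound when $v_n/h_n$ is small, while when $v_n/h_n\asymp 1$ the Poisson integral is bounded away from both $0$ and $\pi$ so $\sin(\pi P)$ is bounded below by a constant. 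With that correction your estimate $|\chi_\LL-u_n|+|\eta_\LL-1|=O(h_n^{1-a})$ goes through unchanged.
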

\begin{proof}
Consider the case where $x\in\dv\mathcal{S}_R^0$. The other cases follow similarly. Then, by assumption there exists a $\delta>0$ such that $f(t)=0$ for $t\in(x,x+2\delta)$. Moreover, since $0<f_L^-(x)\leq f_L^+(x)<1$ by assumption, there exists an $\eps>0$ such that
\begin{align*}
M_Lf(x,h)<f_L^{+}(x)+\eps<1 
\end{align*}
and
\begin{align*}
M_Lf(x,h)>f_L^{-}(x)-\eps>0 
\end{align*}
for all $0<h<H$, for some $H>0$. Let $p^+=f_L^{+}(x)+\eps$ and $p^-=f_L^{-}(x)-\eps$. Then
\begin{align*}
\vert C(t)\vert &= \int_{-\infty}^{x}\frac{f(y)dy}{t-y}+\int_{x+2\delta}^{+\infty}\frac{f(y)dy}{y-t}\\
&\leq  \int_{-\infty}^{x}\frac{f(y)dy}{t-y}+c_1
\end{align*}
for all $t\in(x,x+\delta)$, where $c_1>0$ is a constant. An integration by parts gives
\begin{align*}
\int_{-\infty}^{x}\frac{f(y)dy}{t-y}&=\int_{-\infty}^{x}\frac{1}{(t-y)^2}\int_{y}^{x}f(s)dsdy\\
&=\int_{-\infty}^{x}\frac{(x-y)}{(t-y)^2}M_Lf(x,x-y)dy\\
&\leq \int_{-\infty}^{x-H}\frac{(x-y)}{(x-y)(t-y)^2}dy+\int_{x-H}^{x}\frac{(x-y)}{(t-y)^2}dy\\
&\leq \int_{x-H}^{x}\frac{1}{(t-y)}dy+c_2\\
&\leq -p^+\log(t-x)+c_3,
\end{align*}
where $c_2$ and $c_3$ are positive constants independent of $t$, and where we have used the inequality $M_Lf(x,x-y)\leq \frac{1}{x-y}$ for all $y<x$ from Lemma \ref{DeltaMean}. Thus
\begin{align*}
\vert C(t)\vert \leq -p^+\log(t-x)+c_1+c_3.
\end{align*}
We now estimate $C'(t)$. An integration by parts yields 
\begin{align*}
\vert C'(t)\vert &\geq \int_{-\infty}^{x}\frac{f(y)dy}{(t-y)^2}
=\bigg[\frac{-1}{(t-y)^2}\int_{y}^{x}f(s)ds\bigg]_{-\infty}^{x}+\int_{-\infty}^{x}\frac{2}{(t-y)^3}\int_{y}^{x}f(s)dsdy\\
&=2\int_{-\infty}^{x}\frac{(x-y)}{(t-y)^3}M_Lf(x,x-y)dy\geq 2p^-\int_{x-H}^{x}\frac{(x-y)}{(t-y)^3}dy
=p^-\bigg[\frac{t+x-2y}{(t-y)^2}\bigg]_{x-H}^{x}\\
&\geq\frac{p^-}{t-x}-p^-\frac{t-x+2H}{(t-x+H)^2}.
\end{align*}
Note that $\displaystyle\frac{p^-}{t-x}-p^-\frac{t-x+2H}{(t-x+H)^2}>0$ whenever $\vert t-x\vert$ sufficiently small. This implies that
\begin{align*}
\frac{1+e^{\vert C(t)\vert}}{\vert C'(t)\vert}\leq \frac{1+(t-x)^{-p^+}e^{C'}}{\displaystyle\frac{p^-}{t-x}-p^-\frac{t-x+2H}{(t-x+H)^2}}=\frac{1}{p^-+O((t-x))}((t-x)+(t-x)^{1-p^+}e^{C'})\to 0
\end{align*}
as $t\rightarrow x^+$. This proves that $\lim_{t\to x^+}(\chi_{\mathcal{E}}(t),\eta_{\mathcal{E}}(t))=(x,1)$. Similarly, one can show that
\begin{align*}
\vert (\chi_{\LL}(t,v)-t,\eta_\LL(t,v)-1)\vert \leq \frac{d}{p^-+O((t-x))}((t-x)+(t-x)^{1-p^+}e^{C'})
\end{align*}
for some constant $d$ independent of $t$. This shows that
\begin{align*}
\lim_{n\to \infty}(\chi_\LL(w_n),\eta_\LL(w_n))=(x,1)
\end{align*}
for every sequence $\{w_n\}_n\in\Hp$, such that  $\lim_{n\to\infty}w_n=x$ and $\text{Re}[w_n]\geq x$.
The rest of the proof is analogous to the proof of Proposition \ref{GenericProp}.
\end{proof}
We conclude this section with the following conjecture:
\begin{Con}
Assume that $x\in \Siso$ and that $x$ is a regular point. Then $(x,1)\in\overline{\mathcal{E}}$ and $\dv\mathcal{L}(x)=\{(x,1)\}$. 
\end{Con}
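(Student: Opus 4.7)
By the symmetry among the four isolated cases it suffices to fix attention on $x \in \dv\mathcal{S}_R^0$, where $(x,x+\delta) \subset \R \setminus \supp(\mu) \subset R$ and $(x-\delta,x) \subset \Sn^{\circ}$; the remaining three cases follow by replacing $f$ with $1-f$ and/or reflecting across $x$. The plan has two parts: first show that $(\chi_{\EE}(t),\eta_{\EE}(t)) \to (x,1)$ as $t \downarrow x$ along $R$, which yields $(x,1) \in \overline{\EE}$; then upgrade this to $\dv\LL(x) = \{(x,1)\}$ by a two-sided sandwich argument along the lines of Lemma \ref{lemUniformconv}, with the right-hand auxiliary sequence lying on $\EE$ and the left-hand auxiliary sequence lying in the generic set $G \subset \Sn^{\circ}$ from Proposition \ref{Topology1}.

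For the edge limit, since $f \equiv 0$ on $(x,x+\delta)$ the Cauchy transform splits as $C(t) = \int_{-\infty}^x \frac{f(y)\,dy}{t-y} + A(t)$ with $A$ real-analytic at $x$, and I would attack the singular piece via the integration-by-parts identity
\[
\int_{-\infty}^x \frac{f(y)\,dy}{(t-y)^k} = k\int_{-\infty}^x \frac{(x-y)^{k-1}}{(t-y)^k}\,M_L f(x,x-y)\,dy,
\]
already used for $k=1,2$ in Proposition \ref{Edge2}. Regularity of $x$ in the sense of Proposition \ref{NLpointReg} forces $c(x) < 1$ and $b(x) > 0$, in particular $f_L^{-}(x) + f_R^{-}(x) > 0$. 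Since $f_R^{\pm}(x) = 0$ here, this gives $f_L^{-}(x) > 0$, and I would branch into three subcases: (i) $f_L^{+}(x) < 1$, handled by Proposition \ref{Edge2}; (ii) $|\mathcal{H}f(x)| < \infty$ with the second-moment integral divergent, handled by Proposition \ref{Edge1}; (iii) the residual case where $f_L^{+}(x) = 1$ but $f$ vanishes in mean on the right of $x$ in the polynomial sense of Proposition \ref{propRegPoint2}. In each case the integration-by-parts estimate gives $(1 + e^{|C(t)|})/|C'(t)| \to 0$ as $t \downarrow x$, so (\ref{eqchiEEetaEE}) delivers $(\chi_{\EE}(t),\eta_{\EE}(t)) \to (x,1)$.

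For $\dv\LL(x) = \{(x,1)\}$, only tangential sequences $w_n \to x$ need attention, since $x$ is regular. I would build, for each large $n$, a closed curvilinear triangle $T_n$ whose image under $W_{\LL}^{-1}$ traps the tail of the sequence, with three boundary pieces: a segment of $\{\eta=1\}$ from $(x,1)$ to $(\chi_{\EE}(r_n),\eta_{\EE}(r_n))$ for $r_n \in R$, $r_n \downarrow x$, whose smallness is controlled by the edge limit of the previous paragraph; a non-tangential image of a ray at $x$ into $\LL$, which contracts to $(x,1)$ by regularity of $x$; and a piece of $\EE$ starting at $(\chi_{\EE}(r_n),\eta_{\EE}(r_n))$, which is short and uniformly close to $(x,1)$ provided $(\chi_{\EE},\eta_{\EE})$ is continuous on compact subsets of $R$ (which it is, by real-analyticity). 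For tangential sequences approaching from the $\Sn$-side one repeats the construction using instead a sequence $l_n \uparrow x$ of generic points provided by Lemma \ref{Generic1} applied inside $(x-\delta,x) \subset \Sn^{\circ}$; the uniform convergence $(\chi_{\LL}(l_n,v),\eta_{\LL}(l_n,v)) \to (l_n,1)$ then follows from the monotonicity-plus-Dini argument of Theorem \ref{GenericThm} applied to the compact parameter set $\{l_n\}_n \cup \{x\}$.

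The main obstacle is precisely the last step: Theorem \ref{GenericThm} requires controlling $m_{\Hil f}$ and keeping $f$ away from $\{0,1\}$ at every point of the compact parameter set, \emph{including the limit point} $x$, which holds when $x \in G$ but need not hold for an arbitrary regular $x \in \Siso$. One must therefore replace the Dini step by a semicontinuity argument that uses only the regularity of $x$, namely the bounds $v_n e^{\pi|H_{v_n}f(u_n)|} \to 0$ and $\sin(\pi P_{v_n}f(u_n)) \gtrsim 1$ along non-tangential sequences provided by Lemmas \ref{EstHilb1} and \ref{PoissonEst}, and pass to a uniform estimate on the parameter sequence by exploiting that $c(r)$ and $b(r)$ depend upper/lower semicontinuously on $r \in \Sn^{\circ}$. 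Executing this uniform passage in harmonic-analytic terms without assuming $x \in G$ is the delicate step, and is the reason the statement is left as Conjecture rather than Theorem in the paper.
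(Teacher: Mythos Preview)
The statement is a \emph{Conjecture} in the paper, so there is no proof to compare against. Your proposal is a sensible strategy outline and you correctly identify the main obstacle on the $\Sn$-side (the uniform-convergence step needed to push Theorem~\ref{GenericThm} through when $x\notin G$). However, there is a second gap of the same nature that you do not flag, and it already bites in the edge-limit part.

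You write that ``regularity of $x$ in the sense of Proposition~\ref{NLpointReg} forces $c(x)<1$ and $b(x)>0$''. This reverses the implication: Proposition~\ref{NLpointReg} gives \emph{sufficient} conditions for regularity, not necessary ones. Nowhere in the paper is it shown that a regular point must satisfy any concrete structural hypothesis on $f$---neither those of Proposition~\ref{NLpointReg}, nor those of Propositions~\ref{Edge1} or~\ref{Edge2}. Consequently your three-way split (i)--(iii) is not exhaustive. For $x\in\dv\mathcal{S}_R^0$ one has $f_R^{\pm}(x)=0$, but regularity alone does not exclude, for example, $f_L^{-}(x)=0$ with $f_L^{+}(x)=1$, or $|\mathcal{H}f(x)|=+\infty$ with no polynomial lower bound on $M_Lf(x,h)$; none of your subcases covers these, and neither do the paper's edge propositions. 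In short, the difficulty you isolate at the end---that regularity is a limiting statement about $(\chi_{\LL},\eta_{\LL})$ along non-tangential sequences, not a structural statement about $f$---is present symmetrically on the $R$-side as well, and converting the former into the latter is the essential content of the conjecture on \emph{both} sides of $x$. Your last paragraph diagnoses this correctly for the $\Sn$-side; the same diagnosis applies to the edge limit, and that is why the paper stops at a conjecture rather than a theorem.
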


\section{Singular Points}
\subsection{Sufficient Conditions for the Existence of Singular Points of the Non-Trivial Support}
For $x\in \mathcal{S}_{nt}(\mu)$, we recall that $x$ is a singular point of the support of $\mu$ if there exists a non-tangential sequence $\{w_n\}_{n=1}^{+\infty}\subset \Hp$ such that $\lim_{n\rightarrow +\infty}w_n=x$ but $\lim_{n\rightarrow \infty}(\chi_\LL(u_n,v_n),\eta_\LL(u_n,v_n))\neq(x,1)$. Consequently this means that $\dv \LL(x)$ contains more points than $(x,1)$.
For the types of singular points considered in this article, the limit $\lim_{n\rightarrow +\infty}(\chi_\LL(w_n),\eta_\LL(w_n))$ will exist for every sequence $\{w_n\}_n$ that converges non-tangentially to $x$ and be independent of the non-tangential sequence chosen. Let the limit be $(\chi_\G(x),\eta_\G(x))$. 
\begin{Lem}
\label{TangTop}
For every $x\in \Sn^{\circ}$ there exists a sequence $\{w_n\}_{n=1}^{\infty}$ such that
\begin{align*}
\lim_{n\rightarrow \infty}(\chi_\LL(u_n,v_n),\eta_\LL(u_n,v_n))=(x,1).
\end{align*}
In particular this implies that $\{(x,1)\}\subset \dv\LL(x)$ for all $x\in \Sn^{\circ}=(\text{supp}(\mu)\cap \text{supp}(\lambda-\mu))^{\circ}$.
\end{Lem}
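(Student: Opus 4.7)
My plan is to exploit the density result already established in Proposition \ref{Topology1}: the set $G$ of generic points is dense in $\Sn^{\circ}$. Given an arbitrary $x \in \Sn^{\circ}$, I will construct the desired sequence $\{w_n\}$ by first approximating $x$ horizontally along the real axis by a sequence of generic points, and then approaching the real axis vertically above each of these.

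More precisely, first I would invoke Proposition \ref{Topology1} to select a sequence $\{x_n\}_{n \geq 1} \subset G$ with $x_n \to x$. Since every point of $G$ is regular (Proposition \ref{LPointReg}), the vertical approach $v \mapsto (\chi_\LL(x_n, v), \eta_\LL(x_n, v))$ is in particular a non-tangential sequence converging to $x_n$, so
\[
\lim_{v \to 0^+} (\chi_\LL(x_n, v), \eta_\LL(x_n, v)) = (x_n, 1)
\]
for each fixed $n$. Hence for each $n$ I can choose $v_n \in (0, 1/n)$ small enough so that
\[
\bigl| (\chi_\LL(x_n, v_n) - x_n, \ \eta_\LL(x_n, v_n) - 1) \bigr| < \tfrac{1}{n}.
\]
Setting $w_n := x_n + i v_n$ then gives $w_n \to x$ (since $x_n \to x$ and $v_n \to 0$), and the triangle inequality
\[
\bigl| (\chi_\LL(w_n) - x, \ \eta_\LL(w_n) - 1) \bigr|
\leq \bigl| (\chi_\LL(w_n) - x_n, \ \eta_\LL(w_n) - 1) \bigr| + |x_n - x|
< \tfrac{1}{n} + |x_n - x|
\]
yields $(\chi_\LL(w_n), \eta_\LL(w_n)) \to (x, 1)$, which is the desired conclusion and in particular shows $(x,1) \in \dv \LL(x)$.

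There is no real obstacle here once Proposition \ref{Topology1} is in hand; the only subtlety is making sure that the horizontal approximation rate along $G$ and the vertical decay $v_n$ are compatible, which is handled by a standard diagonalization (choosing $v_n$ after $x_n$). The result could alternatively be phrased as saying that the vertical limits at generic points are dense enough in $\Sn^\circ \times \{1\}$ to force $\Sn^\circ \times \{1\} \subset \overline{\LL}$, and indeed this is precisely the geometric content of the lemma.
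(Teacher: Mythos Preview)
Your argument is correct and follows essentially the same route as the paper: both invoke Proposition~\ref{Topology1} to obtain a dense sequence of regular points $x_n \to x$, then use regularity to choose $v_n$ small enough that $(\chi_\LL(x_n,v_n),\eta_\LL(x_n,v_n))$ is within $1/n$ of $(x_n,1)$, and conclude by the triangle inequality. Your version is in fact slightly more careful in explicitly forcing $v_n \in (0,1/n)$ so that $w_n \to x$, which is needed for the conclusion $(x,1)\in\partial\LL(x)$.
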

\begin{proof}
According to Proposition \ref{Topology1} the set of regular points is dense in $x\in \Sn^{\circ}$. Therefore, for every $x$ there exists a sequence $\{u_n\}_{n=1}^{\infty}$ such that $\lim_{n\rightarrow \infty}u_n=x$ and such that $u_n$ is regular for every $n$. Hence, for every $\eps>0$ sufficiently small and each $n$ there exists a $v_n$ such that
\begin{align*}
\vert (\chi_\LL(u_n,v_n),\eta_\LL(u_n,v_n))-(u_n,1)\vert<\frac{\eps}{n}
\end{align*}
This implies that
\begin{align*}
\lim_{n\rightarrow \infty}(\chi_\LL(u_n,v_n),\eta_\LL(u_n,v_n))=(x,1)
\end{align*}
which concludes the proof.
\end{proof}
From this and the fact that $\LL$ is simply connected and $W_\LL$ is a homeomorphism, $(\chi_\G(x),\eta_\G(x))$ has to be connected to the point $(x,1)$. But since for all non-tangentially convergent sequences $\{w_n\}_n$ to $x$, $\lim_{n\rightarrow +\infty}(\chi_\LL(w_n),\eta_\LL(w_n))=(\chi_\G(x),\eta_\G(x))$, we will have to consider tangentially convergent sequences to $x$ in order to determine the whole of $\dv \LL(x)$. We will not attempt to determine $\dv \LL(x)$ in full generality, nor will we attempt a complete classification of all singular points, but contend ourselves with some more restrictive assumptions on the density $f$.

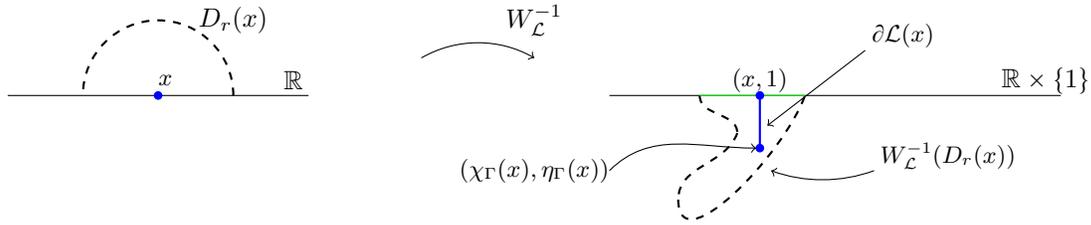
\begin{figure}[H]
\centering
\begin{tikzpicture}

\draw (-10,0) -- (-6,0); 
\draw[dashed,thick] (-7,0) arc (0:180:10 mm);
\draw (-7,1) node {$D_r(x)$};
\draw (-6.2,0.2) node {$\mathbb{R}$};
\filldraw[blue] (-8,0) circle (0.5 mm);
\draw (-7.9,0.2) node {\small$x$};
\draw (-2,0) --(4,0);
\draw (3.8,0.2) node {$\mathbb{R}\times\{1\}$};
\draw[<-] (-3,0.5) arc (60:120:15 mm);
\draw (-3,1) node {$W^{-1}_{\mathcal{L}}$};
\draw[blue,thick] (0,0) --(0,-0.7);
\filldraw[blue] (0,-0.7) circle (0.5mm);
\draw[dashed,thick] (-0.8,0) to[out=270,in=90] (-0.3,-0.5);
\draw[dashed,thick] (-0.3,-0.5) to[in=90,out=250] (-0.7,-1,1);
\draw[dashed,thick] (-0.7,-1,1) to[in=250,out=270]  (0.6,0);
\draw (0,0.2) node {\small$(x,1)$};
\draw[<-] (0.15,-1) arc (250:290:20 mm);
\draw (2.5,-0.8) node {\small$W^{-1}_{\mathcal{L}}(D_r(x))$};
\draw[<-] (0.1,-0.4) -- (1.4,0.6);
\draw (1.9,0.8) node {\small$\partial \mathcal{L}(x)$};
\draw[green] (-0.8,0) -- (0.6,0);
\filldraw[blue] (0,0) circle (0.5 mm);
\draw[->] (-2,-1) to [out=45,in=180] (-0.05,-0.7);
\draw (-3,-1) node {\small$(\chi_{\G}(x),\eta_{\G}(x)$)};
\end{tikzpicture}
\caption{\label{figcontours1}Depiction of a singular point.}
\end{figure}

Recall from the introduction that $\mathcal{S}_{nt}^{sing,I}(\mu)$ are the set of points $x\in \Sn$ such that for some $\delta>0$, $f(t)=\chi_{[x-\delta,x]}(t)+\varphi(t)$,
and such that $x$ belongs to the Lebesgue set of $\varphi$ and $\varphi(x)=0$ and $\vert \mathcal{H}\varphi(x)\vert<+\infty$. 
\begin{Prop}
\label{Sing1}
Assume that  $x\in\mathcal{S}_{nt}^{sing,I}(\mu)$. Then $x$ is a singular point, and
\begin{align}
\label{eqSing1NTangLim}
\lim_{n\rightarrow +\infty}(\chi_\LL(u_n,v_n),\eta_\LL(u_n,v_n))=(x,1-\delta e^{\pi\mathcal{H}\varphi(x)}):=(\chi_\G^{I}(x),\eta_\G^{I}(x))
\end{align}
for every non-tangential sequence $\{u_n+iv_n\}_{n=1}^{\infty}\in \Hp$ such that $\lim_{n\rightarrow +\infty}u_n+iv_n=x$.
\end{Prop}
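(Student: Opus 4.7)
My approach is to decompose $f = g + \varphi$ with $g := \chi_{[x-\delta,x]}$, so that $\pi P_v f = \pi P_v g + \pi P_v \varphi$ and $\pi H_v f = \pi H_v g + \pi H_v \varphi$. Direct integration of the indicator piece yields the closed forms
\begin{align*}
\pi P_v g(u) &= \arctan\tfrac{u-x+\delta}{v} - \arctan\tfrac{u-x}{v}, \\
\pi H_v g(u) &= \tfrac{1}{2}\log\tfrac{(u-x+\delta)^2+v^2}{(u-x)^2+v^2}.
\end{align*}
Given a non-tangential sequence $w_n = u_n + iv_n \to x$, write $u_n - x = k_n v_n$ with $|k_n| \le k$ for some fixed $k>0$; by compactness it suffices to prove the limit along subsequences where $k_n \to k_* \in [-k,k]$. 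Since $\delta > 0$ and $v_n \to 0^+$, one computes
\begin{align*}
\pi P_{v_n} g(u_n) &\to \tfrac{\pi}{2} - \arctan k_*, \\
\pi H_{v_n} g(u_n) &= -\log v_n + \log\delta - \tfrac{1}{2}\log(1+k_n^2) + o(1).
\end{align*}

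Next I control the $\varphi$-contributions. Since $x \in \mathscr{L}_\varphi$ and $\varphi(x) = 0$, the standard non-tangential Fatou theorem (Theorem 3.16, Ch.~II of \cite{Stein}) gives $\pi P_{v_n}\varphi(u_n) \to 0$. The main obstacle is establishing that $\pi H_{v_n}\varphi(u_n) \to \pi \mathcal{H}\varphi(x)$ along non-tangential sequences under only the hypotheses $x \in \mathscr{L}_\varphi$, $\varphi(x) = 0$, and $|\mathcal{H}\varphi(x)| < \infty$. The route I would take is to split the defining integral at radius $2kv_n$ around $x$: the inner part is $o(1)$ by the Lebesgue-point property (crucially using $\varphi(x)=0$, since the kernel is bounded by $C/v_n$ on $|t-x| \le 2kv_n$ while $\int_{|t-x|<h}|\varphi(t)|\,dt = o(h)$), while on the outer part one replaces $(u_n-t)/((u_n-t)^2+v_n^2)$ with $1/(x-t)$ up to an $o(1)$ correction using $|u_n-x| \le kv_n$, recovering the principal-value truncation defining $\mathcal{H}\varphi(x)$.

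Assembling these pieces gives $v_n e^{\pi H_{v_n} f(u_n)} \to \delta e^{\pi \mathcal{H}\varphi(x)}/\sqrt{1+k_*^2}$, $v_n e^{-\pi H_{v_n} f(u_n)} \to 0$, $\sin(\pi P_{v_n} f(u_n)) \to 1/\sqrt{1+k_*^2}$, and $\cos(\pi P_{v_n} f(u_n)) \to k_*/\sqrt{1+k_*^2}$. Substituting into (\ref{eqchin}) and (\ref{eqetan}),
\begin{align*}
\chi_\LL(u_n,v_n) - u_n &\to 0, \\
\eta_\LL(u_n,v_n) - 1 &\to -\frac{\delta e^{\pi \mathcal{H}\varphi(x)}/\sqrt{1+k_*^2}}{1/\sqrt{1+k_*^2}} = -\delta e^{\pi \mathcal{H}\varphi(x)},
\end{align*}
the $k_*$-dependence cancelling in both coordinates. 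Since $u_n \to x$, this establishes (\ref{eqSing1NTangLim}), independently of the subsequence chosen. Singularity of $x$ follows at once, since $\delta e^{\pi \mathcal{H}\varphi(x)} > 0$ implies the limit differs from $(x,1)$.
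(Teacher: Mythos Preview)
Your approach matches the paper's: decompose $f=\chi_{[x-\delta,x]}+\varphi$, compute the indicator contributions explicitly, and control the $\varphi$ terms. The assembly and the $k_*$-cancellation are correct.

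The one place your plan is under-justified is the outer-part claim that replacing $(u_n-t)/((u_n-t)^2+v_n^2)$ by $1/(x-t)$ costs only $o(1)$ ``using $|u_n-x|\le kv_n$''. That non-tangential bound gives the kernel-difference estimate $|K_n(t)|\le Cv_n/|x-t|^2$ on $|t-x|>2kv_n$, but integrating this against a merely bounded $|\varphi|$ yields $O(1)$, not $o(1)$. The paper closes this by first making an observation you did not record: since $0\le f\le 1$, the remainder $\varphi=f-\chi_{[x-\delta,x]}$ is $\le 0$ on $(x-\delta,x)$ and $\ge 0$ elsewhere, so $\varphi(t)/(t-x)\ge 0$ near $x$; combined with $|\mathcal{H}\varphi(x)|<\infty$ and $\varphi(x)=0$ this forces the Dini condition
\[
\int_{x-\delta}^{x+\delta}\frac{|\varphi(t)-\varphi(x)|}{|t-x|}\,dt<\infty.
\]
With Dini in hand, the paper invokes Proposition~\ref{DiniConv} (a dominated-convergence argument, the dominating function being $2|\varphi(t)|/|x-t|$) to get $H_{v_n}\varphi(u_n)\to\mathcal{H}\varphi(x)$ non-tangentially. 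Your splitting route can also be completed without Dini by reusing the Lebesgue-point estimate $\int_{|t-x|<h}|\varphi|\,dt=o(h)$ in a layer-cake/integration-by-parts bound on $v_n\int_{|t-x|>2kv_n}|\varphi(t)|/|x-t|^2\,dt$; but the bare reason you cite is not sufficient.
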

\begin{proof}
We first note that the fact that $\varphi(t)\leq 0$ for $t\in(x-\delta,x)$ and $\varphi(t)\geq 0$ otherwise, together with the assumptions that $\varphi(x)=0$ and $\vert \mathcal{H}\varphi(x)\vert<+\infty$, imply that 
\begin{align}
\label{DiniCond}
\text{p.v.}\int_{x-\delta}^{x+\delta}\frac{ \varphi(t)dt}{t-x}=\int_{x-\delta}^{x+\delta}\frac{\vert \varphi(t)-\varphi(x)\vert dt}{\vert t-x\vert}<+\infty.
\end{align}
Now assume that $\{u_n+iv_n\}_n$ converges non-tangentially to $x$. Then,
\begin{align*}
\pi H_{v_n}f(u_n)&=\int_{x-\delta}^{x}\frac{(u_n-t)dt}{(u_n-t)^2+v_n^2}+\int_{\R}\frac{(u_n-t)\varphi(t)dt}{(u_n-t)^2+v_n^2}\\
&:=I_n+\pi H_{v_n}\varphi(u_n)
\end{align*}
By (\ref{DiniCond}) and Proposition \ref{DiniConv}, it follows that
\begin{align*}
\lim_{n\to\infty}\pi H_{v_n}\varphi(u_n)=\pi\mathcal{H}\varphi(x).
\end{align*}
Computing $\log v_n+I_n$ yields,
\begin{align*}
\log v_n+I_n&=\log v_n-\log\sqrt{(u_n-x)^2+v_n^2}+\log\sqrt{(u_n-x+\delta)^2+v_n^2}\\&=-\log\sqrt{1+(u_n-x)^2/v_n^2}+\log\sqrt{(u_n-x+\delta)^2+v_n^2}
\end{align*}
We now consider $\pi P_{v_n} f (u_n)$. A computation gives
\begin{align*}
\pi P_{v_n} f (u_n)&=\int_{x-\delta}^{x}\frac{v_ndt}{(u_n-t)^2+s^2}+\int_{\R}\frac{v_n\varphi(t)dt}{(u_n-t)^2+s^2}\\
&=-\arctan\frac{u_n-x}{v_n}+\arctan\frac{u_n-x+\delta}{v_n}+\pi P_{v_n} \varphi(u_n)
\end{align*}
Furthermore, since $x$ is in the Lebesgue set of $\varphi$, (see page 11), 
\begin{align*}
\lim_{n\to\infty}\pi P_{v_n} \varphi (u_n)&=\pi\varphi(x)=0.
\end{align*}
Thus,
\begin{align*}
\pi P_{v_n} f (u_n)&=\frac{\pi}{2}-\arctan\frac{u_n-x}{v_n}+o(1).
\end{align*}
Since $\sin\big(\frac{\pi}{2}-\arctan\alpha\big)=\frac{1}{\sqrt{1+\alpha^2}}$,
we get
\begin{align*}
&\lim_{n\to \infty}\frac{v_n(e^{\pi H_{v_n}f(u_n)}+e^{-\pi H_{v_n}f(u_n)}-2\cos(\pi P_{v_n} f(u_n))}{\sin(\pi P_{v_n} f(u_n))}
\\&=\lim_{n\to \infty}\frac{\sqrt{(u_n-x+\delta)^2+v_n^2}}{\sqrt{1+(u_n-x)^2/v_n^2}}\frac{e^{\pi\mathcal{H}\varphi(x)}}{\frac{1}{\sqrt{1+(u_n-x)^2/v_n^2}}}=\delta e^{\pi\mathcal{H}\varphi(x)}\neq 0.
\end{align*}
In addition,
\begin{align*}
\lim_{n\to \infty}\frac{v_n(e^{-\pi H_{v_n}f(u_n)}-\cos(\pi P_{v_n} f(u_n))}{\sin(\pi P_{v_n} f(u_n))}=0.
\end{align*}
Hence $x$ is a singular point and we have proved (\ref{eqSing1NTangLim}). 
\end{proof}
\begin{rem}
It may appear as if the limit $\lim_{n\rightarrow +\infty}\eta(u_n,v_n)=1-\delta e^{\pi\mathcal{H}\varphi(x)}$ depends on the arbitrary choice of $\delta>0$. However this is not so. Let $\delta'>0$ be another choice and we may assume that $\delta'>\delta$. We then get two functions  $\varphi_{\delta}(x)=f(t)-\chi_{[x-\delta,x]}(t)$ and $\varphi_{\delta'}(x)=f(t)-\chi_{[x-\delta',x]}(t)$.
Computing the difference of their Hilbert transforms at $x$ gives
\begin{align*}
\pi\mathcal{H}\varphi_{\delta}(x)-\pi\mathcal{H}\varphi_{\delta'}(x)=\int_{x-\delta'}^{x-\delta}\frac{dt}{x-t}=\log\frac{\delta'}{\delta}.
\end{align*}
Hence, $\delta e^{\pi\mathcal{H}\varphi_{\delta}(x)}=\delta'e^{\pi\mathcal{H}\varphi_{\delta'}(x)}$, and we see that the value of $\delta$ is unimportant in (\ref{eqSing1NTangLim}).
\end{rem}
\begin{ex}
Let $f(t)=t\chi_{[0,1]}(t)+(t+1)\chi_{[-1,0]}(t)$. Then $t=0$ satisfies the assumptions of the Proposition \ref{Sing1}. Hence $t=0$ is a singular point.
\end{ex}
Recall from the introduction that $\mathcal{S}_{nt}^{sing,II}(\mu)$ are the set of points $x\in \Sn$ such that for some $\delta>0$, $f(t)= \chi_{[x,x+\delta]}(t)+\varphi(t)$,
and that $x$ belongs to the Lebesgue set of $\varphi$ and that $\varphi(x)=0$ and $\vert \mathcal{H}\varphi(x)\vert<+\infty$

\begin{Prop}
\label{Sing2}
Assume that  $x\in\mathcal{S}_{nt}^{sing,II}(\mu)$. Then $x$ is a singular point, and
\begin{align}
\label{eqSing2nt}
\lim_{n\rightarrow +\infty}(\chi_\LL(u_n,v_n),\eta_\LL(u_n,v_n))=(x+\delta e^{-\pi\mathcal{H}\varphi(x)},1-\delta e^{-\pi\mathcal{H}\varphi(x)}):=(\chi_\G^{II}(x),\eta_\G^{II}(x))
\end{align}
for every non-tangential sequence $\{u_n+iv_n\}_{n=1}^{\infty}\in \Hp$ such that $\lim_{n\rightarrow +\infty}u_n+iv_n=x$.
\end{Prop}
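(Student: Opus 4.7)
The proof will follow closely the template of Proposition \ref{Sing1}, with the roles of the interval on the left of $x$ and the interval on the right of $x$ interchanged, and correspondingly the roles of $e^{\pi H_{v_n}f(u_n)}$ and $e^{-\pi H_{v_n}f(u_n)}$ interchanged. Write $f = \chi_{[x,x+\delta]} + \varphi$ and decompose additively
\begin{align*}
\pi H_{v_n} f(u_n) &= \int_{x}^{x+\delta}\frac{(u_n-t)\,dt}{(u_n-t)^2 + v_n^2} + \pi H_{v_n}\varphi(u_n),\\
\pi P_{v_n} f(u_n) &= \int_{x}^{x+\delta}\frac{v_n\,dt}{(u_n-t)^2 + v_n^2} + \pi P_{v_n}\varphi(u_n).
\end{align*}
The first integrals are explicit: they equal $\log\sqrt{(u_n-x)^2+v_n^2} - \log\sqrt{(u_n-x-\delta)^2+v_n^2}$ and $\arctan\frac{u_n-x}{v_n} - \arctan\frac{u_n-x-\delta}{v_n}$ respectively.

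The contribution of $\varphi$ is handled exactly as in Proposition \ref{Sing1}: since $\varphi$ vanishes on $[x,x+\delta]$ and is nonnegative elsewhere (mod zero), the hypotheses $\varphi(x)=0$ and $|\mathcal{H}\varphi(x)|<\infty$ force the Dini-type bound
\begin{equation*}
\int_{x-\delta}^{x+\delta}\frac{|\varphi(t)-\varphi(x)|}{|t-x|}\,dt<\infty,
\end{equation*}
so Proposition \ref{DiniConv} (see Remark \ref{R1}) gives $\pi H_{v_n}\varphi(u_n)\to \pi\mathcal{H}\varphi(x)$ along any non-tangential sequence, and $\pi P_{v_n}\varphi(u_n)\to\pi\varphi(x)=0$ since $x$ is a Lebesgue point of $\varphi$. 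Writing $u_n - x = k_n v_n$ with $|k_n|\le k$, a direct asymptotic evaluation of the explicit integrals then gives
\begin{equation*}
e^{\pi H_{v_n} f(u_n)} = \frac{v_n\sqrt{1+k_n^2}}{\delta}\, e^{\pi\mathcal{H}\varphi(x)}\,(1+o(1)),\qquad \pi P_{v_n}f(u_n) = \tfrac{\pi}{2}+\arctan k_n + o(1),
\end{equation*}
using that $(u_n-x-\delta)/v_n\to-\infty$ forces the second arctan to $-\pi/2$.

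The crucial difference with Proposition \ref{Sing1} is that now $v_n e^{\pi H_{v_n}f(u_n)}$ is of order $v_n^2$ and vanishes, while $v_n e^{-\pi H_{v_n}f(u_n)}\to \delta e^{-\pi\mathcal{H}\varphi(x)}/\sqrt{1+k_n^2}$. Since $\sin(\pi P_{v_n}f(u_n))\to \cos(\arctan k_n) + o(1) = 1/\sqrt{1+k_n^2} + o(1)$, substitution into \eqref{eqchin} and \eqref{eqetan} yields
\begin{equation*}
\chi_\LL(u_n,v_n) - u_n \;=\; \frac{v_n e^{-\pi H_{v_n}f(u_n)}}{\sin(\pi P_{v_n}f(u_n))}\;-\;v_n\cot(\pi P_{v_n}f(u_n)) \;\longrightarrow\; \delta e^{-\pi\mathcal{H}\varphi(x)} - 0,
\end{equation*}
and an analogous computation for \eqref{eqetan} gives $\eta_\LL(u_n,v_n)-1\to -\delta e^{-\pi\mathcal{H}\varphi(x)}$. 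Since $u_n\to x$, this delivers \eqref{eqSing2nt}.

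The main technical obstacle is purely notational: one must check that the $\sqrt{1+k_n^2}$ factors arising from the first explicit integral and from $\sin(\pi P_{v_n}f(u_n))$ cancel \emph{exactly} in the limit, so that the answer is genuinely independent of the non-tangential direction (equivalently, independent of the limit points of $k_n$). Once this cancellation is verified, the non-trivial limit $(\chi_\G^{II}(x),\eta_\G^{II}(x))\neq(x,1)$ shows that $x$ is singular. The $\delta$-independence of the limit follows, as in the remark after Proposition \ref{Sing1}, from the identity $\pi\mathcal{H}\varphi_\delta(x) - \pi\mathcal{H}\varphi_{\delta'}(x) = \log(\delta/\delta')$, which makes $\delta e^{-\pi\mathcal{H}\varphi_\delta(x)}$ invariant under the choice of $\delta>0$.
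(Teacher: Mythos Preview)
Your argument is correct and is precisely the analogue of the proof of Proposition~\ref{Sing1} that the paper intends; the paper itself merely writes that the argument is analogous, so you have in fact supplied more detail than the original. One small slip: $\varphi$ is non-positive on $[x,x+\delta]$ (since $\varphi=f-1$ there) rather than vanishing, but the Dini-type bound still follows from this sign structure exactly as in Proposition~\ref{Sing1}, so the conclusion is unaffected.
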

\begin{proof}
Argument is analogus to the one given in the proof of the previous proposition. 
\end{proof}

\begin{rem}
\label{EdgeSing1}
Let $x\in\mathcal{S}_{nt}(\mu)^{\circ}$ and assume that the density $f$ of $\mu$ satisfies either the assumptions of Proposition \ref{Sing1} or Proposition \ref{Sing2} at $x$. Then $x$ does not belong to the Lebesgue set of $f$. If the assumptions of Proposition \ref{Sing1} hold we have $\lim_{h\rightarrow 0^+}M_Lf(x,h)=1$ and $\lim_{h\rightarrow 0^+}M_Rf(x,h)=0$, and if the assumptions of Proposition \ref{Sing2} hold then $\lim_{h\rightarrow 0^+}M_Lf(x,h)=0$ and $\lim_{h\rightarrow 0^+}M_Rf(x,h)=1$.
Furthermore, the sets $\mathcal{S}_{nt}^{sing,I}(\mu)$ and $\mathcal{S}_{nt}^{sing,II}(\mu)$ can be viewed as analogous to the sets $R_1$ and $R_2$ respectively, in the parametrization of the \emph{edge} $\mathcal{E}$. In particular we recall from Lemma 2.3 in \cite{Duse14a} that
\begin{align*}
(\chi_{\mathcal{E}}(t),\eta_{\mathcal{E}}(t))=\left\{
\begin{array}{ll} (t,1-\delta e^{C_{\delta}(t)}) &t\in R_1 \\
(t+\delta e^{-C_{\delta}(t)},1-\delta e^{-C_{\delta}(t)}) &t\in R_2,\\
\end{array} \right.
\end{align*}
where $\delta>0$ is sufficiently small and $C_{\delta}(t)=\int_{\R\backslash [x-\delta,x+\delta]}\frac{d\mu(x)}{t-x}$.
\end{rem}
\begin{ex}
Let $f(t)=(1-t)\chi_{[0,1]}(t)-t\chi_{[-1,0]}(t)$. Then $x=0$ satisfies the assumptions of Proposition \ref{Sing2}. Hence $x=0$ is a singular point.
\end{ex}
We now give a Proposition that shows that there is no gap between where the \emph{edge} $\mathcal{E}$ ends and the limit point of the non-tangential limits.

\begin{Prop}
Assume $x\in\Siso\cap (\mathcal{S}_{nt}^{sing,I}(\mu)\cup\mathcal{S}_{nt}^{sing,II})(\mu)$. Then:
\begin{itemize}
\item If $x\in\dv\mathcal{S}_R^0$ then $\lim_{\substack{t\to x\\ t\in R_{\mu}}}(\chi_{\mathcal{E}}(t),\eta_{\mathcal{E}}(t))=(x,1-\delta e^{\pi\mathcal{H}\varphi(x)})$
\item If $x\in\dv\mathcal{S}_R^1$ then $\lim_{\substack{t\to x\\ t\in R_{\lambda-\mu}}}(\chi_{\mathcal{E}}(t),\eta_{\mathcal{E}}(t))=(x+\delta e^{-\pi\mathcal{H}\varphi(x)},1-\delta e^{-\pi\mathcal{H}\varphi(x)})$
\item If $x\in\dv\mathcal{S}_L^0$ then $\lim_{\substack{t\to x\\ t\in R_{\mu}}}(\chi_{\mathcal{E}}(t),\eta_{\mathcal{E}}(t))=(x,1-\delta e^{\pi\mathcal{H}\varphi(x)})$
\item If $x\in\dv\mathcal{S}_L^1$ then $\lim_{\substack{t\to x\\ t\in R_{\lambda-\mu}}}(\chi_{\mathcal{E}}(t),\eta_{\mathcal{E}}(t))=(x+\delta e^{-\pi\mathcal{H}\varphi(x)},1-\delta e^{-\pi\mathcal{H}\varphi(x)})$
\end{itemize}
Furthermore, for every sequence $\{w_n\}_n\subset \Hp$ such that $\text{Re}[w_n]\geq x$,
\begin{align*}
\lim_{n\to \infty}(\chi_{\LL}(w_n),\eta_{\LL}(w_n))&=(\chi_\G^{I}(x),\eta_\G^{I}(x)) \quad \text{if $x\in \mathcal{S}_{nt}^{sing,I}(\mu)\cap \dv\mathcal{S}_R^0$ and}\\
\lim_{n\to \infty}(\chi_{\LL}(w_n),\eta_{\LL}(w_n))&=(\chi_\G^{II}(x),\eta_\G^{II}(x)) \quad \text{if $x\in \mathcal{S}_{nt}^{sing,II}(\mu)\cap \dv\mathcal{S}_R^1$},
\end{align*}
and for every sequence $\{w_n\}_n\subset \Hp$ such that $\text{Re}[w_n]\leq x$,
\begin{align*}
\lim_{n\to \infty}(\chi_{\LL}(w_n),\eta_{\LL}(w_n))&=(\chi_\G^{II}(x),\eta_\G^{II}(x)) \quad \text{if $x\in \mathcal{S}_{nt}^{sing,II}(\mu)\cap \dv\mathcal{S}_L^0$ and}\\
\lim_{n\to \infty}(\chi_{\LL}(w_n),\eta_{\LL}(w_n))&=(\chi_\G^{I}(x),\eta_\G^{I}(x)) \quad \text{if $x\in \mathcal{S}_{nt}^{sing,I}(\mu)\cap \dv\mathcal{S}_L^1$}.
\end{align*}

\end{Prop}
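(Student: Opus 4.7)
The strategy is unified across the four cases: in each, $f$ contains an indicator of a unit-length interval adjacent to $x$, which gives $e^{C(w)}$ an explicit simple-pole behaviour at $w=x$. We detail case 1 ($x\in\dv\mathcal{S}_R^0\cap\SsingI$); the other three follow by two symmetries, the reflection $s\mapsto 2x-s$ (swapping $\dv\mathcal{S}_R^i$ with $\dv\mathcal{S}_L^i$ for $i\in\{0,1\}$) and the substitution $\mu\mapsto\lambda-\mu$ together with the corresponding $R_{\lambda-\mu}$ parametrisation in (\ref{eqchiEEetaEE}) (exchanging $\SsingI$ with $\SsingII$ and the frozen-$0$ region with the frozen-$1$ region). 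Writing $f=\chi_{[x-\delta,x]}+\varphi$, direct integration gives
\begin{align*}
C(w)=\log\frac{w-x+\delta}{w-x}+C_\varphi(w),\qquad C_\varphi(w):=\int_\R\frac{\varphi(s)\,ds}{w-s},
\end{align*}
valid for $w\in\overline{\Hp}\setminus\{x\}$ near $x$. Since $\varphi\equiv 0$ on $(x,x+\delta')$ by $\dv\mathcal{S}_R^0$, and $x$ is a Lebesgue point of $\varphi$ with $\varphi(x)=0$ and $|\mathcal{H}\varphi(x)|<+\infty$, dominated convergence (together with Proposition~\ref{DiniConv}) gives $C_\varphi(w)\to\pi\mathcal{H}\varphi(x)$ for every approach $w\to x$ from $\overline{\Hp}$ with $\mathrm{Re}\,w\geq x$. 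Consequently $e^{C(w)}\sim D/(w-x)$ near $x$, with $D:=\delta\,e^{\pi\mathcal{H}\varphi(x)}$.

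For the edge limit we take real $t\in(x,x+\delta')\subset R_\mu$ and insert the decomposition into (\ref{eqchiEEetaEE}). Differentiating gives $C'(t)=-\delta/((t-x)(t-x+\delta))+C_\varphi'(t)$, and the crucial estimate to establish is $(t-x)\,C_\varphi'(t)\to 0$ as $t\to x^+$. The plan is to prove this by integrating by parts against $\Psi(u):=\int_0^u\varphi(x-s)\,ds=o(u)$ (from the Lebesgue-point hypothesis and $\varphi(x)=0$), reproducing the style of the estimate in Lemma~\ref{EstHilb1}. Combined with $e^{-C(t)}=O(t-x)$ and $e^{C(t)}\sim D/(t-x)$, this will give $\chi_\EE(t)=t+o(1)\to x$ and $\eta_\EE(t)\to 1-D$, which is $(\chi_\G^I(x),\eta_\G^I(x))$.

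For a sequence $\{w_n\}\subset\Hp$ with $\mathrm{Re}\,w_n\geq x$, we rewrite (\ref{eqchin})--(\ref{eqetan}) in the complex form $\chi_\LL(w)-w=v(\overline{A}-1)/\mathrm{Im}\,A$ and $\eta_\LL(w)-1=v|A-1|^2/\mathrm{Im}\,A$, with $A:=e^{C(w)}$ and $v:=\mathrm{Im}\,w$. Parametrising $w_n-x=r_n e^{i\theta_n}$ with $\theta_n\in(0,\pi/2]$ (forced by $\mathrm{Re}\,w_n\geq x$) and using $A\sim D/(w_n-x)$, the identity $v_n=r_n\sin\theta_n$ produces exact cancellations: $|A-1|^2/\mathrm{Im}\,A\sim -D/v_n$, giving $\eta_\LL(w_n)\to 1-D$, and $(\overline{A}-1)/\mathrm{Im}\,A\sim -e^{i\theta_n}/\sin\theta_n$, giving $\chi_\LL(w_n)-w_n\sim -(w_n-x)\to 0$. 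This polar computation handles tangential and non-tangential sequences on equal footing, which is precisely the point of the one-sided hypothesis $\mathrm{Re}\,w_n\geq x$: it ensures $\varphi\equiv 0$ on the approach side, which is what permits the clean boundary-value statement for $C_\varphi$ used throughout. The principal obstacle will be establishing that unconditional boundary value $C_\varphi(w)\to\pi\mathcal{H}\varphi(x)$ together with the derivative estimate $(t-x)C_\varphi'(t)\to 0$ from only the Lebesgue and finite-Hilbert-transform hypotheses; once these are in hand, the algebra above yields the stated limits, and cases 2--4 reduce to case 1 (or to the analogous case 2, worked out identically with the $R_{\lambda-\mu}$ formula) via the symmetries in the opening paragraph.
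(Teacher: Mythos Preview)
Your plan is correct and parallels the paper's proof closely: both treat the case $x\in\dv\mathcal{S}_R^0\cap\SsingI$, write $C(t)=\log\frac{t-x+\delta}{t-x}+C_\varphi(t)$, establish $C_\varphi(t)\to\pi\mathcal{H}\varphi(x)$ and $(t-x)C_\varphi'(t)\to 0$ via the integrability $\int|\varphi(y)|/|x-y|\,dy<\infty$ (which follows from $|\mathcal{H}\varphi(x)|<\infty$ together with the sign structure $\varphi\le 0$ on $[x-\delta,x]$, $\varphi\ge 0$ elsewhere), and then compute the limits directly.

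The one technical difference worth noting is the derivative estimate. The paper avoids integration by parts entirely: it writes
\[
(t-x)\int_{x-\delta}^{x}\frac{\varphi(y)\,dy}{(t-y)^2}
=\int_{x-\delta}^{x}\frac{\varphi(y)\,dy}{t-y}
-\int_{x-\delta}^{x}\frac{(x-y)\varphi(y)\,dy}{(t-y)^2},
\]
observes the pointwise bound $0<(x-y)/(t-y)^2\le 1/(x-y)$ for $y<x<t$, and applies dominated convergence to see both terms on the right converge to the same limit $\int_{x-\delta}^x\varphi(y)/(x-y)\,dy$, hence their difference tends to $0$. Your integration-by-parts route via $\Psi(u)=o(u)$ also works (and indeed yields the same estimate after one splits the resulting integral at distance $h$ from $x$), but the paper's algebraic identity $(t-x)=(t-y)-(x-y)$ is a shorter path to the same conclusion.

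For the half-plane sequences the paper says only ``a similar argument as above''; your complex reformulation $\chi_\LL(w)-w=v(\bar A-1)/\mathrm{Im}\,A$, $\eta_\LL(w)-1=v|A-1|^2/\mathrm{Im}\,A$ with $A=e^{C(w)}\sim D/(w-x)$ and the polar substitution $w-x=re^{i\theta}$ is a clean way to make this precise. The reason the one-sided hypothesis $\mathrm{Re}\,w_n\ge x$ suffices for the unrestricted boundary value of $C_\varphi$ is exactly the domination you indicate: since $\varphi\equiv 0$ on $(x,x+\delta')$, for $s\le x$ and $\mathrm{Re}\,w\ge x$ one has $|w-s|\ge|x-s|$, so $|\varphi(s)|/|w-s|\le|\varphi(s)|/|x-s|\in L^1$, and dominated convergence applies directly (Proposition~\ref{DiniConv} is not needed here, as the approach need not be non-tangential).
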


\begin{proof}
We prove the result in the case $x\in\dv\mathcal{S}_R^0\cap\mathcal{S}_{nt}^{sing,I}(\mu)$ and note that the other cases follow similarly. By definition of $\in \dv\mathcal{S}_R^0$,  there exists an interval $I=(x,x+\eps)$ for some $\eps>0$ such that $I\subset \supp (\mu)^c$. By (\ref{eqchiEEetaEE}) we then have a parametrization of a piece of the edge $\mathcal{E}$, which we recall is given by
\begin{align*}
\chi_{\mathcal{E}}(t)&=t+\frac{1-e^{-C(t)}}{C'(t)}\\
\eta_{\mathcal{E}}(t)&=1+\frac{e^{C(t)}+e^{-C(t)}-2}{C'(t)}
\end{align*}
for $t\in I$. We note that since $f(t)=\chi_{[x-\delta,x]}(t)+\varphi(t)$
\begin{align*}
C(t)&=\int_{\R}\frac{f(y)dy}{t-y}=\log\bigg\vert\frac{t-x+\delta}{t-x}\bigg\vert+\int_{\R}\frac{\varphi(y)dy}{t-y}
\end{align*}
and
\begin{align*}
C'(t)&=-\frac{1}{t-x}+\frac{1}{t-x+\delta}-\int_{\R}\frac{\varphi(y)dy}{(t-y)^2}.
\end{align*}
In particular we note that
\begin{align*}
(t-x)\int_{x-\delta}^{x}\frac{\varphi(y)dy}{(t-y)^2}&=\int_{x-\delta}^{x}\frac{(t-y)\varphi(y)dy}{(t-y)^2}-\int_{x-\delta}^{x}\frac{(x-y)\varphi(y)dy}{(t-y)^2}\\
&=\int_{x-\delta}^{x}\frac{\varphi(y)dy}{t-y}-\int_{x-\delta}^{x}\frac{(x-y)\varphi(y)dy}{(t-y)^2}.
\end{align*}
Since $y\leq x<t$ we have
\begin{align*}
0<\frac{(x-y)}{(t-y)^2}\leq \frac{1}{x-y}
\end{align*}
for all $t\in I$. By the assumption that $x\in \mathcal{S}_{nt}^{sing,I}(\mu)$, we have as in the proof of Proposition \ref{Sing1}
\begin{align*}
\int_{\R}\frac{\vert \varphi(y)\vert dy}{\vert x-y\vert}<+\infty.
\end{align*}
The Lebesgue dominated convergence theorem implies that
\begin{align*}
\lim_{t\rightarrow x^+}\int_{x-\delta}^{x}\frac{\varphi(y)dy}{t-y}-\int_{x-\delta}^{x}\frac{(x-y)\varphi(y)dy}{(t-y)^2}=\int_{x-\delta}^{x}\frac{\varphi(y)dy}{x-y}-\int_{x-\delta}^{x}\frac{\varphi(y)dy}{x-y}=0.
\end{align*}
Therefore,
\begin{align}
\label{RightLimit1}
\lim_{t\rightarrow x^+}(t-x)\int_{\R}\frac{\varphi(y)dy}{(t-y)^2}=0.
\end{align}
Similarly,
\begin{align}
\label{RightLimit2}
\lim_{t\rightarrow x^+}\int_{\R}\frac{\varphi(y)dy}{t-y)}=\int_{\R}\frac{\varphi(y)dy}{x-y)}=\mathcal{H}\varphi(x)
\end{align}
Computing the limits $\lim_{t\rightarrow x^+}\chi_{\mathcal{E}}(t)$ and $\lim_{t\rightarrow x^+}\eta_{\mathcal{E}}(t)$ using (\ref{RightLimit1}) and (\ref{RightLimit2}) gives
\begin{align*}
\lim_{t\rightarrow x^+}\chi_{\mathcal{E}}(t)&=\lim_{t\rightarrow x^+}t+\frac{1-\big\vert\frac{t-x}{t-x+\delta}\big\vert \exp{(-\int_{\R}\frac{\varphi(y)dy}{t-y}})}{-\frac{1}{t-x}+\frac{1}{t-x+\delta}-\int_{\R}\frac{\varphi(y)dy}{(t-y)^2}}\\
&=\lim_{t\rightarrow x^+}x+\frac{-(t-x)+\vert t-x+\delta\vert^{-1}(t-x)^2\exp{(-\int_{\R}\frac{\varphi(y)dy}{t-y}})}{1-\frac{t-x}{t-x+\delta}+(t-x)\int_{\R}\frac{\varphi(y)dy}{(t-y)^2}}=x
\end{align*}
and
\begin{align*}
\lim_{t\rightarrow x^+}\eta_{\mathcal{E}}(t)&=\lim_{t\rightarrow x^+}1+\frac{\big\vert\frac{t-x+\delta}{t-x}\big\vert \exp{(\int_{\R}\frac{\varphi(y)dy}{t-y}})+\big\vert\frac{t-x}{t-x+\delta}\big\vert \exp{(-\int_{\R}\frac{\varphi(y)dy}{t-y}})-2}{-\frac{1}{t-x}+\frac{1}{t-x+\delta}-\int_{\R}\frac{\varphi(y)dy}{(t-y)^2}}\\
&=\lim_{t\rightarrow x^+}1-\frac{(t-x+\delta)\exp{(\int_{\R}\frac{\varphi(y)dy}{t-y}})+\frac{(t-x)^2}{t-x+\delta}\exp{(-\int_{\R}\frac{\varphi(y)dy}{t-y}})}{1-\frac{t-x}{t-x+\delta}+(t-x)\int_{\R}\frac{\varphi(y)dy}{(t-y)^2}}\\
&=1-\delta e^{\pi\mathcal{H}\varphi(x)}.
\end{align*}
Now let $\{w_n\}_n\subset \Hp$ be a sequence such that $\text{Re}[w_n]\geq x$. Then a similar argument as above shows that $\lim_{n\to \infty}(\chi_{\LL}(w_n),\eta_{\LL}(w_n))=(\chi_\G^{I}(x),\eta_\G^{I}(x))$. The other cases can be treated analogously. 
\end{proof}

Proposition \ref{Sing1} and \ref{Sing2} deal with singular points where the density jumps in mean from 1 to 0 or from 0 to 1, see Remark \ref{EdgeSing1}. In particular these singular points do not belong to the Lebesgue set of $f$. In Proposition \ref{Sing3} below we instead consider singular points which belong to the Lebesgue set, and for which the density is either 0 or 1. Recall from the definition that $x\in \mathcal{S}_{nt}^{sing,III}(\mu)$ if
\begin{align*}
\int_{\R}\frac{f(t)dt}{(x-t)^2}<+\infty
\end{align*}
and $\mathcal{H}f(x)\neq 0$, and that $x\in \mathcal{S}_{nt}^{sing,IV}(\mu)$ if
\begin{align*}
\int_{\R}\frac{(1-f(t))dt}{(x-t)^2}<+\infty
\end{align*}
and $\mathcal{H}(1-f)(x)\neq 0$.

\begin{Prop}
\label{Sing3}
Assume that  $x\in \mathcal{S}_{nt}^{sing,III}(\mu)\bigcup \mathcal{S}_{nt}^{sing,IV}(\mu)$.  Then $x\in \mathscr{L}_f$ and $\vert \mathcal{H}f(x)\vert<+\infty$. Furthermore, $x$ is a singular point.  Finally, for every non-tangential sequence $\{u_n+iv_n\}_{n=1}^{\infty}\in \Hp$ that converges to $x$, if $x\in\mathcal{S}_{nt}^{sing,III}(\mu)$, then
\begin{align}
\label{eqSing3nt}
&\lim_{n\rightarrow +\infty}(\chi_\LL(u_n,v_n),\eta_\LL(u_n,v_n))\nonumber\\&=\big(x- (\pi(\mathcal{H}f)'(x))^{-1}(e^{-\pi\mathcal{H}f(x)}-1),1+(\pi(\mathcal{H}f)'(x))^{-1}(e^{\pi\mathcal{H}f(x)}+e^{-\pi\mathcal{H}f(x)}-2)\big)\nonumber \\
&:=(\chi_\G^{III}(x),\eta_\G^{III}(x)),
\end{align}
and if $x\in\mathcal{S}_{nt}^{sing,IV}(\mu)$, then
\begin{align}
\label{eqSing4nt}
&\lim_{n\rightarrow +\infty}(\chi_\LL(u_n,v_n),\eta_\LL(u_n,v_n))\nonumber \\&=\big(x- (\pi(\mathcal{H}(1-f))'(x))^{-1}(e^{\pi\mathcal{H}(1-f)(x)}+1),1+(\pi(\mathcal{H}(1-f))'(x))^{-1}(e^{\pi\mathcal{H}(1-f)(x)}+e^{-\pi\mathcal{H}(1-f)(x)}+2)\big)\nonumber\\
&:=(\chi_\G^{IV}(x),\eta_\G^{IV}(x))
\end{align}
where
\begin{align*}
(\mathcal{H}f)'(x):=-\frac{1}{\pi}\int_{\R}\frac{f(t)dt}{(x-t)^2}.
\end{align*}
\end{Prop}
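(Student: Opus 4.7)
I would first establish the two preliminary assertions that $x\in\mathscr{L}_f$ and $|\mathcal{H}f(x)|<+\infty$. In Case~III the elementary bound $1/(x-t)^2\geq 1/h^2$ on $(x-h,x+h)$ together with the standing hypothesis gives
\begin{align*}
\frac{1}{2h}\int_{x-h}^{x+h}f(t)\,dt\;\leq\;\frac{h}{2}\int_{x-h}^{x+h}\frac{f(t)}{(x-t)^2}\,dt\;\longrightarrow\;0\quad\text{as }h\to 0^+,
\end{align*}
so that $x$ is a Lebesgue point with $f(x)=0$. Since $1/|x-t|\leq 1/(x-t)^2$ on $|x-t|\leq 1$ and $f$ has compact support, this also yields $\int_{\R}f(t)/|x-t|\,dt<+\infty$, hence both $|\mathcal{H}f(x)|<+\infty$ and the Dini-type condition (\ref{Dini1}) at $x$. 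Case~IV is treated analogously: the estimate shows $x\in\mathscr{L}_{1-f}$ with $(1-f)(x)=0$, so $x\in\mathscr{L}_f$ with $f(x)=1$; the Dini condition for $f$ then follows from local integrability of $(1-f(t))/|x-t|$, and finiteness of $\mathcal{H}f(x)$ from a symmetric principal value at infinity together with the formal identification $\mathcal{H}f(x)=-\mathcal{H}(1-f)(x)$.

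Next, for any sequence $\{w_n=u_n+iv_n\}_n$ converging non-tangentially to $x$, Proposition~\ref{DiniConv} (cf.\ Remark~\ref{R1}) gives $\pi H_{v_n}f(u_n)\to \pi\mathcal{H}f(x)$, while the standard Lebesgue-point theorem gives $\pi P_{v_n}f(u_n)\to \pi f(x)$, which equals $0$ in Case~III and $\pi$ in Case~IV. The decisive additional input is the second-order asymptotic
\begin{align*}
\frac{\pi P_{v_n}f(u_n)}{v_n}\;=\;\int_{\R}\frac{f(t)\,dt}{(u_n-t)^2+v_n^2}\;\longrightarrow\;\int_{\R}\frac{f(t)\,dt}{(x-t)^2}\;=\;-\pi(\mathcal{H}f)'(x),
\end{align*}
which I would obtain from the dominated convergence theorem: the hypothesis supplies the dominating integrand, and non-tangentiality yields $(u_n-t)^2+v_n^2\geq c_k(x-t)^2$ on $|x-t|>2kv_n$ for $\{w_n\}\subset\Gamma_k(x)$. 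In Case~IV one instead invokes the identities $\pi P_{v_n}f=\pi-\pi P_{v_n}(1-f)$, $\sin(\pi P_{v_n}f(u_n))=\sin(\pi P_{v_n}(1-f)(u_n))$ and $\cos(\pi P_{v_n}f(u_n))=-\cos(\pi P_{v_n}(1-f)(u_n))$, which reduce the analysis to the statement for $1-f$, giving $v_n/\sin(\pi P_{v_n}f(u_n))\to -1/(\pi(\mathcal{H}(1-f))'(x))$.

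Substituting these four convergences into the explicit formulas (\ref{eqchin})--(\ref{eqetan}), Case~III produces
\begin{align*}
v_n\cdot\frac{e^{-\pi H_{v_n}f(u_n)}-\cos(\pi P_{v_n}f(u_n))}{\sin(\pi P_{v_n}f(u_n))}\;\longrightarrow\;-\frac{e^{-\pi\mathcal{H}f(x)}-1}{\pi(\mathcal{H}f)'(x)},
\end{align*}
and the analogous identity for the $\eta$-component, which combine to give (\ref{eqSing3nt}); Case~IV is identical except that $\cos\to -1$, producing the additional $+1$ in each numerator, and $\pi\mathcal{H}f(x)$ is replaced by $-\pi\mathcal{H}(1-f)(x)$, yielding (\ref{eqSing4nt}). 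Both limits are independent of the non-tangential sequence, and the hypothesis $\mathcal{H}f(x)\neq 0$ (resp.\ $\mathcal{H}(1-f)(x)\neq 0$) prevents the limit from equalling $(x,1)$, proving singularity.

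The main obstacle is Case~IV, where $1-f$ is not compactly supported, so $\mathcal{H}(1-f)(x)$ and $(\mathcal{H}(1-f))'(x)$ must be interpreted as symmetric principal values at infinity. The identification $\pi H_{v_n}f(u_n)\to -\pi\mathcal{H}(1-f)(x)$ requires a careful truncation outside $[-R,R]\supset\supp(\mu)$ and an explicit evaluation of the resulting logarithmic boundary terms, which cancel in the limit $R\to\infty$ since the kernels $1/(x-t)$ and $1/(x-t)^2$ contribute only symmetric (cancelling) logarithms at infinity. Once this identification is in place, the Case~III argument transports verbatim.
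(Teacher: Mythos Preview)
Your proposal is correct and mirrors the paper's proof in all essential respects: the Lebesgue-point and Dini estimates, Proposition~\ref{DiniConv} for the non-tangential limit of $H_v f$, dominated convergence for $v_n^{-1}\sin(\pi P_{v_n}f)$, and the $\sin/\cos$ identities together with $\mathcal{H}(1)=0$ for Case~IV. The only cosmetic difference is that the paper's domination inequality $(x-t)^2\leq(2k^2+2)\bigl((u_n-t)^2+v_n^2\bigr)$ (obtained from $(x-t)^2\leq 2(x-u_n)^2+2(u_n-t)^2\leq 2k^2v_n^2+2(u_n-t)^2$) holds for \emph{all} $t\in\R$, so no restriction to $|x-t|>2kv_n$ or separate treatment of the near region is needed in the dominated convergence step.
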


\begin{proof}
Consider the case when $x\in \mathcal{S}_{nt}^{sing,III}(\mu)$. We first show that $x$ belongs to the Lebesgue set of $f$. Recall that
\begin{align*}
\int_{\R}\frac{f(t)dt}{(x-t)^2}<+\infty.
\end{align*}
Hence,
\begin{align*}
\lim_{h\rightarrow 0^+}\int_{x-h}^{x+h}\frac{f(t)dt}{(x-t)^2}=0.
\end{align*}
But for small $h$,
\begin{align*}
\int_{x-h}^{x+h}\frac{f(t)dt}{(x-t)^2}>h^{-2}\int_{x-h}^{x+h}f(t)dt>h^{-1}\int_{x-h}^{x+h}f(t)dt.
\end{align*}
This implies that
\begin{align*}
\lim_{h\rightarrow 0^+}\frac{1}{2h}\int_{x-h}^{x+h}f(t)dt=0,
\end{align*}
so $f(x)=0$ and $x\in\mathscr{L}_f$ by the discussion in the beginning of section 2.1. Take $\delta>0$ small, then
\begin{align*}
\int_{x-\delta}^{x+\delta}\frac{f(t)dt}{\vert x-t\vert}<\int_{x-\delta}^{x+\delta}\frac{f(t)dt}{(x-t)^2}<+\infty,
\end{align*}
by assumption, so $\vert \mathcal{H}f(x)\vert<+\infty$. Moreover, by Proposition \ref{DiniConv} this also implies that for every $k>0$
\begin{align*}
\lim_{\substack{(u,v)\to (x,0)\\ (u,v)\in \Gamma_{k}(x)}}H_vf(u)=\mathcal{H}f(x).
\end{align*}
By definition
\begin{align*}
\lim_{\substack{(u,v)\to (x,0)\\ (u,v)\in \Gamma_{k}(x)}}\frac{\sin(P_v f(u))}{v}=\lim_{\substack{(u,v)\to (x,0)\\ (u,v)\in \Gamma_{k}(x)}}\int_{\R}\frac{f(t)dt}{(u-t)^2+v^2}.
\end{align*}
Note that $(x-t)^2= 2((x-u)+(u-t))^2\leq (2k^2+2)(v^2+(u-t)^2)$ for all $(u,v)\in \G_k(x)$, $t\in \R$.
Hence, Lebesgue's dominated convergence theorem implies that
\begin{align*}
\lim_{\substack{(u,v)\to (x,0)\\ (u,v)\in \Gamma_{k}(x)}}\int_{\R}\frac{f(t)dt}{(u-t)^2+v^2}=\int_{\R}\frac{f(t)dt}{(t-x)^2}.
\end{align*}
In addition, since $x\in \mathscr{L}_f$, it follows that
\begin{align*}
\lim_{\substack{(u,v)\to (x,0)\\ (u,v)\in \Gamma_{k}(x)}}P_v f(u)=0.
\end{align*}

Consequently,
\begin{align*}
\lim_{\substack{(u,v)\to (x,0)\\ (u,v)\in \Gamma_{k}(x)}}\frac{v(e^-{\pi H_vf(u)}-\cos(\pi P_v f(u))}{\sin(\pi P_v f(u))}=\frac{e^{-\pi\mathcal{H}f(x)}-1}{\int_{\R}\frac{f(t)dt}{(x-t)^2}}\quad( \text{$\neq 0$ since $\mathcal{H}f(x)\neq 0$})
\end{align*}
and
\begin{align*}
\lim_{\substack{(u,v)\to (x,0)\\ (u,v)\in \Gamma_{k}(x)}}\frac{v(e^{\pi H_vf(u)}+e^{-\pi H_vf(u)}-2\cos(\pi P_v f(u))}{\sin(\pi P_v f(u))}=\frac{e^{\pi\mathcal{H}f(x)}+e^{-\pi\mathcal{H}f(x)}-2}{\int_{\R}\frac{f(t)dt}{(x-t)^2}}.
\end{align*}
Hence $\lim_{n\to \infty}(\chi_{\mathcal{L}}(w_n),\eta_{\mathcal{L}}(w_n))\neq (x,1)$, and so $x$ is singular. Again we notice that the limit is independent of $k$. In the case $x\in \mathcal{S}_{nt}^{sing,IV}(\mu)$ a completely analogous argument gives
\begin{align*}
\lim_{\substack{(u,v)\to (x,0)\\ (u,v)\in \Gamma_{k}(x)}}\frac{v(e^{-\pi H_vf(u)}-\cos(\pi P_v\ast f(u))}{\sin(\pi P_v\ast f(u))}=\frac{e^{\pi\mathcal{H}(1-f)(x)}+1}{\int_{\R}\frac{(1-f(t))dt}{(x-t)^2}},
\end{align*}
and
\begin{align*}
\lim_{\substack{(u,v)\to (x,0)\\ (u,v)\in \Gamma_{k}(x)}}\frac{v(e^{\pi H_vf(u)}+e^{-\pi H_vf(u)}-2\cos(\pi P_v\ast f(u))}{\sin(\pi P_v\ast f(u))}=\frac{e^{\pi\mathcal{H}(1-f)(x)}+e^{-\pi\mathcal{H}(1-f)(x)}+2}{\int_{\R}\frac{(1-f(t))dt}{(x-t)^2}},
\end{align*}
since $\mathcal{H}(1)=0$.
\end{proof}
\begin{rem}
It is interesting to compare the the results of Proposition \ref{Sing3} with the parametrization of the \emph{edge} $\mathcal{E}$. One sees that the sets $\mathcal{S}_{nt}^{sing,III}(\mu)$ and $\mathcal{S}_{nt}^{sing,III}(\mu)$ are analogues to the parametrization sets $R_{\mu}$ and $R_{\lambda-\mu}$, i.e., $(\chi_\EE(x),\eta_\EE(x))=(\chi_\G^{III}(x),\eta_\G^{III}(x))$ whenever $x\in \mathcal{S}_{nt}^{sing,III}(\mu)$, and $(\chi_\EE(x),\eta_\EE(x))=(\chi_\G^{IV}(x),\eta_\G^{IV}(x))$ whenever $x\in \mathcal{S}_{nt}^{sing,IV}(\mu)$.
\end{rem}

\begin{Prop}
\label{EdgeSing2}
Assume $x\in\Siso\cap (\mathcal{S}_{nt}^{sing,III}(\mu)\cup\mathcal{S}_{nt}^{sing,IV}(\mu))$. Then
\begin{align*}
\lim_{\substack{t\rightarrow x\\ t\in R}}(\chi_{\mathcal{E}}(t),\eta_{\mathcal{E}}(t))&=(\chi_\G^{III}(x),\eta_\G^{III}(x))\quad \text{if $x\in \mathcal{S}_{nt}^{sing,III}(\mu)$, and}\\
\lim_{\substack{t\rightarrow x\\ t\in R}}(\chi_{\mathcal{E}}(t),\eta_{\mathcal{E}}(t))&=(\chi_\G^{IV}(x),\eta_\G^{IV}(x))\quad \text{if $x\in \mathcal{S}_{nt}^{sing,IV}(\mu)$}.
\end{align*}
Furthermore, for every sequence $\{w_n\}_n\subset \Hp$ such that $\text{Re}[w_n]\geq x$,
\begin{align*}
\lim_{n\to \infty}(\chi_{\LL}(w_n),\eta_{\LL}(w_n))&=(\chi_\G^{III}(x),\eta_\G^{III}(x)) \quad \text{if $x\in \mathcal{S}_{nt}^{sing,III}(\mu)\cap \dv\mathcal{S}_R^0$ and}\\
\lim_{n\to \infty}(\chi_{\LL}(w_n),\eta_{\LL}(w_n))&=(\chi_\G^{IV}(x),\eta_\G^{IV}(x)) \quad \text{if $x\in \mathcal{S}_{nt}^{sing,IV}(\mu)\cap \dv\mathcal{S}_R^1$},
\end{align*}
and for every sequence $\{w_n\}_n\subset \Hp$ such that $\text{Re}[w_n]\leq x$,
\begin{align*}
\lim_{n\to \infty}(\chi_{\LL}(w_n),\eta_{\LL}(w_n))&=(\chi_\G^{III}(x),\eta_\G^{III}(x)) \quad \text{if $x\in \mathcal{S}_{nt}^{sing,III}(\mu)\cap \dv\mathcal{S}_L^0$ and}\\
\lim_{n\to \infty}(\chi_{\LL}(w_n),\eta_{\LL}(w_n))&=(\chi_\G^{IV}(x),\eta_\G^{IV}(x)) \quad \text{if $x\in \mathcal{S}_{nt}^{sing,IV}(\mu)\cap \dv\mathcal{S}_L^1$}.
\end{align*}
\end{Prop}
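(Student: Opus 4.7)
The plan is to treat the four sub-cases separately, reducing each to a dominated-convergence analysis of the Cauchy transform $C$ (respectively its analogue $C_I$) near $x$, in the same spirit as Proposition~\ref{Sing3} but now applied to the edge parametrization (\ref{eqchiEEetaEE}) and to arbitrary one-sided sequences in $\Hp$.

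First I would handle $x\in\SsingIII\cap\dv\mathcal{S}_R^0$. By definition of $\dv\mathcal{S}_R^0$ there exists $\delta>0$ with $(x,x+\delta)\subset\supp(\mu)^c$, so $f\equiv 0$ on this interval and, after shrinking $\delta$, $(x,x+\delta)\subset R_\mu\cup R_0$, on which (\ref{eqchiEEetaEE}) specialises to
\[
(\chi_\EE(t),\eta_\EE(t))=\left(t+\tfrac{1-e^{-C(t)}}{C'(t)},\;1+\tfrac{e^{C(t)}+e^{-C(t)}-2}{C'(t)}\right).
\]
For $t\in(x,x+\delta)$ and $s\leq x$ the elementary bound $|t-s|\geq x-s$ gives $f(s)/|t-s|^k\leq f(s)/(x-s)^k$ for $k=1,2$, while the tail $s\geq x+\delta$ produces a uniformly bounded kernel. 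The hypothesis $\int f(s)(x-s)^{-2}\,ds<\infty$ built into $\SsingIII$ then supplies dominating functions, so dominated convergence yields $C(t)\to\pi\mathcal{H}f(x)$ and $C'(t)\to-\int f(s)(x-s)^{-2}\,ds=\pi(\mathcal{H}f)'(x)$ as $t\to x^+$; direct substitution gives $(\chi_\G^{III}(x),\eta_\G^{III}(x))$. For $w_n=u_n+iv_n\in\Hp$ with $u_n\geq x$ and $w_n\to x$ exactly the same bounds apply, since $|w_n-s|^2=(u_n-s)^2+v_n^2\geq(x-s)^2$ for $s\leq x$, yielding $C(w_n)\to\pi\mathcal{H}f(x)$ together with $\int f(s)|w_n-s|^{-2}\,ds\to\int f(s)(x-s)^{-2}\,ds$. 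The only delicate step is then the ratio $(w_n-\bar w_n)/(e^{C(w_n)}-e^{C(\bar w_n)})$ in (\ref{InverseReal1})-(\ref{InverseReal2}), whose denominator vanishes; using $C(\bar w_n)=\overline{C(w_n)}$ and $C(w_n)-C(\bar w_n)=-2iv_n\int f(s)|w_n-s|^{-2}\,ds$ together with $\sin d/d\to 1$, one finds this ratio tends to $e^{-\pi\mathcal{H}f(x)}/(\pi(\mathcal{H}f)'(x))$, whence substitution into (\ref{InverseReal1})-(\ref{InverseReal2}) gives $(\chi_\G^{III}(x),\eta_\G^{III}(x))$.

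The case $x\in\SsingIV\cap\dv\mathcal{S}_R^1$ is analogous after writing $g=1-f$: here $f\equiv 1$ on $(x,x+\delta)$, the parametrization uses the $R_{\lambda-\mu}$ branch of (\ref{eqchiEEetaEE}) with $I=(x,x+\delta)$, and the Cauchy transform itself diverges logarithmically as $t\to x^+$ since $C(t)=\log((t-x)/(x+\delta-t))+C_I(t)$. The key observation is that the explicit rational terms $1/(t-t_2),\,1/(t-t_1)$ in the denominator and the factors $(t-t_j)/(t-t_k)$ in the numerator of the $R_{\lambda-\mu}$ formula precisely cancel this logarithmic singularity, leaving finite quantities governed by the regular piece $C_I$, to which dominated convergence with dominating functions $g(s)/(x-s)^k$, $k=1,2$, applies directly. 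Tracking these cancellations produces $(\chi_\G^{IV}(x),\eta_\G^{IV}(x))$ in the limit; the corresponding sequence argument for $w_n\in\Hp$ with $\text{Re}(w_n)\geq x$ is similar once the logarithmic factor $\log((w_n-x)/(w_n-x-\delta))$ is extracted from $C(w_n)$. Finally the two left-sided cases $\dv\mathcal{S}_L^0,\dv\mathcal{S}_L^1$ reduce to the right-sided ones by the reflection $x\mapsto -x$, $f(\cdot)\mapsto f(-\cdot)$. The hard part will be the $\SsingIV$ case: unlike in $\SsingIII$, the Cauchy transform is unbounded near $x$, so one must carry out the algebraic cancellation of the logarithmic and rational singularities by hand before the dominated-convergence machinery becomes applicable — a bookkeeping exercise that is delicate but, once organised in terms of $g=1-f$, follows the same template as the $\SsingIII$ case.
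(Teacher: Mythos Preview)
Your proposal is correct and follows essentially the same route as the paper: dominate the Cauchy kernel using $|w_n-s|\geq |x-s|$ on the side where $f$ (or $1-f$) is supported, apply dominated convergence to pass to $\pi\mathcal{H}f(x)$ and $\pi(\mathcal{H}f)'(x)$, and then substitute into the edge/liquid formulas. The paper's proof treats only the case $x\in\SsingIII\cap\dv\mathcal{S}_R^0$ explicitly and dismisses the others with ``follow similarly''; your write-up is actually more honest about the $\SsingIV$ case, where you correctly identify that one must either work with the $R_{\lambda-\mu}$ branch of (\ref{eqchiEEetaEE}) and cancel the rational singularities against the logarithmic divergence of $C$, or equivalently pass to $g=1-f$ and use the identities $\sin(\pi P_vf)=\sin(\pi P_v(1-f))$, $\cos(\pi P_vf)=-\cos(\pi P_v(1-f))$, $H_vf=-H_v(1-f)$ as in the proof of Proposition~\ref{Sing3}---the latter reorganisation is slightly cleaner than tracking the cancellations by hand, but both lead to the same place.
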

\begin{proof}
We consider the case $x\in \mathcal{S}_{nt}^{sing,III}(\mu)\cap \dv\mathcal{S}_R^0$. The other cases follow similarly. Since $x\leq u_n$ by assumption, $\frac{\vert u_n-y\vert}{(u_n-y)^2+v_n^2}\leq \frac{1}{x-y}$ and $\frac{1}{(u_n-y)^2+v_n^2}\leq \frac{1}{(x-y)^2}$ for $y\in (-\infty, x]$. Hence, by Lebesgue's dominated convergence theorem $\lim_{n\to \infty}H_{v_n}f(u_n)=\mathcal{H}f(x)$ and $\int_{\R}\lim_{n\to \infty}\frac{f(y)dy}{(u_n-y)^2+v_n^2}=\int_{\R}\frac{f(y)dy}{(x-y)^2}$. Therefore, a direct computation gives $\lim_{n\to \infty}(\chi_{\LL}(w_n),\eta_{\LL}(w_n))=(\chi_\G^{III}(x),\eta_\G^{III}(x))$. Similarly, $\lim_{\substack{t\rightarrow x\\ t\in R}}(\chi_{\mathcal{E}}(t),\eta_{\mathcal{E}}(t))=(\chi_\G^{III}(x),\eta_\G^{III}(x))$.  
\end{proof}

We now conclude this section by two lemmas that show that the set $\Ssing$ can have $\lambda(\Ssing)>0$ and that it may be dense in $\Sn^{\circ}$. In particular, this shows that the set $\Ssing$ can be at least a meagre set. Moreover, it also shows that the boundary $\dv\LL$ can be very complicated.

\begin{Lem}
\label{ExSingset1}
There exists a function $f\in \rho_{1,c}^{\lambda}(\R)$ such that $\lambda(S_{nt}^{sing,III}(\mu))>0$.
\end{Lem}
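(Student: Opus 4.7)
The plan is to exhibit $f \in \rho_{c,1}^{\lambda}(\R)$ supported on $[0,1] \cup [2,3]$ that vanishes on a fat Cantor set $F \subset [0,1]$ with $\lambda(F) > 0$, and for which both defining conditions of $\mathcal{S}_{nt}^{sing,III}(\mu)$ --- that $\int f(t)(x-t)^{-2}\, dt < \infty$ and $\mathcal{H}f(x) \neq 0$ --- hold for almost every $x \in F$.

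Fix a fat Cantor set $F \subset [0,1]$ with $\lambda(F) > 0$ and complementary intervals $(a_k, b_k)$, and set $\ell_k := b_k - a_k$ and $g(t) := [\min(t - a_k, b_k - t)]^2 \chi_{(a_k,b_k)}(t)$, extended by $0$ off $U := \bigsqcup_k (a_k, b_k)$. Let $M_g := \|g\|_{L^1} < \infty$ and, for $\alpha \in (0, 1/M_g)$, define
\begin{equation*}
f_\alpha \,:=\, \alpha g + (1 - \alpha M_g)\,\chi_{[2,3]}.
\end{equation*}
Then $0 \le f_\alpha \le 1$, $\int f_\alpha = 1$, $\mathrm{supp}(\mu_\alpha) = [0,1] \cup [2,3]$, $F \subset \mathrm{supp}(\lambda - \mu_\alpha)$ (since $f_\alpha = 0$ on $F$), and $R_1 \cup R_2 \subset \{0,1,2,3\}$, so $F \cap (0,1) \subset \mathcal{S}_{nt}(\mu_\alpha)$ and $f_\alpha \in \rho_{c,1}^{\lambda}(\R)$.

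The integrability condition is established by Tonelli. For $t \in (a_k, b_k)$, using $F \subset [0, a_k] \cup [b_k, 1]$ gives $\int_F dx/(x-t)^2 \le 1/(t-a_k) + 1/(b_k - t)$, and using $g(t) \le (t - a_k)^2$ on the left half of $(a_k, b_k)$ (symmetrically on the right) a direct computation yields
\begin{equation*}
\int_{a_k}^{b_k} g(t)\Bigl[\tfrac{1}{t-a_k} + \tfrac{1}{b_k - t}\Bigr] dt \;\le\; \tfrac{5}{12}\,\ell_k^2.
\end{equation*}
Since $\sum_k \ell_k^2 \le (\sup_k \ell_k)\sum_k \ell_k < \infty$, Tonelli yields $\int g(t)/(x-t)^2 dt < \infty$ for a.e.\ $x \in F$; the contribution of $\chi_{[2,3]}$ to $\int f_\alpha(t)/(x-t)^2 dt$ is trivially bounded for $x \in [0,1]$.

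The main obstacle is arranging $\mathcal{H}f_\alpha \neq 0$ on a positive-measure subset of $F$, since the vanishing set of a Hilbert transform cannot in general be controlled pointwise. To handle this I use a Fubini trick in the parameter $\alpha$. With $h(x) := \mathcal{H}\chi_{[2,3]}(x) = \tfrac{1}{\pi}\log\tfrac{2-x}{3-x}$, which is strictly negative on $[0,1]$, the identity
\begin{equation*}
\mathcal{H}f_\alpha(x) \,=\, h(x) + \alpha\bigl(\mathcal{H}g(x) - M_g\, h(x)\bigr)
\end{equation*}
shows that for each fixed $x \in F$ the equation $\mathcal{H}f_\alpha(x) = 0$ is an affine equation in $\alpha$ with nonzero constant term $h(x)$, hence has at most one solution. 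By joint measurability of $(\alpha, x) \mapsto \mathcal{H}f_\alpha(x)$ together with Tonelli, the set $S := \{(\alpha, x) \in (0, 1/M_g) \times F : \mathcal{H}f_\alpha(x) = 0\}$ has planar Lebesgue measure zero, so for a.e.\ $\alpha$ the $\alpha$-slice $\{x \in F : \mathcal{H}f_\alpha(x) = 0\}$ is Lebesgue-null. Fixing any such $\alpha^*$, for a.e.\ $x \in F \cap (0,1)$ we have $x \in \mathcal{S}_{nt}^{sing,III}(\mu_{\alpha^*})$, and $\lambda(F \cap (0,1)) = \lambda(F) > 0$ completes the proof.
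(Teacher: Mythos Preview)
Your proof is correct, with one small caveat: you assert $0\le f_\alpha\le 1$ for all $\alpha\in(0,1/M_g)$, but since $g$ can be as large as $(\ell_k/2)^2$ you also need $\alpha\cdot\sup_k(\ell_k/2)^2\le 1$. This is harmless---just restrict $\alpha$ to a smaller interval, say $(0,\min(1/M_g,4))$; the Fubini-in-$\alpha$ argument goes through on any nondegenerate interval.

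Your construction and the paper's share the same skeleton (fat Cantor set, bump functions on the complementary intervals, an auxiliary indicator to force $\mathcal{H}f\neq 0$), but the execution differs in two places worth noting. First, for the integrability condition $\int f(t)(x-t)^{-2}\,dt<\infty$: the paper uses $C^\infty$ bumps $\phi_I(t)=\exp(-1/((t-a)^2(t-b)^2))$, for which the pointwise domination $\phi_I(t)\le e^{-1/(t-x)^2}$ (valid for every $x\notin(a,b)$) yields the uniform bound $\int\varphi(t)(x-t)^{-2}\,dt\le\int e^{-1/(x-t)^2}(x-t)^{-2}\,dt<\infty$ for \emph{every} $x\in\mathfrak{C}$, not just almost every. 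Your Tonelli argument with the quadratic bumps is more elementary but yields only an a.e.\ conclusion; for the lemma as stated this is enough. Second, for the Hilbert transform condition: the paper argues by dichotomy (either $\mathcal{H}\varphi\neq 0$ on a positive-measure subset of $\mathfrak{C}$, or it vanishes a.e.\ there, in which case adding an indicator on $[-1,-\|\varphi\|_1]$ shifts it away from zero). Your parametric Fubini trick---observing that $\alpha\mapsto\mathcal{H}f_\alpha(x)$ is affine with nonzero constant term $h(x)$, hence has at most one zero---is a cleaner and more systematic way to handle this, and avoids splitting into cases.
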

\begin{proof}
Let $I=(a,b)$ be an open interval. Associate to it the function
\begin{align}
\phi_I(t)=\exp\bigg(-\frac{1}{(t-a)^2(t-b)^2}\bigg)\chi_{(a,b)}(t).
\end{align}
Then $\phi\in C^{\infty}(\R)$ and $\text{supp}(\phi)=[a,b]$. Consider the interval $[0,1]$. Remove from it the middle $1/4$ interval $(3/8,5/8)$. Let $I_1^1=(3/8,5/8)$ and $\varphi_1(t)=\phi_{I_1^1}(t)$. Now remove the middle $1/16$ interval from the middle of the remaining intervals. That is, let $I_2^1=(5/32,7/32)$ and $I_2^2=(25/32,27/32)$, and let $\varphi_2(t)=\phi_{I_1^1}(t)+\phi_{I_2^1}(t)+\phi_{I_2^2}(t)$. Then $[0,1]-(I_1^1\cup I_2^1\cup I_2^2)=[0,5/32]\cup [7/32,3/8]\cup[5/8,25/32]\cup [27/32,1]$. Continue this process, by at step $n$, remove the middle $1/2^{2n}$:th interval from the remaining $2^{n-1}$ intervals. Let $I_n^k$ be the $k$:th of the $2^{n-1}$ open intervals that are deleted at each step, and let $\varphi_n(t)=\sum_{m=1}^n\sum_{k=1}^{2^{m-1}}\phi_{I_m^k}(t)$. Let $\mathfrak{C}=[0,1]\backslash (\bigcup_{n=1}^{\infty}\bigcup_{k=1}^{2^{n-1}}I_n^k)$. The set $\mathfrak{C}$ is called the Smith-Volterra-Cantor set. It can be shown that $\mathfrak{C}$ is a compact nowhere dense set such that $\lambda(\mathfrak{C})=1/2$. In particular $\mathfrak{C}^{\circ}=\varnothing$. Let $\varphi(t)=\lim_{n\to \infty}\varphi_n(t)$. Since all the intervals $I_n^k$ are disjoint and $\varphi\in C^{\infty}(\overline{I_n^k})$ for all $n$ and $k$ it follows that $\varphi\in C^{\infty}(\R)$ and $\text{supp}(\varphi)=[0,1]\backslash (\mathfrak{C}\backslash \dv\mathfrak{C})$. Moreover, $0\leq\varphi(t)<1$ for all $t$. However, if we choose a measure $\mu$ such that $\mu\big\vert_{[0,1]}=\varphi(t)dt$, then the measure-theoretic support of $\mu$ restricted to $[0,1]$ equals $[0,1]$. This is because, for every $x\in[0,1]$, we have for every neighbourhood $N_x$ of $x$ that $\lambda(N_x\cap \text{supp}(\varphi))>0$. We now note that $0\leq\varphi_n(t)\leq \varphi_{n+1}(t)\leq \varphi(t)$ for all $n$.
Hence, by Lebesgue's monotone convergence theorem we have for every $x\in\mathfrak{C}$
\begin{align*}
\lim_{n\to\infty}\int_{\R}\frac{\varphi_n(t)dt}{(x-t)^2}=\int_{\R}\frac{\varphi(t)dt}{(x-t)^2}.
\end{align*}
We now observe that for all $0\leq a<b\leq 1$ and $t\in \R$ and $x\in \R\backslash (a,b)$ we have
\begin{align*}
\phi_I(t)\leq e^{-1/(t-x)^2}.
\end{align*}
Therefore,
\begin{align*}
\int_{\R}\frac{\varphi_n(t)dt}{(x-t)^2}&=\sum_{m=1}^n\sum_{k=1}^{2^{m-1}}\int_{I_m^k}\frac{\phi_{I_m^k}(t)dt}{(x-t)^2}\leq  \sum_{m=1}^n\sum_{k=1}^{2^{m-1}}\int_{I_m^k}\frac{e^{-1/(x-t)^2}dt}{(x-t)^2}\\&\leq  \int_{\R}\frac{e^{-1/(x-t)^2}dt}{(x-t)^2}<+\infty,
\end{align*}
for all $n$. Hence $\int_{\R}\frac{\varphi(t)dt}{(x-t)^2}<+\infty$. Now assume that $\lambda( \{x\in \mathfrak{C}: \mathcal{H}\varphi(x)\neq0\})=0$. Take the density $f$ to be
\begin{align*}
f(t)=\varphi(t)+\chi_{[-1,-\Vert \varphi\Vert_1]}(t)
\end{align*}
Then $\mathcal{H}f=\mathcal{H}\varphi+\mathcal{H}\chi_{[-1,-\Vert \varphi\Vert_1]}$. Then for every $x\in [0,1]$, $\mathcal{H}\chi_{[-1,-\Vert \varphi\Vert_1]}(x)>0$. Thus $\lambda( \{x\in \mathfrak{C}: \mathcal{H}f(x)\neq0\})=\lambda(\mathfrak{C})$. This implies that $\lambda(S_{nt}^{sing,III}(\mu))=\lambda(\mathfrak{C})>0$. If on the other hand $\lambda( \{x\in \mathfrak{C}: \mathcal{H}\varphi(x)\neq0\})>0$, then let $f(t)=\varphi(\delta t)$, where $\delta=\Vert \varphi\Vert_1$. Then $\int_{\R}f(t)dt=\int_{\R}\varphi(t \delta)dt=\delta^{-1}\int_{\R}\varphi(x)dx=1$ and $0\leq f(t)<1$. Moreover, since $\mathcal{H}f(x)=\mathcal{H}\varphi(\delta t)(x)=\delta \mathcal{H}\varphi(\delta x)$, it follows that $\lambda( \{x\in \mathfrak{C}: \mathcal{H}f(x)\neq0\})>0$. Hence $\lambda(\{x\in \delta^{-1} \mathfrak{C}: \mathcal{H}f(x)\neq0\})>0$, and consequently $\lambda(S_{nt}^{sing,III}(\mu))>0$.
\end{proof}

\begin{rem}
Note that the set $S_{nt}^{sing,III}(\mu)$ in Lemma \ref{ExSingset1} is uncountable. Moreover, by Proposition \ref{GeoSing2} and Remark \ref{remSing} it follows that $\mathcal{H}^1(\dv \LL(x))>1-\eta_{\G}^{III}(x)>0$ for every $x\in S_{nt}^{sing,III}(\mu)$. Therefore, $\mathcal{H}^1(\dv \LL)>\sum_{x\in S_{nt}^{sing,III}(\mu)}(1-\eta_{\G}^{III}(x))$. However, every uncountable sum of positive real numbers is always infinite. Therefore $\mathcal{H}^1(\dv \LL(x))=+\infty$ for the measure in Lemma \ref{ExSingset1}. In Proposition \ref{propHausdorff} we give a simpler an more explicit example of a measure for which $\mathcal{H}^1(\dv \LL)=+\infty$.
\end{rem}

\begin{Lem}
\label{ExSingset2}
There exists a function $f\in \rho_{1,c}^{\lambda}(\R)$ and an interval $I\subset \text{supp}(f)$, such that $S_{nt}^{sing,III}(\mu)$ is dense in $I$.
\end{Lem}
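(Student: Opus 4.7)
The idea is to mimic the construction of Lemma~\ref{ExSingset1} but localize the rapid-vanishing mechanism around every rational in $(0,1)$, so that each rational becomes a singular point of type III. Fix an enumeration $\{q_n\}_{n\ge1}$ of $\mathbb{Q}\cap(0,1)$, set $\beta_n:=4^{-n}$, and define
$$h_n(t):=\min\{1,\,(t-q_n)^2/\beta_n\},\qquad t\in\R,$$
so that each $h_n$ is continuous on $\R$, $0\le h_n\le 1$, and $h_n(q_n)=0$. Set
$$g(t):=\chi_{[0,1]}(t)\prod_{n\ge1}h_n(t),\qquad f(t):=g(t)+(1-\alpha)\chi_{[2+s,3+s]}(t),$$
where $\alpha:=\int_\R g\,dt$ and $s\in(-1,\infty)$ is a parameter to be fixed at the end.

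First I would check that $g$ is well-defined, strictly positive almost everywhere on $[0,1]$, and $0<\alpha<1$. Setting $E_n:=(q_n-\sqrt{\beta_n},q_n+\sqrt{\beta_n})$, one has $h_n\equiv1$ off $E_n$ and $\sum_n|E_n|=2\sum_n 2^{-n}<\infty$, so the Borel--Cantelli lemma implies that for almost every $t$ only finitely many $h_n(t)$ differ from $1$; hence $g(t)$ is a finite product of strictly positive numbers and is itself positive. Since $g<1$ on the open dense subset $\bigcup_n E_n$ of $[0,1]$, we also get $0<\alpha<1$, and it follows that $f\in\rho_{c,1}^{\lambda}(\R)$.

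Next, I would verify that each $q_m\in\Sn$ and satisfies the integrability half of $\SsingIII$. Positivity of $g$ a.e.\ on $[0,1]$ gives $\supp(\mu)\supseteq[0,1]$, while positivity of $1-g$ on the open dense set $\bigcup_n E_n$ gives $\supp(\lambda-\mu)\supseteq[0,1]$ (every subinterval of $[0,1]$ contains some $E_n$ of positive measure). As $q_m\in(0,1)$ avoids the endpoints of $\supp(\mu)=[0,1]\cup[2+s,3+s]$, also $q_m\notin R_1\cup R_2$. For the integrability $\int_\R f(t)/(q_m-t)^2\,dt<\infty$ the $\chi_{[2+s,3+s]}$ contribution is trivially finite, while using $\prod_{n\ne m}h_n\le1$ and splitting at $|t-q_m|=\sqrt{\beta_m}$ gives $h_m(t)/(q_m-t)^2\equiv1/\beta_m$ on $E_m$ and $\le(q_m-t)^{-2}$ outside, contributing a total of order $1/\sqrt{\beta_m}$.

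The main obstacle is the remaining condition $\mathcal{H}f(q_m)\ne0$ for \emph{every} $m$ simultaneously. The integrability above and the inequality $1/|x|\le 1+1/x^2$ give $\int g(t)/|q_m-t|\,dt<\infty$, so $\mathcal{H}g(q_m)$ is a well-defined finite number independent of $s$. A direct calculation yields
$$\mathcal{H}\!\left[(1-\alpha)\chi_{[2+s,3+s]}\right](q_m)=\frac{1-\alpha}{\pi}\log\frac{2+s-q_m}{3+s-q_m},$$
which is strictly monotone in $s$ on $(-1,\infty)$. Hence for each fixed $m$ the equation $\mathcal{H}f(q_m)=0$ has at most one solution $s_m$, and the exceptional set $\{s_m:m\ge1\}$ is at most countable; any $s\in(-1,\infty)$ avoiding it works. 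For such $s$ every $q_m$ belongs to $\SsingIII$, and since $\{q_m\}$ is dense in the interval $I:=(0,1)\subset\supp(f)$, the lemma follows.
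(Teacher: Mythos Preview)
Your proposal is correct and follows the same overall strategy as the paper --- build a density that vanishes quadratically at each point of a countable dense subset of $(0,1)$ (so that $\int f(t)(x-t)^{-2}\,dt<\infty$ there), then append an auxiliary piece to force $\mathcal{H}f\neq 0$ --- but the execution differs in two respects. First, the paper realises the quadratic vanishing via $\tfrac12\chi_{[0,1]\setminus U}+\inf_{r,k}(x_k^r-t)^2\chi_{I_k^r}$ over the dyadic points, and must separately argue that this infimum is not identically zero on $U$; your infinite product $\prod_n h_n$ over the rationals gets positivity a.e.\ for free via Borel--Cantelli. Second, to secure $\mathcal{H}f\neq 0$ on the whole dense set, the paper splits into two cases (either $\{\mathcal{H}\varphi\neq 0\}$ is already dense in some subinterval, or else adding a fixed $\chi_{[-1,-\|\varphi\|_1]}$ shifts the Hilbert transform off zero on the complementary dense set), whereas you introduce a translation parameter $s$ for the auxiliary block and note that strict monotonicity of $s\mapsto \log\frac{2+s-q_m}{3+s-q_m}$ allows at most one bad $s$ per $m$, so countability finishes. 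Your parameter trick is cleaner than the paper's case split; the paper's dyadic setup, on the other hand, makes the covering intervals more explicit.
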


\begin{proof}
Consider the dyadic set $S=\{x_k^r=k2^{-r}:r\geq 1, 1\leq k<2^r, \text{$k$ odd} \}$. The set $S$ is dense in the interval $[0,1]$. Let $I_k^r=(x_k^r-2^{-2r-1},x_k^r+2^{-2r-1})$. Then 
$U=\bigcup_{r,k}I_k^r$ is an open cover of the dense set $S$ such that
\begin{align*}
\vert U \vert &\leq \sum_{r=1}^{\infty} \sum _{\substack{k=1\\ \text{$k$ odd}}}^{2^r}\vert I_k^r\vert\leq\sum_{r=1}^{\infty} \sum _{k=1}^{2^r}\frac{1}{2^{2r}}=\sum_{r=1}^{\infty} \frac{1}{2^{r}}=\frac{1}{2}.
\end{align*}
Define the function $\varphi$ according to
\begin{align*}
\varphi(t):=\frac{1}{2}\chi_{[0,1]\backslash U}(t)+\inf_{r,k}\{(x_k^r-t)^2\chi_{I_k^r}(t)\}.
\end{align*}
We will show that $\inf_{r,k}\{(x_k^r-t)^2\chi_{I_k^r}(t)\}$ is not identically 0. Consider the set $E=\{t\in U\backslash S:\inf_{r,k}\{(x_k^r-t)^2\chi_{I_k^r}(t)\}=0\}$. We now estimate the measure of $E$. It is clear that if $t\in U\backslash S$ is such that there exists an $r_0$, such that 
\begin{align*}
t\notin \bigcup_{r=r'}^{\infty}\bigcup_{\substack{k=1\\ \text{$k$ odd}}}^{2^{r}}I_k^r
\end{align*}
whenever $r'\geq r_0$, then $\inf_{r,k}\{(x_k^r-t)^2\chi_{I_k^r}(t)\}>0$. Thus $E\subset \bigcup_{r=r_0}^{\infty}\bigcup_{\substack{k=1\\ \text{$k$ odd}}}^{2^{r}}I_k^r$ for every $r_0\geq 1$. Hence for every $\eps>0$
\begin{align*}
\vert E\vert &\leq \sum_{r=r_0}^{\infty} \sum _{\substack{k=1\\ \text{$k$ odd}}}^{2^r}\vert I_k^r\vert\leq\sum_{r=r_0}^{\infty} \sum _{k=1}^{2^r}\frac{1}{2^{2r}}=\sum_{r=r_0}^{\infty} \frac{1}{2^{r}}<\eps,
\end{align*}
whenever $r_0$ is sufficiently large. Thus $\vert E\vert =0$. Moreover, we clearly have that for any $x_k^r\in S$
\begin{align*}
\int_{\R}\frac{\varphi(t)dt}{(x_k^r-t)^2}&\leq \frac{1}{2}\int_{[0,1]\backslash I_k^r}\frac{dt}{(x_k^r-t)^2}+\frac{1}{2}\int_{ I_k^r}\frac{\inf_{m,l}\{(x_l^m-t)^2\chi_{I_l^m}(t)\}dt}{(x_k^r-t)^2}\\
&\leq \frac{1}{2}\int_{[0,1]\backslash I_k^r}\frac{dt}{(x_k^r-t)^2}+\frac{1}{2}\int_{ I_k^r}\frac{(x_k^r-t)^2\chi_{I_k^r}(t)dt}{(x_k^r-t)^2}<+\infty.
\end{align*}
Now assume that the set $\{x\in S:\mathcal{H}\varphi(x)\neq0\}$ is not dense on any interval $J\subset [0,1]$. Then the set $\{x\in S:\mathcal{H}\varphi(x)= 0\}$ is dense in $[0,1]$. Let 
\begin{align*}
f(t)=\varphi(t)+\chi_{[-1,-\Vert \varphi\Vert_1]}(t). 
\end{align*}
Then $\mathcal{H}\chi_{[-1,-\Vert \varphi\Vert_1]}(x)>0$ for all $x\in[0,1]$. Hence the set $\{x\in S:\mathcal{H}f(x)\neq 0\}$ is dense in $[0,1].$ Thus $S_{nt}^{sing,III}(\mu)$ is dense in [0,1]. If on the other hand the set $\{x\in S:\mathcal{H}\varphi(x)\neq0\}$ is dense in some interval $J\subset [0,1]$, let $f(t)=\varphi(\delta t)$, where $\delta=\Vert \varphi\Vert_1$. Then $\{x\in \delta^{-1}S:\mathcal{H}f(x)\neq0\}$ is dense in $\delta^{-1}J$.  Thus $S_{nt}^{sing,III}(\mu)$ is dense in $\delta^{-1}J$.
\end{proof}

\subsection{Geometry of $\dv\mathcal{L}(x)$ when $x\in\Ssing$.}
We have seen that if $x$ is singular then $\dv \LL(x)\neq \{(x,1)\}$. In this section we will determine subsets of $\dv \LL(x)$, and under additional assumptions on the density $f$, we will be able to determine the entire set $\dv\LL(x)$.
\begin{Prop}
\label{GeoSing1}
Let $x\in \mathcal{S}_{nt}^{sing,I}(\mu)\cap(\Sn^{\circ}\cup\Siso)$ and write for some $\delta>0$ the density as
\begin{align*}
f(t)= \chi_{[x-\delta,x]}(t)+\varphi(t).
\end{align*}
\begin{itemize}
\item
If $x\in \mathcal{S}_{nt}^{sing,I}(\mu)\cap\Sn^{\circ}$ assume that there exists an $\eps>0$ such that for all $y\in(x-\eps,x)\cup(x,x+\eps)$ 
\begin{align}
\label{Infty1}
\int_{x-\delta}^{x+\delta}\frac{\vert \varphi(t)\vert dt}{(y-t)^2}=+\infty.
\end{align} 
\item
If $x\in \mathcal{S}_{nt}^{sing,I}(\mu)\cap\Siso\cap\dv S_R^0$ assume that there exists an $\eps>0$ such that for all $y\in(x,x+\eps)$
\begin{align}
\label{Infty1BoundaryR}
\int_{x}^{x+\delta}\frac{\vert \varphi(t)\vert dt}{(y-t)^2}=+\infty.
\end{align} 
\item
If $x\in \mathcal{S}_{nt}^{sing,I}(\mu)\cap\Siso\cap\dv S_L^1$ assume that there exists an $\eps>0$ such that for all $y\in(x-\eps,x)$
\begin{align}
\label{Infty1BoundaryL}
\int_{x-\delta}^{x}\frac{\vert \varphi(t)\vert dt}{(y-t)^2}=+\infty.
\end{align} 
\end{itemize}
Then the parametrized line segment
\begin{align}
\{(\chi_{I}(\xi),\eta_{I}(\xi)):\xi\in(0,+\infty)\}&:=\bigg\{\bigg(x,1-\frac{\delta e^{\pi \mathcal{H}\varphi(x)}}{1+\xi}\bigg):\xi\in(0,+\infty)\bigg\}\nonumber \\&\subset \{(\chi,\eta)\in \R^2:\chi=x\}
\end{align} 
is a subset of $\Dl(x)$. In particular if $x\in \Sn^{\circ}$, there exist two one-parameter families of tangential continuous curves $\{s+iv_{\xi}^{+}(s):s\in(0,+\infty),\xi\in(0,+\infty)\}$ and $\{s+iv_{\xi}^-(s):s\in(-\infty,0),\xi\in(0,+\infty)\}$ where $v_{\xi}^{\pm}(s)$ is a continuous function of $s$ for each $\xi\in(0,+\infty)$, such that 
\begin{align*}
\lim_{s\rightarrow 0^{\small\pm}}\big(\chi_\LL(x+s+iv_{\xi}^{\pm}(s)),\eta_\LL(x+s+iv_{\xi}^{\pm}(s))\big)=(\chi_I(\xi),\eta_I(\xi)),
\end{align*}
and if $x\in \dv S_R^0$ there exists a one-parameter family of tangential continuous curves $\{s+iv_{\xi}^{+}(s):s\in(0,+\infty),\xi\in(0,+\infty)\}$ such that
\begin{align*}
\lim_{s\rightarrow 0^{\small+}}\big(\chi_\LL(x+s+iv_{\xi}^{\pm}(s)),\eta_\LL(x+s+iv_{\xi}^{\pm}(s))\big)=(\chi_I(\xi),\eta_I(\xi)),
\end{align*}
and if $x\in \dv S_L^1$ there exists a one-parameter family of tangential continuous curves $\{s+iv_{\xi}^{-}(s):s\in(-\infty,0),\xi\in(0,+\infty)\}$ such that
\begin{align*}
\lim_{s\rightarrow 0^{\small+}}\big(\chi_\LL(x+s+iv_{\xi}^{\pm}(s)),\eta_\LL(x+s+iv_{\xi}^{\pm}(s))\big)=(\chi_I(\xi),\eta_I(\xi)).
\end{align*}
The curves $\{s+iv_{\xi}^{+}(s):s\in(0,+\infty),\xi\in(0,+\infty)\}$ and $\{s+iv_{\xi}^-(s):s\in(-\infty,0),\xi\in(0,+\infty)\}$ satisfy the equation
\begin{align*}
G_1(s,v_{\xi}^{\pm}(s);x)=\xi
\end{align*}
where 
\begin{align*}
G_1(s,v;x)=1_{s>0}\int_{x}^{x+2s}\frac{s\varphi(t)dt}{(x+s-t)^2+v^2}+1_{s<0}\int_{x+2s}^{x}\frac{s\varphi(t)dt}{(x+s-t)^2+v^2}.
\end{align*}
\end{Prop}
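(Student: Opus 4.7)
The plan is to refine the asymptotic analysis of Proposition~\ref{Sing1} by tracking lower-order corrections that become dominant in the tangential regime $|s_n|/v_n \to \infty$. Setting $u_n = x + s_n$ and splitting $f = \chi_{[x-\delta,x]} + \varphi$, the characteristic-function part contributes explicit integrals, giving
\begin{align*}
\pi H_{v_n}f(u_n) &= \log\tfrac{\sqrt{(s_n+\delta)^2+v_n^2}}{\sqrt{s_n^2+v_n^2}} + \pi H_{v_n}\varphi(u_n),\\
\pi P_{v_n}f(u_n) &= \arctan\tfrac{s_n+\delta}{v_n} - \arctan\tfrac{s_n}{v_n} + \pi P_{v_n}\varphi(u_n).
\end{align*}
Using $\arctan(a) = \operatorname{sgn}(a)\pi/2 - \arctan(1/a)$ for large $|a|$, the Poisson difference from the characteristic function collapses to $v_n\delta/[s_n(s_n+\delta)] + O((v_n/s_n)^3) = v_n/s_n + o(v_n/s_n)$ along tangential sequences. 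Correspondingly $v_n e^{\pi H_{v_n}f(u_n)} \sim (\delta v_n/s_n)e^{\pi H_{v_n}\varphi(u_n)}$, while $v_n e^{-\pi H_{v_n}f(u_n)}$ and $v_n\cos(\pi P_{v_n}f(u_n))$ are both of order $o(v_n/s_n)$.

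Setting $\xi_n := s_n\pi P_{v_n}\varphi(u_n)/v_n$, substituting into (\ref{eqchin})--(\ref{eqetan}) and dividing numerator and denominator by $v_n/s_n$ yields, to leading order,
\begin{align*}
\chi_\LL(u_n,v_n) - u_n = -\frac{s_n}{1+\xi_n}(1+o(1)),\qquad 1-\eta_\LL(u_n,v_n) = \frac{\delta\, e^{\pi H_{v_n}\varphi(u_n)}}{1+\xi_n}(1+o(1)).
\end{align*}
Since $s_n \to 0$ the first expression gives $\chi_\LL\to x$; to identify the $\eta_\LL$ limit with $1-\delta e^{\pi\mathcal{H}\varphi(x)}/(1+\xi)$ I would separately verify $H_{v_n}\varphi(u_n) \to \mathcal{H}\varphi(x)$ on the constructed tangential paths, exploiting the Dini-type condition at $x$ (which follows, as in Proposition~\ref{Sing1}, from $\varphi(x)=0$, $|\mathcal{H}\varphi(x)|<\infty$ and the sign of $\varphi$ near $x$). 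Moreover, a standard splitting identifies $\xi_n = G_1(s_n,v_n;x) + o(1)$: the tail $|s_n|\int_{|t-x|\ge 2|s_n|}|\varphi(t)|/((u_n-t)^2+v_n^2)\,dt$ is controlled by $|s_n|\int|\varphi(t)|/(x-t)^2\,dt$ over its integrable region, which vanishes as $s_n\to 0$.

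The remaining task is to solve the implicit equation $G_1(s,v;x)=\xi$ for a continuous tangential curve $v=v_\xi^\pm(s)$. Fix $s>0$ small; the map $v\mapsto G_1(s,v;x)$ is continuous on $(0,\infty)$, bounded above by $\|\varphi\|_\infty s/v$ (hence $\to 0$ as $v\to\infty$), and nonnegative because $\varphi = f \ge 0$ on $[x,x+2s]\subset(x,x+\delta)$. Hypothesis (\ref{Infty1}), applied at $y=x+s$ and localized to $[x,x+2s]=(y-s,y+s)$ using boundedness of $|\varphi|$ away from $y$, yields $\lim_{v\to 0^+}G_1(s,v;x)=+\infty$. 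The intermediate value theorem therefore produces $v_\xi^+(s)$, and joint continuity of $G_1$ on $(0,\infty)\times(0,\infty)$ gives continuity of $v_\xi^+$ in $s$. Tangentiality is automatic: along any non-tangential ray $v=cs$, the change of variables $t=x+s+s\tau$ gives $G_1(s,cs;x)=\int_{-1}^1\varphi(x+s+s\tau)/(\tau^2+c^2)\,d\tau \to 0$ as $s\to 0^+$ by bounded convergence and $\varphi(x)=0$, so the level set $\{G_1=\xi\}$ must eventually lie strictly below every such ray, forcing $v_\xi^+(s)/s\to 0$.

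The main obstacle is the tangential convergence $H_{v_n}\varphi(u_n)\to\mathcal{H}\varphi(x)$: Proposition~\ref{DiniConv} does not apply directly to tangential cones at $x$, so one should apply it instead at each nearby $u_n$ (where the approach of $u_n+iv_n$ is effectively vertical since $v_n\ll |s_n|$), and then control $\mathcal{H}\varphi(u_n)\to\mathcal{H}\varphi(x)$ by using the Dini-type integrability $\int|\varphi(t)|/|x-t|\,dt<\infty$ together with the explicit parametrization through $G_1$. The boundary cases $x\in\dv\mathcal{S}_R^0$ and $x\in\dv\mathcal{S}_L^1$ are treated analogously: in each case $\varphi\equiv 0$ on one side of $x$, so only one family of tangential curves exists, and the one-sided hypotheses (\ref{Infty1BoundaryR}) or (\ref{Infty1BoundaryL}) suffice to force $G_1(s,v;x)\to+\infty$ as $v\to 0^+$ on the nontrivial side.
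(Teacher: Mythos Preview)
Your overall architecture matches the paper's: decompose $f=\chi_{[x-\delta,x]}+\varphi$, expand the arctangent contribution tangentially, reduce the $\eta_\LL$ limit to controlling the quotient $\delta e^{\pi H_{v(s)}\varphi(s)}/(1+\xi)$ where $\xi$ arises from the implicit equation $G_1(s,v;x)=\xi$, and verify tangentiality and existence of $v_\xi^\pm(s)$. Your tangentiality argument (change of variables $t=x+s+s\tau$ and bounded convergence) is equivalent to the paper's contradiction argument. One small correction: your tail bound for $\xi_n-G_1$ should use $|s_n|/((u_n-t)^2+v_n^2)\le 2/|x-t|$ on $|t-x|\ge 2|s_n|$, not $|s_n|/(x-t)^2$; the latter need not be integrable, while $\int|\varphi(t)|/|x-t|\,dt<\infty$ is exactly the Dini condition you already have.

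The genuine gap is the step you flag as ``the main obstacle'': showing $H_{v(s)}\varphi(x+s)\to\mathcal{H}\varphi(x)$ along the tangential curve. Your proposed fix---apply Proposition~\ref{DiniConv} at $u_n$ and then pass $\mathcal{H}\varphi(u_n)\to\mathcal{H}\varphi(x)$---does not work: you have no Dini condition for $\varphi$ at $u_n$, and $\mathcal{H}\varphi$ has no a~priori continuity at $x$. The paper's device is different and uses the constraint $G_1=\xi$ itself. For arbitrary $\eta\in(0,1)$ split
\[
\pi H_{v(s)}\varphi(s)=\int_{|s-t|\ge\eta s}\frac{(s-t)\varphi(t)}{(s-t)^2+v(s)^2}\,dt+\int_{|s-t|<\eta s}\frac{(s-t)\varphi(t)}{(s-t)^2+v(s)^2}\,dt=:I_1(s)+I_2(s).
\]
On the first region $|t|\le(1+1/\eta)|s-t|$ gives the dominating function $(1+1/\eta)|\varphi(t)|/|t|$, so $I_1(s)\to\pi\mathcal{H}\varphi(0)$ by dominated convergence. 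For the second, $|s-t|\le\eta s$ and $\varphi\ge 0$ on $(0,2s)$ yield
\[
|I_2(s)|\le \eta\, s\!\int_0^{2s}\frac{\varphi(t)\,dt}{(s-t)^2+v(s)^2}=\eta\,G_1(s,v(s);0)=\eta\xi.
\]
Since $\eta\in(0,1)$ is arbitrary, $\limsup_{s\to0^+}|\pi H_{v(s)}\varphi(s)-\pi\mathcal{H}\varphi(0)|\le\eta\xi\to 0$. This is the missing idea: the very equation defining $v_\xi(s)$ furnishes the bound on the singular local piece of the conjugate Poisson integral.
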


\begin{rem}
It is worth mentioning that the assumption that $x\in \mathcal{S}_{nt}^{sing,I}(\mu)\cap(\Sn^{\circ}\cup\Siso)$ is never strictly used. In particular, the assumption that there exists an $\eps>0$ such that for all $y\in(x-\eps,x)\cup(x,x+\eps)$ 
\begin{align*}
\int_{x-\delta}^{x+\delta}\frac{\vert \varphi(t)\vert dt}{(y-t)^2}=+\infty,
\end{align*} 
implies that $(x,x+\eps)\cap (\mathcal{S}_{nt}^{sing,III}\cup R_{\mu})=\varnothing$ and $(x-\eps,x)\cap (\mathcal{S}_{nt}^{sing,IV}\cup R_{\lambda-\mu})=\varnothing$.
\end{rem}

\begin{proof}
Let $x\in \mathcal{S}_{nt}^{sing,I}\cap\Sn^{\circ}$. We may without loss of generality assume $x=0$. The Poisson integral of $f$ can then be written as
\begin{align*}
\pi P_v f(u)&=\arctan\bigg(\frac{u+\delta}{v}\bigg)-\arctan\bigg(\frac{u}{v}\bigg)+\int_{\R}\frac{v\varphi(t)dt}{(u-t)^2+v^2}.
\end{align*} 
We now let $w(s)=s+iv(s)$ be any path in $\Hp$ such that $\lim_{s\rightarrow 0}w(s)=0$ and $\lim_{s\rightarrow 0}\vert \frac{v(s)}{s}\vert=0$, i.e., we assume that the path is tangential. Then, using the identities $\arctan\big(\frac{1}{x}\big)=\frac{\pi}{2}-\arctan x$ for $x>0$ and $\arctan\big(\frac{1}{x}\big)=-\frac{\pi}{2}-\arctan x$
for $x<0$ and $\arctan(x)=x+O(x^2)$
we find that
\begin{align*}
\arctan\bigg(\frac{s+\delta}{v(s)}\bigg)-\arctan\bigg(\frac{s}{v(s)}\bigg)=\left\{
\begin{array}{rl} \frac{v(s)}{s}+O(v(s)^2/s^2) & \text{if } s>0 \\
\pi+\frac{v(s)}{s}+O(v(s)^2/s^2)  & \text{if } s<0. 
 \end{array} \right.
\end{align*} 
Hence
\begin{align}
\label{P1}
\pi P_{v(s)} f(s)=\left\{
\begin{array}{rl} \displaystyle\frac{v(s)}{s}+\int_{\R}\frac{v(s)\varphi(t)dt}{(s-t)^2+v(s)^2}+O\bigg(\frac{v(s)^2}{s^2}\bigg) & \text{if } s>0 \\
\displaystyle\pi+\frac{v(s)}{s}+\int_{\R}\frac{v(s)\varphi(t)dt}{(s+t)^2+v(s)^2}+O\bigg(\frac{v(s)^2}{s^2}\bigg)  & \text{if } s<0. 
 \end{array} \right.
\end{align} 
We now compute $\pi H_{v(s)}f(s)$ getting
\begin{align}
\label{H1}
\pi H_{v(s)}f(s)&=\pi H_{v(s)}\varphi(s)-\log\sqrt{s^2+v(s)^2}+\log\sqrt{(s+\delta)^2+v(s)^2}
\end{align} 
Hence, (\ref{P1}) and (\ref{H1}) gives
\begin{align}
\label{eqSing1L}
\lim_{s\rightarrow 0}\frac{v(s)e^{\pi H_{v(s)}f(s)}}{\sin(\pi P_{v(s)} f(s))}&=\lim_{s\rightarrow 0}\frac{\text{sgn}(s)\sqrt{(s+\delta)^2+v(s)^2}}{\sqrt{s^2+v(s)^2}}\frac{v(s)e^{\pi H_{v(s)}\varphi(s)}}{\displaystyle\frac{v(s)}{s}+\int_{\R}\frac{v(s)\varphi(t)dt}{(s-t)^2+v(s)^2}+O\bigg(\frac{v(s)^2}{s^2}\bigg) }\nonumber \\
&=\lim_{s\rightarrow 0}\frac{\text{sgn}(s)\delta}{\sqrt{1+(v(s)/s)^2}}\frac{e^{\pi H_{v(s)}\varphi(s)}}{\displaystyle\text{sgn}(s)\bigg(1+s\int_{\R}\frac{\varphi(t)dt}{(s-t)^2+v(s)^2}+O\bigg(\frac{v(s)}{s}\bigg) \bigg)}\nonumber \\
&=\lim_{s\rightarrow 0}\frac{\delta e^{\pi H_{v(s)}\varphi(s)}}{\displaystyle1+s\int_{\R}\frac{\varphi(t)dt}{(s-t)^2+v(s)^2}},
\end{align} 
since $v(s)/s\to 0$ as $s\to 0^+$. This suggest that we find $(s,v(s))$ such that
\begin{align*}
s\int_{\R}\frac{\varphi(t)dt}{(s-t)^2+v(s)^2}=\xi+o(1)
\end{align*} 
as $s\to 0$ for some fixed $\xi\in(0,+\infty)$.
Restricting ourselves to the case when $s>0$, we split the integral into two terms according to
\begin{align*}
s \int_{\R}\frac{\varphi(t)dt}{(s-t)^2+v(s)^2}&=s \int_{\vert s-t\vert\geq s}\frac{\varphi(t)dt}{(s-t)^2+v(s)^2}+s \int_{0}^{2s}\frac{\varphi(t)dt}{(s-t)^2+v(s)^2}\\
&:=J_1(s)+J_2(s).
\end{align*} 
For any fixed $t\neq 0$
\begin{align*}
\lim_{s\rightarrow 0^+}\frac{s\chi_{\vert s-t\vert\geq s}(t)}{(s-t)^2+v(s)^2}=0
\end{align*} 
and we also have the estimate 
\begin{align*}
\frac{s\chi_{\vert s-t\vert\geq s}(t)}{(s-t)^2+v(s)^2}\leq\frac{\vert s-t\vert\chi_{\vert s-t\vert\geq s}(t)}{(s-t)^2+v(s)^2}\leq\frac{2}{\vert t\vert}
\end{align*} 
since $\vert t\vert\leq 2\vert s-t\vert$. The assumption that $\vert \mathcal{H}\varphi(0)\vert<+\infty$ together with the argument at the beginning of the proof of Proposition \ref{Sing1} shows that 
\begin{align}
\label{eqSing1Int}
\int_{\R}\bigg\vert\frac{\varphi(t)}{t}\bigg\vert dt<+\infty.
\end{align}
It follows from Lebesgue's dominated convergence theorem that 
\begin{align}
\label{eqSing1Lim}
\lim_{s\rightarrow 0^+}J_1(s)=0.
\end{align}
We therefore let
\begin{align}
G_1(s,v;0)&:= s \int_{0}^{2s}\frac{\varphi(t)dt}{(s-t)^2+v^2}.
\end{align} 
We note that since $\varphi(t)\geq 0$ for $t>0$, $G_1(s,v;0)$ is a positive monotonically decreasing function of $v$. By assumption (\ref{Infty1})
\begin{align*}
\int_{-\delta}^{\delta}\frac{\vert \varphi(t)\vert dt}{(s-t)^2}=+\infty.
\end{align*}
Since by assumption $\varphi(t)\geq 0$ for $t\in[0,\delta]$
\begin{align*}
\int_{-\delta}^{\delta}\frac{\vert \varphi(t)\vert dt}{(s-t)^2}=\int_{-\delta}^{0}\frac{\vert \varphi(t)\vert dt}{(s-t)^2}+\int_{0}^{2s}\frac{\varphi(t)dt}{(s-t)^2}+\int_{2s}^{\delta}\frac{\varphi(t)dt}{(s-t)^2},
\end{align*}
and clearly
\begin{align*}
\int_{-\delta}^{0}\frac{\vert \varphi(t)\vert dt}{(s-t)^2}<+\infty\quad, \int_{2s}^{\delta}\frac{\varphi(t)dt}{(s-t)^2}<+\infty,
\end{align*}
it follows that
\begin{align*}
\int_{0}^{2s}\frac{\varphi(t)dt}{(s-t)^2}=+\infty.
\end{align*}
for all $s\in(0,\eps)$ and some $0<\eps<\delta/2$, and it follows by Fatou's lemma that
\begin{align*}
\lim_{v\rightarrow 0^+}G_1(s,v;0)=+\infty.
\end{align*}
In addition, 
\begin{align*}
\lim_{v\rightarrow +\infty}G_1(s,v;0)=0.
\end{align*}
Therefore, the equation
\begin{align}
\label{Eq1}
G_1(s,v;0)=s\int_{0}^{2s}\frac{\varphi(t)dt}{(s-t)^2+v^2}=\xi
\end{align} 
has unique solution $v=v_{\xi}^+(s)$ for each fixed $\xi\in(0,+\infty)$ and all $s\in(0,\eps)$. Since 
\begin{align*}
\partial_{v}G_1(s,v;0)=-2sv\int_{\R}\frac{\varphi(t)dt}{((s-t)^2+v^2)^2}<0
\end{align*} 
for all $s,v\in\Hp$ the implicit function theorem implies that for each $\xi>0$, $v_{\xi}^+(s)$ is a continuous function of $s\in (0,\eps)$. 
We want to show that $v_\xi^+(s)/s\to 0$ as $s\to 0^+$, i.e., that the path is tangential. For simplicity of notation we will write just $v(s)$ for the solution of (\ref{Eq1}). Assume that there is a sequence $\{s_j\}_j$ such that $v(s_j)/s_j\to 2/k$ as $j\to \infty$. Then, for all sufficiently large $j$ we have $s_j\leq kv(s_j)$. However, since $0$ belongs to the Lebesgue set of $\varphi$ and $\varphi(0)=0,$ we have, by (\ref{Eq1}), that 
\begin{align*}
\xi&=\lim_{j\to \infty}s_j\int_{0}^{2s_j}\frac{\varphi(t)dt}{(s_j-t)^2+v(s_j)^2}\leq \lim_{j\to \infty}k\int_{0}^{2s_j}\frac{v(s_j)\varphi(t)dt}{(s_j-t)^2+v(s_j)^2}\\
&=\lim_{j\to \infty}k\pi P_{v(s_j)}\varphi(s_j)=0,
\end{align*} 
contradicting $\xi>0$. Here we also used (\ref{eqSing1Lim}) to extend the domain of integration to $\R$ in the next to last inequality. Next we want to prove that
\begin{align}
\label{eqSing1Hilb}
\lim_{s\to 0^+}H_{v(s)}\varphi(s)=\mathcal{H}\varphi(0),
\end{align} 
which is not immediate since the curve $(s,v(s))$ is tangential. Take $\eta\in (0,1)$ fixed but arbitrary and write 
\begin{align}
\label{eqSing1Lim2}
\pi H_{v(s)}\varphi(s)&=\int_{\vert s-t\vert\geq \eta s}\frac{(s-t)\varphi(t)dt}{(s-t)^2+v(s)^2}+\int_{\vert s-t\vert\leq \eta s}\frac{(s-t)\varphi(t)dt}{(s-t)^2+v(s)^2}\nonumber \\
&=I_1(s)+I_2(s). 
\end{align} 
If $\vert s-t\vert\geq \eta s$, then $\vert t\vert \leq \vert s-t \vert +s\leq (1+1/\eta)\vert s-t\vert$ and we see that 
\begin{align*}
\frac{\vert s-t\vert\chi_{\vert s-t\vert\geq \eta s}}{(s-t)^2+v(s)^2}\leq \frac{1+1/\eta}{\vert t\vert}.
\end{align*} 
Also, 
\begin{align*}
\lim_{s\to 0^+}\frac{(s-t)\varphi(t)}{(s-t)^2+v(s)^2}=-\frac{\varphi(t)}{t}
\end{align*} 
for all $t\neq 0$. By (\ref{eqSing1Int}) and the dominated convergence theorem we see that 
\begin{align}
\label{eqSing1Lim3}
\lim_{s\to 0^+}I_1(s)=\mathcal{H}\varphi(0).
\end{align} 
Now, 
\begin{align*}
\vert I_2(s)\vert \leq\int_{\vert s-t\vert\leq \eta s}\frac{\vert s-t\vert \varphi(t)dt}{(s-t)^2+v(s)^2}\leq \eta\int_0^{2s}\frac{s\varphi(t)}{(s-t)^2+v(s)^2}=\eta\xi
\end{align*} 
since $\eta\leq 1$. It follows from this estimate that, (\ref{eqSing1Lim2}) and (\ref{eqSing1Lim3}) that 
\begin{align*}
\limsup_{s\to 0^+}\vert \pi H_{v(s)}\varphi(s)-\mathcal{H}\varphi(0)\vert \leq \eta \xi. 
\end{align*} 
Since $\eta\in(0,1)$ was arbitrary we have proved (\ref{eqSing1Hilb}).
It follows from (\ref{H1}), (\ref{eqSing1L}) and (\ref{eqSing1Hilb}) that 
\begin{align*}
\lim_{s\rightarrow 0+}\frac{v(s)e^{\pi H_{v(s)}f(s)}}{\sin(\pi P_{v(s)} f(s))}&=\frac{\delta e^{\pi \mathcal{H}\varphi(0)}}{1+\xi}.
\end{align*} 
Looking back at (\ref{H1}) and (\ref{eqSing1L}) we also see that 
\begin{align*}
\lim_{s\rightarrow 0^+}\frac{v(s)e^{-\pi H_{v(s)}f(s)}}{\sin(\pi P_{v(s)} f(s))}&=0.
\end{align*} 
Finally,
\begin{align*}
\lim_{s\rightarrow 0^+}\frac{v(s)\cos(\pi P_{v(s)} f(s))}{\sin(\pi P_{v(s)} f(s))}=\lim_{s\rightarrow 0^+}\frac{\cos(\pi P_{v(s)} f(s))}{\frac{1}{s}+\frac{\xi}{s}+\frac{1}{s}O(v(s)/s)}=0.
\end{align*} 
To conclude this implies that
\begin{align*}
\lim_{s\rightarrow 0^+}\chi_\LL(x+s,v_{\xi }^{+}(s))&=x\\
\lim_{s\rightarrow 0^+}\eta_\LL(x+s,v_{\xi}^{+}(s))&=1-\frac{\delta e^{\pi \mathcal{H}\varphi(0)}}{1+\xi}
\end{align*} 
for any fixed $\xi>0$, and an identical argument shows that this holds also for $v_\xi(s)^-$. 
\end{proof}

\begin{Cor}
\label{corGeoSing1}
Let $x\in \mathcal{S}_{nt}^{sing,I}(\mu)\cap\Sn^{\circ}$ and write for some $\delta>0$ the density
\begin{align*}
f(t)=\chi_{[x-\delta,x]}(t)+\varphi(t).
\end{align*}
Assume that $x$ belongs to the Lebesgue set of $\varphi$ and that $\varphi(x)=0$ and $\vert \mathcal{H}\varphi(x)\vert<+\infty$. Then for each fixed $\xi\in(0,+\infty)$ there exists a sequence $\{u_n+iv_{\xi}(u_n)\}_{n=1}^{+\infty}\subset \Hp$ such that $\lim_{n\rightarrow+\infty}u_n+iv_{\xi}(u_n)=x$ and such that
\begin{align*}
&\lim_{n\rightarrow+\infty}\chi_\LL(u_n+iv_{\xi}(s_n))=\chi_{I}(\xi)\\
&\lim_{n\rightarrow+\infty}\eta_\LL(u_n+iv_{\xi}(s_n))=\eta_{I}(\xi).
\end{align*}
\end{Cor}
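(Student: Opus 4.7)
The proof follows that of Proposition \ref{GeoSing1}, replacing the continuous curve $v_\xi^+(\cdot)$ by a discrete sequence. Without hypothesis (\ref{Infty1}), $\lim_{v \to 0^+} G_1(s, v; x)$ need not be $+\infty$ for every small $s > 0$, so the plan is to extract a sequence $s_n \to 0^+$ along which this divergence \emph{does} hold, and then invoke the remaining steps of the proposition's argument unchanged.

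The construction of the sequence is the main step. From $x \in \Sn^{\circ}$ and $0 \le f \le 1$ with $f = \chi_{[x-\delta,x]} + \varphi$, it follows that $\varphi \ge 0$ on $(x, x+\delta)$ and that the set $\{\varphi > 0\}$ has positive Lebesgue measure in every interval $(x, x+\epsilon)$: otherwise $f \equiv 0$ on some right-neighbourhood of $x$, which would place $x \in \dv\mathcal{S}_R^0$, contradicting $x \in \Sn^{\circ}$. Combining this with the Lebesgue density theorem, I select $u_n = x + s_n \to x^+$ such that each $u_n$ is simultaneously a Lebesgue point of $\varphi$ and satisfies $\varphi(u_n) > 0$. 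For such $u_n$ the Lebesgue-point property produces a set $E_n \subset (u_n - r, u_n + r)$ of density exceeding $1/2$ (for small $r>0$) on which $\varphi > \varphi(u_n)/2$, whence
\[
G_1(s_n, 0^+; x) \;=\; s_n \int_{-s_n}^{s_n} \frac{\varphi(u_n + r)\,dr}{r^2} \;=\; +\infty,
\]
since $r^{-2}$ is not integrable over a positive-density set near $r = 0$. Continuity and strict monotonicity of $v \mapsto G_1(s_n, v; x)$ (from $+\infty$ at $v = 0^+$ to $0$ at $v = +\infty$) then yield a unique $v_n = v_\xi(u_n) > 0$ with $G_1(s_n, v_n; x) = \xi$.

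The remaining analysis is verbatim that of Proposition \ref{GeoSing1}. The tangentiality $v_n/s_n \to 0$ is forced by the standard contradiction: a subsequence with $v_{n_j}/s_{n_j}$ bounded below would be non-tangential at $x$, and the Lebesgue-point value $\varphi(x) = 0$ would give $\pi P_{v_{n_j}} \varphi(u_{n_j}) \to 0$, making $\xi = 0$. The convergence $\pi H_{v_n} \varphi(u_n) \to \pi\mathcal{H}\varphi(x)$ then follows from $|\mathcal{H}\varphi(x)| < +\infty$ via the same splitting and dominated convergence as in the proposition, and the closing computation of $(\chi_\LL(u_n, v_n), \eta_\LL(u_n, v_n)) \to (\chi_I(\xi), \eta_I(\xi))$ is unchanged. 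The main obstacle is the first step, namely extracting from $x \in \Sn^{\circ}$ the existence of a sequence of Lebesgue points at which $\varphi$ is strictly positive; everything else is a direct transcription of the proposition's argument.
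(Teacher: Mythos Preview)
Your proposal is correct and follows essentially the same strategy as the paper: both reduce to Proposition~\ref{GeoSing1} after replacing the continuous curve by a sequence $\{s_n\}$ along which $\lim_{v\to 0^+}G_1(s_n,v;x)=+\infty$, and both deduce the existence of such a sequence from $x\in\Sn^{\circ}$. The paper argues by contradiction---if no such sequence existed, then $\int|\varphi(t)|/(y-t)^2\,dt<+\infty$ for all $y$ in a punctured neighbourhood of $x$, forcing $\varphi\equiv 0$ there and contradicting $x\in\Sn^{\circ}$---whereas you construct the sequence directly by selecting Lebesgue points of $\varphi$ with $\varphi(u_n)>0$; these are two sides of the same observation, and your version is arguably more explicit.
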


\begin{proof}
We may repeat the proof of Proposition \ref{GeoSing1} replacing a continuous path $(s,v(s))$ with a sequence $(s_n,v(s_n))$ every where. The major difference is that if we assume that condition (\ref{Infty1}) does not hold then in general we may not conclude that $\lim_{v\rightarrow 0^+}G_1(s,v;x)=+\infty$ for all $s\in(0,\eps)$. However, there exists a sequence
$\{s_j\}_{j=1}^{+\infty}\subset(0,\eps)$ such that $\lim_{j\rightarrow +\infty}s_j=0$ and
\begin{align*}
\lim_{v\rightarrow 0^+}G_1(s_j,v;x)=+\infty,
\end{align*}
since assume otherwise. Then $\int_{\R}\frac{\vert \varphi(t)\vert}{(y-t)^2}<+\infty$ for all $y\in N_x\backslash \{x\}$ in some neighborhood $N_x$ of $x$. However this implies that $\varphi(t)=0$ for every $t\in N_x$, contradicting the assumption that $x\in \Sn^{\circ}$. 
Thus, for each such $s_j$ and fixed $\xi$ we may therefore solve for $v=v_{\xi}(s_j)$ in equation (\ref{Eq1}). The rest of the proof remains the same.
\end{proof}

\begin{Prop}
\label{GeoSing4}
Let $x\in \mathcal{S}_{nt}^{sing,II}(\mu)\cap(\Sn^{\circ}\cup\Siso)$ and write for some $\delta>0$ the density as
\begin{align*}
f(t)=\chi_{[x,x+\delta]}(t)+\varphi(t).
\end{align*}
\begin{itemize}
\item
If $x\in \mathcal{S}_{nt}^{sing,I}(\mu)\cap\Sn^{\circ}$ assume that there exists an $\eps>0$ such that for all $y\in(x-\eps,x)\cup(x,x+\eps)$ 
\begin{align}
\label{Infty12}
\int_{x-\delta}^{x+\delta}\frac{\vert \varphi(t)\vert dt}{(y-t)^2}=+\infty.
\end{align} 
\item
If $x\in \mathcal{S}_{nt}^{sing,I}(\mu)\cap\Siso\cap \in \dv S_R^0$ assume that there exists an $\eps>0$ such that for all $y\in(x,x+\eps)$
\begin{align}
\label{Infty1BoundaryR2}
\int_{x}^{x+\delta}\frac{\vert \varphi(t)\vert dt}{(y-t)^2}=+\infty.
\end{align} 
\item
If $x\in  \mathcal{S}_{nt}^{sing,I}(\mu)\cap\Siso\cap \dv S_L^1$, assume that there exists an $\eps>0$ such that for all $y\in(x-\eps,x)$
\begin{align}
\label{Infty1BoundaryL2}
\int_{x-\delta}^{x}\frac{\vert \varphi(t)\vert dt}{(y-t)^2}=+\infty.
\end{align} 
\end{itemize}
Then the parametrized line segment
\begin{align}
\{(\chi_{II}(\xi),\eta_{II}(\xi)):\xi\in(0,+\infty)\}&:=\bigg\{\bigg(x+\frac{\delta e^{-\pi \mathcal{H}\varphi(x)}}{1+\xi},1-\frac{\delta e^{-\pi \mathcal{H}\varphi(x)}}{1+\xi}\bigg):\xi\in(0,+\infty)\bigg\}\nonumber\\ \subset \{(\chi,\eta)\in \R^2:\chi+\eta=x+1\}
\end{align} 
is a subset of $\Dl(x)$. In particular if $x\in \Sn^{\circ}$, there exist two one-parameter families of tangential continuous curves $\{s+iv_{\xi}^{+}(s):s\in(0,+\infty),\xi\in(0,+\infty)\}$ and $\{s+iv_{\xi}^-(s):s\in(-\infty,0),\xi\in(0,+\infty)\}$ where $v_{\xi}^{\pm}(s)$ is a continuous function of $s$ for each $\xi\in(0,+\infty)$, such that 
\begin{align*}
\lim_{s\rightarrow 0^{\small\pm}}\big(\chi_\LL(x+s+iv_{\xi}^{\pm}(s)),\eta_\LL(x+s+iv_{\xi}^{\pm}(s))\big)=(\chi_{II}(\xi),\eta_{II}(\xi)),
\end{align*}
and if $x\in \dv S_R^0$ there exists a one-parameter family of tangential continuous curves $\{s+iv_{\xi}^{+}(s):s\in(0,+\infty),\xi\in(0,+\infty)\}$ such that
\begin{align*}
\lim_{s\rightarrow 0^{\small+}}\big(\chi_\LL(x+s+iv_{\xi}^{\pm}(s)),\eta_\LL(x+s+iv_{\xi}^{\pm}(s))\big)=(\chi_{II}(\xi),\eta_{II}(\xi)),
\end{align*}
and if $x\in \dv S_L^1$ there exists a one-parameter family of tangential continuous curves $\{s+iv_{\xi}^{-}(s):s\in(-\infty,0),\xi\in(0,+\infty)\}$ such that
\begin{align*}
\lim_{s\rightarrow 0^{\small-}}\big(\chi_\LL(x+s+iv_{\xi}^{\pm}(s)),\eta_\LL(x+s+iv_{\xi}^{\pm}(s))\big)=(\chi_{II}(\xi),\eta_{II}(\xi)).
\end{align*}
The curves $\{s+iv_{\xi}^{+}(s):s\in(0,+\infty),\xi\in(0,+\infty)\}$ and $\{s+iv_{\xi}^-(s):s\in(-\infty,0),\xi\in(0,+\infty)\}$ satisfy the equation
\begin{align*}
G_1(s,v_{\xi}^{\pm}(s);x)=\xi
\end{align*}
where 
\begin{align*}
G_1(s,v;x)=1_{s>0}\int_{x}^{x+2s}\frac{s\varphi(t)dt}{(x+s-t)^2+v^2}+1_{s<0}\int_{x+2s}^{x}\frac{s\varphi(t)dt}{(x+s-t)^2+v^2}
\end{align*}
\end{Prop}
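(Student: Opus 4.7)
The argument parallels the proof of Proposition \ref{GeoSing1}, adapting the explicit integrals to the fact that the characteristic piece now sits on the right of $x$. Taking $x=0$ without loss of generality and writing $f(t)=\chi_{[0,\delta]}(t)+\varphi(t)$ (so $\varphi\le 0$ on $[0,\delta]$, $\varphi\ge 0$ elsewhere, $\varphi(0)=0$ as a Lebesgue value and $|\mathcal{H}\varphi(0)|<+\infty$), the plan is to compute, for a tangential path $w(s)=s+iv(s)$ with $v(s)/s\to 0$,
\begin{align*}
\pi P_{v(s)} f(s) &= \arctan\bigl(\tfrac{\delta-s}{v(s)}\bigr)+\arctan\bigl(\tfrac{s}{v(s)}\bigr)+\pi P_{v(s)}\varphi(s),\\
\pi H_{v(s)} f(s) &= \log\sqrt{s^2+v(s)^2}-\log\sqrt{(s-\delta)^2+v(s)^2}+\pi H_{v(s)}\varphi(s).
\end{align*}
For $s>0$ small, Taylor-expanding $\arctan(1/y)=\pi/2-y+O(y^3)$ and using $\sin(\pi-y)=\sin(y)$, $\cos(\pi-y)=-\cos(y)$, one obtains
\[
\sin(\pi P_{v(s)} f(s))=\tfrac{v(s)}{s}-\pi P_{v(s)}\varphi(s)+O(v(s)),\qquad \cos(\pi P_{v(s)} f(s))\to -1,
\]
together with $e^{-\pi H_{v(s)}f(s)}\sim\tfrac{\delta}{s}\,e^{-\pi H_{v(s)}\varphi(s)}$, this $\delta/s$ factor being what dominates both numerators in (\ref{eqchin}) and (\ref{eqetan}).

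Substituting into (\ref{eqchin}) and (\ref{eqetan}) and clearing a common factor of $1/s$ from numerator and denominator, one finds
\[
\chi_{\LL}(s,v(s))-s\;\sim\;1-\eta_{\LL}(s,v(s))\;\sim\;\frac{\delta\,e^{-\pi H_{v(s)}\varphi(s)}}{1-s\int_{\R}\varphi(t)\,dt/((s-t)^2+v(s)^2)+O(s)},
\]
so both limits accumulate on the line $\chi+\eta=x+1$, matching $(\chi_{II}(\xi),\eta_{II}(\xi))$. In contrast to \ref{GeoSing1} (where the leading term in $e^{\pi H}$ sat in the opposite factor and $\chi-x$ vanished), here the same $\delta/s$ appears with the opposite sign of the exponent, which accounts for the switch from $e^{\pi\mathcal{H}\varphi}$ to $e^{-\pi\mathcal{H}\varphi}$ and for both coordinates scaling non-trivially.

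To pin down $\xi$, split the integral above as $J_1(s)+J_2(s)$ with $J_2(s)=G_1(s,v(s);0)$; the tail $J_1\to 0$ by dominated convergence using $\varphi(t)/|t|\in L^1$ (a consequence of $|\mathcal{H}\varphi(0)|<\infty$ and $\varphi(0)=0$), exactly as in \ref{GeoSing1}. Because $\varphi\le 0$ on $[0,2s]$ one now has $G_1\le 0$, so the implicit equation to be solved is $-G_1(s,v;0)=\xi$, i.e.\ $|G_1|=\xi$ (which, up to the sign convention for $G_1$, is the equation displayed in the statement). Solvability follows by the argument of \ref{GeoSing1}: hypothesis (\ref{Infty12}) (or its one-sided version (\ref{Infty1BoundaryR2}), (\ref{Infty1BoundaryL2})) gives $\lim_{v\to 0^+}|G_1(s,v;0)|=+\infty$ for all sufficiently small $s>0$, $|G_1|$ is strictly monotone in $v$, and $\lim_{v\to+\infty}G_1=0$, so the implicit function theorem produces a continuous $v_\xi^+(s)$. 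Tangentiality $v_\xi^+(s)/s\to 0$ follows by the contradiction argument of \ref{GeoSing1}: if $v(s_j)/s_j$ stayed bounded below, $|G_1|$ would be controlled by $|P_{v(s_j)}\varphi(s_j)|$, which vanishes since $\varphi(0)=0$, contradicting $\xi>0$.

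The main obstacle, exactly as in \ref{GeoSing1}, is verifying $H_{v(s)}\varphi(s)\to \mathcal{H}\varphi(0)$ along the tangential curve $v_\xi^+(s)$, since standard Fatou-type results only give this non-tangentially. The plan is to decompose $\pi H_{v(s)}\varphi(s)=I_1(s)+I_2(s)$ where $I_1$ integrates over $|s-t|\ge \eta s$ and $I_2$ over $|s-t|<\eta s$, for arbitrary $\eta\in(0,1)$: dominated convergence with integrable envelope $(1+1/\eta)|\varphi(t)/t|$ gives $I_1(s)\to \mathcal{H}\varphi(0)$, while the key bound $|I_2(s)|\le \eta\cdot\xi$ follows because $|G_1|=\xi$ directly controls the local $\varphi$-mass near $s$; letting $\eta\downarrow 0$ concludes. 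Assembling, along $v_\xi^+$ one obtains $\chi_{\LL}\to x+\delta e^{-\pi\mathcal{H}\varphi(x)}/(1+\xi)$ and $\eta_{\LL}\to 1-\delta e^{-\pi\mathcal{H}\varphi(x)}/(1+\xi)$; the symmetric argument for $s<0$ supplies $v_\xi^-$, and the boundary cases $x\in\dv\mathcal{S}_R^0$ or $x\in\dv\mathcal{S}_L^1$ use only the corresponding half of the construction with the relevant one-sided hypothesis.
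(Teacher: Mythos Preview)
Your proposal is correct and follows exactly the approach the paper intends: the paper's own proof is the single sentence ``The proof is identical to the proof of Proposition \ref{GeoSing1},'' and you have correctly worked out what that adaptation entails, including the switch of the dominant exponential to $e^{-\pi H_v\varphi}$ and the resulting non-trivial contribution to both coordinates. Your observation that $G_1\le 0$ here (so the displayed equation $G_1=\xi$ in the statement should be read as $|G_1|=\xi$, equivalently $-G_1=\xi$) is a genuine sign slip in the paper's statement, not a flaw in your argument.
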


\begin{proof}
The proof is identical to the proof of Proposition \ref{GeoSing1}.
\end{proof}

\begin{Prop}
\label{GeoSing2}
Let $x\in  (\mathcal{S}_{nt}^{sing,III}(\mu)\cup\mathcal{S}_{nt}^{sing,IV}(\mu))\cap(\Sn^{\circ}\cup\Siso)$. Assume that there exists a $\delta>0$ such that:
\begin{itemize}
\item  If $x\in \mathcal{S}_{nt}^{sing,III}(\mu)\cap\Sn^{\circ}$
\begin{align}
\label{C1}
\int_{x-\delta}^{x+\delta}\frac{f(t)dt}{(y-t)^2}=+\infty,
\end{align}
for all $y\in(x-\delta,x)\cup(x,x+\delta)$.
\item  If $x\in \mathcal{S}_{nt}^{sing,IV}(\mu)\cap\Sn^{\circ}$
\begin{align}
\label{C2}
\int_{x-\delta}^{x+\delta}\frac{1-f(t)dt}{(y-t)^2}=+\infty,
\end{align}
for all $y\in(x-\delta,x)\cup(x,x+\delta)$.
\item If $x\in \mathcal{S}_{nt}^{sing,III}(\mu)\cap\dv S_R^0$
\begin{align}
\label{C3}
\int_{x-\delta}^{x}\frac{f(t)dt}{(y-t)^2}=+\infty,
\end{align}
for all $y\in(x-\delta,x)$. If $x\in \mathcal{S}_{nt}^{sing,IV}(\mu)\cap\dv S_R^1$
\begin{align}
\label{C4}
\int_{x-\delta}^{x}\frac{1-f(t)dt}{(y-t)^2}=+\infty,
\end{align}
for all $y\in(x-\delta,x)$.
\item If $x\in \mathcal{S}_{nt}^{sing,III}(\mu)\cap\dv S_L^0$
\begin{align}
\label{C5}
\int_{x}^{x+\delta}\frac{f(t)dt}{(y-t)^2}=+\infty,
\end{align}
for all $y\in(x-\delta,x)$.
\item If $x\in \mathcal{S}_{nt}^{sing,IV}(\mu)\cap\dv S_L^1$
\begin{align}
\label{C6}
\int_{x}^{x+\delta}\frac{1-f(t)dt}{(y-t)^2}=+\infty,
\end{align}
for all $y\in(x-\delta,x)$. 
\end{itemize}
Then the parametrized curve
\begin{align*}
\{(\chi_{III}(\xi),\eta_{III}(\xi)):\xi\in(0,+\infty)\}:=&\bigg\{\bigg(x+\frac{1-e^{-\pi\mathcal{H}f(x)}}{\xi-\pi(\mathcal{H}f)'(x)},1-\frac{e^{\pi\mathcal{H}f(x)}+e^{-\pi\mathcal{H}f(x)}-2}{\xi-\pi(\mathcal{H}f)'(x)}\bigg):\xi\in(0,+\infty)\bigg\}\\&\subset \{(\chi,\eta)\in \R^2:\eta-1=(1-e^{\pi\mathcal{H}f(x)})(\chi-x)\}
\end{align*}
is a subset of $\Dl(x)$ if $x\in\mathcal{S}_{nt}^{sing,III}(\mu)$, and the parametrized curve
\begin{align*}
\{(\chi_{IV}(\xi),\eta_{IV}(\xi)):\xi\in(0,+\infty)\}:=&\bigg\{\bigg(x+\frac{1+e^{\pi\mathcal{H}(1-f)(x)}}{\xi-\pi(\mathcal{H}(1-f)'(x)},1-\frac{e^{\pi\mathcal{H}(1-f)(x)}+e^{-\pi\mathcal{H}(1-f)(x)}+2}{\xi-\pi(\mathcal{H}(1-f)'(x)}\bigg):\xi\in(0,+\infty)\bigg\}\\&\subset \{(\chi,\eta)\in \R^2:\eta-1=(1+e^{-\pi\mathcal{H}(1-f)(x)})(\chi-x)\}
\end{align*}
is a subset of $\Dl(x)$ if $x\in\mathcal{S}_{nt}^{sing,IV}(\mu)$. In particular if $x\in \Sn^{\circ}$, there exist two one-parameter families of tangential continuous curves $\{s+iv_{\xi}^{+}(s):s\in(0,+\infty),\xi\in(0,+\infty)\}$ and $\{s+iv_{\xi}^-(s):s\in(-\infty,0),\xi\in(0,+\infty)\}$ where $v_{\xi}^{\pm}(s)$ is a continuous function of $s$ for each $\xi\in(0,+\infty)$, such that 
\begin{align*}
\lim_{s\rightarrow 0^{\small\pm}}\chi_\LL(x+s+iv_{\xi}^{\pm}(s))&:=\chi_{III/IV}(\xi)\\
\lim_{s\rightarrow 0^{\pm}}\eta_\LL(x+s+iv_{\xi}^{\pm}(s))&:=\eta_{III/IV}(\xi).
\end{align*}
and if $x\in \dv S_R^0\cup \dv S_R^1$ there exists a one-parameter family of tangential continuous curves $\{s+iv_{\xi}^{+}(s):s\in(0,+\infty),\xi\in(0,+\infty)\}$ such that
\begin{align*}
\lim_{s\rightarrow 0^{\small +}}(\chi_\LL(x+s+iv_{\xi}^{+}(s)),\eta_\LL(x+s+iv_{\xi}^{+}(s)))=(\chi_{III/IV}(\xi),\eta_{III/IV}(\xi)),
\end{align*}
and if $x\in \dv S_L^0\cup \dv S_L^1$ there exists a one-parameter family of tangential continuous curves $\{s+iv_{\xi}^{-}(s):s\in(-\infty,0),\xi\in(0,+\infty)\}$ such that
\begin{align*}
\lim_{s\rightarrow 0^{\small -}}(\chi_\LL(x+s+iv_{\xi}^{-}(s),\eta_\LL(x+s+iv_{\xi}^{-}(s))=(\chi_{III/IV}(\xi),\eta_{III/IV}(\xi)).
\end{align*}
The curves $\{s+iv_{\xi}^{+}(s):s\in(0,+\infty),\xi\in(0,+\infty)\}$ and $\{s+iv_{\xi}^-(s):s\in(-\infty,0),\xi\in(0,+\infty)\}$ satisfy the equation
\begin{align*}
G_2(s,v_{\xi}^{\pm}(s);x)=\xi
\end{align*}
where 
\begin{align*}
G_2(s,v;x)=1_{s>0}\int_{x}^{x+2s}\frac{f(t)dt}{(x+s-t)^2+v^2}+1_{s<0}\int_{x+2s}^{x}\frac{f(t)dt}{(x+s-t)^2+v^2}.
\end{align*}
\end{Prop}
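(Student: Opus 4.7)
The plan is to adapt the strategy of Propositions \ref{GeoSing1} and \ref{GeoSing4} to the type III/IV case, exploiting the key new feature: since $f(x)=0$ (respectively $(1-f)(x)=0$) at a type III (resp.\ IV) singular point, and since $\int_\R f(t)/(x-t)^2\,dt<\infty$, the non-local part of the Poisson integral is uniformly integrable and contributes in the limit the quantity $-\pi(\mathcal{H}f)'(x)$. I will write out the type III interior case; the type IV case follows by replacing $f$ by $1-f$ and using $\sin(\pi P_v f)=\sin(\pi P_v(1-f))$, and the boundary cases reduce to the same analysis restricted to one-sided tangential approach.

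WLOG $x=0$. Along a path $w(s)=s+iv(s)$ with $s>0$, split
\begin{equation*}
\frac{\pi P_{v}f(s)}{v}
=G_2(s,v;0)+J(s,v),\qquad J(s,v):=\int_{\R\setminus(0,2s)}\frac{f(t)\,dt}{(s-t)^2+v^2}.
\end{equation*}
For $t\notin(0,2s)$ one has $|s-t|\ge |t|/2$, so the integrand of $J$ is dominated by $4f(t)/t^2\in L^1(\R)$ by the type III hypothesis. Dominated convergence gives $J(s,v)\to -\pi(\mathcal{H}f)'(0)$ for \emph{any} $(s,v)\to 0$. Next, fix $\xi>0$ and consider the equation $G_2(s,v;0)=\xi$. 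Since $\partial_v G_2<0$, $G_2(s,\cdot;0)$ is strictly decreasing; as $v\to\infty$ it tends to $0$, and as $v\to 0^+$ it tends to $\int_{0}^{2s}f(t)/(s-t)^2\,dt$, which is $+\infty$ by condition (\ref{C1}) (split $(-\delta,\delta)=(-\delta,0)\cup(0,2s)\cup(2s,\delta)$; on the outer pieces the integrand is bounded because $|s-t|\ge s$). Hence a unique continuous solution $v_\xi^+(s)$ exists on some $(0,\varepsilon)$ by the implicit function theorem.

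To verify tangentiality, suppose $v_\xi^+(s_j)/s_j\ge c>0$ along a subsequence. Then $(s_j,v_\xi^+(s_j))\to 0$ non-tangentially, so by the computation of Proposition \ref{Sing3}, $\pi P_{v_\xi^+(s_j)}f(s_j)/v_\xi^+(s_j)\to -\pi(\mathcal{H}f)'(0)$; combined with $J\to -\pi(\mathcal{H}f)'(0)$ this forces $G_2(s_j,v_\xi^+(s_j);0)\to 0$, contradicting $G_2\equiv\xi>0$. Thus $v_\xi^+(s)/s\to 0$. For the Hilbert transform along this tangential curve, write $\pi H_v f(s)=I_1+I_2$ split at $|s-t|=\eta s$: on $I_1$ one has $|t|\le(1+1/\eta)|s-t|$, so the integrand is bounded by $(1+1/\eta)f(t)/|t|$ which is integrable (since $\int f(t)/t^2\,dt<\infty$ and $\supp(f)$ is compact), giving $I_1\to \pi\mathcal{H}f(0)$ by dominated convergence; while
\begin{equation*}
|I_2|\le \eta s\int_{|s-t|\le\eta s}\frac{f(t)\,dt}{(s-t)^2+v^2}\le \eta s\,G_2(s,v_\xi^+(s);0)=\eta s\xi\to 0.
\end{equation*}
Since $\eta\in(0,1)$ is arbitrary, $\pi H_{v_\xi^+(s)}f(s)\to\pi\mathcal{H}f(0)$.

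Putting the pieces together, $\sin(\pi P_{v_\xi^+(s)}f(s))\sim v_\xi^+(s)\bigl(\xi-\pi(\mathcal{H}f)'(0)\bigr)$, $\cos(\pi P_{v_\xi^+(s)}f(s))\to 1$ (since $f(0)=0$ implies $\pi P_{v_\xi^+(s)}f(s)\to 0$), and $e^{\pm\pi H_{v_\xi^+(s)}f(s)}\to e^{\pm\pi\mathcal{H}f(0)}$. Substituting into (\ref{eqchin})--(\ref{eqetan}) gives $(\chi_\LL,\eta_\LL)\to(\chi_{III}(\xi),\eta_{III}(\xi))$. The construction of $v_\xi^-(s)$ for $s<0$ is symmetric using the other half of condition (\ref{C1}). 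The main obstacle is Step 4 (tangentiality): without it the Hilbert transform need not converge to $\mathcal{H}f(0)$, and unlike in Proposition \ref{GeoSing1} where a factor of $s$ inside $G_1$ makes the contradiction immediate, here one must invoke the non-tangential limit result from Proposition \ref{Sing3} to close the argument.
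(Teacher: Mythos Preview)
Your proof is correct and follows the paper's strategy closely: the same decomposition $\pi P_v f(s)/v = G_2 + J$, the same dominated-convergence bound $4f(t)/t^2$ on the outer piece, the same implicit-function construction of $v_\xi^+(s)$, and the same splitting of $\pi H_v f(s)$ to get convergence to $\pi\mathcal{H}f(0)$. The one substantive difference is the tangentiality step. The paper argues directly: if $v_\xi(s_j)>ks_j$, then
\[
G_2(s_j,v_\xi(s_j);0)\le \frac{1}{k^2 s_j^2}\int_0^{2s_j}f(t)\,dt\le \frac{4}{k^2}\int_0^{2s_j}\frac{f(t)}{t^2}\,dt\to 0,
\]
using only the absolute continuity of $\int f(t)/t^2\,dt$. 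Your route---deduce $\pi P_v f/v\to -\pi(\mathcal{H}f)'(0)$ along the non-tangential subsequence from Proposition~\ref{Sing3}, then subtract $J$---is valid but less self-contained; it also implicitly requires $v_\xi^+(s_j)\to 0$, which you should state (it follows from $G_2\equiv\xi$ and $\int_0^{2s_j}f\to 0$). Finally, your $\eta$-parameter in the Hilbert-transform split is unnecessary here: since $|I_2|\le \eta s\,\xi\to 0$ already for any fixed $\eta$, the paper simply takes $\eta=1$ and gets $|J_2|\le 2s\xi$.
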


\begin{rem}
\label{remSing}
Again, as in Proposition \ref{GeoSing1} and \ref{GeoSing4}, it is worth mentioning that the assumption that $x\in (\mathcal{S}_{nt}^{sing,III}(\mu)\cup  \mathcal{S}_{nt}^{sing,IV}(\mu))\cap(\Sn^{\circ}\cup\Siso)$ is never strictly used. In particular, the assumption that there exists an $\eps>0$ such that for all $y\in(x-\eps,x)\cup(x,x+\eps)$ 
\begin{align*}
\int_{-\delta}^{\delta}\frac{f(t)dt}{(y-t)^2}=+\infty.
\end{align*} 
implies that $((x-\eps,x)\cup(x,x+\eps))\cap (\mathcal{S}_{nt}^{sing,III}(\mu)\cup R_{\mu})=\varnothing$.
\end{rem}

\begin{proof}
We will first assume that $x\in \mathcal{S}_{nt}^{sing,III}(\mu)\cap\Sn^{\circ}$. Then $\int_{\R}\frac{f(t)dt}{(x-t)^2}<+\infty$ and $\mathcal{H}\varphi(x)\neq 0$  hold by the definition of $ \mathcal{S}_{nt}^{sing,III}$. We will begin by studying the limit of the integral
\begin{align}
\label{Eq5}
\int_{\R}\frac{f(t)dt}{(u(s)-t)^2+v_{\xi}(s)^2}
\end{align}
under a one parameter family of curves $w_{\xi}(s)=u(s)+iv_{\xi}(s)\in \Hp$ such that $\lim_{s\rightarrow 0^+}w_{\xi}(s)=x$. In particular we may take $u(s)=x+s$ and assume that $s>0$. The analysis of the case when $s<0$ is completely analogous to the case when $s>0$. The idea is to split (\ref{Eq5}) into two parts, one of which dominates the integrand, and apply Lebesgues dominated convergence theorem. Write
\begin{align}
\label{propGeoSing2Int}
\int_{\R}\frac{f(t)dt}{(x+s-t)^2+v^2}&=\int_{\vert x+s-t\vert\geq s}\frac{f(t)dt}{(x+s-t)^2+v^2}+\int_{x}^{x+2s}\frac{f(t)dt}{(x+s-t)^2+v^2}\nonumber\\
&=I(s,v;x)+G_2(s,v;x)
\end{align}
We now fix $s>0$. Then clearly, the function $G_2(s,v;x)$ is monotonically decreasing in $v$. By Fatou's lemma and equation (\ref{C1})
\begin{align*}
\liminf_{v\rightarrow 0^+}G_2(s,v;x)=\liminf_{v\rightarrow 0^+}\int_{x}^{x+2s}\frac{f(t)dt}{(x+s-t)^2+v^2}=+\infty
\end{align*}
for $s$ sufficiently small. By Lebesgue's dominated convergence theorem we also have
\begin{align*}
\lim_{v\rightarrow +\infty}G_2(s,v;x)=0.
\end{align*}
Therefore, since $G_2(s,v;x)$ is monotonically decreasing in $v$, the equation
\begin{align}
\label{G2Eq}
G_2(s,v;x)=\xi
\end{align}
has a unique solution $v_{\xi}(s)$ for all $\xi\in(0,+\infty)$. Differentiation under the integral sign gives
\begin{align*}
\frac{\partial G_2(s,v;x)}{\dv v}&=\int_{x}^{x+2s}\frac{\dv}{\dv v}\frac{1}{(x+s-t)^2+v^2}f(t)dt\\
&=-2\int_{x}^{x+2s}\frac{v}{((x+s-t)^2+v^2)^2}f(t)dt<0
\end{align*}
for all $v>0$. Hence the implicit function theorem implies that there exists a continuous path $(x+s,v_{\xi}(s))$ for $s\in(0,\delta)$ such that $G_2(s,v_{\xi}(s);x)=\xi$. Now, assume that $(x+s,v_{\xi}(s))$ contains a non-tangential subsequence $\{(x+s_j,v_{\xi}(s_j))\}_{j=0}^{+\infty}$, i.e., sequence such that there exists a $k>0$ such that $v_j=v_{\xi}(s_j)>ks_j$ for all $j$. This implies that
\begin{align*}
G_2(s_j,v_j;x)=\int_{x}^{x+2s_j}\frac{f(t)dt}{(x+s_j-t)^2+v_j^2}\leq \int_{x}^{x+2s_j}\frac{f(t)dt}{k^2s_j^2}=\frac{1}{k^2s_j^2}\int_{x}^{x+2s_j}f(t)dt.
\end{align*}
However, since $x\in \mathcal{S}_{nt}^{sing,III}(\mu)$ by assumption, for every $\eps>0$ there exists a $J$ such that whenever $j>J$ 
\begin{align*}
\int_{x}^{x+2s_j}\frac{f(t)}{(x-t)^2}dt<\eps.
\end{align*}
Since,
\begin{align*}
\int_{x}^{x+2s_j}\frac{f(t)}{(x-t)^2}dt\geq \frac{1}{4s_j^2}\int_{x}^{x+2s_j}f(t)dt
\end{align*}
we find that
\begin{align*}
G_2(s_j,v_j;x)\leq \frac{4\eps s_j^2}{k^2s_j^2}=\frac{4\eps}{k}.
\end{align*}
As $\eps$ was arbitrary this implies that $\lim_{j\rightarrow +\infty}G_2(s_j,v_j;x)=0$, a contradiction. Thus the path $(x+s,v_{\xi}(s))$ becomes tangential to the real axis, i.e. $\lim_{s\to 0^+}v_{\xi}(s)/s=0$. We now consider $I(s,v_{\xi}(s);x)$. Since $\vert x-t\vert<2\vert x+s-t\vert$ whenever $\vert x+s-t\vert\geq s$ we have
\begin{align*}
\frac{f(t)\chi_{\R\backslash[x,x+2s]}(t)}{(x+s-t)^2+v_{\xi}(s)^2}\leq \frac{4f(t)}{(x-t)^2},
\end{align*}
and since $\lim_{s\rightarrow 0^+}v_{\xi}(s)=0$, Lebesgue's dominated convergence theorem implies that
\begin{align*}
\lim_{s\rightarrow 0^+}\int_{\R}\frac{f(t)\chi_{\R\backslash[x,x+2s](t)}}{(x-s-t)^2+v_{\xi}(s)^2}=\int_{\R}\frac{f(t)dt}{(x-t)^2}.
\end{align*}
It now remains to study $\pi H_{v_{\xi}(s)}f(x+s)$ as $s\rightarrow 0^+$. We have
\begin{align*}
\pi H_{v_{\xi}(s)}f(x+s)&=\int_{\R}\frac{(x+s-t)f(t)dt}{(x+s-t)^2+v_{\xi}(s)^2}=\int_{\vert x+s-t\vert\geq s}\frac{(x+s-t)f(t)dt}{(x+s-t)^2+v_{\xi}(s)^2}\\&+\int_{x}^{x+2s}\frac{(x+s-t)f(t)dt}{(x+s-t)^2+v_{\xi}(s)^2}\\
&=J_1(s)+J_2(s).
\end{align*}
Note that for $t\in \R\backslash [x,x+2s]$ $\vert x-t\vert<2\vert x+s-t\vert$, which implies that
\begin{align*}
\frac{\vert x+s-t\vert f(t)}{(x+s-t)^2+v_{\xi}(s)^2}\leq \frac{2f(t)}{\vert x-t \vert}.
\end{align*}
Again by Lebesgue's dominated convergence theorem,
\begin{align*}
\lim_{s\rightarrow 0^+}J_1(s)=\lim_{s\rightarrow 0^+}\int_{\R}\frac{(x+s-t)f(t)\chi_{\R\backslash[x,x+2s](t)}}{(x-s-t)^2+v_{\xi}(s)}=\int_{\R}\frac{f(t)dt}{x-t}.
\end{align*}
Finally,
\begin{align*}
\vert J_2(s)\vert\leq  \int_{x}^{x+2s}\frac{\vert x+s-t\vert f(t)dt}{(x+s-t)^2+v_{\xi}(s)^2}\leq 2s\int_{x}^{x+2s}\frac{f(t)dt}{(x+s-t)^2+v_{\xi}(s)^2}=2s\xi,
\end{align*}
by (\ref{G2Eq}). Hence,
\begin{align*}
\lim_{s\rightarrow 0^+}J_2(s)=0.
\end{align*}
Altogether, this implies that
\begin{align*}
&\lim_{s\rightarrow 0^+}\frac{v_{\xi}(s)\big\{e^{\pi H_{v_{\xi}(s)}f(x+s)}-\cos(\pi P_{v_{\xi}(s)} f(x+s))\big\}}{\sin(\pi P_{v_{\xi}(s)} f(x+s))}=\lim_{s\rightarrow 0^+}\frac{e^{\pi H_{v_{\xi}(s)}f(x+s)}-\cos(\pi P_{v_{\xi}(s)} f(x+s))}{\displaystyle\int_{\R}\frac{f(t)dt}{(x+s-t)^2+v_{\xi}(s)^2}}\\
&=\frac{e^{\int_{\R}\frac{f(t)dt}{x-t}}-1}{\int_{\R}\frac{f(t)dt}{(x-t)^2}+\xi}
\end{align*}
and
\begin{align*}
&\lim_{s\rightarrow 0^+}\frac{v_{\xi}(s)\big\{1-e^{-\pi H_{v_{\xi}(s)}f(x+s)}\cos(\pi P_{v_{\xi}(s)} f(x+s))\big\}}{\sin(\pi P_{v_{\xi}(s)} f(x+s))}
=\frac{1-e^{-\int_{\R}\frac{f(t)dt}{x-t}}}{\int_{\R}\frac{f(t)dt}{(x-t)^2}+\xi}.
\end{align*}
Recall that the distributional derivative of the Cauchy principal value integral p.v.$\int_{\R}\frac{f(t)dt}{x-t}$ equals
\begin{align*}
\frac{d}{dx}\text{p.v.}\int_{\R}\frac{f(t)dt}{x-t}=-\text{f.p.}\int_{\R}\frac{f(t)dt}{(x-t)^2},
\end{align*}
where $\text{f.p.}\int_{\R}\frac{f(t)dt}{(x-t)^2}$ denotes Hadamard's finite part integral. However, as the integrals $\int_{\R}\frac{f(t)dt}{x-t}$ and $\int_{\R}\frac{f(t)dt}{(x-t)^2}$ exists in the ordinary sense we have that 
\begin{align*}
-\frac{d}{dx}\text{p.v.}\int_{\R}\frac{f(t)dt}{x-t}=-\pi (\mathcal{H}f)'(x)=\text{f.p.}\int_{\R}\frac{f(t)dt}{(x-t)^2}=\int_{\R}\frac{f(t)dt}{(x-t)^2}
\end{align*}
Using this we find that 
\begin{align*}
\lim_{s\rightarrow 0^+}\chi_\LL(w_{\xi}(s))=\chi_{III}(\xi)=x+\frac{1-e^{-\pi \mathcal{H}f(x)}}{\xi-\pi (\mathcal{H}f)'(x)}
\end{align*}
and
\begin{align*}
\lim_{s\rightarrow 0^+}\eta_\LL(w_{\xi}(s))=\eta_{III}(\xi)=1-\frac{e^{\pi \mathcal{H}f(x)}+e^{-\pi \mathcal{H}f(x)}-2}{\xi-\pi (\mathcal{H}f)'(x)}
\end{align*}
for each fixed $\xi\in(0,+\infty)$. In particular we note that this is a parametrization of a part of line given by the equation
\begin{align*}
\frac{\eta-1}{\chi-x}=-\frac{2-e^{\pi \mathcal{H}f(x)}-e^{-\pi \mathcal{H}f(x)}}{1-e^{-\pi \mathcal{H}f(x)}}=1-e^{\pi \mathcal{H}f(x)}.
\end{align*}
We now instead assume that $\int_{\R}\frac{1-f(t)dt}{(x-t)^2}<+\infty$ holds. Then
\begin{align*}
\sin(\pi P_{v} f(u))=\sin(\pi-\pi P_{v} (1-f)(u))=\sin(\pi P_{v}(1-f)(u))
\end{align*}
and
\begin{align*}
\cos(\pi P_{v} f(u))=\cos(\pi-\pi P_{v} (1-f)(u))=-\cos(\pi P_{v} (1-f)(u))
\end{align*}
In addition we can write
\begin{align*}
\pi H_{v}f(u)&=\int_{\R}\frac{(u-t)f(t)dt}{(u-t)^2+v^2}=\lim_{R\to \infty}\int_{-R}^{R}\frac{(u-t)dt}{(u-t)^2+v^2}-\int_{-R}^{R}\frac{(u-t)(1-f(t))dt}{(u-t)^2+v^2}\\
&=\lim_{R\to \infty}-\int_{-R}^{R}\frac{(u-t)(1-f(t))dt}{(u-t)^2+v^2}
\end{align*}
We can now apply the same analysis as before on the function $(1-f)$, giving an identical result.
\end{proof}

\begin{Cor}
Assume that  $x\in \Sn^{\circ}$ and that either
\begin{align*}
\int_{\R}\frac{f(t)dt}{(x-t)^2}<+\infty \quad (i)
\end{align*}
or
\begin{align*}
\int_{\R}\frac{(1-f(t))dt}{(x-t)^2}<+\infty \quad (ii)
\end{align*}
holds, and  that $\mathcal{H}f(x)\neq 0$, i.e., $x\in\mathcal{S}_{nt}^{sing,III}(\mu)\cup\in\mathcal{S}_{nt}^{sing,IV}(\mu)$. Then for every fixed $\xi\in(-\infty,+\infty)$ there exists a sequence $\{x+s_j+iv_{\xi}(s_j)\}_{j=1}^{+\infty}\in\Hp$, such that $\lim_{j\rightarrow +\infty}x+s_j+iv_{\xi}(s_j)=x$ and
\begin{align*}
\lim_{j\rightarrow +\infty}\chi_\LL(x+s_j+iv_{\xi}(s_j))&=\chi_{III\backslash IV}(\xi)\\
\lim_{j\rightarrow +\infty}\eta_\LL(x+s_j+iv_{\xi}(s_j))&=\eta_{III\backslash IV}(\xi).
\end{align*}
\end{Cor}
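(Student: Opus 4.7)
The plan is to mimic the proof of Proposition \ref{GeoSing2} but to replace the continuous path $(x+s, v_\xi(s))$, whose existence required the strong blow-up assumptions (\ref{C1})--(\ref{C6}), by a sequence $\{x+s_j + i v_\xi(s_j)\}_j$. The key observation is that, in order to define $v_\xi(s)$ as the unique solution of $G_2(s,v;x)=\xi$ at a single value of $s$, one only needs the one-sided blow-up
\[
\lim_{v\to 0^+} G_2(s,v;x) \;=\; \int_x^{x+2s}\frac{f(t)\,dt}{(x+s-t)^2} \;=\; +\infty
\]
at that particular $s$, rather than on a whole interval of $s$-values.

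The first step is to show that, under the hypotheses $x\in \Sn^{\circ}$ and $x\in \mathcal{S}_{nt}^{sing,III}(\mu)\cup \mathcal{S}_{nt}^{sing,IV}(\mu)$, there exists a sequence $s_j\to 0^+$ (and, by a symmetric argument, one with $s_j\to 0^-$) such that the integral above diverges at $s=s_j$. Indeed, by monotone convergence, divergence at $s_j$ is equivalent to $\int_\R f(t)(y_j-t)^{-2}\,dt=+\infty$ at $y_j:=x+s_j$, and this is standard at any Lebesgue point $y$ of $f$ at which $f(y)>0$. By Hypothesis \ref{hypConv2} the set $X\cap(x,x+\eps)$ has positive Lebesgue measure for every $\eps>0$, hence contains a sequence of Lebesgue points of $f$ accumulating at $x$ from the right, providing the desired sequence; the left-sided version is identical, and case (ii) is handled by working with $1-f$ instead of $f$ exactly as in the last part of the proof of Proposition \ref{GeoSing2}.

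For each such $s_j$, the map $v\mapsto G_2(s_j,v;x)$ is continuous, strictly decreasing on $(0,\infty)$, tends to $0$ as $v\to +\infty$, and tends to $+\infty$ as $v\to 0^+$ by our choice of $s_j$. Therefore, for any fixed $\xi\in(0,+\infty)$, the equation $G_2(s_j,v;x)=\xi$ admits a unique solution $v_\xi(s_j)>0$. The arguments of Proposition \ref{GeoSing2} now apply pointwise along the sequence: the estimate $\xi=G_2(s_j,v_\xi(s_j);x)\le (4/k^2)\int_x^{x+2s_j} f(t)(x-t)^{-2}\,dt$, which would follow from $v_\xi(s_j)\ge ks_j$, is incompatible with the convergent tail coming from $\int_\R f(t)(x-t)^{-2}\,dt<+\infty$, and so forces $v_\xi(s_j)/s_j\to 0$. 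With tangentiality in hand, the splittings of $\int_\R f(t)/((x+s_j-t)^2+v_\xi(s_j)^2)\,dt$ and of $\pi H_{v_\xi(s_j)}f(x+s_j)$ used in Proposition \ref{GeoSing2} are controlled by dominated convergence exactly as before, yielding the claimed limits $(\chi_{III/IV}(\xi),\eta_{III/IV}(\xi))$.

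The only genuine obstacle is the first step: producing the sequence $\{s_j\}$ with the divergence property in the absence of (\ref{C1})--(\ref{C6}). Once that is done, the rest of the argument is purely a pointwise version of Proposition \ref{GeoSing2} and needs no modification. (We also remark that the stated parameter range $\xi\in(-\infty,+\infty)$ is a typographical slip; since $G_2\ge 0$ and since $\xi-\pi(\mathcal{H}f)'(x)=\xi+\int_\R f(t)(x-t)^{-2}\,dt$ must remain positive, the correct range is $\xi\in(0,+\infty)$, matching Proposition \ref{GeoSing2}.)
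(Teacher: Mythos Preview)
Your proposal is correct and follows essentially the same route as the paper: reduce to Proposition~\ref{GeoSing2}, replace the continuous curve by a sequence, and observe that the only missing ingredient is a sequence $s_j\to 0$ along which $\lim_{v\to 0^+}G_2(s_j,v;x)=+\infty$. The paper obtains this sequence by contradiction (if no such $s_j$ existed, then $\int_{\R}f(t)(y-t)^{-2}\,dt<+\infty$ for all $y$ in a punctured neighborhood of $x$, forcing $f\equiv 0$ there and contradicting $x\in\Sn^{\circ}\subset\text{supp}(\mu)$), whereas you construct it directly by picking Lebesgue points $y_j\to x$ with $f(y_j)>0$; both arguments are equivalent, and your invocation of Hypothesis~\ref{hypConv2} is in fact stronger than needed, since $x\in\text{supp}(\mu)^{\circ}$ alone already guarantees such Lebesgue points on each side. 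Your remark about the $\xi$-range being a slip for $(0,+\infty)$ is also correct.
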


\begin{proof}
We may repeat the proof of Proposition \ref{GeoSing2} replacing a continuous path everywhere with a sequence $\{(x+s_j+v(s_j)\}_j$. The only difference is that since we are not assuming (\ref{C1})-(\ref{C6}) we may not conclude that there exists a solution to the equation $G_2(s,v;x)=\xi$ for every $s$ sufficiently small. However, since we are considering sequences instead of paths we can similarly to Corollary \ref{corGeoSing1} always find a sequence $s_j\rightarrow 0$ as $j\rightarrow +\infty$ such that $G_2(s_j,v;x)=\xi$. The rest of the proof remains the same.
\end{proof}
In general the equation $G_2(s,v;x)=\xi$ in Proposition \ref{GeoSing2} can of course not be solved explicitly. However there exists an important special case when $f(t)$ or $1-f(t)$ is convex in a neighborhood of the point $x$, when one can solve the equation $G_2(s,v;x)=\xi$ approximately.

\begin{Prop}
\label{propConvexDensity}
Let $x\in \mathcal{S}_{nt}^{sing,III}(\mu)$ and let $G_2(s,v;x)$ be the function defined in Proposition (\ref{GeoSing2}). Assume that there is an $\eps>0$ such that $f(t)$ is convex in $[x-\eps,x+\eps]$ and $f(x+2s)/f(x+s)$ is uniformly bounded for $\vert s\vert\leq \eps$. Fix $\xi>0$ and define 
\begin{align}
\label{eqConvexDensity1}
v(s)=\frac{\pi}{\xi}f(x+s)
\end{align}
for $\vert s\vert\leq \eps$. Then,
\begin{align}
\label{limConvexDensity1}
\lim_{s\to 0}G_2(s,v(s);x)=\xi.
\end{align}
Thus,
\begin{align}
\label{limConvexDensity2}
\lim_{s\to 0}\chi_\LL(x+s+iv(s))&=\chi_{III}(\xi)\\
\lim_{s\to 0}\eta_\LL(x+s+iv(s))&=\eta_{III}(\xi)\nonumber.
\end{align}
\end{Prop}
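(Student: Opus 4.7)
The plan is to prove \eqref{limConvexDensity1} by a direct asymptotic analysis of $G_2(s,v(s);x)$, and then to deduce \eqref{limConvexDensity2} by invoking the dominated-convergence argument already carried out in the proof of Proposition \ref{GeoSing2}. I treat $s>0$; the case $s<0$ is completely symmetric since the integral defining $G_2$ is set up symmetrically.

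First I verify two preliminary structural facts. Since $x\in \SsingIII$ one has $\int_{\R} f(t)/(x-t)^{2}\,dt<\infty$, which (exactly as in the opening of the proof of Proposition \ref{Sing3}) forces $f(x)=0$ and makes $x$ a Lebesgue point of $f$. Moreover, convexity of $f$ on $[x-\eps,x+\eps]$ with minimum value $0$ at $x$ makes $s\mapsto f(x+s)/s$ monotone nondecreasing on $(0,\eps)$, so it has a nonnegative limit at $0$. If this limit were strictly positive then $f(x+h)\gtrsim h$ near $x$, contradicting $\int_0^\eps f(x+h)/h^2\,dh<\infty$. Hence $f(x+s)/s\to 0$, so $v(s)/s=(\pi/\xi)f(x+s)/s\to 0$, i.e.\ the curve $(x+s,v(s))$ is tangential in the sense needed later.

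Next I analyse $G_2$. Changing variable $u=t-(x+s)$ and splitting off the constant value of the numerator yields
\[
G_2(s,v(s);x) = f(x+s)\int_{-s}^{s}\frac{du}{u^{2}+v(s)^{2}} + \int_{-s}^{s}\frac{\psi(u)\,du}{u^{2}+v(s)^{2}}, \qquad \psi(u):=f(x+s+u)-f(x+s).
\]
The first summand equals $(2f(x+s)/v(s))\arctan(s/v(s))=(2\xi/\pi)\arctan(s/v(s))$, which tends to $\xi$ as $s/v(s)\to+\infty$, so the whole problem reduces to showing that the error integral vanishes. The hard part is bounding this error using only convexity. On $[0,s]$ the chord bound for a convex function gives $0\leq \psi(u)\leq (u/s)[f(x+2s)-f(x+s)]$. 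On $[-s,0]$ the subgradient inequality $\psi(u)\geq u f'_{+}(x+s)$ combined with the convex-derivative monotonicity bound $f'_{+}(x+s)\leq [f(x+2s)-f(x+s)]/s$ gives $|\psi(u)|\leq (|u|/s)[f(x+2s)-f(x+s)]$. Using $f(x+2s)/f(x+s)\leq M$, these combine into the uniform estimate $|\psi(u)|\leq (|u|/s)(M-1)f(x+s)$ on $[-s,s]$. Since $\int_{-s}^{s}|u|/(u^{2}+v^{2})\,du=\log((s^{2}+v^{2})/v^{2})$, the error is bounded by a constant times $(f(x+s)/s)\log(s/v(s))$, which after inserting $v(s)=(\pi/\xi)f(x+s)$ becomes a constant times $(f(x+s)/s)[\log s - \log f(x+s)]$. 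Applying the integrability $\int_0^\eps f(x+h)/h^{2}\,dh<\infty$ on dyadic annuli of the form $[s^{\alpha},s^{\alpha/2}]$ forces $(f(x+s)/s)\log(1/s)\to 0$, and the elementary fact $r\log r\to 0$ as $r\to 0^{+}$ handles the remaining term. The error therefore vanishes and \eqref{limConvexDensity1} follows.

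Finally, the conclusion \eqref{limConvexDensity2} is obtained by noting that the convergence $G_2(s,v(s);x)\to\xi$ along the tangential curve $w(s)=x+s+iv(s)$ is precisely the hypothesis driving the second half of the proof of Proposition \ref{GeoSing2}. That argument, which is pure dominated convergence once the curve is known to be tangential and $G_2$ is known to have limit $\xi$, yields $\int_{\R} f(t)\,dt/((x+s-t)^{2}+v(s)^{2}) \to -\pi(\Hil f)'(x)+\xi$ and $\pi H_{v(s)}f(x+s)\to \pi\Hil f(x)$; inserting these limits into \eqref{eqchin}--\eqref{eqetan} gives exactly $(\chi_{III}(\xi),\eta_{III}(\xi))$, completing the proof.
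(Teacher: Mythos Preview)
Your proof is correct and follows essentially the same strategy as the paper: split $G_2$ into the constant-numerator piece (which evaluates to $(2\xi/\pi)\arctan(s/v(s))\to\xi$) plus an error, bound the error pointwise using convexity and the uniform bound on $f(x+2s)/f(x+s)$, and invoke the argument of Proposition~\ref{GeoSing2} for \eqref{limConvexDensity2}. The only substantive difference is in how you handle the error. The paper symmetrizes the error as $\int_0^s\bigl(f(x+s+t)+f(x+s-t)-2f(x+s)\bigr)\,\frac{dt}{t^2+v(s)^2}$ and bounds the bracket by $Ctf(x+s)/s$; you instead bound $|\psi(u)|$ separately on $[-s,0]$ and $[0,s]$ via the subgradient and chord inequalities. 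Both routes give the same estimate, $\text{error}\lesssim (f(x+s)/s)\log\bigl(1+s^2/v(s)^2\bigr)$.

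Your final limit step, however, is more convoluted than necessary. After substituting $v(s)=(\pi/\xi)f(x+s)$, the error is (up to constants) exactly $-g(s)\log g(s)$ with $g(s)=f(x+s)/s\to 0$, which tends to $0$ directly by the elementary fact you cite. The paper does precisely this, writing the bound as $\frac{v(s)}{s}\log\bigl(1+s^2/v(s)^2\bigr)\to 0$. Your detour through the integrability of $f(x+h)/h^2$ on annuli $[s,\sqrt{s}]$ to show $(f(x+s)/s)\log(1/s)\to 0$ is valid (it uses the monotonicity of $f(x+s)/s$), but it is not needed: the split $\log s-\log f(x+s)=\log(s/f(x+s))$ should be left intact, and then the single variable $r=f(x+s)/s$ finishes the job in one line.
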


\begin{proof}
Consider the case $s>0$. Let 
\begin{align*}
I_1(s)&=\int_{x}^{x+2s}\frac{f(x+s)dt}{(x+s-t)^2+v(s)^2},\\
I_2(s)&=\int_{x+s}^{x+2s}\frac{(f(t)-f(x+s))dt}{(x+s-t)^2+v(s)^2}-\int_{x}^{x+s}\frac{(f(x+s)-f(t))dt}{(x+s-t)^2+v(s)^2}
\end{align*}
so that
\begin{align*}
G_2(x,v(s);x)=I_1(s)+I_2(s).
\end{align*}
Let $F(x)$ be a convex function on an interval $I$ and let $x,y,w\in I$ with $x<y<w$. Then
\begin{align}
\label{ineqConvex}
\frac{F(y)-F(x)}{y-x}\leq \frac{F(w)-F(x)}{w-x}
\end{align}
(see Proposition 1.25 in \cite{BSimon}). From this and $f(x)=0$ since $x\in \mathcal{S}_{nt}^{sing,III}(\mu)$ we see that $f(x+s)/s$ is an increasing function in $(0,\eps)$ and hence
\begin{align*}
a=\lim_{s\to 0^+}\frac{f(x+s)}{s}
\end{align*}
exists and is $\geq 0$. We must have $a=0$, since if $a>0$ then
\begin{align*}
\infty=\int_0^{\eps}\frac{a}{s}ds\leq \int_0^\eps\frac{f(x+s)}{s^2}ds\leq \int_\R\frac{f(t)}{(x-t)^2}dt,
\end{align*}
which contradicts $x\in \mathcal{S}_{nt}^{sing,III}(\mu)$. Thus, 
\begin{align}
\label{limConvex1}
\frac{v(s)}{s}=\frac{\pi}{\xi}\frac{f(x+s)}{s}\to 0
\end{align}
as $s\to 0^+$. It follows that 
\begin{align*}
I_1(s)=\frac{\xi}{\pi}\int_{x}^{x+2s}\frac{v(s)dt}{(x+s-t)^2+v(s)^2}=\frac{2\xi}{\pi}\arctan\frac{s}{v(s)}\to \xi
\end{align*}
as $s\to 0^+$. Hence, to prove (\ref{limConvexDensity1}) it remains to show that $I_2(s)\to 0^+$ as $s\to 0^+$. Notice that we can write 
\begin{align}
\label{intConvex}
I_2(s)&=\int_{0}^{s}(f(x+s+t)+f(x+s-t)-2f(x+s))\frac{dt}{t^2+v(s)^2}.
\end{align}
Since $f$ is convex in $[x-\eps,x+\eps]$ we see that for $0\leq s\leq \eps/2$,
\begin{align*}
\frac{1}{2}(f(x+s+t)+f(x+s-t))\geq f\bigg(\frac{x+s+t+x+s-t}{2}\bigg)=f(x+s)
\end{align*}
and consequently $I_2(s)\geq 0$. It follows from (\ref{ineqConvex}) that 
\begin{align*}
\frac{f(x+s)-f(x)}{s}\geq\frac{f(x+s-t)-f(x)}{s-t} 
\end{align*}
for $t\in[0,s)$ and since $f(x)=0$ we see that
\begin{align*}
f(x+s)\geq \frac{s}{s-t}f(x+s-t)\geq f(x+s-t).
\end{align*}
From (\ref{ineqConvex}) we also see that 
\begin{align*}
\frac{f(x+s+t)-f(x+s)}{t}\leq\frac{f(x+2s)-f(x+s)}{s}\leq \frac{f(x+2s)}{s} 
\end{align*}
for $t\in[0,s]$. Thus, 
\begin{align}
\label{ineqConvex2}
 f(x+s+t)+f(x+s-t)-2f(x+s)&=t\bigg(\frac{f(x+s+t)-f(x+s)}{t}\bigg)-(f(x+s)-f(x+s-t)) \nonumber \\&\leq t\frac{f(x+2s)}{s}\leq Ct\frac{f(x+s)}{s},
\end{align}
for some constant $C$. In the last estimate we used our assumption that $f(x+2s)/f(x+s)$ is uniformly bounded for $s\in[0,\eps]$. If we use (\ref{limConvex1}) together with the estimate (\ref{ineqConvex2}) in (\ref{intConvex}), we see that we have proved that 
\begin{align*}
0\leq I_2(s)\leq \frac{Cf(x+s)}{s}\int_0^s\frac{tdt}{t^2+v(s)^2}=\frac{C\xi v(s)}{\pi s}\int_{0}^s\frac{t dt}{t^2+v(s)^2}=\frac{C\xi v(s)}{2\pi s}\log\bigg(1+\frac{s^2}{v(s)^2}\bigg)\to 0
\end{align*}
as $s\to 0^+$ by (\ref{limConvex1}). This proves (\ref{limConvexDensity1}) and (\ref{limConvexDensity2}) follows as in the proof of Proposition \ref{GeoSing2}.
\end{proof}

\begin{rem}
In particular we note that the assumption that $\frac{f(x+2s)}{f(x+s)}$ is uniformly bounded in $s$ holds if $f(x+s)\sim g(s)\vert s\vert^{\alpha}$, for some positive and bounded function $g(s)$ and some $\alpha>0$ such that $g(s)\vert s\vert ^{\alpha}$ is convex in a neighborhood of $0$.
\end{rem}
In Propositions \ref{GeoSing1}-\ref{GeoSing2} we determined subsets of $\dv \mathcal{L}(x)$ when $x\in\bigcup_{A=I}^{IV}\mathcal{S}_{nt}^{sing,A}(\mu)$. We now want to show that under some additional assumptions on the density $f$, these sets are in fact all of $\dv \mathcal{L}(x)$.

\begin{Thm}
\label{GeoThm1}
Assume that $x\in\mathcal{S}_{nt}^{sing,III}(\mu)\cup\mathcal{S}_{nt}^{sing,IV}(\mu)$ and that the assumptions of Proposition \ref{GeoSing2} are satisfied. Furthermore,  assume that there exists sequences $\{r_n\}_n\subset G$ and $\{l_n\}_n\subset G$ of regular points such that $r_n>x$ and $l_n<x$ for all n and such that $\lim_{n\to \infty}r_n=\lim_{n\to \infty}l_n=x$. Finally assume that 
\begin{align*}
\max\{\sup_n\vert m_{\Hil f}(r_n)\vert,\sup_n\vert m_{\Hil f}(l_n)\vert\}<+\infty.
\end{align*}
Then, if $x\in \mathcal{S}_{nt}^{sing,III}(\mu)$
\begin{align*}
\partial \mathcal{L}(x)&=\overline{\{(\chi_{III}(\xi),\eta_{III}(\xi)):\xi\in(0,+\infty)\}},
\end{align*}
and 
if $x\in \mathcal{S}_{nt}^{sing,IV}(\mu)$
\begin{align*}
\partial \mathcal{L}(x)&=\overline{\{(\chi_{IV}(\xi),\eta_{IV}(\xi)):\xi\in(0,+\infty)\}},
\end{align*}
where the functions $\chi_{III/IV}(\xi)$ and $\eta_{III/IV}(\xi)$ are defined in Proposition \ref{GeoSing2}. In particular the assumptions holds if $x\in \mathscr{L}_{m_{\Hil f}}$ by a modification of the proof of Lemma \ref{Generic1}, and especially if $f\in \Hc$.
\end{Thm}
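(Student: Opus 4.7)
The strategy is to prove both inclusions for $\dv\LL(x)=\overline{\{(\chi_{III}(\xi),\eta_{III}(\xi)):\xi\in(0,+\infty)\}}$ in the case $x\in\SsingIII$ (the case $x\in\SsingIV$ is entirely analogous, replacing $f$ by $1-f$ as in the proof of Proposition \ref{GeoSing2}). The inclusion $\supseteq$ is immediate: Proposition \ref{GeoSing2} exhibits, for each $\xi\in(0,+\infty)$, a tangential continuous path whose $W_{\LL}^{-1}$-image converges to $(\chi_{III}(\xi),\eta_{III}(\xi))$; the endpoint $\xi\downarrow 0$ reproduces the non-tangential limit $(\chi_\G^{III}(x),\eta_\G^{III}(x))$ of Proposition \ref{Sing3}; and $\lim_{\xi\to+\infty}(\chi_{III}(\xi),\eta_{III}(\xi))=(x,1)\in\dv\LL(x)$ by Lemma \ref{TangTop}. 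Closedness of $\dv\LL(x)$ yields the full closure.

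For $\subseteq$, fix $(\chi^*,\eta^*)\in\dv\LL(x)$ realized by some $\{w_n\}\subset\Hp$ with $w_n=u_n+iv_n\to x$ and $(\chi_\LL(w_n),\eta_\LL(w_n))\to(\chi^*,\eta^*)$. Set $s_n:=u_n-x$ and pass to a subsequence so that $s_n$ has constant sign (WLOG $s_n\geq 0$; the case $s_n<0$ is symmetric) and so that $\xi_n:=G_2(s_n,v_n;x)\to\xi^*\in[0,+\infty]$. The task reduces to verifying $(\chi^*,\eta^*)=(\chi_{III}(\xi^*),\eta_{III}(\xi^*))$, under the conventions $(\chi_{III}(0),\eta_{III}(0)):=(\chi_\G^{III}(x),\eta_\G^{III}(x))$ and $(\chi_{III}(+\infty),\eta_{III}(+\infty)):=(x,1)$.

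When $\xi^*\in[0,+\infty)$, the proof of Proposition \ref{GeoSing2} transfers verbatim to the discrete sequence $(s_n,v_n)$. The decomposition
\begin{equation*}
\int_{\R}\frac{f(t)\,dt}{(u_n-t)^2+v_n^2}=I(s_n,v_n;x)+G_2(s_n,v_n;x),
\end{equation*}
combined with dominated convergence using the majorant $4f(t)/(x-t)^2\in L^1(\R)$ (integrable since $x\in\SsingIII$), gives $\pi P_{v_n}f(u_n)/v_n\to-\pi(\mathcal{H}f)'(x)+\xi^*$. The analogous splitting $\pi H_{v_n}f(u_n)=J_1(s_n)+J_2(s_n)$ yields $J_1(s_n)\to\pi\mathcal{H}f(x)$ by dominated convergence, and $|J_2(s_n)|\leq s_n\xi_n=o(1)$. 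Since $f(x)=0$ with $x\in\mathscr{L}_f$ forces $\sin(\pi P_{v_n}f(u_n))\sim\pi P_{v_n}f(u_n)$, substitution into \eqref{eqchin}--\eqref{eqetan} delivers the required limit $(\chi_{III}(\xi^*),\eta_{III}(\xi^*))$.

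The hard part is $\xi^*=+\infty$, where $(\chi^*,\eta^*)=(x,1)$ must be established but $|J_2(s_n)|\leq s_n\xi_n$ is no longer $o(1)$, so the direct calculation of $H_{v_n}f(u_n)$ breaks down. The plan is a trapping argument modelled on Lemma \ref{lemUniformconv} using the regular sequences $\{r_n\},\{l_n\}\subset G$: the uniform bound on $m_{\Hil f}$ together with Proposition \ref{GenericProp} yields $(\chi_\LL(r_n,v),\eta_\LL(r_n,v))\to(r_n,1)$ and $(\chi_\LL(l_n,v),\eta_\LL(l_n,v))\to(l_n,1)$ uniformly in $n$ as $v\to 0^+$. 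For each large $n$ with $\xi_n>1$, the monotonicity of $v\mapsto G_2(s_n,v;x)$ supplies $v_n'>v_n$ with $G_2(s_n,v_n';x)=1$, and the already-treated finite case shows $W_\LL^{-1}(x+s_n+iv_n')\to(\chi_{III}(1),\eta_{III}(1))$ with $v_n'\to 0$. The $W_\LL^{-1}$-image of the vertical segment in $\Hp$ from $w_n$ to $x+s_n+iv_n'$ is a continuous arc in $\overline\LL$ whose endpoints are $W_\LL^{-1}(w_n)$ and $W_\LL^{-1}(x+s_n+iv_n')$. Trapping this arc inside the $W_\LL^{-1}$-image of a shrinking rectangle $(l_N,r_N)\times(0,\delta_N)\subset\Hp$ and using the uniform edge information at $r_N,l_N$, one obtains a Hausdorff-distance bound as in Lemma \ref{lemUniformconv} showing that as $N\to\infty$ the image region is confined arbitrarily close to $\overline{\{(\chi_{III}(\xi),\eta_{III}(\xi)):\xi\geq 1\}}\cup\{(x,1)\}$; repeating with the threshold $1$ replaced by any large $M$ forces $(\chi^*,\eta^*)\in\bigcap_M\overline{\{(\chi_{III}(\xi),\eta_{III}(\xi)):\xi\geq M\}}=\{(x,1)\}$.
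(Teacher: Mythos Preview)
Your treatment of the inclusion $\supseteq$ and of the case $\xi^*\in[0,\infty)$ is correct and matches the paper's argument. The genuine gap is in the case $\xi^*=+\infty$.

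The central claim there --- that the bound on $m_{\Hil f}(r_n)$ together with Proposition~\ref{GenericProp} gives $(\chi_\LL(r_n,v),\eta_\LL(r_n,v))\to(r_n,1)$ \emph{uniformly in $n$} as $v\to 0^+$ --- is false, not merely unjustified. First, Proposition~\ref{GenericProp} has hypotheses on $\Delta m_f$ and $m_f^{\delta},m_{1-f}^{\delta}$, not on $m_{\Hil f}$, so it cannot be invoked. Second, the uniformity actually fails: take $v_N:=v_1(r_N-x)$ (the curve from Proposition~\ref{GeoSing2} with $\xi=1$). Then $v_N\to 0$, but by Proposition~\ref{GeoSing2} one has $(\chi_\LL(r_N,v_N),\eta_\LL(r_N,v_N))\to(\chi_{III}(1),\eta_{III}(1))$, while $(r_N,1)\to(x,1)$; since $(\chi_{III}(1),\eta_{III}(1))\neq(x,1)$, the distance $|(\chi_\LL(r_N,v_N),\eta_\LL(r_N,v_N))-(r_N,1)|$ stays bounded away from zero. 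The underlying reason is that $x\in\SsingIII$ forces $\int_\R f(t)(x-t)^{-2}\,dt<\infty$, so along $r_N\to x$ the denominator $v^{-1}\pi P_v f(r_N)$ cannot be driven to $+\infty$ uniformly. Consequently the rectangle-based trapping from Lemma~\ref{lemUniformconv} cannot be transplanted here.

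The paper's proof avoids this by never asking for uniformity in $v\to 0^+$. Instead it uses the curve $v_\xi(\cdot)$ itself as the upper boundary of the trapping region $X_n^{(\xi)}=\{x<u<r_{k_n},\,0<v<v_\xi(u-x)\}$, and restricts the vertical side at $r_{k_n}$ to $0<v\le v_\xi(r_{k_n}-x)$. On that truncated vertical, the bound on $m_{\Hil f}(r_{k_n})$ controls the numerator via \eqref{nontangMax}, while monotonicity of $v\mapsto v^{-1}\pi P_v f(r_{k_n})$ reduces the denominator estimate to the single height $v=v_\xi(r_{k_n}-x)$. At that height the splitting \eqref{propGeoSing2Int} gives
\[
\int_\R \frac{f(t)\,dt}{(r_{k_n}-t)^2+v_\xi(r_{k_n}-x)^2}
= G_2(r_{k_n}-x,v_\xi(r_{k_n}-x);x)+I(\cdots)
\;\longrightarrow\;\xi+\int_\R\frac{f(t)\,dt}{(x-t)^2},
\]
and a similar (easier) lower bound holds for $1-f$. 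This produces $d((x,1),\partial T_n^2)\le C/\xi$ with $C$ independent of $\xi$, and letting $\xi\to\infty$ finishes the argument. Your proposal needs to replace the rectangle and the uniform-convergence claim by this $v_\xi$-bounded trapping region and the explicit Poisson estimate at height $v_\xi(r_{k_n}-x)$.
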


\begin{proof}
Let $x\in\mathcal{S}_{nt}^{sing,III}(\mu)$. We know from Proposition \ref{GeoSing2} that $A=\overline{\{(\chi_{III}(\xi),\eta_{III}(\xi):0<\xi<\infty\}}\subset \dv \LL(x)$ and we want to prove that equality holds. Let $w_n=u_n+iv_n\in \Hp$, $n\geq 0$, be any sequence such that $w_n\to x$ as $n\to \infty$. We want to show that all limit points of $(\chi_{\LL}(w_n),\eta_{\LL}(w_n))$ belong to $A$. By taking subsequences we can assume that $(\chi_{\LL}(w_n),\eta_{\LL}(w_n))$ converges. Set 
\begin{align*}
\xi_n=\int_{x}^{x+2(u_n-x)}\frac{f(t)dt}{(u_n-t)^2+v_n^2}.
\end{align*}
By taking a further subsequence we can assume that $\xi_n\to \xi\in[0,\infty]$ as $n\to \infty$ $u_n>x$ for all $n$. If $\xi\in[0,\infty)$, a repetition of the arguments of the proof of Proposition \ref{GeoSing2} gives
\begin{align*}
\lim_{n\to \infty}(\chi_{\LL}(w_n),\eta_{\LL}(w_n))=\bigg(x+\frac{1-e^{-\pi\mathcal{H}f(x)}}{\xi-\pi(\mathcal{H}f)'(x)},1-\frac{e^{\pi\mathcal{H}f(x)}+e^{-\pi\mathcal{H}f(x)}-2}{\xi-\pi(\mathcal{H}f)'(x)}\bigg).
\end{align*}
It remains to consider the case $\xi=\infty$. We want to show that in this case $(\chi_{\LL}(w_n),\eta_{\LL}(w_n))\to (x,1)$ as $n\to \infty$. Since $\xi_n\to \infty$, for every $\xi>0$ there is an $N(\xi)$ such that 
\begin{align}
\label{IneqXi1}
\int_{x}^{x+2(u_n-x)}\frac{f(t)dt}{(u_n-t)^2+v_n^2}>\xi
\end{align}
whenever $n>N(\xi)$. Let $v_{\xi}(s)$ be the continuous function defined in Proposition \ref{GeoSing2}. Then the inequality (\ref{IneqXi1}) above implies that
\begin{align*}
G_2(u_n-x,v_n;x)>G_2(u_n-x,v_{\xi}(u_n-x);x).
\end{align*}
Since the function $G_2(s,v;x)$ is monotonically decreasing in $v$ this implies that $v_n<v_{\xi}(u_n-x,v_n;x)$ for all $n>N(\xi)$. This implies that the sequence $\{w_n\}_n$ is trapped inside the set
\begin{align*}
\{(u,v)\in\Hp: x\leq u<u_{N(\xi)}, 0<v<v_{\xi}(u-x)\}
\end{align*}
whenever $n>N(\xi)$. In particular for every $n$ there exists an $r_{k_n}\in G$ such that $r_{k_n}>u_n$ and $\lim_{n\to\infty}r_{k_n}=x$. Let $X_n^{(\xi)}$ be the open set
\begin{align*}
X_n^{(\xi)}=\{(u,v)\in \R^2: x<u< r_n, 0< v<v_{\xi}(u-x)\}.
\end{align*}
In particular $w_n$ belongs to $X_n^{(\xi)}$ for every $n>N(\xi)$. 
Let $T_n$ be the closed set, whose boundary equals
\begin{align*}
\dv T_n&=\{(t,1):x\leq t\leq r_{k_n}\}\bigcup \{W_{\mathcal{L}}^{-1}(r_{k_n}+it):0<t\leq v_{\xi}(r_{k_n}-x)\}
\bigcup \{W_{\mathcal{L}}^{-1}(t+iv_{\xi}(t)):0<t\leq r_{k_n}-x\}\\&\bigcup\bigg\{\bigg(x+\frac{1-e^{-\pi\mathcal{H}f(x)}}{\xi'-\pi(\mathcal{H}f)'(x)},1-\frac{e^{\pi\mathcal{H}f(x)}+e^{-\pi\mathcal{H}f(x)}-2}{\xi'-\pi(\mathcal{H}f)'(x)}\bigg): \xi'\geq \xi\bigg\}\\
&:=\dv T_n^1\cup\dv T_n^2\cup\dv T_n^3\cup\dv T_n^4.
\end{align*}
We now show that $\overline{W_{\mathcal{L}}^{-1}(X_n^{(\xi)})}\subset T_n$. Since $x\in \Sn^{\circ}$, it follows from Lemma \ref{TangTop} that all points of $\dv T_n^1\subset \dv W_{\mathcal{L}}^{-1}(X_n^{(\xi)})$. Since $r_{k_n}\in G$, it follows that $\lim_{v\to 0^+}(\chi_{\LL}(r_{k_n}+iv),\eta_{\LL}(r_{k_n}+iv))=(r_{k_n},1)$. Hence $\dv T_n^2\subset \dv W_{\mathcal{L}}^{-1}(X_n^{(\xi)})$. By Proposition \ref{GeoSing2}, $(\dv T_n^3\cup \dv T_n^4)\subset \dv W_{\mathcal{L}}^{-1}(X_n^{(\xi)})$. On the other hand, by Lemma \ref{TangTop} and the fact that $W_{\mathcal{L}}^{-1}$ is a homeomorphism, we have that $W_{\mathcal{L}}^{-1}(X_n^{(\xi)})\subset T_n^{\circ}$. Thus $\overline{W_{\mathcal{L}}^{-1}(X_n^{(\xi)})}\subset T_n$. 
This fact follows from Lemma \ref{TangTop} and Proposition \ref{GeoSing2} and the assumption that $r_{k_n}$ is a regular point in $\Sn^{\circ}$. In particular, $(\chi_{\LL}(w_n),\eta_{\LL}(w_n))\in T_n$ for every $n>N(\xi)$. The trapping regions $T_n$ are illustrated in figure \ref{Sequence2}. We will now show that $\limsup_{n\to+\infty}d((x,1),\dv T_n)\leq C/\xi$ for some positive constant $C$ independent of $\xi$, which implies that $\lim_{n\to+\infty}(\chi_\LL(w_n),\eta_\LL(w_n))=(x,1)$. Recall that
\begin{align*}
d((x,1),\dv T_n)=\sup_{(x',y')\in \dv T_n}d((x,1),(x',y')).
\end{align*}
Clearly,
\begin{align*}
d((x,1),\dv T_n^1)=r_{k_n}-x\rightarrow 0
\end{align*}
as $n\to +\infty$. Similarly, from the proof of Proposition \ref{GeoSing2} it follows that
\begin{align*}
\lim_{n\to\infty}d((x,1),\dv T_n^3)=\bigg\vert\bigg(\frac{1-e^{-\pi\mathcal{H}f(x)}}{\xi-\pi(\mathcal{H}f)'(x)},\frac{e^{\pi\mathcal{H}f(x)}+e^{-\pi\mathcal{H}f(x)}-2}{\xi-\pi(\mathcal{H}f)'(x)}\bigg)\bigg\vert
\end{align*}
and 
\begin{align*}
d((x,1),\dv T_n^4)=\bigg\vert\bigg(\frac{1-e^{-\pi\mathcal{H}f(x)}}{\xi-\pi(\mathcal{H}f)'(x)},\frac{e^{\pi\mathcal{H}f(x)}+e^{-\pi\mathcal{H}f(x)}-2}{\xi-\pi(\mathcal{H}f)'(x)}\bigg)\bigg\vert.
\end{align*}
We now estimate $d((x,1),\dv T_n^2)$. This is the most subtle part of the proof, and here the choice of the sequence $\{r_{k_n}\}_n$ is critical. By assumption the sequence $\{m_{\mathcal{H}f}(r_{k_n})\}_n$ is bounded and hence by estimate (\ref{nontangMax}) $\vert H_{v}f(r_{k_n})\vert $, is uniformly bounded. Thus, there is a constant $C'$ independent of $\xi$ such that for all $n$ and $0<v<v_{\xi}(r_{k_n}-x)$, 
\begin{align*}
v\bigg\vert \frac{1-e^{-\pi H_{v}f(r_{k_n})}\cos(\pi P_{v}f(r_{k_n}))}{\sin(\pi P_{v}f(r_{k_n}))}\bigg\vert \leq v\frac{C'}{\vert \sin(\pi P_{v}f(r_{k_n}))\vert}
\end{align*} 
and 
\begin{align*}
v\bigg\vert \frac{e^{\pi H_{v}f(r_{k_n})}+e^{-\pi H_{v}f(r_{k_n)}}-2\cos(\pi P_{v}(r_{k_n}))}{\sin(\pi P_{v}f(r_{k_n}))}\bigg\vert \leq v\frac{C'}{\vert \sin(\pi P_{v}f(r_{k_n}))\vert}
\end{align*} 
for all $n$. In addition,
\begin{align*}
\frac{v}{\vert \sin( \pi P_{v}f(r_{k_n}))\vert}&\leq \frac{v}{\min\{P_{v}f(r_{k_n}), P_{v}(1-f)(r_{k_n}))\}}\\
&\leq \frac{1}{\min\{v^{-1}P_{v}f(r_{k_n}),v^{-1}P_{v}(1-f)(r_{k_n}))\}}\\
&\leq \frac{1}{\min\{v_{\xi}(r_{k_n}-x)^{-1}P_{v_{\xi}(r_{k_n}-x)}f(r_{k_n}),v_{\xi}(r_{k_n}-x)^{-1}P_{v_{\xi}(r_{k_n}-x)}(1-f)(r_{k_n}))\}}
\end{align*} 
for all $0<v<v_{\xi}(r_{k_n}-x)$, by the monotonicity of the function $v^{-1}\pi P_{v} f(r_{k_n})$. By the same argument that was used to control (\ref{propGeoSing2Int}) in the proof of Proposition \ref{GeoSing2} we see that
\begin{align*}
\int_{\R} \frac{f(t)dt}{(r_{k_n}-t)^2+v_{\xi}(r_{k_n}-x)^2}&=\int_{\vert r_{k_n}-t\vert>r_{k_n}-x} \frac{f(t)dt}{(r_{k_n}-t)^2+v_{\xi}(r_{k_n}-x)^2}+G_2(r_{k_n}-x,v_{\xi}(r_{k_n}-x);x)\\
&\to \xi+\int_{\R}\frac{f(t)dt}{(x-t)^2},
\end{align*} 
as $n\to \infty$. Furthermore, 
\begin{align*}
\int_{\R} \frac{(1-f(t))dt}{(r_n-t)^2+v_{\xi}(r_n-x)^2}&=\int_{\R} \frac{dt}{(r_n-t)^2+v_{\xi}(r_n-x)^2}-\xi+O(1)\\
&=\frac{\pi}{v_{\xi}(r_n-x)}-\xi+O(1)\geq \xi 
\end{align*} 
whenever $n$ is sufficiently large. Hence,
\begin{align*}
\limsup_{n\to\infty}d((x,1),\dv T_n^2)\leq \frac{C}{\xi}.
\end{align*}
Combining our estimates we have proved that there is a constant $C$ such that 
\begin{align*}
\limsup_{n\to\infty}d\Big((x,1),\dv T_n)\Big)\leq \frac{C}{\xi}.
\end{align*}
Since $\xi\in[0,\infty)$ was arbitrary, the result follows. The case when $x\in \mathcal{S}_{nt}^{sing,IV}(\mu)$ is analogus. 
\end{proof}
\begin{figure}[h]
\centering
\begin{tikzpicture}
\draw [green, thick, domain=-10:-4] plot (\x, {0.1*(\x+10)*(\x+10)}) node[black,right] {$v_\xi(s)$};
\draw [blue] (-6,{0.1*(-6+10)*(-6+10)})--(-6,0) node[black,below] {\small$r_n$};
\draw [blue] (-8,{0.1*(-8+10)*(-8+10)})--(-8,0) node[black,below] {\small$r_{n+1}$};
\draw[thick,->] (-10,0)--(-5,0);
\draw[thick,->] (-10,0)--(-10,4);
\draw (-10,-0.5) node {\small$(x,0)$};
\draw[thick,->] (-2,2)--(5,2);
\filldraw[black] (-7,0.3) circle (0.03cm); 
\filldraw[black] (-7.5,0.4) circle (0.03cm); 
\filldraw[black] (-6.3,1) circle (0.03cm); 
\filldraw[black] (-5.5,0.5) circle (0.03cm); 
\filldraw[black] (-8.3,0.1) circle (0.03cm); 
\draw[->] (-4,1)--(-3,1);
\draw (-3.5,1.3) node {$W_{\LL}^{-1}$};
\draw[red] (0,2)--(1,-1);
\draw[green] (0.20,{2-0.5}) to [out=0,in=135] (3,0);
\draw[blue] (2.9,0.1) to [out=45,in=180] (2,2);
\draw[blue] (1.2,1.32) to [out=45,in=180] (1,2);
\draw (0,2.3) node {\small$(x,1)$};
\filldraw[black] (3.2,1.6) circle (0.03cm); 
\filldraw[black] (2.5,1) circle (0.03cm); 
\filldraw[black] (1.8,1.6) circle (0.03cm); 
\filldraw[black] (1.2,1.9) circle (0.03cm); 
\filldraw[black] (0.7,1.85) circle (0.03cm); 
\draw (2.2,2.2) node {\tiny$(r_n,1)$};
\draw (1.2,2.2) node {\tiny$(r_{n+1},1)$};
\draw[->] (2,-0.4) to [out=45,in=270] (1,1.2);
\draw (2,-0.6) node {\scriptsize$W_{\LL}^{-1}(s+iv_{\xi}(s))$};
\draw[->] (4,1) to [out=170,in=60] (3,0.9);
\draw (5,0.9) node {\scriptsize$W_{\LL}^{-1}(r_n+is)$};
\end{tikzpicture}

\caption{This figure illustrates the trapping regions $T_n$. The black dots represent the positions of the sequence $\{u_n+iv_n\}_{n=1}^{+\infty}$}
\label{Sequence2}
\end{figure}
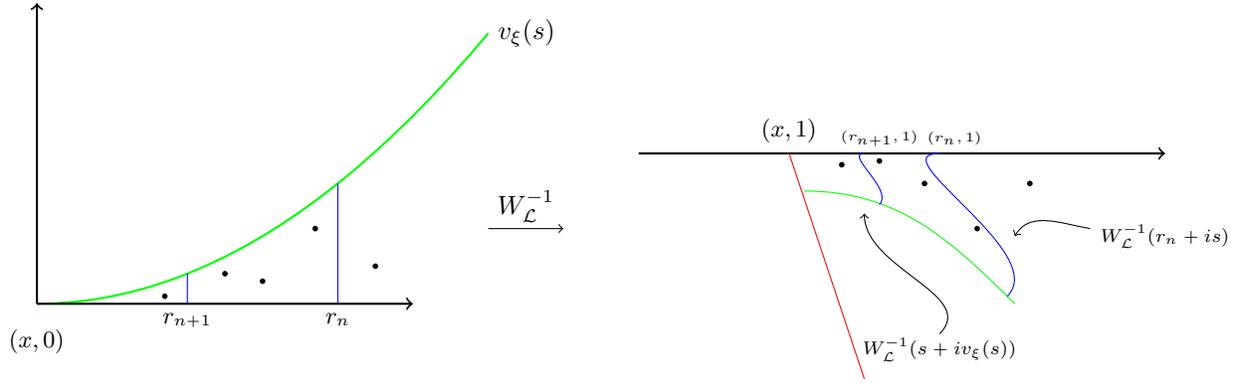

\begin{Thm}
\label{GeoThm2}
Let $x\in\mathcal{S}_{nt}^{sing,I}(\mu)\cup\mathcal{S}_{nt}^{sing,II}(\mu)$ and that the assumptions of Proposition \ref{GeoSing1} and Proposition \ref{GeoSing4} are satisfied. Furthermore,  assume that there exists sequences $\{r_n\}_n\subset G$ and $\{l_n\}_n\subset G$ of regular points such that $r_n>x$ and $l_n<x$ for all n and such that $\lim_{n\to \infty}r_n=\lim_{n\to \infty}l_n=x$. Finally assume that 
\begin{align*}
\max\{\sup_n\vert m_{\Hil \varphi}(r_n)\vert,\sup_n\vert m_{\Hil \varphi}(l_n)\vert\}=m<+\infty.
\end{align*}
Then,
\begin{align*}
\partial \mathcal{L}(x)&=\overline{\bigg\{\bigg(x,1+\frac{\delta e^{\pi \mathcal{H}\varphi(x)}}{1+\xi}\bigg):\xi\in[0,+\infty)\bigg\}}
\end{align*}
if $x\in\mathcal{S}_{nt}^{sing,I}(\mu)$
and
\begin{align*}
\partial \mathcal{L}(x)&=\overline{\bigg\{\bigg(x+\frac{\delta e^{-\pi \mathcal{H}\varphi(x)}}{1+\xi}\,1\frac{\delta e^{-\pi \mathcal{H}\varphi(x)}}{1+\xi}\bigg):\xi\in[0,+\infty)\bigg\}}
\end{align*}
if $x\in\mathcal{S}_{nt}^{sing,II}(\mu)$. In particular the assumptions holds if $x\in \mathscr{L}_{m_{\Hil \varphi}}$, and especially if $f\in \Hc$. 
\end{Thm}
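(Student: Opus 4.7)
The plan is to mirror the proof of Theorem \ref{GeoThm1} for singular points of types III and IV, adapting the argument to types I and II. By Propositions \ref{GeoSing1} and \ref{GeoSing4}, the parametrised sets $\{(\chi_I(\xi),\eta_I(\xi)):\xi\in(0,\infty)\}$ and $\{(\chi_{II}(\xi),\eta_{II}(\xi)):\xi\in(0,\infty)\}$ are already known to lie in $\dv\mathcal{L}(x)$, and $\dv\mathcal{L}(x)$ is closed by its definition as the set of subsequential limits, so one inclusion is immediate. The problem reduces to showing the reverse inclusion: every subsequential limit of $(\chi_\LL(w_n),\eta_\LL(w_n))$ along any sequence $w_n\to x$ belongs to the stated closure.

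The first step is the reduction to two cases. After passing to a subsequence, assume $(\chi_\LL(w_n),\eta_\LL(w_n))$ converges and that the signs of $u_n-x$ are constant. Define the key parameter
\begin{align*}
\xi_n := G_1(u_n-x, v_n; x),
\end{align*}
with $G_1$ as in Proposition \ref{GeoSing1} (it applies equally in the type II setting of Proposition \ref{GeoSing4}). By extracting a further subsequence, I may assume $\xi_n\to\xi\in[0,+\infty]$. If $\xi\in[0,+\infty)$, the argument is essentially a rerun of Propositions \ref{GeoSing1}/\ref{GeoSing4}: tangentiality of $\{w_n\}$ in the limit is forced by $\xi>0$ together with $\varphi(x)=0$ and $x\in\mathscr{L}_{\varphi}$ (as in the contradiction argument following~(\ref{Eq1})), and the dominated-convergence computation controlling $\pi H_{v_n}\varphi(u_n)$ then delivers $(\chi_\LL(w_n),\eta_\LL(w_n))\to(\chi_I(\xi),\eta_I(\xi))$ or $(\chi_{II}(\xi),\eta_{II}(\xi))$. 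The case $\xi=0$ is handled identically, giving the endpoint at $(x,1)$ of the closure in the limit.

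The decisive case is $\xi_n\to+\infty$, in which one must prove $(\chi_\LL(w_n),\eta_\LL(w_n))\to(x,1)$. For each sufficiently large $n$, pick $r_{k_n}\in G$ with $u_n<r_{k_n}$ and $r_{k_n}\to x$ (available by density of $G$ in $\Sn^\circ$ and the hypothesis). Monotonicity of $v\mapsto G_1(s,v;x)$ combined with $\xi_n>\xi$ forces $v_n<v_\xi^+(u_n-x)$, trapping $w_n$ inside
\begin{align*}
X_n^{(\xi)} := \{(u,v):x<u<r_{k_n},\ 0<v<v_\xi^+(u-x)\}.
\end{align*}
Let $T_n$ denote the closed region whose boundary is assembled from the four arcs: the horizontal segment $\{(t,1):x\le t\le r_{k_n}\}$; the images $W_\LL^{-1}(\{r_{k_n}+it:0<t\le v_\xi^+(r_{k_n}-x)\})$ and $W_\LL^{-1}(\{t+iv_\xi^+(t):0<t\le r_{k_n}-x\})$; and the tail arc $\{(\chi_I(\xi'),\eta_I(\xi')):\xi'\geq\xi\}$ (respectively $\{(\chi_{II}(\xi'),\eta_{II}(\xi')):\xi'\geq\xi\}$ in the type II setting). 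Lemma \ref{TangTop}, Propositions \ref{GeoSing1}/\ref{GeoSing4}, and the fact that $r_{k_n}\in G$ is generic then yield $\overline{W_\LL^{-1}(X_n^{(\xi)})}\subset T_n$, exactly as in Theorem \ref{GeoThm1}.

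The main obstacle is then to bound $\limsup_{n\to\infty} d((x,1),\dv T_n) \le C/\xi$ with $C$ independent of $\xi$, which by sending $\xi\to\infty$ will finish the proof. The estimates for the top segment, the tail arc, and the tangential-curve arc proceed as in Theorem \ref{GeoThm1}, using respectively $r_{k_n}-x\to0$, the explicit formulas for $(\chi_I(\xi'),\eta_I(\xi'))$ or $(\chi_{II}(\xi'),\eta_{II}(\xi'))$, and Propositions \ref{GeoSing1}/\ref{GeoSing4}. The delicate estimate is for the arc at $r_{k_n}$, where one must bound $\vert(\chi_\LL(r_{k_n},v)-r_{k_n},\eta_\LL(r_{k_n},v)-1)\vert$ uniformly for $0<v\le v_\xi^+(r_{k_n}-x)$. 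The crucial splitting (as in the proof of Proposition \ref{GeoSing1})
\begin{align*}
\pi H_v f(r_{k_n}) = \pi H_v\varphi(r_{k_n}) - \tfrac{1}{2}\log((r_{k_n}-x)^2+v^2) + \tfrac{1}{2}\log((r_{k_n}-x+\delta)^2+v^2)
\end{align*}
(with the analogous identity in the type II case) isolates the indicator contribution from the $\varphi$ contribution; the hypothesis $\sup_n m_{\mathcal{H}\varphi}(r_{k_n})\le m<\infty$ together with the non-tangential maximal bound (\ref{nontangMax}) yields a uniform bound on $\vert\pi H_v\varphi(r_{k_n})\vert$ for $v$ on the relevant scale. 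The logarithmic singularities are then absorbed by analysing $v/\sin(\pi P_v f(r_{k_n}))$ through the monotonicity of $v\mapsto v^{-1}\pi P_v f(r_{k_n})$ and $v\mapsto v^{-1}\pi P_v(1-f)(r_{k_n})$ (Lemma \ref{MinMon}) and the fact that along $v=v_\xi^+(r_{k_n}-x)$ the dominant contribution to these Poisson integrals is of order $\xi$ plus the explicit indicator part which together produce an $O(1/\xi)$ bound. This is the step that requires the hypothesis on $m_{\mathcal{H}\varphi}$ (rather than $m_{\mathcal{H}f}$) and where the proof genuinely differs in character from Theorem \ref{GeoThm1}. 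Once this estimate is in hand, summing the four contributions and letting $\xi\to\infty$ concludes the proof; the symmetric case $u_n\le x$ is handled identically using $\{l_n\}\subset G$ on the left.
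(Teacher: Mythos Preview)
Your overall architecture matches the paper's proof: reduce to subsequences with $\xi_n=G_1(u_n-x,v_n;x)\to\xi\in[0,\infty]$, dispose of $\xi<\infty$ by rerunning Proposition~\ref{GeoSing1}/\ref{GeoSing4}, and for $\xi=\infty$ trap $w_n$ in $X_n^{(\xi)}$ and bound $d((x,1),\partial T_n)$ arc by arc. The reduction, the construction of $T_n$, and the estimates on $\partial T_n^1,\partial T_n^3,\partial T_n^4$ are exactly as in the paper.

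The one substantive difference is your treatment of the vertical arc $\partial T_n^2$. The paper does \emph{not} proceed by the monotonicity of $v\mapsto v^{-1}\min\{P_vf,P_v(1-f)\}$ as in Theorem~\ref{GeoThm1}. Instead it computes $\pi P_vf(r_{k_n})=\arctan\frac{v}{r_{k_n}-x}-\arctan\frac{v}{r_{k_n}-x+\delta}+\pi P_v\varphi(r_{k_n})$ and applies the identity $\sin(\arctan y+\theta)=(1+y^2)^{-1/2}(\cos\theta+y^{-1}\sin\theta)$ (equation~(\ref{Trig})) to factor
\[
\sin(\pi P_vf(r_{k_n}))=\frac{v}{\sqrt{(r_{k_n}-x)^2+v^2}}\Bigl(\cos(\pi P_v\varphi)+\tfrac{r_{k_n}-x}{v}\sin(\pi P_v\varphi)+o(1)\Bigr).
\]
The prefactor $v/\sqrt{(r_{k_n}-x)^2+v^2}$ cancels \emph{exactly} against the reciprocal factor coming from $ve^{\pi H_vf(r_{k_n})}$, leaving a quantity bounded by $C/\bigl(\tfrac{r_{k_n}-x}{v}\sin(\pi P_v\varphi)-1\bigr)$. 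The paper then shows $\tfrac{r_{k_n}-x}{v}\sin(\pi P_v\varphi)\ge\tfrac{1}{2\pi}\min\{(r_{k_n}-x)\!\int\!\frac{\varphi}{(r_{k_n}-t)^2+v^2},(r_{k_n}-x)\!\int\!\frac{1-\varphi}{(r_{k_n}-t)^2+v^2}\}\ge\tfrac{1}{2\pi}(\xi-O(1))$ using $G_1\ge\xi$.

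Your monotonicity route can be made to work, but your sketch does not say how the logarithmic singularity in $e^{\pi H_vf}$ is absorbed. Concretely: $e^{\pi H_vf}\le Ce^m/\sqrt{(r_{k_n}-x)^2+v^2}\le Ce^m/(r_{k_n}-x)$, while monotonicity gives $v/\sin(\pi P_vf)\le C'/\min\{(v_\xi^+)^{-1}P_{v_\xi^+}f,(v_\xi^+)^{-1}P_{v_\xi^+}(1-f)\}$, and one must then check $(r_{k_n}-x)\cdot(v_\xi^+)^{-1}P_{v_\xi^+}f(r_{k_n})\ge c\xi$ (and likewise for $1-f$) so that the factors of $r_{k_n}-x$ cancel. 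This is true but is the heart of the matter, and your phrase ``of order $\xi$ plus the explicit indicator part'' hides it.

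One small slip: the case $\xi=0$ does not yield $(x,1)$ but rather the \emph{other} endpoint $(\chi_I(0),\eta_I(0))=(x,1-\delta e^{\pi\mathcal H\varphi(x)})$ (or its type~II analogue), which is still in the closure; $(x,1)$ is the $\xi\to\infty$ endpoint.
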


\begin{proof}
The proof is similar to that of Theorem \ref{GeoThm1}. Let $x\in\mathcal{S}_{nt}^{sing,I}(\mu)$. We know from Proposition \ref{GeoSing1} that $A=\overline{\{(\chi_{I}(\xi),\eta_{I}(\xi):0<\xi<\infty\}}\subset \dv \LL(x)$ and we want to prove that equality holds. Let $w_n=u_n+iv_n\in \Hp$, $n\geq 0$, be any sequence such that $w_n\to x$ as $n\to \infty$. We want to show that all limit points of $(\chi_{\LL}(w_n),\eta_{\LL}(w_n))$ belong to $A$. By taking subsequences we can assume that $(\chi_{\LL}(w_n),\eta_{\LL}(w_n))$ converges. Set 
\begin{align*}
\xi_n=\int_{x}^{x+2(u_n-x)}\frac{(u_n-x)\varphi(t)dt}{(u_n-t)^2+v_n^2}.
\end{align*}
By taking a further subsequence we can assume that $\xi_n\to \xi\in[0,\infty]$ as $n\to \infty$ and that $u_n>x$ for all $n$. As in the proof of Theorem \ref{GeoThm1} it is enough to consider the case $\xi=\infty$. We want to show that in this case $(\chi_{\LL}(w_n),\eta_{\LL}(w_n))\to (x,1)$ as $n\to \infty$. Since $\xi_n\to \infty$, for every $\xi>0$ 
there exists an $N(\xi)$ such that 
\begin{align}
\label{IneqXi2}
\int_{x}^{x+2(u_{n}-x)}\frac{(u_n-x)\varphi(t)dt}{(u_{n}-t)^2+v_{n}^2}>\xi
\end{align}
whenever $n>N(\xi)$. Let $v_{\xi}(s)$ be the continuous function defined in Proposition \ref{GeoSing1}. Then inequality (\ref{IneqXi2}) implies that
\begin{align*}
G_1(u_n-x,v_n;x)>G_1(u_n-x,v_{\xi}(u_n-x);x).
\end{align*}
Since the function $G(s,v;x)$ is monotonically decreasing in $v$ this implies that $v_n<v_{\xi}(u_n-x,v_n;x)$ for all $n>N(\xi)$. This implies that the sequence $\{w_n\}_n$ is contained inside the set
\begin{align*}
\{(u,v)\in\Hp: x\leq u<u_{N(\xi)}, 0<v<v_{\xi}(u-x)\}
\end{align*}
whenever $n>N(\xi)$. In particular for every $n$ there exists an $r_{k_n}\in G$ such that $r_{k_n}>u_n$ and $\lim_{n\to\infty}r_{k_n}=x$. Let $X_n^{(\xi)}$ be the open set
\begin{align*}
X_n^{(\xi)}=\{(u,v)\in \R^2: x<u< r_{k_n}, 0< v<v_{\xi}(u-x)\}.
\end{align*}
In particular $w_n$ belongs to $X_n^{(\xi)}$ for every $n>N(\xi)$. A similar arguments to that in the proof of Theorem \ref{GeoThm1} shows $\dv W_{\mathcal{L}}^{-1}(X_n^{(\xi)})\subset T_n$, where $T_n$ is the closed set whose boundary equals
\begin{align*}
\dv T_n&=\{(t,1):x\leq t\leq r_{k_n}\}\bigcup \{W_{\mathcal{L}}^{-1}(r_{k_n}+it):0<t\leq v_{\xi}(r_{k_n}-x)\}
\bigcup \{W_{\mathcal{L}}^{-1}(t+iv_{\xi}(t)):0<t\leq r_{k_n}-x\}\\&\bigcup\bigg\{\bigg(x,1+\frac{\delta e^{\pi\mathcal{H}\varphi(x)}}{\xi'+1}\bigg):  \xi'\geq \xi\bigg\}\\
&:=\dv T_n^1\cup\dv T_n^2\cup\dv T_n^3\cup\dv T_n^4
\end{align*}
In particular, $(\chi(w_n),\eta(w_n))\in T_n$ for every $n>N(\xi)$. We will now show that $\lim_{n\to+\infty}d((x,1),\dv T_n)=C/\xi$ for some positive constant $C$ independent of $\xi$, which implies that $\lim_{n\to+\infty}(\chi(w_n),\eta(w_n))=(x,1)$. Clearly,
\begin{align*}
d((x,1),\dv T_n^1)=r_{k_n}-x\rightarrow 0
\end{align*}
as $n\to +\infty$. Similarly, from the proof of Proposition \ref{GeoSing1} it follows that
\begin{align*}
\limsup_{n\to\infty}d((x,1),\dv T_n^3)=\bigg\vert \bigg(x,1+\frac{\delta e^{\pi\mathcal{H}\varphi(x)}}{\xi+1}\bigg)\bigg\vert
\end{align*}
and 
\begin{align*}
d((x,1),\dv T_n^4)=\bigg\vert\bigg(\frac{1-e^{-\pi\mathcal{H}f(x)}}{\xi-\pi(\mathcal{H}f)'(x)},\frac{e^{\pi\mathcal{H}f(x)}+e^{-\pi\mathcal{H}f(x)}-2}{\xi-\pi(\mathcal{H}f)'(x)}\bigg)\bigg\vert.
\end{align*}
We now estimate $d((x,1),\dv T_n^2)$. As in Theorem \ref{GeoThm1}, this is the most subtle part of the proof, and here the choice of the sequence $\{r_{k_n}\}_n$ is critical. A computation gives
\begin{align*}
\pi H_vf(r_{k_n})=-\log\sqrt{(r_{k_n}-x)^2+v^2}+\log\sqrt{(r_{k_n}+\delta-x)^2+v^2}+H_v\varphi(r_{k_n})
\end{align*}
and
\begin{align*}
\pi P_vf(r_{k_n})&=-\arctan\bigg(\frac{r_{k_n}-x}{v}\bigg)+\arctan\bigg(\frac{r_{k_n}-x+\delta}{v}\bigg)+\pi P_v\varphi(r_{k_n})\\
&=\arctan\bigg(\frac{v}{r_{k_n}-x}\bigg)-\arctan\bigg(\frac{v}{r_{k_n}-x+\delta}\bigg)+\pi P_v\varphi(r_{k_n})
\end{align*}
Hence, using the trigonometric identity $\sin(\arctan{x}+y)=(1+x^2)^{-1/2}(x\cos y+\sin y)$ we get
\begin{align}
\label{Trig}
\sin(\pi P_v f(r_{k_n}))&=\frac{1}{\sqrt{1+\frac{(r_{k_n}-x)^2}{v^2}}} \bigg( \cos[\pi P_v\varphi(r_{k_n})-O(v)]+\frac{r_{k_n}-x}{v}\sin[\pi P_v\varphi(r_{k_n})-O(v)]\bigg)\\
&=\frac{1}{\sqrt{1+\frac{(r_{k_n}-x)^2}{v^2}}} \bigg( \cos[\pi P_v\varphi(r_{k_n})]+\frac{r_{k_n}-x}{v}\sin[\pi P_v\varphi(r_{k_n})]+O(v)+O(r_{k_n}-x)\bigg)\nonumber
\end{align}
By assumption the sequence $\{m_{\mathcal{H}\varphi}(r_{k_n})\}_n$ is bounded and hence by estimate (\ref{nontangMax}) $\vert H_{v}\varphi(r_{k_n})\vert $, is uniformly bounded. Thus there exists a constant $m$, independent of $\xi$ such that
\begin{align*}
\vert (\chi_\LL(r_{k_n},v)-r_{k_n},\eta_\LL(r_{k_n},v)-1)\vert &\leq \frac{\sqrt{(r_{k_n}+\delta)^2+v^2}}{\sqrt{(r_{k_n}-x)^2+v^2}}\frac{v\sqrt{5}(1+e^{m})}{\sin(\pi P_v \varphi(r_{k_n}))}\\
&\leq \frac{C}{\sqrt{1+\frac{(r_{k_n}-x)^2}{v^2}}}\frac{1}{\sin(\pi P_v \varphi(r_{k_n}))}
\end{align*}
where $C$ does not depend on $n$. Using (\ref{Trig}), one gets 
\begin{align*}
&\sqrt{1+\frac{(r_{k_n}-x)^2}{v^2}}\vert \sin(\pi P_v \varphi(r_{k_n}))\vert \geq \frac{r_{k_n}-x}{v}\sin[\pi P_v\varphi(r_{k_n})]-1\\&\geq \frac{r_{k_n}-x}{2v}\min\{P_v\varphi(r_{k_n}),P_v\ast(1-\varphi)(r_{k_n})\}-1\\
&=\frac{1}{2\pi}\min\bigg\{\int_{\R}\frac{(r_{k_n}-x)\varphi (t)dt}{(r_{k_n}-t)^2+v^2},\int_{\R}\frac{(r_{k_n}-x)(1-\varphi(t))dt}{(r_{k_n}-t)^2+v^2}\bigg\}-1.
\end{align*}
Using that
\begin{align*}
\int_{\R}\frac{(r_{k_n}-x)\varphi(t) dt}{(r_{k_n}-t)^2+v^2}=G_1(r_{k_n}-x,v;x)+\int_{\R\backslash[x,x+2(r_{k_n}-x)]}\frac{(r_n-x)\varphi(t)dt}{(r_{k_n}-t)^2+v^2},
\end{align*}
we can estimate the second term to get
\begin{align*}
&\bigg\vert\int_{\R\backslash[x,x+2(r_{k_n}-x)]}\frac{(r_{k_n}-x)\varphi(t)dt}{(r_{k_n}-t)^2+v^2}\bigg\vert\leq 2(r_{k_n}-x)\int_{(r_{k_n}-x)}^{+\infty}\frac{dt}{t^2}\leq 2
\end{align*}
Moreover, by monotonicity one has that $G_1(r_{k_n}-x,v;x)\geq G_1(r_{k_n}-x,v_{\xi}(r_{k_n}-x);x)$. This implies that for $v\leq v_{\xi}(r_{k_n}-x)$
\begin{align*}
&\min\bigg\{\int_{\R}\frac{(r_{k_n}-x)\varphi (t)dt}{(r_{k_n}-t)^2+v^2},\int_{\R}\frac{(r_{k_n}-x)(1-\varphi(t))dt}{(r_{k_n}-t)^2+v^2}\bigg\}\\&\geq \min\bigg\{G_1(r_{k_n}-x,v_{\xi}(r_{k_n}-x);x),\frac{2}{v}\arctan\bigg(\frac{r_{k_n}-x}{v}\bigg)-G_1(r_{k_n}-x,v_{\xi}(r_{k_n}-x);x)\bigg\}-2\geq \xi-2
\end{align*}
for $n$ sufficently large. In particular this implies that 
\begin{align*}
\vert (\chi_\LL(r_{k_n},v)-r_{k_n},\eta_{\LL}(r_{k_n},v)-1)\vert \leq \frac{C}{\sqrt{1+\frac{(r_{k_n}-x)^2}{v^2}}}\frac{1}{\xi-9}
\end{align*}
where $\xi>9$. Since $\xi>9$ was arbitrary we get
\begin{align*}
\limsup_{n\to\infty}d((x,1),\dv T_n^2)=0
\end{align*}
This implies that $\lim_{n\to \infty }(\chi_\LL(w_n),\eta_\LL(w_n))=\{(x,1)\}$. The case when $x\in \mathcal{S}_{nt}^{sing,II}(\mu)$ is analogus. 
\end{proof}

\begin{Prop}
\label{propHausdorff}
Let $\mathcal{H}^1$ denote the one dimensional Hausdorff measure. Then there exists a $\mu\in \mathcal{M}_{c,1}^{\lambda}(\R)$, such that $\mathcal{H}^1(\dv \mathcal{L})=+\infty.$
\end{Prop}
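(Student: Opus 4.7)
The plan is to exhibit an explicit density $f$ which produces an uncountable subset of $\SsingIII$; by Proposition \ref{GeoSing2} each such singular point contributes a segment of positive length to $\dv\LL$, and an uncountable (hence, after pigeonholing, a countably infinite subcollection with uniformly positive length) family of pairwise essentially disjoint segments in $\dv\LL$ forces $\mathcal{H}^1(\dv\LL)=+\infty$. This is essentially the argument flagged in the remark after Lemma \ref{ExSingset1}, but executed on a simpler and more transparent measure.

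First I would construct $f$ as follows. Fix a Smith--Volterra--Cantor set $\mathfrak{C}\subset[0,1]$ with $\lambda(\mathfrak{C})>0$, and set $\varphi(t):=\mathrm{dist}(t,\mathfrak{C})^2\chi_{[0,1]}(t)$. Then $\varphi$ is Lipschitz, vanishes quadratically on $\mathfrak{C}$, and by the pointwise bound $\varphi(t)\le(t-x)^2$ for $x\in\mathfrak{C}$ satisfies $\int_\R \varphi(t)(x-t)^{-2}\,dt<+\infty$ for every $x\in\mathfrak{C}$. Adding a positive indicator $\chi_{[-M-1,-M]}$ supported to the left of $[0,1]$ for some $M>0$ (and then renormalising) gives an $f\in\rho_{c,1}^{\lambda}(\R)$ whose Hilbert transform on $[0,1]$ differs from $\mathcal{H}\varphi$ by the smooth, strictly positive function $x\mapsto\pi^{-1}\log\frac{x+M+1}{x+M}$. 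Consequently $\mathcal{H}f(x)\neq 0$ for \emph{every} $x\in\mathfrak{C}$, and $\mathfrak{C}\subset\SsingIII$. This construction is direct and avoids the countable family of $C^\infty$ bumps used in Lemma \ref{ExSingset1}.

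Second, I would apply Proposition \ref{GeoSing2} at each $x\in\mathfrak{C}$. Its hypothesis~(\ref{C1}) is in force because every $x\in\mathfrak{C}$ has points of $\mathfrak{C}$ in every neighbourhood on both sides, forcing $\int_{x-\delta}^{x+\delta}f(t)(y-t)^{-2}\,dt=+\infty$ for every $y$ in a deleted neighbourhood of $x$. The proposition then yields a closed line segment
\[
S(x)\;=\;\overline{\{(\chi_{III}(\xi),\eta_{III}(\xi)):\xi\in(0,+\infty)\}}\;\subset\;\dv\LL(x)
\]
joining $(x,1)$ to the non-tangential limit $(\chi_\G^{III}(x),\eta_\G^{III}(x))$, with length $\ell(x)>0$. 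Since $\ell(x)$ is continuous in the data $\mathcal{H}f(x),(\mathcal{H}f)'(x)$ and $\mathcal{H}f(x)\neq 0$ on all of $\mathfrak{C}$, pigeonholing $\mathfrak{C}=\bigcup_{n\ge1}\{x:\ell(x)>1/n\}$ produces an $\eps>0$ and an uncountable $A\subset\mathfrak{C}$ with $\ell(x)>\eps$ for all $x\in A$; extract any countably infinite $\{x_k\}\subset A$.

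Third, I would conclude by a measure-theoretic lower bound. The segments $S(x_k)$ are pairwise distinct because each meets the line $\{\eta=1\}$ only at the point $(x_k,1)$ (the other endpoint has $\eta_\G^{III}(x_k)<1$), and the $(x_k,1)$ are all different. Any two straight segments meeting the horizontal line $\{\eta=1\}$ at different single points intersect in at most one point, hence $\mathcal{H}^1(S(x_k)\cap S(x_l))=0$ for $k\neq l$. By countable additivity of $\mathcal{H}^1$ on disjoint-modulo-null Borel sets,
\[
\mathcal{H}^1(\dv\LL)\;\ge\;\mathcal{H}^1\Big(\bigcup_k S(x_k)\Big)\;=\;\sum_k\mathcal{H}^1(S(x_k))\;\ge\;\sum_k\eps\;=\;+\infty.
\]
The main obstacle will be the first step, namely guaranteeing that a single explicit $\varphi$ satisfies both $\int\varphi(t)(x-t)^{-2}\,dt<+\infty$ and the non-vanishing $\mathcal{H}f(x)\neq 0$ simultaneously at \emph{every} point of the uncountable set $\mathfrak{C}$ rather than merely almost everywhere; the quadratic-vanishing construction combined with the explicit leftward tilt handles both at once. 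A secondary subtlety is confirming the stronger hypothesis~(\ref{C1}) at boundary points of $\mathfrak{C}$, which follows from the fact that $\mathfrak{C}$ has no isolated points and accumulates from both sides at each of its points.
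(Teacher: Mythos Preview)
Your overall strategy---produce an uncountable subset of $\SsingIII$ and use that each such point contributes a segment of positive length to $\dv\LL$---is the argument sketched in the Remark following Lemma~\ref{ExSingset1}, and it is quite different from the paper's own proof, which builds the explicit density $f_1(t)=t^2\sin^2(t^{-1})\chi_{[-2/(3\pi),2/(3\pi)]}$ and shows by direct (and lengthy) estimation that the \emph{countable} family $\{(\pi n)^{-1}\}_{n\ge N}$ lies in $\SsingIII$ with \emph{uniform} bounds $\sup_n\int f(t)\,((\pi n)^{-1}-t)^{-2}\,dt<\infty$ and $\inf_n|\mathcal{H}f((\pi n)^{-1})|>0$, giving a uniform lower bound on the segment lengths immediately. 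The paper's route trades conceptual simplicity for computational labour; yours is slicker but has two genuine gaps in the execution.

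First, hypothesis~(\ref{C1}) of Proposition~\ref{GeoSing2} \emph{fails} in your construction, not holds. For any $x\in\mathfrak{C}$ and any $\delta>0$ there exist (since $\mathfrak{C}$ is perfect) points $y\in\mathfrak{C}\cap\big((x-\delta,x)\cup(x,x+\delta)\big)$, and for such $y$ the bound $\varphi(t)\le(t-y)^2$ gives $\int\varphi(t)(y-t)^{-2}\,dt\le 1<\infty$, so the integral is \emph{finite} at $y$, contradicting~(\ref{C1}). (Your claim that $\mathfrak{C}$ ``accumulates from both sides at each of its points'' is also false at endpoints of removed gaps.) The fix is to invoke instead the Corollary following Proposition~\ref{GeoSing2}, which drops~(\ref{C1}) and still places the full segment $S(x)$ inside $\dv\LL(x)$ via sequences; that is precisely what the Remark after Lemma~\ref{ExSingset1} is using.

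Second, the deduction ``$\mathcal{H}f=\mathcal{H}\varphi+(\text{strictly positive})$, hence $\mathcal{H}f(x)\neq0$ for every $x\in\mathfrak{C}$'' is a non-sequitur: nothing prevents $\mathcal{H}\varphi(x)$ from being negative and cancelling the added term at some points. What you actually get is the dichotomy of Lemma~\ref{ExSingset1}: either $\lambda(\{x\in\mathfrak{C}:\mathcal{H}\varphi(x)\neq0\})>0$, in which case rescale $\varphi$ itself; or $\mathcal{H}\varphi=0$ a.e.\ on $\mathfrak{C}$, in which case the tilt makes $\mathcal{H}f\neq0$ a.e.\ on $\mathfrak{C}$. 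Either way you only obtain $\mathcal{H}f\neq0$ on a subset of $\mathfrak{C}$ of positive measure, which is still uncountable and suffices for your pigeonhole and countable-additivity conclusion---but you should state it that way rather than claim ``every''.
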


\begin{proof}
Let 
\begin{align*}
f_1(t)=t^2\sin^2(t^{-1})\chi_{[-\frac{2}{3\pi},\frac{2}{3\pi}]}(t)
\end{align*}
A Taylor expansion around $t=\frac{1}{k\pi}$ for $\vert k\vert=2,3,4,... $ shows that
\begin{align*}
\int_{\R}\frac{f_1(t)dt}{(\frac{1}{k\pi}-t)^2}<+\infty.
\end{align*}
Moreover,
\begin{align*}
\int_{\R}\frac{f_1(t)dt}{t^2}<C+\int_{-\frac{2}{3\pi}}^{\frac{2}{3\pi}}\frac{t^2dt}{t^2}<+\infty.
\end{align*}
We now show that in fact $\displaystyle\sup_{n\geq 2}\big\{\int_{\R}\frac{f_1(t)dt}{(\frac{1}{\pi n}-t)^2}\big\}<+\infty$. Write
\begin{align*}
\int_{-2/(3\pi)}^{2/(3\pi)}\frac{t^2\sin^2(t^{-1})dt}{(t-\frac{1}{\pi n})^2}&=\int_{-2/(3\pi)}^{1/(\pi n+\pi/2)}\frac{t^2\sin^2(t^{-1})dt}{(t-\frac{1}{\pi n})^2}+\int_{1/(\pi n+\pi/2)}^{1/(\pi n-\pi/2)}\frac{t^2\sin^2(t^{-1})dt}{(t-\frac{1}{\pi n})^2}+\int_{1/(\pi n-\pi/2)}^{2/(3\pi)}\frac{t^2\sin^2(t^{-1})dt}{(t-\frac{1}{\pi n})^2}\\
&:=I_1^{(n)}+I_2^{(n)}+I_3^{(n)}.
\end{align*}
We first estimate $I_1^{(n)}$. We get
\begin{align*}
I_1^{(n)}\leq \int_{-2/(3\pi)}^{1/(\pi n+\pi/2)}\frac{t^2dt}{(t-\frac{1}{\pi n})^2}&=\bigg[t-\frac{1}{(\pi n)^2}\frac{1}{(t-\frac{1}{\pi n})}+\frac{2}{\pi n}\log \bigg\vert t-\frac{1}{\pi n}\bigg\vert\bigg]_{-2/(3\pi)}^{1/(\pi n+\pi/2)}\\
&=\frac{2}{3\pi}+O(n^{-1}\log n).
\end{align*}
Similarly we get
\begin{align*}
I_3^{(n)}\leq \frac{2}{3\pi}+O(n^{-1}\log n).
\end{align*}
We now consider $I_2^{(n)}$. Using that $\vert \sin t\vert\leq \vert t\vert$ for all $t$ we get 
\begin{align*}
I_2^{(n)}=&\int_{1/(\pi n+\pi/2)}^{1/(\pi n-\pi/2)}\frac{t^2\sin^2(t^{-1})dt}{(t-\frac{1}{\pi n})^2}=(\pi n)^2\int_{-\pi /2}^{\pi/2} \frac{\sin^2(x)dx}{(x+\pi n)^2x^2}\leq (\pi n)^2\int_{-\pi /2}^{\pi/2} \frac{x^2dx}{(x+\pi n)^2x^2}\\
&\leq\frac{1}{(1-\frac{1}{2 n})^2}.
\end{align*}
Thus,
\begin{align}
\label{derivPoissonEst}
\sup_{n\geq 2}\bigg\{\int_{\R}\frac{f_1(t)dt}{(\frac{1}{\pi n}-t)^2}\bigg\}<+\infty.
\end{align}
We now consider $\mathcal{H}f_1((\pi n)^{-1})$. Note that $\mathcal{H}f_1(0)=0$, due to the symmetry of $f_1$ at $0$. We now consider the sign of $\mathcal{H}f_1((\pi n)^{-1})$ for $n=2,3,4,...$
\begin{align*}
\pi \mathcal{H}f_1((\pi n)^{-1})&=\int_{-2/(3\pi)}^0\frac{t^2\sin^2(t^{-1})dt}{\frac{1}{n\pi}-t}+\int_0^{2/(3\pi)}\frac{t^2\sin^2(t^{-1})dt}{\frac{1}{n\pi}-t}=\bigg\{t=\frac{1}{x}\bigg\}\\
&=\pi n\int_{-\infty}^{-3\pi/2}\frac{\sin^2(x)dx}{x^3(x-\pi n)}+\pi n\int_{3\pi/2}^{+\infty}\frac{\sin^2(x)dx}{x^3(x-\pi n)}\\
&=\pi n\int_{3\pi/2}^{+\infty}\frac{\sin^2(x)}{x^3}\bigg\{\frac{1}{x-\pi n}+\frac{1}{x+\pi n}\bigg\}dx\\
&=2\pi n\int_{3\pi/2}^{+\infty}\frac{\sin^2(x)dx}{x^2(x-\pi n)(x+\pi n)}:=2\pi nI(n),
\end{align*}
and
\begin{align*}
I(n)&=\int_{3\pi/2}^{+\infty}\frac{\sin^2(x)dx}{x^2(x-\pi n)(x+\pi n)}=\int_{3\pi/2}^{\pi n}\frac{\sin^2(x)dx}{x^2(x-\pi n)(x+\pi n)}+\int_{\pi n}^{2\pi n-3\pi/2}\frac{\sin^2(x)dx}{x^2(x-\pi n)(x+\pi n)}\\&+\int_{2\pi n-3\pi/2}^{+\infty}\frac{\sin^2(x)dx}{x^2(x-\pi n)(x+\pi n)}.
\end{align*}
Furthermore,
\begin{align*}
&\int_{3\pi/2}^{\pi n}\frac{\sin^2(x)dx}{x^2(x-\pi n)(x+\pi n)}+\int_{\pi n}^{2\pi n-3\pi/2}\frac{\sin^2(x)dx}{x^2(x-\pi n)(x+\pi n)}\\
&=\int_{3\pi/2-\pi n}^{0}\frac{\sin^2(t+\pi n)dt}{(t+\pi n)^2t(t+2\pi n)}+\int_{0}^{\pi n-3\pi/2}\frac{\sin^2(t+\pi n)dt}{(t+\pi n)^2t(t+2\pi n)}\\
&=\int_{0}^{\pi n-3\pi/2}\frac{\sin^2(t)dt}{(-t+\pi n)^2(-t)(-t+2\pi n)}+\int_{0}^{\pi n-3\pi/2}\frac{\sin^2(t+\pi n)dt}{(t+\pi n)^2t(t+2\pi n)}\\
&=\int_{0}^{\pi n-3\pi/2}\frac{\sin^2(t)}{t}\bigg\{\frac{1}{(t+\pi n)^2(t+2\pi n)}+\frac{1}{(t-\pi n)^2(t-2\pi n)}\bigg\}dt\\
&=\int_{0}^{\pi n-3\pi/2}\frac{\sin^2(t)}{t}\bigg\{\frac{10\pi^2 n^2t+2 t^3}{(t+\pi n)^2(t+2\pi n)(t-\pi n)^2(t-2\pi n)}\bigg\}dt=\{t=\pi n x\}\\
&=\frac{1}{(\pi n)^3}\int_{0}^{1-3/(2n)}\frac{\sin^2(\pi n x)}{x}\bigg\{\frac{10x+2 x^3}{(x+1)^2(x+2)(x-1)^2(x-2)}\bigg\}dx
\end{align*}
Let
\begin{align*}
R(x)=\frac{5+2x^2}{(x+1)^2(x+2)(x-1)^2(x-2)}.
\end{align*}
An elementary estimate gives
\begin{align*}
-\frac{7}{2}\frac{1}{(x-1)^2} \leq R(x)\leq-\frac{5}{32}\frac{1}{(x-1)^2} 
\end{align*}
valid for all $x\in [0,1)$. Hence
\begin{align*}
&\int_{0}^{1-3/(2n)}\sin^2(\pi n x)R(x)dx\leq -\frac{5}{32}\int_{0}^{1-3/(2n)}\frac{\sin^2(\pi n x)dx}{(x-1)^2}.
\end{align*}
Furthermore, choose $0<\eps<\pi/2$. Then 
\begin{align*}
&\frac{1}{\pi n}\int_{0}^{1-3/(2n)}\frac{\sin^2(\pi n x)dx}{(x-1)^2}=\int_{0}^{\pi n-3\pi/2}\frac{\sin^2(t)dt}{(t-\pi n)^2}\geq \sum_{k=1}^{n-1}\int_{(k-1)\pi}^{k\pi}\frac{\sin^2(t)dt}{(t-\pi n)^2}\\
&= \sum_{k=1}^{n-1}\int_{0}^{\pi}\frac{\sin^2(x)dx}{(x+\pi(k-1)-\pi n)^2}\geq \sin^2(\eps)  \sum_{k=1}^{n-1}\int_{\eps}^{\pi-\eps}\frac{dx}{(x+\pi(k-1)-\pi n)^2}\\
&=\sin^2(\eps)  \sum_{k=1}^{n-1}\frac{\pi-2\eps}{(\pi-\eps+\pi(k-1)-\pi n)(\eps+\pi(k-1)-\pi n)}\geq \frac{(\pi-2\eps)\sin^2(\eps)}{\pi^2 n}  \sum_{k=1}^{n-1}\frac{1}{(1-\frac{(k-1)}{n})^2}\frac{1}{n}\\
&\leq \frac{(\pi-2\eps)\sin^2(\eps)}{\pi^2 n} \int_{0}^{1-3/n}\frac{dx}{(1-x)^2}\geq  \frac{(\pi-2\eps)\sin^2(\eps)}{\pi^2 n}\frac{n}{3}=\frac{(\pi-2\eps)\sin^2(\eps)}{3\pi^2 }.
\end{align*}
Thus, 
\begin{align}
\label{HilbTransformEst1}
&\int_{0}^{1-3/(2n)}\sin^2(\pi n x)R(x)dx\leq -\frac{5}{32}\frac{(\pi-2\eps)\sin^2(\eps)n}{3\pi }.
\end{align}
In addition,
\begin{align}
\label{HilbTransformEst2}
&\int_{0}^{1-3/(2n)}\sin^2(\pi n x)R(x)dx\geq -\frac{7}{2}\int_{0}^{1-3/(2n)}\frac{dx}{(x-1)^2}\geq -\frac{7n}{3}
\end{align}
Finally, we have the estimate
\begin{align}
\label{HilbTransformEst3}
\int_{2\pi n-3\pi/2}^{+\infty}\frac{\sin^2(x)dx}{x^2(x-\pi n)(x+\pi n)}&=\frac{1}{(\pi n)^3}\int_{2-3/(2n)}^{+\infty}\frac{\sin^2(\pi n t)dt}{t^2(t-1)(t+1)}\nonumber \\
&\leq \frac{1}{(\pi n)^3}\int_{2-3/(2n)}^{+\infty}\frac{dt}{(t-1)^4}=\frac{1}{(\pi n)^3}\frac{1}{3(1-\frac{3}{2n})^3}\leq \frac{4^3}{3(\pi n)^3}.
\end{align}
Estimates (\ref{HilbTransformEst1})-(\ref{HilbTransformEst3}) gives 
\begin{align}
\label{FundamentaIneq}
-\frac{14}{3 \pi^2 n}-\frac{2\cdot 4^3}{3\pi^2 n^2}\leq \pi\mathcal{H}f_1((\pi n)^{-1})\leq -\frac{10}{32}\frac{(\pi-2\eps)\sin^2(\eps)}{3\pi^3 n }+\frac{2\cdot 4^3}{3\pi^2 n^2}.
\end{align}
(\ref{FundamentaIneq}) shows that there exists an $N>0$ such that $\mathcal{H}f_1((\pi n)^{-1})<0$ for all $n>N$. Moreover, (\ref{FundamentaIneq}) also shows that
\begin{align*}
\sup_{n>2}\{\vert \mathcal{H}f_1((\pi n)^{-1})\vert \}<+\infty.
\end{align*}

Let $f_{2}(t)=\chi_{[1/2,a]}(t)$ and choose $a$ so that $\int_{\R}(f_1(t)+f_2(t))dt=1$. Let $f=f_1+f_2$ be the density of the measure $\mu$. Since $\pi\mathcal{H}f_2(x)<0$ for all $x<1/2$, and $\pi\mathcal{H}f_2(x)$ is a continuous function on $[-\frac{2}{3\pi},\frac{2}{3\pi}]$ it follows that 
\begin{align}
\label{HilbTransformEst4}
\inf_{n>N}\{\vert \mathcal{H}f((\pi n)^{-1})\vert \}=m_2>0.
\end{align}
Furthermore,
\begin{align}
\label{derivPoissonEst2}
\sup_{n\geq 2}\bigg\{\int_{\R}\frac{f(t)dt}{(\frac{1}{\pi n}-t)^2}\bigg\}:=m_1<+\infty.
\end{align}
Thus we have shown that 
\begin{align*}
\bigg\{\frac{1}{\pi n}:n>N\bigg\}\bigcup\{0\}\subset \mathcal{S}_{nt}^{sing,III}(\mu).
\end{align*}
Moreover, every $x\in \{(\pi n)^{-1}:n>N\}$ satisfies the assumptions of Proposition \ref{GeoSing2}. In addition, $f$ is Lipschitz continuous on $(-\frac{2}{3\pi},\frac{2}{3\pi})$. Therefore, the function $H_vf(u)$ defined on $\{u+iv\in \Hp: u\in [-\frac{2}{3\pi}+\delta,\frac{2}{3\pi}-\delta]\}$, for some $\delta>0$ small enough, have a continuous extension to the domain $\{u+iv\in \C: u\in [-\frac{2}{3\pi}+\delta,\frac{2}{3\pi}-\delta], v\geq 0\}$ with boundary value $\mathcal{H}f(u)$. Proposition \ref{GeoSing2} and (\ref{HilbTransformEst4}) and (\ref{derivPoissonEst2}) implies that there exists a $d>0$ such that
\begin{align*}
\mathcal{H}^1(\dv \LL((\pi n)^{-1}))>\vert 1-\eta_{\G}^{III}((\pi n)^{-1}))\vert>d
\end{align*}
for all $n>N$. This immediately implies that $\mathcal{H}^1(\dv \LL)=+\infty$.
\end{proof}

\section{Appendix}
\subsection{Additional Results}
\begin{Lem}
\label{lemLBoundary}
\begin{align}
\dv \LL=\dv \LL(\infty)\bigcup\bigg(\bigcup_{x\in \R}\dv\LL(x)\bigg).
\end{align}
\end{Lem}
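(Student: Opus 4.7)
The plan is to prove the two inclusions separately, using only the fact that $W_{\LL} : \LL \to \Hp$ is a homeomorphism onto $\Hp$ (established in \cite{Duse14a}) and the fact that $\LL$ is open in $\mathcal{P}$ (hence in $\R^2$, since $\LL \subset \mathcal{P}^{\circ}$).

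For the inclusion $\dv\LL(\infty)\cup\bigcup_{x\in\R}\dv\LL(x) \subset \dv\LL$, fix $x\in\R$ and $(\chi',\eta') \in \dv\LL(x)$. By the definition of $\dv\LL(x)$ (and Definition \ref{defEquiRel}), there is a sequence $\{w_n\}_{n\ge 1}\subset\Hp$ with $w_n\to x\in\dv\Hp$ and $(\chi_\LL(w_n),\eta_\LL(w_n)) \to (\chi',\eta')$. Since $(\chi_\LL(w_n),\eta_\LL(w_n))\in\LL$ for every $n$, the limit $(\chi',\eta')$ lies in $\overline{\LL}$. It cannot lie in $\LL$ itself: if it did, then $W_{\LL}$ being a homeomorphism would force $w_n\to W_{\LL}(\chi',\eta')\in\Hp$, contradicting $w_n\to x\in\dv\Hp$. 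Thus $(\chi',\eta')\in\overline{\LL}\setminus\LL=\dv\LL$. The argument for $\dv\LL(\infty)$ is the same, using $|w_n|\to\infty$.

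For the reverse inclusion $\dv\LL\subset \dv\LL(\infty)\cup\bigcup_{x\in\R}\dv\LL(x)$, fix $(\chi,\eta)\in\dv\LL$. Choose any sequence $(\chi_n,\eta_n)\in\LL$ with $(\chi_n,\eta_n)\to(\chi,\eta)$, and set $w_n := W_{\LL}(\chi_n,\eta_n)\in\Hp$. The sequence $\{w_n\}$ lies in $\Hp\subset\widehat{\C}$, which is sequentially compact as a subset of the Riemann sphere. Pass to a subsequence (still denoted $\{w_n\}$) converging to some $w^*\in\overline{\Hp}\cup\{\infty\}$. There are three cases: if $w^*\in\Hp$, then continuity of $W_{\LL}^{-1}$ gives $(\chi_n,\eta_n)\to W_{\LL}^{-1}(w^*)\in\LL$, whence $(\chi,\eta)\in\LL$, contradicting openness of $\LL$ (which gives $\LL\cap\dv\LL=\varnothing$); if $w^*=x\in\R=\dv\Hp$, then by definition $(\chi,\eta)\in\dv\LL_{[\omega]}(x)\subset\dv\LL(x)$; if $w^*=\infty$, then $(\chi,\eta)\in\dv\LL(\infty)$. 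This exhausts the possibilities.

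The argument is essentially topological, so I do not expect any serious obstacle; the only point demanding a little care is confirming that the one-point compactification (or any standard compactification of $\overline{\Hp}$) provides a subsequential limit in $\R\cup\{\infty\}$ whenever the interior limit is ruled out by openness of $\LL$. Everything else is immediate from the fact that $W_{\LL}$ is a homeomorphism of open sets.
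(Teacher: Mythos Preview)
Your proof is correct and follows essentially the same approach as the paper's: both directions rely on $W_{\LL}$ being a homeomorphism, with the key contradiction in each inclusion coming from the fact that a limit point in $\LL$ (respectively $\Hp$) would force the preimage sequence to converge inside $\Hp$ (respectively $\LL$). The only cosmetic difference is that the paper handles the reverse inclusion by first splitting into the cases where $\{w_n\}$ is bounded or unbounded (invoking Heine--Borel for $\overline{\LL}\subset\mathcal{P}$ and Lemma~2.1 of \cite{Duse14a} for the unbounded case), whereas you compactify once via the Riemann sphere and treat all three cases uniformly; this is arguably a bit cleaner but not substantively different.
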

\begin{proof}
Let $\omega_x=\{w_n\}_n\subset \Hp$ be a sequence such that $\lim_{n\to \infty}w_n=x$. Then $\{W_{\LL}^{-1}(w_n)\}\subset \LL$. Since $\LL\subset \mathcal{P}$, $\overline{\LL}$ is compact.
By Heine-Borel thoerem, it follows that $\dv \LL_{[\omega]}(x)\neq \varnothing$. Assume that there exists a point $(\chi',\eta')\in \LL\bigcap\dv \LL_{[\omega]}(x)$. Then there exists a subsequence $\{w_{n_k}\}_k$ such that $(\chi_\LL(w_{n_k}),\eta_\LL(w_{n_k}))\to (\chi',\eta')$.  However, since $W_\LL$ is a homeomorphism, it follows that $\lim_{k\to \infty}w_{n_k}=w'=W_\LL((\chi',\eta'))$, a contradiction. Hence $\dv \LL_{[\omega]}(x)\subset \dv \LL$. Since this holds for every such sequence $\omega=\omega_x$, it follows that 
\begin{align*}
\dv \LL(x)=\bigcup_{[\omega]\in S_x}\dv \LL_{[\omega]}(x)\subset \dv \LL.
\end{align*}
In particular this holds for every $x\in \R$. Thus,
\begin{align*}
\bigcup_{x\in \R}\dv \LL(x)\subset \dv \LL.
\end{align*}
Finally, Lemma 2.1 in \cite{Duse14a} proves that for any sequence $\{w_n\}_n\subset \Hp$, such that $\lim_{n\to\infty}\vert w_n\vert=\infty$, $\lim_{n\to \infty}(\chi_\LL(w_{n}),\eta_\LL(w_{n}))=(\frac{1}{2}+\int_{\R}xd\mu(x),0)\in \dv \LL$. This shows that
\begin{align*}
\dv \LL(\infty)\bigcup\bigg(\bigcup_{x\in \R}\dv\LL(x)\bigg)\subset \dv \LL. 
\end{align*}
We now show the reverse inclusion. Let $(\chi',\eta')\in \dv \LL$. Then there exists a sequence $\{(\chi_n,\eta_n)\}_n\subset \LL$ such that $\lim_{n\to \infty}(\chi_n,\eta_n)=(\chi',\eta')$. Let $w_n=W_{\LL}((\chi_n,\eta_n))$. Assume that the sequence $\{w_n\}_n$ is unbounded. Then it contains a subsequence $\{w_{n_k}\}_k$ such that $\lim_{k\to \infty}\vert w_{n_k}\vert=\infty$. Then Lemma 2.1 in \cite{Duse14a} shows that $(\chi',\eta')=(\frac{1}{2}+\int_{\R}xd\mu(x),0)$. However, this implies that $\lim_{n\to \infty}\vert w_n\vert=\infty$. Thus, we may assume that the sequence $\{w_n\}_n$ is bounded in $\Hp$. Consider the set of limit points of $\{w_n\}_n$, that is $\overline{\{w_n\}_n}\backslash \{w_n\}_n$. Assume that $w'\in \overline{\{w_n\}_n}\backslash \{w_n\}_n\bigcap \Hp$. Then there exists a subsequence $\{w_{n_k}\}_k$ such that $\lim_{k\to\infty}w_{n_k}=w'$. However since $W_{\LL}$ is a homeomorphism, this implies that $\lim_{k\to \infty}W_{\LL}^{-1}(w_{n_k})=W_{\LL}^{-1}(w')\neq (\chi',\eta')$, a contradiction. Thus $\overline{\{w_n\}_n}\backslash \{w_n\}_n\subset \R$. This shows that
\begin{align*}
\dv \LL\subset \dv \LL(\infty)\bigcup\bigg(\bigcup_{x\in \R}\dv\LL(x)\bigg). 
\end{align*}
\end{proof}
\begin{Prop}
\label{DiniConv}
Let $f\in L^{p}(\R)$ where $p>1$. Assume that 
\begin{align}
\label{Dini}
\int_{x-1}^{x+1}\frac{\vert f(x)-f(t) \vert dt}{\vert x-t \vert}<+\infty.
\end{align}
Then for every non-tangential convergent sequence $\{u_n+iv_n\}_n$ to x,
\begin{align}
\label{HilbDini}
\lim_{n\to \infty} H_{v_n}f(u_n)=\mathcal{H}f(x).
\end{align}
Moreover, $x\in \mathscr{L}_f$.
\end{Prop}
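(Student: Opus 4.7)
The plan is to deduce $x\in\mathscr{L}_f$ directly from the Dini condition, and then to obtain the non-tangential limit $H_{v_n}f(u_n)\to \mathcal{H}f(x)$ by a three-piece split combined with a dominated convergence argument that exploits non-tangentiality. For the Lebesgue-point claim, one writes
\begin{align*}
\frac{1}{2h}\int_{x-h}^{x+h}\vert f(t)-f(x)\vert dt \leq \frac{1}{2}\int_{x-h}^{x+h}\frac{\vert f(t)-f(x)\vert}{\vert x-t\vert}dt \to 0
\end{align*}
as $h\to 0^+$, by absolute continuity of the integral applied to the Dini-integrable function. One also notes that the Dini condition together with $f\in L^p$ (using H\"older with exponent $q=p/(p-1)$ on the tail $\{\vert x-t\vert>1\}$) implies the existence of $\mathcal{H}f(x)$, and yields the representation
\begin{align*}
\pi\mathcal{H}f(x) = \int_{\vert x-t\vert\leq 1}\frac{f(t)-f(x)}{x-t}dt+\int_{\vert x-t\vert>1}\frac{f(t)}{x-t}dt,
\end{align*}
where the symmetric piece $f(x)\int_{\eps\leq\vert x-t\vert\leq 1}(x-t)^{-1}dt$ vanishes by oddness.

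Fix $k>0$ and a non-tangential sequence with $\vert u_n-x\vert\leq kv_n$. Split $\pi H_{v_n}f(u_n)$ into (i) the far piece over $\vert x-t\vert>1$, (ii) the constant piece $f(x)\int_{\vert x-t\vert\leq 1}\frac{u_n-t}{(u_n-t)^2+v_n^2}dt$, and (iii) the delicate piece $\int_{\vert x-t\vert\leq 1}\frac{(u_n-t)(f(t)-f(x))}{(u_n-t)^2+v_n^2}dt$. For (i), the estimate $\vert u_n-t\vert/((u_n-t)^2+v_n^2)\leq 2/\vert x-t\vert$ (valid for large $n$ on $\vert x-t\vert>1$) combined with H\"older yields an integrable dominant, so dominated convergence gives $\int_{\vert x-t\vert>1}f(t)(x-t)^{-1}dt$. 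For (ii), the integral evaluates explicitly to $(f(x)/2)\log\bigl(((u_n-x+1)^2+v_n^2)/((u_n-x-1)^2+v_n^2)\bigr)$, which tends to $0$.

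The delicate piece (iii) is the main obstacle, since the integrand need only be in $L^1$ in a non-tangential sense and its pointwise limit $(f(t)-f(x))/(x-t)$ need not be uniformly dominated by a single integrable function. The idea is to split the integration region at $\vert x-t\vert=Mv_n$ with $M:=2k+1$. On the inner region $\vert x-t\vert<Mv_n$ one uses $\vert u_n-t\vert/((u_n-t)^2+v_n^2)\leq 1/(2v_n)$ and the Dini trick
\begin{align*}
\int_{\vert x-t\vert<Mv_n}\vert f(t)-f(x)\vert dt \leq Mv_n\int_{\vert x-t\vert<Mv_n}\frac{\vert f(t)-f(x)\vert}{\vert x-t\vert}dt,
\end{align*}
so that this inner contribution is bounded by $(M/2)$ times a quantity tending to $0$ (absolute continuity of the Dini-integrable function). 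On the outer region $Mv_n\leq \vert x-t\vert\leq 1$, non-tangentiality gives $\vert u_n-t\vert\geq \vert x-t\vert-kv_n\geq ((k+1)/(2k+1))\vert x-t\vert=:c\vert x-t\vert$, whence $\vert u_n-t\vert/((u_n-t)^2+v_n^2)\leq 1/(c\vert x-t\vert)$ and the integrand is dominated by the Dini-integrable function $\vert f(t)-f(x)\vert/(c\vert x-t\vert)$; dominated convergence then identifies the limit as $\int_{\vert x-t\vert\leq 1}(f(t)-f(x))(x-t)^{-1}dt$. Combining the three pieces yields $\pi H_{v_n}f(u_n)\to\pi\mathcal{H}f(x)$, as required.
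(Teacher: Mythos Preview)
Your proof is correct and follows essentially the same approach as the paper: derive the Lebesgue-point property from the Dini condition, establish existence of $\mathcal{H}f(x)$ via a near/far split with H\"older on the tail, and then prove the non-tangential limit by an inner/outer decomposition in which the outer piece is dominated by the Dini-integrable function $C\vert f(t)-f(x)\vert/\vert x-t\vert$ and the inner piece vanishes. The only cosmetic differences are that the paper splits at $\vert u_n-t\vert=\vert u_n-x\vert$ (symmetric about $u_n$, after assuming $u_n\ge x$) rather than your $\vert x-t\vert=(2k+1)v_n$, and subtracts $f(x)$ over all of $\R$ rather than isolating it on $\vert x-t\vert\le 1$; in a non-tangential cone these choices are interchangeable.
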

\begin{proof}
We first note that (\ref{Dini}) implies that 
\begin{align*}
\lim_{h\to 0^+}\int_{x-h}^{x+h}\frac{\vert f(x)-f(t) \vert dt}{\vert x-t \vert}\geq \lim_{h\to 0^+}\int_{x-h}^{x+h}\frac{\vert f(x)-f(t) \vert dt}{2h}=0.
\end{align*}
Thus, $x\in \mathscr{L}_f$. We now show that $\mathcal{H}f(x)$ exists. We have for every $R>0$ sufficiently large
\begin{align*}
\int_{\vert x-t\vert>\eps}\frac{f(t)dt}{x-t}&=\int_{\eps<\vert x-t\vert<R}\frac{(f(t)-f(x))dt}{x-t}+f(x)\underbrace{\int_{\eps<\vert x-t\vert<R}\frac{dt}{x-t}}_{=0}+\int_{\vert x-t\vert>R}\frac{f(t)dt}{x-t}\\
&=J_1+J_2.
\end{align*}
We first estimate $I_2$. Since $f\in L^p(\R)$ we have by H\"older's inequality
\begin{align*}
\vert J_2\vert\leq \Vert f\Vert_p\bigg(\int_{\vert x-t\vert>R}\frac{dt}{\vert x-t\vert^q}\bigg)^{1/q}=\frac{ \Vert f\Vert_p2^{1/q}}{(q-1)^{1/q}R^{(q-1)/q}},
\end{align*}
where $q=\frac{p}{p-1}>1$.
Moreover, since
\begin{align*}
\lim_{\eps\to 0^+}\frac{(f(x)-f(t))\chi_{\vert t\vert >\eps}}{x-t}=\frac{f(x)-f(t)}{x-t}
\end{align*}
for all $t\neq x$, and
\begin{align*}
\frac{\vert f(t)-f(x)\vert \chi_{\vert t\vert >\eps}}{\vert x-t\vert}\leq \frac{\vert f(t)-f(x) \vert}{\vert x-t \vert},
\end{align*}
it follows by (\ref{Dini}) and Lebesgue dominated convergence theorem that
\begin{align*}
\lim_{\eps\to 0^+}\int_{\eps<\vert x-t\vert <R}\frac{f(t)-f(x)}{x-t}dt=\int_{\vert x-t\vert <R}\frac{f(t)-f(x)}{x-t}dt.
\end{align*}
Since $R>0$ was arbitrary and $f\in L^p(\R)$ it follows that
\begin{align*}
\pi \mathcal{H}f(x)=\lim_{R\to \infty}\int_{\vert x-t\vert <R}\frac{f(x)-f(t)}{x-t}dt=\int_{\R}\frac{f(x)-f(t)}{x-t}dt
\end{align*}
exists. Now consider a non-tangentially convergent sequence $\{u_n+iv_n\}_n$ to $x$. Then $\{u_n+iv_n\}_n\subset \G_k(x)$ for some $k>0$. We may assume that $u_n-x\geq 0$. Then
\begin{align*}
\label{HilbDini}
-\pi H_{v_n}f(u_n)&=\int_{\R}\frac{-(u_n-t)f(t)dt}{(u_n-t)^2+v_n^2}=\int_{\R}\frac{(u_n-t)(f(x)-f(t))dt}{(u_n-t)^2+v_n^2}\\&=\int_{x}^{x+2(u_n-x)}\frac{(u_n-t)(f(x)-f(t))dt}{(u_n-t)^2+v_n^2}+\int_{\R\backslash [x,x+2(u_n-x)]}\frac{(u_n-t)(f(x)-f(t))dt}{(u_n-t)^2+v_n^2}\\
&:=I_1^{(n)}+I_2^{(n)},
\end{align*}
We first consider $I_2^{(n)}$. 
By Lemma \ref{lemIneqRational}
\begin{align*}
\frac{\vert u_n-t\vert \chi_{\R\backslash [x,x+2(u_n-x)]}(t)}{(u_n-t)^2+v_n^2}\leq \frac{2}{\vert x-t\vert}
\end{align*}
for all $t$, and 
\begin{align*}
\lim_{n\to \infty}\frac{(u_n-t)(f(x)-f(t)) \chi_{\R\backslash [x,x+2(u_n-x)]}(t)}{(u_n-t)^2+v_n^2}=\frac{f(x)-f(t)}{x-t}
\end{align*}
for all $t\neq x$, we get form (\ref{Dini}) and Lebesgue's dominated convergence theorem that 
\begin{align*}
\lim_{n\to \infty}\int_{\R\backslash [x,x+2(u_n-x)]}\frac{(u_n-t)(f(x)-f(t))dt}{(u_n-t)^2+v_n^2}=-\pi \mathcal{H}f(x).
\end{align*}
We now consider $I_1^{(n)}$. Since $\{u_n+iv_n\}_n\subset \G_k(x)$, 
\begin{align*}
\vert I_1^{(n)}\vert&\leq \int_{x}^{u_n+(u_n-x)}\frac{\vert (u_n-t)(f(x)-f(t))\vert dt}{(u_n-t)^2+v_n^2}\leq \frac{1}{v_n}\int_{x}^{u_n+(u_n-x)}\vert f(x)-f(t)\vert dt\\
&\leq \frac{(u_n-x)}{v_n}\frac{1}{(u_n-x)}\int_{x-2(u_n-x)}^{x+2(u_n-x)}\vert f(x)-f(t)\vert dt\leq k\frac{1}{(u_n-x)}\int_{x-2(u_n-x)}^{x+2(u_n-x)}\vert f(x)-f(t)\vert dt\to 0
\end{align*}
as $n\to \infty$, since $x\in \mathscr{L}_f$.
 \end{proof}

\begin{Lem}
\label{MinMon}
Let $f,g:[0,+\infty)\to[0,+\infty)$ be monotonically decreasing continuous functions. Then $\min\{f,g\}:[0,+\infty)\to[0,+\infty)$ is also a continuous monotonically decreasing function.
\end{Lem}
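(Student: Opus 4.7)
The plan is to split the lemma into its two claims---continuity and monotonicity---and handle each by a short elementary argument.

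For continuity, I would use the standard algebraic identity
\begin{equation*}
\min\{a,b\} \;=\; \frac{a+b}{2} \;-\; \frac{|a-b|}{2}, \qquad a,b\in \R,
\end{equation*}
applied pointwise at $a=f(x)$, $b=g(x)$. Since $f$ and $g$ are continuous by hypothesis, so are $f+g$ and $f-g$, and hence $|f-g|$ is continuous. Therefore $\min\{f,g\}$ is a linear combination of continuous functions and is itself continuous on $[0,\infty)$.

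For monotonicity, I would argue directly from the definition. Fix $x,y\in [0,\infty)$ with $x\le y$. Without loss of generality, assume $\min\{f(y),g(y)\} = f(y)$, so that $f(y)\le g(y)$. By the assumed monotonic decrease of $f$ and $g$, we have $f(x)\ge f(y)$ and $g(x)\ge g(y)\ge f(y)$. Therefore both arguments of the minimum at $x$ are at least $f(y)$, which gives
\begin{equation*}
\min\{f(x),g(x)\} \;\ge\; f(y) \;=\; \min\{f(y),g(y)\}.
\end{equation*}
The analogous inequality in the case $\min\{f(y),g(y)\}=g(y)$ is symmetric.

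There is no serious obstacle here: the lemma is a standard observation and the two displays above together constitute the whole proof. The only mild subtlety is recognising that to deduce $\min\{f(x),g(x)\} \ge f(y)$ in the monotonicity step one needs both inequalities $f(x)\ge f(y)$ and $g(x)\ge f(y)$, the second of which requires the intermediate comparison $g(x)\ge g(y)\ge f(y)$ (rather than a direct comparison of $g(x)$ with $f(y)$). Writing the proof out cleanly is therefore essentially mechanical.
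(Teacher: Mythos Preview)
Your proof is correct. The continuity argument is identical to the paper's: both use the identity $\min\{a,b\}=\tfrac{1}{2}(a+b-|a-b|)$. For monotonicity the approaches diverge slightly: the paper continues with the same algebraic identity, bounding $|f(x)-g(x)|-|f(y)-g(y)|$ via the triangle inequality and then using $f(y)-f(x)\le 0$, $g(y)-g(x)\le 0$ to collapse the resulting expression to zero; you instead argue directly by cases on which of $f(y),g(y)$ realises the minimum. Your route is arguably cleaner for such an elementary fact, while the paper's has the minor aesthetic of avoiding any case split; neither buys anything substantive over the other.
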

 
\begin{proof}
Using the identity
\begin{align}
\min\{ f(x),g(x)\}=\frac{1}{2}\big( f(x)+g(x)-\vert f(x)-g(x)\vert \big),
\end{align}
the continuity of $\min\{f,g\}$ is immediate. Let $y>x$. Then by (6.4) and the triangle inequality
\begin{align*}
&2(\min\{f(y),g(y)\}-\min\{f(x),g(x)\})\leq \big( f(y)+g(y)-\vert f(y)-g(y)\vert \big)-\big( f(x)+g(x)-\vert f(x)-g(x)\vert \big)\\
&\leq f(y)-f(x)+g(y)-g(x)+\vert f(x)-f(y)+g(y)-g(x)\vert\\
&\leq f(y)-f(x)+\vert f(y)-f(x)\vert+g(y)-g(x)+\vert g(y)-g(x)\vert.
\end{align*}
By monotonicity, $f(y)-f(x)\leq 0$ and $g(y)-g(x)\leq 0$, which implies 
\begin{align*}
2(\min\{f(y),g(y)\}-\min\{f(x),g(x)\})\leq f(y)-f(x)-(f(y)-f(x))+g(y)-g(x)-(g(y)-g(x))\leq 0.
\end{align*}
Hence $\min\{f,g\}$ is monotonically decreasing.
\end{proof}
 
\begin{Lem}
\label{lemIneqRational}
Assume $u_n>x$. Then
\begin{align*}
\frac{\vert u_n-t\vert\chi_{[x,x+2(u_n-x)]}(t)}{(u_n-t)^2+v_n^2}\leq \frac{1}{\vert u_n-t\vert}\leq \frac{2}{\vert x-t\vert}
\end{align*}
for all $t$.
\end{Lem}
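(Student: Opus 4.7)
The plan is to verify the two inequalities by elementary algebra, handling a minor issue of interpretation along the way. Note that the middle expression $1/|u_n-t|$ has to be read as $+\infty$ at $t = u_n$; at that single point the leftmost expression vanishes (either because $u_n-t = 0$ in the numerator, or irrelevantly due to the indicator), so the chain is preserved, and I will tacitly disregard $t = u_n$ below.

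For the first inequality, the key observation is that $(u_n - t)^2 \leq (u_n - t)^2 + v_n^2$ since $v_n \geq 0$; dividing by $((u_n-t)^2 + v_n^2)\,|u_n-t|$ gives $|u_n-t|/((u_n-t)^2+v_n^2) \leq 1/|u_n-t|$, and the indicator in the numerator of the LHS can only make the LHS smaller. This step uses no assumption on $t$.

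For the second inequality, the content is the geometric statement $|x-t| \leq 2|u_n-t|$, and the natural regime in which one needs it (as evidenced by the way the lemma is applied in Proposition \ref{DiniConv}) is $t \notin [x, x+2(u_n-x)]$. Setting $s := u_n - x > 0$, I would perform a two-case analysis: if $t \leq x$ then $|x-t| = x-t$ and $|u_n-t| = s + (x-t)$, so the inequality reads $x-t \leq 2s + 2(x-t)$, i.e.\ $0 \leq 2s + (x-t)$, which is automatic since $s > 0$ and $x \geq t$; if $t \geq x+2s$ then $|x-t| = t-x$ and $|u_n-t| = (t-x) - s$, so the inequality reads $t-x \leq 2(t-x) - 2s$, i.e.\ $t-x \geq 2s$, which is precisely the hypothesis on $t$.

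There is essentially no obstacle here: the statement is a routine utility inequality for dominating the Poisson-type kernel $|u_n-t|/((u_n-t)^2+v_n^2)$ by an integrable function of $|x-t|$ away from the shrinking neighbourhood $[x, x+2(u_n-x)]$ of $x$. The only point that requires any care is ensuring that one interprets the indicator so that the second inequality is applied on the complement of $[x, x+2(u_n-x)]$, matching its usage in the dominated convergence argument of Proposition \ref{DiniConv}.
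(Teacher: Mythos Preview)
Your proposal is correct and follows essentially the same approach as the paper: a trivial bound for the first inequality, and a two-case analysis ($t\le x$ versus $t\ge x+2(u_n-x)$) for the second. You are in fact more explicit than the paper about the interpretation issue---the second inequality is only asserted (and only true) for $t\notin[x,x+2(u_n-x)]$, which is exactly how the lemma is invoked in Proposition~\ref{DiniConv}; the paper's own proof silently restricts to $t<x$ and $t\ge x+2(u_n-x)$ without commenting on the discrepancy with the ``for all $t$'' in the statement.
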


\begin{proof}
By translation invariance we may assume that $x=0$.
Clearly, $\vert u_n-t\vert>\vert t\vert$ when $t<0$ which implies 
\begin{align*}
 \frac{1}{\vert u_n-t\vert}\leq \frac{1}{\vert t\vert}\leq  \frac{2}{\vert t\vert}
\end{align*}
Now assume that $t>u_n>0$. Then the equation
\begin{align*}
 \frac{1}{\vert u_n-t\vert}= \frac{2}{\vert t\vert}
\end{align*}
has the unique solution $t=2u_n$. This immediately implies that 
\begin{align*}
 \frac{1}{\vert u_n-t\vert}\leq \frac{2}{\vert t\vert}
\end{align*}
for $t\geq 2u_n$.
\end{proof}

\begin{Lem}
\label{lemIncExcMeasure}
Let $a,b\in \R$, and $a<b$. Assume that $X_1,X_2\subset[a,b]$ are measurable and that $\lambda(X_1)=m_1$ and $\lambda(X_2)=m_2$ are such that $m_1+m_2>b-a$.
Then
\begin{align*}
\lambda(X_1\bigcup X_2)\geq m_1-m_2-b+a.
\end{align*}
\end{Lem}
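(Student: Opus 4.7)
The plan is to use the inclusion-exclusion identity for Lebesgue measure together with the obvious bound coming from $X_1\cup X_2\subset [a,b]$. (Note that the conclusion as stated is most naturally read as a lower bound on $\lambda(X_1\cap X_2)$: the hypothesis $m_1+m_2>b-a$ is precisely what forces $X_1$ and $X_2$ to overlap, which is the way the lemma is invoked in the proof of Lemma \ref{Generic1}. The bound on $\lambda(X_1\cup X_2)$ alone is trivial.)

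First I would record the countable additivity identity
\begin{equation*}
\lambda(X_1)+\lambda(X_2)=\lambda(X_1\cup X_2)+\lambda(X_1\cap X_2),
\end{equation*}
which is valid for any two measurable sets of finite measure. Since both $X_1$ and $X_2$ are contained in $[a,b]$, we have $X_1\cup X_2\subset [a,b]$ and therefore
\begin{equation*}
\lambda(X_1\cup X_2)\leq \lambda([a,b])=b-a.
\end{equation*}

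Rearranging the identity and substituting this upper bound yields
\begin{equation*}
\lambda(X_1\cap X_2)=m_1+m_2-\lambda(X_1\cup X_2)\geq m_1+m_2-(b-a),
\end{equation*}
which is the desired estimate. The hypothesis $m_1+m_2>b-a$ guarantees that the right-hand side is strictly positive, so that $X_1\cap X_2$ is in particular non-empty; this is precisely the content that is needed in the application within Lemma \ref{Generic1}, where one takes $[a,b]=[x,x+h]$, $X_1=\{t\in[x,x+h]:|m_{\mathcal{H}f}(t)-m_{\mathcal{H}f}(x)|\leq N-\varepsilon\}$ and $X_2=\{t\in[x,x+h]:|f(t)-f(x)|<\varepsilon\}$, and plugs in $m_1\geq h-dh/2$ and $m_2\geq dh$ to recover the lower bound $dh/2$. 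There is no obstacle of substance here; the entire content is an application of finite additivity.
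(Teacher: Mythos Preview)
Your proof is correct and essentially identical to the paper's: both use the inclusion--exclusion identity $\lambda(X_1\cup X_2)=\lambda(X_1)+\lambda(X_2)-\lambda(X_1\cap X_2)$ together with the bound $\lambda(X_1\cup X_2)\leq b-a$ to conclude $\lambda(X_1\cap X_2)\geq m_1+m_2-(b-a)$. You are also right that the stated conclusion contains typos (it should read $\lambda(X_1\cap X_2)\geq m_1+m_2-(b-a)$), as is confirmed by the paper's own proof and by the application in Lemma~\ref{Generic1}.
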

\begin{proof}
By the inclusion-exclusion principle
\begin{align*}
\lambda(X_1\bigcup X_2)=\lambda(X_1)+\lambda(X_2)-\lambda(X_1\bigcap X_2).
\end{align*}
Since $\lambda(X_1\bigcup X_2)\leq b-a$ this gives the inequality
\begin{align*}
b-a\geq m_1+m_2-\lambda(X_1\bigcap X_2).
\end{align*}
which is equivalent to
\begin{align*}
\lambda(X_1\bigcap X_2)\geq m_1+m_2-b+a.
\end{align*}
\end{proof}

\subsection{Examples}
In this section we provide some details of the examples given earlier.

\begin{ex} 
Let $f(t)=\vert \sin t^{-1}\vert\chi_{[-1/2,1/2]}(t)+\chi_{[1/2,a]}(t)$, where $a$ is chosen so that $\int_{\R}f(t)dt=1$. Then aprior $f$ is not defined at $x=0$. However, we note that 
\begin{align*}
Mf(0,h)&=\frac{1}{2h}\int_{-h}^h\vert \sin(t^{-1})\vert dt=\frac{1}{h}\int_{1/h}^{+\infty}\frac{\vert \sin(s)\vert}{s^2}ds\\
&=\frac{1}{h}\int_{1/h}^{\pi m}\frac{\vert \sin(s)\vert}{s^2}ds+\frac{1}{h}\int_{\pi m}^{+\infty}\frac{\vert \sin(s)\vert}{s^2}ds
\end{align*}
for $h$ sufficiently small, where $m$ is the smallest integer such that $1/h\leq \pi m$. Hence,
\begin{align*}
Mf(0,h)&<\frac{1}{h}\int_{1/h}^{\pi m}\frac{1}{h^{-2}}ds+\sum_{n=m}^{+\infty}\frac{1}{h}\int_{\pi n}^{\pi(n+1)}\frac{\vert \sin(s)\vert}{\pi^2 n^2}ds\\
&= h(\pi m-h^{-1})+\sum_{n=m}^{+\infty}\frac{2}{h\pi^2n^2}\\
&\leq \pi h+\sum_{n=m}^{+\infty}\frac{2m}{\pi n^2}=\pi h+\frac{2m}{\pi}\sum_{k=0}^{+\infty}\frac{1}{(k+m)^2}\leq \pi h+\frac{2m}{\pi}\int_{m-1}^{+\infty}\frac{dx}{x^2}\\
&=\pi h+\frac{2}{\pi}\frac{m}{m-1},
\end{align*}
where we have used that $\pi(m-1)\leq h^{-1}\leq \pi m$, which implies that $h(\pi m-h^{-1})\leq h(\pi m-\pi (m-1))=\pi h$. Similarly,
\begin{align*}
Mf(0,h)&>\sum_{n=m}^{+\infty}\frac{1}{h}\int_{\pi n}^{\pi(n+1)}\frac{\vert \sin(s)\vert}{\pi^2 (n+1)^2}ds
=\sum_{n=m}^{+\infty}\frac{2}{h\pi^2(n+1)^2}\geq\sum_{n=m}^{+\infty}\frac{2(m-1)}{\pi (n+1)^2}\\&=\frac{2(m-1)}{\pi}\sum_{k=0}^{+\infty}\frac{1}{(k+m+1)^2}\geq \frac{2m}{\pi}\int_{m+2}^{+\infty}\frac{dx}{x^2}\\
&=\frac{2}{\pi}\frac{m-1}{m+2}.
\end{align*}
Hence, since $m\to \infty$ and $h\to 0^+$
\begin{align*}
\lim_{h\rightarrow 0^+}Mf(0,h)=\frac{2}{\pi}.
\end{align*}
Therefore, if $0$ is to be a Lebesgue point of $f$ we have to define $f(0):=\frac{2}{\pi}$. However,
\begin{align*}
&\frac{1}{2h}\int_{-h}^h\bigg\vert\vert \sin(t^{-1})\vert -\frac{2}{\pi}\bigg\vert dt=\frac{1}{h}\int_{1/h}^{+\infty}\bigg\vert\vert \sin(t)\vert -\frac{2}{\pi}\bigg\vert \frac{dt}{t^2},
\end{align*}
and
\begin{align*}
\vert \sin(t)\vert -\frac{2}{\pi}>0 \quad \text{for $t\in\bigcup_{k=0}^{+\infty}\bigg(\arcsin\frac{2}{\pi}+\pi k,\pi-\arcsin\frac{2}{\pi}+\pi k\bigg)$ and $t>0$.}
\end{align*}
Therefore, 
\begin{align*}
&\int_{\pi k}^{\pi (k+1)}\bigg\vert\vert \sin(t)\vert -\frac{2}{\pi}\bigg\vert \frac{dt}{t^2}\geq\int_{\pi k+\arcsin(2/\pi)}^{\pi-\arcsin(2/\pi)+\pi k}\bigg(\sin(t) -\frac{2}{\pi}\bigg)\frac{dt}{\pi^2(k+1)^2}\\&=\frac{1}{\pi^2(k+1)^2}\bigg[-\cos(t)-\frac{2t}{\pi}\bigg]_{\pi k+\arcsin(2/\pi)}^{\pi-\arcsin(2/\pi)+\pi k}=\frac{2}{\pi^2(k+1)^2}\bigg[\sqrt{1-\bigg(\frac{2}{\pi}\bigg)^2}-1+\frac{2}{\pi}\arcsin \bigg(\frac{2}{\pi}\bigg)\bigg].
\end{align*}
Let $m$ be the smallest integer such that $1/h<\pi m$. Then by the above there exist a positive constant $c>0$ such that
\begin{align*}
\frac{1}{h}\int_{1/h}^{+\infty}\bigg\vert\vert \sin(t)\vert -\frac{2}{\pi}\bigg\vert \frac{dt}{t^2}\geq \pi(m-1)\sum_{k=m}^{+\infty}\frac{c}{(k+1)^2}\geq \frac\pi c(m-1)\int_{m}^{+\infty}\frac{1}{(x+2)^2}dx=\frac{\pi c(m-1)}{(m+2)}\nrightarrow 0
\end{align*}
as $m \rightarrow +\infty$. Note that the last term does not converge to 0 as $h\to 0^+$. Therefore $0\neq \mathscr{L}_f$. Since, $f^{+}(0)=f^{-}(0)=\frac{2}{\pi}<1$, it follows from Proposition \ref{NLpointReg} that $0$ is a regular point.
However note that some of the the points $\Big\{\frac{1}{\frac{\pi}{2}+\pi k}: k\in\Z, \vert k\vert\geq 2\Big\}$ may be singular.
\end{ex}

\begin{ex}
Let $I_n=(2^{-(n+1)},2^{-n}]$ and let
\begin{align*}
f(t)=\sum_{k=1}^{+\infty}\bigg(1-\frac{1}{2k}\bigg)(\chi_{I_{2k}}(t)+\chi_{I_{2k}}(-t))+\sum_{k=1}^{+\infty}\frac{1}{2k+1}(\chi_{I_{2k+1}}(t)+\chi_{I_{2k+1}}(-t))+\chi_{(1/2,a]}(t),
\end{align*}
where $a$ is chosen so that $\int_{\R}f(t)dt=1$. One can show that
\begin{align*}
\limsup_{h\rightarrow 0^+}M_Rf(0,h)&=\limsup_{h\rightarrow 0^+}M_Lf(0,h)
=\lim_{k\rightarrow +\infty}\frac{1}{2^{-2k}}\int_{0}^{2^{-2k}}f(t)dt\\
&= \lim_{k\rightarrow +\infty}2^{2k}\bigg(\int_0^{2^{-2k-2}}f(t)dt+\int_{2^{-2k-2}}^{2^{-2k-1}}f(t)dt+\int_{2^{-2k-1}}^{2^{-2k}}f(t)dt\bigg)\\
&\leq  \lim_{k\rightarrow +\infty}2^{2k}\bigg(2^{-2k-2}+\frac{1}{2k+1}(2^{-2k-1}-2^{-2k-2})+ \bigg(1-\frac{1}{2k}\bigg)(2^{-2k}-2^{-2k-1})\bigg)\\
&=\frac{3}{4}<1.
\end{align*}
Similarly, one can also show that
\begin{align*}
\liminf_{h\rightarrow 0^+}M_Rf(0,h)&=\liminf_{h\rightarrow 0^+}M_Lf(0,h)
=\lim_{k\rightarrow +\infty}\frac{1}{2^{-2k-1}}\int_{0}^{2^{-2k-1}}f(t)dt\\
&= \lim_{k\rightarrow +\infty}2^{2k+1}\bigg(\int_0^{2^{-2k-3}}f(t)dt+\int_{2^{-2k-3}}^{2^{-2k-2}}f(t)dt+\int_{2^{-2k-2}}^{2^{-2k-1}}f(t)dt\bigg)\\
&\geq  \lim_{k\rightarrow +\infty}2^{2k+1}\bigg(\bigg(1-\frac{1}{2k+2}\bigg)(2^{-2k-2}-2^{-2k-3})+ \frac{1}{2k+1}(2^{-2k}-2^{-2k-1})\bigg)\\
&=\frac{1}{4}>0.
\end{align*}
Hence, $x=0$ does not belong to the Lebesgue set of $f$. However we see that $f$ satisfies the conditions of Proposition \ref{NLpointReg}. This implies that $0$ is a regular point. 
\end{ex}

\end{document}